\title{A Coinductive Version of Milner's Proof System for Regular Expressions Modulo Bisimilarity}
\titlerunning{A Coinductive Version of Milner's Proof System} 
\author{Clemens Grabmayer}{Gran Sasso Science Institute, Viale F.\ Crispi, 7, Italy \and \url{https://www.gssi.it/people/post-doc/post-doc-computer-science/item/4733-grabmayer-clemens} }{clemens.grabmayer@gssi.it}{https://orcid.org/0000-0002-1825-0097}
{}
\authorrunning{C. Grabmayer} 
\keywords{regular expressions, 
          bisimilarity,
          coinduction,
          proof transformations
          } 
\let\savebigtimes\bigtimes
\let\bigtimes\relax
\let\bigtimes\savebigtimes
\tikzset{
  funbisim/.style={
    decoration={funbisim, amplitude=0.25ex},
    decorate,
    funbisim options/.style={#1}    
  }}
\tikzset{
  bisim/.style={
    decoration={bisim, amplitude=0.25ex},
    decorate,
    bisim options/.style={#1}    
  }}
\def\calcLength(#1,#2)#3{%
\pgfpointdiff{\pgfpointanchor{#1}{center}}%
             {\pgfpointanchor{#2}{center}}%
\pgf@xa=\pgf@x%
\pgf@ya=\pgf@y%
\FPeval\@temp@a{\pgfmath@tonumber{\pgf@xa}}%
\FPeval\@temp@b{\pgfmath@tonumber{\pgf@ya}}%
\FPeval\@temp@sum{(\@temp@a*\@temp@a+\@temp@b*\@temp@b)}%
\FProot{\FPMathLen}{\@temp@sum}{2}%
\FPround\FPMathLen\FPMathLen5\relax
\global\expandafter\edef\csname #3\endcsname{\FPMathLen}
}
\tikzset{
  my dash/.style={dash pattern=on 5pt off 2pt}
         }
\newtheorem{thm}{Theorem}[section]{\bfseries}{\itshape}
\newtheorem{lem}[thm]{Lemma}{\bfseries}{\itshape}
{\normalfont\bfseries}{\itshape}
\newtheorem*{repeatedlem}{Lemma}{\normalfont\bfseries}{\itshape}
\newtheorem{prop}[thm]{Proposition}
{\normalfont\bfseries}{\itshape}
\theoremstyle{definition}
\newtheorem{defi}[thm]{Definition}{\bfseries}{\normalfont}
\newtheorem{question}[thm]{Question}
\newtheorem{exa}[thm]{Example}{\bfseries}{\normalfont}
\newtheorem{nonexa}[thm]{Non-Example}{\bfseries}{\normalfont}
\newtheorem*{exanonum}{Example}
\newtheorem{rem}[thm]{Remark}
\DeclareFontFamily{U}{mathx}{\hyphenchar\font45}   
\DeclareFontShape{U}{mathx}{m}{n}{
      <5> <6> <7> <8> <9> <10>
      <10.95> <12> <14.4> <17.28> <20.74> <24.88>
      mathx10
      }{}
\DeclareSymbolFont{mathx}{U}{mathx}{m}{n}
\DeclareMathAccent{\widecheck}{0}{mathx}{"71}
\DeclareMathAccent{\wideparen}{0}{mathx}{"75}
\definecolor{azure}{rgb}{0.94,1.00,1.00}
\definecolor{brown}{rgb}{.75,.25,.25}
\definecolor{cyan}{rgb}{0.25,0.88,0.82}
\definecolor{chocolate}{rgb}{0.82,0.41,0.12}
\definecolor{darkcyan}{rgb}{0.5,0,1}
\definecolor{darkgreen}{rgb}{0,0.39,0}
\definecolor{darkmagenta}{rgb}{0.5,0,0.5}
\definecolor{darkgoldenrod}{RGB}{184,134,11}
\definecolor{firebrick}{RGB}{175,25,25}
\definecolor{forestgreen}{rgb}{0.13,0.55,0.13}
\definecolor{goldenrod}{RGB}{218,165,32}
\definecolor{lightcyan}{rgb}{0.88,1.00,1.00}
\definecolor{lightpink}{rgb}{1.00,0.71,0.76}
\definecolor{myyellow}{RGB}{235,235,0}
\definecolor{lightyellow}{rgb}{1.00,1.00,0.88}
\definecolor{lightgoldenrod}{rgb}{0.83,0.97,0.51}
\definecolor{lightgoldenrodyellow}{rgb}{0.98,0.98,0.82}
\definecolor{lightskyblue}{rgb}{0.53,0.81,0.98}
\definecolor{moccasin}{rgb}{1.00,0.89,0.71}
\definecolor{magenta}{rgb}{1,0,1}
\definecolor{navyblue}{rgb}{0,0,0.5}
\definecolor{orange}{rgb}{1.0,0.65,0.0}
\definecolor{orangered}{rgb}{1.0,0.27,0.0}
\definecolor{palegreen}{rgb}{0.60,0.98,0.60}
\definecolor{paleblue}{rgb}{0.678,0.847,0.902}
\definecolor{powderblue}{rgb}{0.69,0.88,0.90}
\definecolor{purple}{rgb}{1,0.5,1}
\definecolor{royalblue}{RGB}{65,105,225}
\definecolor{mediumblue}{RGB}{0,0,205}
\definecolor{cornflowerblue}{RGB}{100,149,237}
\definecolor{springgreen}{rgb}{0.0,1.0,0.5}
\definecolor{turquoise}{rgb}{0.25,0.88,0.82}
\definecolor{snow}{rgb}{1.00,0.98,0.98}
\definecolor{tan}{rgb}{0.82,0.71,0.55}
\definecolor{red}{rgb}{1,0,0}
\definecolor{violetred}{RGB}{208,32,144}
\newcommand{\colorin}[1]{\textcolor{#1}}
\newcommand{\black}{\colorin{black}}
\newcommand{\chocolate}{\colorin{chocolate}}
\newcommand{\colorred}{\colorin{red}}
\newcommand{\alert}{\colorred}
\newcommand{\darkcyan}{\colorin{darkcyan}}
\newcommand{\forestgreen}{\colorin{forestgreen}}
\newcommand{\firebrick}{\colorin{firebrick}}
\newcommand{\mediumblue}{\colorin{mediumblue}}
\newcommand{\nb}{\nobreakdash}
\newcommand{\punc}[1]{\ensuremath{\hspace*{1.5pt}{#1}}}
\newcommand{\nf}{\normalfont}
\newenvironment{new}{\color{chocolate}}{\color{black}}
\newenvironment{newer}{\color{firebrick}}{\color{black}}
\newenvironment{newest}{\color{red}}{\color{black}}
\newenvironment{revised}{\color{violetred}}{\color{black}}
\newenvironment{change}{\color{violetred}}{\color{black}}
\newcommand{\funin}{\mathrel{:}}
\newcommand{\fap}[2]{{#1}(\hspace*{-0.5pt}{#2}\hspace*{-0.5pt})}
\newcommand{\bfap}[3]{{#1}({#2},\hspace*{0.5pt}{#3})}
\newcommand{\iap}[2]{#1_{#2}}
\newcommand{\bap}[2]{#1_{#2}}
\newcommand{\pap}[2]{#1^{#2}}
\newcommand{\bpap}[3]{#1_{#2}^{#3}}
\newcommand{\pbap}[3]{#1_{#3}^{#2}}
\newcommand{\sidfun}{\textrm{\nf id}}
\newcommand{\sproj}{\pi}
\newcommand{\proj}{\fap{\sproj}}
\newcommand{\sdefdby}{{:=}}
\newcommand{\defdby}{\mathrel{\sdefdby}}
\newcommand\tuple[1]{\langle #1 \rangle}
\newcommand\tuplespace{\hspace*{0.5pt}}
\newcommand\pair[2]{\tuple{#1, \tuplespace #2}}
\newcommand{\nat}{\mathbb{N}}
\newcommand{\natplus}{\pap{\nat}{+}} 
\newcommand{\BNFor}{\:\mid\:}
\newcommand{\BNFdefdby}{\:{::=}\:}
\newcommand{\family}[2]{\setexp{#1}_{#2}}
\newcommand{\slexspo}{{<}_{\text{\nf lex}}}
\newcommand{\lexspo}{\mathrel{\slexspo}}
\newcommand{\ssynteq}{{\equiv}} 
\newcommand{\synteq}{\mathrel{\ssynteq}} 
\newcommand{\sformeq}{{=}}
\newcommand{\formeq}{\mathrel{\sformeq}}
\newcommand{\sred}{{\to}}
\newcommand{\sredi}[1]{{\iap{\sred}{#1}}}
\newcommand{\redi}[1]{\mathrel{\sredi{#1}}}
\newcommand{\sconvred}{{\leftarrow}}
\newcommand{\sconvredi}[1]{{\iap{\leftarrow}{#1}}}
\newcommand{\convredi}[1]{\mathrel{\sconvredi{#1}}}
\newcommand{\sconvredtci}[1]{{\pbap{\leftarrow}{+}{#1}}}
\newcommand{\convredtci}[1]{\mathrel{\sconvredtci{#1}}}
\newcommand{\sredrtc}{\sred^{*}}
\newcommand{\sredrtci}[1]{{\iap{\sredrtc}{#1}}}
\newcommand{\redrtci}[1]{\mathrel{\sredrtci{#1}}}
\newcommand{\sredb}[1]{{\bap{\sred}{#1}}}
\newcommand{\sredbp}[2]{{\bpap{\sred}{#1}{#2}}}
\newcommand{\sconvredb}[1]{{\bap{\sconvred}{#1}}}
  \newcommand{\subotr}{\hspace*{-1pt}{\scriptscriptstyle (\sone)}}
\newcommand{\sotelimred}{\sredb{\subotr}}
\newcommand{\otelimred}{\mathrel{\sotelimred}}
\newcommand{\sotelimredtc}{\sredbp{\subotr}{+}}
\newcommand{\sotelimconvred}{\sconvredb{\subotr}}
\newcommand{\bodylabcol}{\text{\nf\darkcyan{bo}}}
\newcommand{\bodylab}{\bodylabcol}
\newcommand{\descsetexpmid}{\mathrel{\vert}}
\newcommand{\descsetexpbigmid}{\mathrel{\big\vert}}
\newcommand{\descsetexpBigmid}{\mathrel{\Big\vert}}
\newcommand{\descsetexp}[2]{\left\{{#1}\descsetexpmid{#2}\right\}}
\newcommand{\descsetexpnormalsize}[2]{\{{#1}\descsetexpmid{#2}\}}
\newcommand{\descsetexpbig}[2]{\bigl\{{#1}\descsetexpbigmid{#2}\bigr\}}
\newcommand{\descsetexpBig}[2]{\Bigl\{{#1}\descsetexpBigmid{#2}\Bigr\}}
\newcommand{\scompfuns}[2]{{#1}\circ{#2}}
\newcommand{\compfuns}[2]{\fap{\scompfuns{#1}{#2}}}
\newcommand{\sabinrel}{{R}}
\newcommand{\abinrel}{\mathrel{\sabinrel}}
\newcommand{\setexp}[1]{\left\{{#1}\right\}}
\newcommand{\setexpbig}[1]{\bigl\{{#1}\bigr\}}
\renewcommand{\emptyset}{\varnothing}
\newcommand{\slogand}{\wedge}
\newcommand{\slogor}{\vee}
\newcommand{\logand}{\mathrel{\slogand}}
\newcommand{\logor}{\mathrel{\slogor}} 
\newcommand{\slognot}{\neg}
\newcommand{\lognot}[1]{\slognot{#1}}
\newcommand{\tightfbox}[1]{{\fboxsep=1.5pt\fbox{#1}}}
\newcommand{\actiontarget}{ac\-tion-tar\-get}
\newcommand{\eqlogicbased}{$\eqlogic$\nb-based}
\newcommand{\immediatetermination}{im\-me\-di\-ate-ter\-mi\-na\-tion}
\newcommand{\loopelimination}{loop\nb-elim\-i\-nat\-ion}
\newcommand{\loopentry}{loop-en\-try}
\newcommand{\loopbody}{loop-body}
\newcommand{\onederivative}{1\nb-de\-riv\-a\-tive}
\newcommand{\onederivatives}{\onederivative{s}}
\newcommand{\entrybodylabeling}{en\-try\discretionary{/}{}{/}body-la\-be\-ling}
\newcommand{\entrybodylabelings}{\entrybodylabeling{s}}
\newcommand{\LEEshape}{\LEE\nb-shape}
\newcommand{\LEEshaped}{\LEE\nb-shaped}
\newcommand{\LEEwitness}{$\LEE$\nb-wit\-ness}
\newcommand{\LEEwitnesses}{$\LEE$\hspace*{1.25pt}\nb-wit\-nes\-ses}
\newcommand{\LLEEwitness}{{\nf LLEE}\nb-wit\-ness}
\newcommand{\LLEEwitnesses}{{\nf LLEE}\nb-wit\-nesses}
\newcommand{\LLEEwitnessed}{{\nf LLEE}\nb-wit\-nessed}
\newcommand{\LLEEcharts}{{\nf LLEE}\nb-chart{s}}
\newcommand{\LLEEonechart}{{\nf LLEE}\hspace*{1pt}-\onechart}
\newcommand{\LLEEonecharts}{{LLEE}\hspace*{1pt}-\onechart{s}}
\newcommand{\nontrivial}{non-triv\-i\-al}
\newcommand{\onetransition}{$\sone$\nb-tran\-si\-tion}
\newcommand{\onetransitions}{\onetransition{s}}
\newcommand{\onefree}{$1$\nb-free}
\newcommand{\sonefree}{$\sone$\nb-free}
\newcommand{\onechart}{$\sone$\nb-chart}
\newcommand{\onecharts}{\onechart{s}}
\newcommand{\onebisimulation}{$\sone$\nb-bi\-si\-mu\-la\-tion}
\newcommand{\onebisimulations}{\onebisimulation{s}}
\newcommand{\onebisimilar}{$\sone$\nb-bi\-si\-mi\-lar}
\newcommand{\onebisimilarity}{$\sone$\nb-bi\-si\-mi\-la\-ri\-ty}
\newcommand{\perpetual}{per\-pet\-u\-al}
\newcommand{\perpetualloop}{\perpetual-loop}
\newcommand{\provablein}[1]{{$#1$}\nb-pro\-vable}
\newcommand{\provablyin}[1]{{$#1$}\nb-pro\-vably}
\newcommand{\sidecondition}{side-con\-di\-tion}
\newcommand{\subonechart}{sub-$\sone$\nb-chart}
\newcommand{\subonecharts}{\subonechart{s}}
\newcommand{\starexpression}{star-ex\-pres\-sion}
\newcommand{\theoremsubsumed}{the\-o\-rem-sub\-sumed}
\newcommand{\theoremequivalent}{the\-o\-rem-equiv\-a\-lent}
\newcommand{\transitionact}[1]{{${#1}$}\nb-tran\-si\-tion}
\newcommand{\transitionsact}[1]{\transitionact{#1}s}
\newcommand{\derivativeact}[1]{{${#1}$}\nb-deri\-va\-tive}
\newcommand{\derivativesact}[1]{\derivativeact{#1}s}
\newcommand{\weaklyguarded}{weak\-ly guard\-ed}
\newcommand{\wg}{w.g.}
\newcommand{\welldefined}{well-de\-fined}
\newcommand{\wellfounded}{well-found\-ed}
\newcommand{\overlinebar}[1]{\mathbf{\overline{\text{$#1$}}}}
\newcommand{\aspec}{\mathcal{S}}
\newcommand{\astexp}{e}
\newcommand{\bstexp}{f}
\newcommand{\cstexp}{g}
\newcommand{\dstexp}{h}
\newcommand{\astexpi}{\iap{\astexp}}
\newcommand{\bstexpi}{\iap{\bstexp}}
\newcommand{\cstexpi}{\iap{\cstexp}}
\newcommand{\dstexpi}{\iap{\dstexp}}
\newcommand{\asstexp}{E}
\newcommand{\asstexpi}{\iap{\asstexp}}
\newcommand{\asstexpacc}{\asstexp'}
\newcommand{\asstexpacci}{\iap{\asstexpacc}}
\newcommand{\snormedplus}{{\textit{nd}^+}}
\newcommand{\normedplus}{\fap{\snormedplus}}
\newcommand{\astexpacc}{\astexp'}
\newcommand{\astexpacci}{\iap{\astexpacc}}
  \newcommand{\StExps}{\textit{StExp}}
  \newcommand{\StExpsover}{\fap{\StExps}}
\newcommand{\StExp}{\textit{StExp}}
\newcommand{\StExpover}{\fap{\StExp}}
\newcommand{\StExpEq}{\textit{Eq}}
\newcommand{\StExpEqover}{\fap{\StExpEq}}
\newcommand{\acxt}{C}
\newcommand{\cxtap}[2]{{#1}[{#2}]}
\newcommand{\acxtap}{\cxtap{\acxt}}
\newcommand{\iets}{\hspace*{0.5pt}}
\newcommand{\stexpzero}{0}
\newcommand{\stexpone}{1}
\newcommand{\sstexpit}{\sstar}
\newcommand{\stexpit}[1]{{#1^{\sstexpit}}}
\newcommand{\sstexpprod}{{\cdot}}
\newcommand{\stexpprod}[2]{{#1}\mathrel{\sstexpprod}{#2}}
\newcommand{\sstexpsum}{+}
\newcommand{\stexpsum}[2]{{#1}\sstexpsum{#2}}
\newcommand{\sstexpstackprod}{\chocolate{\varstar}} 
\newcommand{\stexpstackprod}[2]{{#1}\mathrel{\sstackprod}{#2}}
\newcommand{\ssprod}{{\cdot}}
\renewcommand{\prod}{\,{\ssprod}\,}%
\newcommand{\sstackprod}{\chocolate{\varstar}}
\newcommand{\stackprod}{\,{\sstackprod}\,}%
\newcommand{\sth}[1]{|{#1}|_{\scalebox{0.8}{$\sstar$}}}
\newcommand{\sdescrelstexpit}{\sredi{\scriptscriptstyle(\sstar)}}
\newcommand{\descrelstexpit}[1]{\mathrel{\sdescrelstexpit}}
\newcommand{\sconvdescrelstexpit}{\sconvredi{\scriptscriptstyle(\sstar)}}
\newcommand{\convdescrelstexpit}[1]{\mathrel{\sconvdescrelstexpit}}
\newcommand{\sspartderivs}{\partial}
\newcommand{\spartderivs}{\iap{\partial}}
\newcommand{\partderivs}[1]{\fap{\spartderivs{#1}}}
\newcommand{\sitpartonederivs}{{\underline{\partial}^+}}
\newcommand{\itpartonederivs}{\fap{\sitpartonederivs}}
\newcommand{\sactderivs}{\textit{A\hspace*{-0.25pt}$\sspartderivs$}}
\newcommand{\actderivs}{\fap{\sactderivs}}
\newcommand{\soneactderivs}{\textit{\underline{A\hspace*{-0.25pt}$\sspartderivs$}}}
\newcommand{\oneactderivs}{\fap{\soneactderivs}}
\newcommand{\slt}[1]{{\xrightarrow{#1}}}
\newcommand{\slti}[2]{{\xrightarrow{#1}}{_{#2}}}
\newcommand{\lt}[1]{\mathrel{\slt{#1}}}
\newcommand{\lti}[2]{\mathrel{\slti{#1}{#2}}}
\newcommand{\sone}{\firebrick{1}}
\newcommand{\sstar}{*}
\newcommand{\looplab}[1]{{\darkcyan{[#1]}}}  
\newcommand{\loopsteplab}{\looplab}
\newcommand{\loopnsteplab}[1]{\darkcyan{[{#1}]}}
\newcommand{\sloopnstepto}[1]{{\iap{\rightarrow}{\loopnsteplab{#1}}}}
\newcommand{\loopnstepto}[1]{\mathrel{\sloopnstepto{#1}}}
\newcommand{\loopentrytransition}{loop-entry tran\-si\-tion}
\newcommand{\loopentrytransitions}{\loopentrytransition{s}}
\newcommand{\aLname}{n}
\newcommand{\sterminates}{{\downarrow}}
\newcommand{\terminates}[1]{{#1}{\sterminates}}
\newcommand{\sterminatesbig}{{\big\downarrow}}
\newcommand{\terminatesbig}[1]{{#1}{\sterminatesbig}}
\newcommand{\snotterminates}{\ndownarrow}  
\newcommand{\notterminates}[1]{{#1}{\snotterminates}}
\newcommand{\onebrackscript}{\scalebox{0.75}{$\scriptstyle (1)$}}
\newcommand{\soneterminates}{{\pap{\downarrow}{\hspace*{-1.5pt}\firebrick{\onebrackscript}}}}
\newcommand{\oneterminates}[1]{{#1}{\soneterminates}}
\newcommand{\soneterminatesi}[1]{{\pbap{\downarrow}{\hspace*{-1.25pt}\firebrick{\onebrackscript}}{#1}}}
\newcommand{\oneterminatesi}[2]{{#2}{\soneterminatesi{#1}}}
\newcommand{\achart}{\mathcal{C}}
\newcommand{\acharti}{\iap{\achart}}
\newcommand{\bchart}{\mathcal{D}}
\newcommand{\bcharti}{\iap{\bchart}}
\newcommand{\acharthat}{\hspace*{0.75pt}\Hat{\hspace*{-0.75pt}\achart}\hspace*{-0pt}} 
\newcommand{\aonechart}{\underline{\mathcal{C}}}
\newcommand{\aonecharti}{\iap{\aonechart}}
\newcommand{\bonechart}{\underline{\mathcal{D}}}
\newcommand{\bonecharti}{\iap{\bonechart}}
\newcommand{\conechart}{\underline{\mathcal{E}}}
\newcommand{\aonechartacc}{\underline{\mathcal{C}}\hspace*{-0.4pt}'}
\newcommand{\bonechartacc}{\underline{\mathcal{D}}\hspace*{0pt}'}
\newcommand{\aonechartacci}{\iap{\aonechartacc}}
\newcommand{\bonechartacci}{\iap{\bonechartacc}}
\newcommand{\aonechartdacc}{\underline{\mathcal{C}}''}
\newcommand{\aonechartdacci}{\iap{\aonechartdacc}}
\newcommand{\aonecharthat}{\hspace*{0.2pt}\Hat{\hspace*{-0.75pt}\aonechart}\hspace*{-0.75pt}} 
\newcommand{\aonecharthati}[1]{\hspace*{0.2pt}\iap{\Hat{\hspace*{-0.75pt}\aonechart}}{#1}\hspace*{-0.75pt}}
\newcommand{\bonecharthati}[1]{\hspace*{0.2pt}\iap{\Hat{\hspace*{-0.75pt}\bonechart}}{#1}\hspace*{-0.75pt}} 
\newcommand{\conecharthati}[1]{\hspace*{0.2pt}\iap{\Hat{\hspace*{-0.75pt}\conechart}}{#1}\hspace*{-0.75pt}} 
\newcommand{\aonecharthatacci}[1]{\hspace*{0.2pt}\aonecharthat\hspace*{0.4pt}'_{#1}\hspace*{-0.75pt}}
\newcommand{\aonecharthatdacci}[1]{\hspace*{0.2pt}\aonecharthat\hspace*{0.4pt}''_{#1}\hspace*{-0.75pt}}
\newcommand{\bonecharthatsubscript}{\widehat{\smash{\bonechart}\rule{0pt}{4.75pt}}} 
\newcommand{\bonecharthataccsubscript}{\widehat{\smash{\bonechartacc}\rule{0pt}{4.75pt}}} 
\newcommand{\bonecharthatsubscripti}[1]{\widehat{\smash{\bonechart}\rule{0pt}{4.75pt}}_{#1}} 
\newcommand{\bonecharthataccsubscripti}[1]{\widehat{\smash{\bonechart}\rule{0pt}{4.75pt}}{}'_{#1}} 
\newcommand{\chartof}{\fap{\achart}}
\newcommand{\onechartof}{\fap{\aonechart}}
\newcommand{\onecharthatof}{\fap{\aonecharthat}}
\newcommand{\indscchartof}[1]{\iap{#1}{\subscriptindtrans}} 
\newcommand{\indchart}{\indscchartof}
\newcommand{\LTS}{LTS}
\newcommand{\saTSS}{{\cal{T}}}
\newcommand{\StExpTSS}{\text{$\saTSS$}}
\newcommand{\StExpTSSover}[1]{\text{$\fap{\StExpTSS}{#1}$}}
  \newcommand{\subscriptindtrans}{\mediumblue{\scriptscriptstyle\pmb{\otind{\cdot}}}}
\newcommand{\aoneloop}{\underline{\mathcal{L}}}
\newcommand{\indsubonechartinat}[1]{\fap{\aonecharti{#1}}}
\newcommand{\otind}[1]{({#1}]}
\newcommand{\iact}[1]{\mediumblue{{\scriptscriptstyle\pmb (}\hspace*{-0pt}{\black{#1}}\hspace*{0.4pt}{\scriptscriptstyle\pmb ]}}}  
\newcommand{\silt}[1]{\slt{\iact{#1}}}
\newcommand{\silti}[2]{{\xrightarrow{\iact{#1}}}{_{#2}}}
\newcommand{\ilt}[1]{\mathrel{\silt{#1}}}
\newcommand{\ilti}[2]{\mathrel{\silti{#1}{#2}}}
\newcommand{\sasol}{s}
\newcommand{\sasoli}{\iap{\sasol}}
\newcommand{\asol}{\fap{\sasol}}
\newcommand{\asoli}[1]{\fap{\iap{\sasol}{#1}}}
\newcommand{\sbsol}{t}
\newcommand{\sextrsol}{\sasol}
\newcommand{\sextrsolof}{\iap{\sextrsol}}
\newcommand{\sextrsoluntil}{\sbsol}
\newcommand{\sextrsoluntilof}{\iap{\sextrsoluntil}}
\newcommand{\extrsol}{\fap{\sextrsol}}
\newcommand{\extrsolof}[1]{\fap{\sextrsolof{#1}}}
\newcommand{\extrsoluntilof}[1]{\bfap{\sextrsoluntilof{#1}}}
\newcommand{\sterminatesconst}{\tau}
\newcommand{\sterminatesconstof}{\iap{\sterminatesconst}}
\newcommand{\terminatesconstof}[1]{\fap{\sterminatesconstof{#1}}}
\newcommand{\actions}{\mathit{A}}
\newcommand{\oneactions}{\fap{\actions}{\sone}}
\newcommand{\aact}{a}
\newcommand{\bact}{b}
\newcommand{\cact}{c}
\newcommand{\dact}{d}
\newcommand{\aacti}{\iap{\aact}}
\newcommand{\cacti}{\iap{\cact}}
\newcommand{\aoneact}{\firebrick{\underline{a}}}
\newcommand{\boneact}{\firebrick{\underline{b}}}
\newcommand{\aoneacti}[1]{\firebrick{\iap{\aoneact}{#1}}}
\newcommand{\boneacti}[1]{\firebrick{\iap{\boneact}{#1}}}
\newcommand{\verts}{V}
\newcommand{\start}{\averti{\hspace*{-0.5pt}\text{\nf s}}}
\newcommand{\transs}{{\sred}}
\newcommand{\exts}{{\sterminates}}
\newcommand{\termexts}{{\sterminates}}
\newcommand{\ats}{{AT}}
\newcommand{\alab}{\darkcyan{l}}
\newcommand{\alabi}[1]{\iap{\alab}{\darkcyan{#1}}}
\newcommand{\vertsof}{\fap{\verts\hspace*{-1pt}}}
\newcommand{\atsiof}[1]{\fap{\atsi{#1}}}
\newcommand{\vertsi}[1]{\iap{\verts}{\hspace*{-0.25pt}{#1}}}
\newcommand{\starti}[1]{\averti{\text{\nf s},#1}}
\newcommand{\termextsi}[1]{\iap{\termexts}{{#1}}}
\newcommand{\atsi}[1]{\iap{\ats}{\hspace*{-1.5pt}{#1}}}
\newcommand{\stransitions}{T}
\newcommand{\transitionsinfrom}[1]{\fap{\iap{\stransitions}{#1}}}
\newcommand{\ainst}{\iota}
\newcommand{\asettranss}{U}
\newcommand{\avert}{v}
\newcommand{\bvert}{w}
\newcommand{\cvert}{u}
\newcommand{\averti}{\iap{\avert}}
\newcommand{\bverti}{\iap{\bvert}}
\newcommand{\cverti}{\iap{\cvert}}
\newcommand{\avertacc}{\avert'}
\newcommand{\avertacci}{\iap{\avertacc}}
\newcommand{\averttilde}{\tilde{\avert}}
\newcommand{\bvertbar}{\overlinebar{\bvert}}
\newcommand{\bvertbari}{\iap{\bvertbar}}
\newcommand{\atrans}{\tau}
\newcommand{\sfunbisim}{%
    \setbox0=\hbox{\kern-.1ex{$\rightarrow$}\kern-.1ex}
    \setbox1=\vbox{\hbox{\raise .1ex \box0}\hrule}%
    {\hbox{\kern.05ex\box1\kern.1ex}}
  }
\newcommand{\sconvfunbisim}{%
    \setbox0=\hbox{\kern-.1ex{$\leftarrow$}\kern-.1ex}
    \setbox1=\vbox{\hbox{\raise .1ex \box0}\hrule}%
    {\hbox{\kern.05ex\box1\kern.1ex}}
  }
\newcommand{\sbisim}{%
    \setbox0=\hbox{\kern-.1ex{$\leftrightarrow$}\kern-.1ex}
    \setbox1=\vbox{\hbox{\raise .1ex \box0}\hrule}%
    \hbox{\kern.1ex\box1\kern.1ex}
  }
\newcommand{\bisim}{\mathrel{\sbisim\hspace*{1pt}}}
\newcommand{\sfunbisimos}{%
    \setbox0=\hbox{\kern-.1ex{$\rightarrow$}\kern-.1ex}
    \setbox1=\vbox{\hbox{\raise .1ex \box0}\hrule}%
    {\pap{\hbox{\kern.05ex\box1\kern.1ex}}{\hspace*{0.5pt}\subotr}}
  }
\newcommand{\sconvfunbisimos}{%
    \setbox0=\hbox{\kern-.1ex{$\leftarrow$}\kern-.1ex}
    \setbox1=\vbox{\hbox{\raise .1ex \box0}\hrule}%
    {\pap{\hbox{\kern.05ex\box1\kern.1ex}}{\hspace*{0.5pt}\subotr}}
  }
\newcommand{\sbisimos}{%
    \setbox0=\hbox{\kern-.1ex{$\leftrightarrow$}\kern-.1ex}
    \setbox1=\vbox{\hbox{\raise .1ex \box0}\hrule}%
    \ensuremath{\pap{\mathrel{\hbox{\kern.1ex\box1\kern.1ex}}}{\hspace*{0.5pt}\subotr}}
  }
\newcommand{\sfunbisimosvia}[1]{%
    \setbox0=\hbox{\kern-.1ex{$\rightarrow$}\kern-.1ex}
    \setbox1=\vbox{\hbox{\raise .1ex \box0}\hrule}%
    {\bpap{\hbox{\kern.05ex\box1\kern.1ex}}{#1}{\hspace*{0.5pt}\subotr}}
  }
\newcommand{\sconvfunbisimosvia}[1]{%
    \setbox0=\hbox{\kern-.1ex{$\leftarrow$}\kern-.1ex}
    \setbox1=\vbox{\hbox{\raise .1ex \box0}\hrule}%
    {\bpap{\hbox{\kern.05ex\box1\kern.1ex}}{#1}{\hspace*{0.5pt}\subotr}}
  }
\newcommand{\onebisim}{\mathrel{\sbisimos}}
\newcommand{\abisim}{B}
\newcommand{\sbehinc}{{\sqsubseteq}}
\newcommand{\sbehinca}[1]{{\prescript{#1}{}{\sbehinc}}}
\newcommand{\behinca}[1]{\mathrel{\sbehinca}}
\newcommand{\sonebehinc}{{\pap{\sbehinc}{\subotr}}}
\newcommand{\sonebehinca}[1]{{{}_{#1}\sonebehinc}}
\newcommand{\onebehinca}[1]{\mathrel{\sonebehinca}}
\newcommand{\REFL}{\textrm{\nf Ref\/l}}
\newcommand{\SYMM}{\textrm{\nf Symm}}
\newcommand{\TRANS}{\textrm{\nf Trans}}
\newcommand{\CXT}{\textrm{\nf Cxt}}
\newcommand{\sRSP}{\textrm{\nf RSP}}
\newcommand{\RSPstar}{\text{$\sRSP^{*}\hspace*{-1pt}$}}
\newcommand{\snotRSP}{\textrm{\nf\sout{RSP}}}
\newcommand{\notRSPstar}{\text{$\snotRSP^{*}\hspace*{-1pt}$}}
\newcommand{\sUSP}{\textrm{\nf USP}}
\newcommand{\USPstar}{\text{$\pap{\sUSP}{*}$}}
\newcommand{\USP}{\sUSP}
\newcommand{\assocstexpsum}{\fap{\text{\nf assoc}}{\sstexpsum}}
\newcommand{\assocstexpprod}{\fap{\text{\nf assoc}}{\sstexpprod}}
\newcommand{\commstexpsum}{\fap{\text{\nf comm}}{\sstexpsum}}
\newcommand{\neutralstexpsum}{\fap{\text{\nf neutr}}{\sstexpsum}}  
\newcommand{\idempotstexpsum}{\fap{\text{\nf idempot}}{\sstexpsum}}
\newcommand{\rdistr}{\fap{\text{\nf r-distr}}{\sstexpsum,\sstexpprod}}
\newcommand{\stexpzerostexpprod}{\text{\nf deadlock}}
\newcommand{\rightidstexpprod}{\fap{\text{\nf id}_{\text{\nf r}}}{\sstexpprod}}
\newcommand{\leftidstexpprod}{\fap{\text{\nf id}_{\text{\nf l}}}{\sstexpprod}}
\newcommand{\recdefstexpit}{\fap{\text{\nf rec}}{{}^{*}}}
\newcommand{\termstexpit}{\fap{\text{\nf trm-body}}{{}^{*}}}
\newcommand{\ACI}{\text{\sf ACI}}
\newcommand{\ACIover}{\fap{\ACI}}
\newcommand{\seqin}[1]{{\iap{=}{#1}\hspace*{1pt}}}
\newcommand{\eqin}[1]{\mathrel{\seqin{#1}}}
\newcommand{\seqinsol}[1]{{\pbap{=}{\text{\scriptsize (sol)}}{#1}\hspace*{1pt}}}
\newcommand{\eqinsol}[1]{\mathrel{\seqinsol{#1}}}
\newcommand{\ACIeq}{\eqin{\ACI}}
\newcommand{\milnersyseq}{\eqin{\milnersys}}
\newcommand{\milnersysmineq}{\eqin{\milnersysmin}}
\newcommand{\sderivablein}[1]{\vdash_{#1}}
\newcommand{\derivablein}[1]{\sderivablein{#1}}
\newcommand{\thmequiv}{\mathrel{\sim}}
\newcommand{\sisthmsubsumedby}{{\precsim}}
\newcommand{\isthmsubsumedby}{\mathrel{\sisthmsubsumedby}}
\newcommand{\sthmsubsumes}{{\succsim}}
\newcommand{\thmsubsumes}{\mathrel{\sthmsubsumes}}
\newcommand{\ssubsystem}{{\subseteq}}
\newcommand{\subsystem}{\mathrel{\ssubsystem}}
\newcommand{\CoindProofi}[1]{\text{\nf CoProof$_{#1}$}}
\newcommand{\LCoindProofi}[1]{\text{\nf LCoProof$_{#1}$}}
\newcommand{\saeqfun}{L}
\newcommand{\aeqfun}{\fap{\saeqfun}}
\newcommand{\saeqfuni}{\iap{\saeqfun}}
\newcommand{\aeqfuni}[1]{\fap{\iap{\saeqfun}{#1}}}
\newcommand{\scoindproofeqin}[1]{{\bap{\overset{\text{\nf coind}}{=\joinrel=\joinrel=}}{#1}}} 
\newcommand{\sLLEEcoindproofeqin}[1]{{\bap{\overset{\text{\nf\scriptsize LLEE}}{=\joinrel=\joinrel=}}{#1}}} 
\newcommand{\coindproofeqin}[1]{\mathrel{\scoindproofeqin{#1}}}
\newcommand{\LLEEcoindproofeqin}[1]{\mathrel{\sLLEEcoindproofeqin{#1}}}
\newcommand{\aLLEECoProof}{\mathcal{L\hspace*{-0.5pt}C\hspace*{-0.75pt}P}}
\newcommand{\aLLEECoProofoverof}[1]{\fap{\iap{\aLLEECoProof}{#1}}}
\newcommand{\aLLEECoProofacc}{\mathcal{L\hspace*{-0.5pt}C\hspace*{-0.75pt}P'}}
\newcommand{\aLLEECoProofaccoverof}[1]{\fap{\iap{\aLLEECoProofacc}{#1}}}
\newcommand{\aCoProof}{\mathcal{C\hspace*{-0.75pt}P}}
\newcommand{\sdescendsinloopto}{{\curvearrowright}}  
\newcommand{\descendsinloopto}{\mathrel{\sdescendsinloopto}}
\newcommand{\sconvdescendsinloopto}{{\curvearrowleft}} 
\newcommand{\convdescendsinloopto}{\mathrel{\sconvdescendsinloopto}}
\newcommand{\sconvdescendsinlooptotc}{{\pap{\sconvdescendsinloopto}{\hspace*{-0.8pt}+}}}
\newcommand{\convdescendsinlooptotc}{\mathrel{\sconvdescendsinlooptotc}}
\newcommand{\sconvdescendsinlooptosc}{{\pap{\sconvdescendsinloopto}{\hspace*{-0.8pt}=}}}
\newcommand{\convdescendsinlooptosc}{\mathrel{\sconvdescendsinlooptosc}}
\newcommand{\sloopsbackto}{{\lefttorightarrow}} 
\newcommand{\loopsbackto}{\mathrel{\sloopsbackto}}
\newcommand{\eqlogic}{\text{$\mathcal{E\hspace*{-1pt}L}$}}
\newcommand{\eqlogicover}[1]{\fap{\eqlogic}{#1}}
\newcommand{\milnersys}{\text{$\text{\sf Mil}$}}
\newcommand{\milnersysof}[1]{\text{$\fap{\text{\sf Mil}}{#1}$}}
\newcommand{\milnersysover}{\milnersysof}
\newcommand{\milnersysacc}{\text{\sf Mil$^{\prime}$}}
\newcommand{\milnersysaccover}[1]{\text{$\fap{\milnersysacc}{#1}$}}
\newcommand{\milnersysaccbar}{\text{$\overline{\sf Mil}{}^{\prime}$}}
\newcommand{\milnersysaccbarover}[1]{\text{$\fap{\milnersysaccbar}{#1}$}}
\newcommand{\milnersysmin}{\text{\sf Mil$^{\boldsymbol{-}}$}}
\newcommand{\milnersysminover}[1]{\text{$\fap{\milnersysmin}{#1}$}}
\newcommand{\coindmilnersys}{\text{\sf cMil}}      
\newcommand{\coindmilnersysof}[1]{\text{$\fap{\coindmilnersys}{#1}$}}
\newcommand{\coindmilnersysover}{\coindmilnersysof}
\newcommand{\coindmilnersysbar}{\overline{\text{\sf cMil}}}
\newcommand{\coindmilnersysbarof}[1]{\text{$\fap{\coindmilnersysbar}{#1}$}}
\newcommand{\coindmilnersysbarover}{\coindmilnersysbarof}
\newcommand{\coindmilnersysone}{{\text{{\sf cMil}$_1$}}}
\newcommand{\coindmilnersysoneof}[1]{\text{$\fap{\coindmilnersysone}{#1}$}}
\newcommand{\coindmilnersysoneover}{\coindmilnersysoneof}
\newcommand{\CC}{\text{\sf C\hspace*{-0.75pt}C}}
\newcommand{\CCover}[1]{\text{$\fap{\CC}{#1}$}}
\newcommand{\CLC}{\text{\sf C\hspace*{-0.5pt}L\hspace*{-0.5pt}C}}     
\newcommand{\CLCover}[1]{\text{$\fap{\CLC}{#1}$}}
\newcommand{\BBP}{\text{$\text{\sf BBP}$}}
\newcommand{\asys}{{\cal{S}}}
\newcommand{\asysi}{\iap{\asys}}
\newcommand{\aDeriv}{\mathcal{D}}
\newcommand{\aDerivi}{\iap{\mathcal{D}}}
\newcommand{\aDerivacc}{\aDeriv'}
\newcommand{\sLEE}{\text{\nf LEE}}
\newcommand{\LEE}{\sLEE}
\newcommand{\sLLEE}{\text{\nf LLEE}}
\newcommand{\LLEE}{\sLLEE}
\newcommand{\thplus}[2]{{#1}{+}{#2}}
\newcommand{\aseteqs}{\Gamma}
\newcommand{\bseteqs}{\Delta}
\newcommand{\picarrowstart}{\raisebox{2pt}{\begin{tikzpicture}%
                                             \draw[<-,very thick,>=latex,chocolate,shorten <=2pt](0,0) -- ++ (180:{12pt});%
                                           \end{tikzpicture}}}
\newcommand{\pictermvert}{\begin{tikzpicture}%
                           \node[draw,chocolate,very thick,circle,minimum width=2.5pt,fill,inner sep=0pt,outer sep=2pt](v){};%
                           \draw[thick,chocolate] (v) circle (0.12cm);%
                         \end{tikzpicture}}
\numberwithin{equation}{section}
\begin{document}

\maketitle

\begin{abstract}
  By adapting Salomaa's complete proof system for equality of regular expressions under the language semantics,
  Milner (1984) formulated a sound proof system for bisimilarity of regular expressions under the process interpretation he introduced.
  He asked whether this system is complete.
  Proof-theoretic arguments attempting to show completeness of this equational system are complicated
  by the presence of a non-algebraic rule for solving fixed-point equations by using star iteration.  
  
  We characterize the derivational power that the fixed-point rule adds to the purely equational part $\milnersysmin$ of Milner's system $\milnersys$: 
    it corresponds to 
  the power of coinductive proofs over $\milnersysmin$ that have the form of finite process graphs with the loop existence and elimination property LEE.
  We define a variant system $\coindmilnersys$ by replacing the fixed-point rule in $\milnersys$
    with a rule that permits 
                             LEE-shaped circular derivations 
                                                                  in~$\milnersysmin$ from previously derived equations
    as a premise.                                                                               
  With this rule alone we also define the variant system \CLC\ for 
                                                                   combining LEE-shaped coinductive proofs over~$\milnersysmin$. 
  We show that both $\coindmilnersys$ and $\CLC$ have proof interpretations in $\milnersys$, and vice versa.
  As this correspondence links, in both directions, derivability in $\milnersys$ 
    with derivation trees of 
                             process graphs,
    it widens the space for graph-based approaches to finding a completeness proof of~Milner's~system.
    
  This report is the extended version of a paper \cite{grab:2021:CALCO} with the same title presented at CALCO~2021.  
\end{abstract}


\section{Introduction}
  \label{intro}  
  
Milner \cite{miln:1984} (1984) defined a process semantics for regular expressions as process graphs:
the interpretation of $\stexpzero$ is deadlock, of $\stexpone$ is successful termination, letters $a$ are atomic actions,
the operators $\sstexpsum$ and $\sstexpprod$ stand for choice and concatenation of processes,
and (unary) Kleene star $\stexpit{(\cdot)}$ represents iteration with the option to terminate successfully 
  before each execution of the iteration body.
To disambiguate the use of regular expressions for denoting processes and comparing them via bisimilarity, Milner called them `star expressions'.
Using bisimilarity to identify processes with the same behavior, he was interested in an axiomatization of equality of `star behaviors',
which are bisimilarity equivalence classes of star-expression processes. 
He adapted Salomaa's complete proof system \cite{salo:1966} for language equivalence on regular expressions
to a system $\milnersys$ that is sound for equality of denoted star behaviors.
He left completeness as a question, because he recognized that Salomaa's proof~route~cannot~be~followed~directly. 


Specifically, Milner gave an example showing 
  that systems of guarded equations with star expressions cannot be solved by star expressions in general.
  Even if such a system is solvable, 
    the absence from~$\milnersys$ of 
    the \emph{left}-distributivity law  $x \cdot (y + z) = x \cdot y + x \cdot z$ in Salomaa's system (it is not sound under bisimilarity)
    frequently prevents applications of the fixed-point rule $\RSPstar$ in $\milnersys$ like in an extraction procedure from Salomaa's proof.
But if $\RSPstar$ is replaced in $\milnersys$ by a general unique-solvability rule scheme for guarded systems of equations 
  (see Def.~\ref{def:milnersys}), 
then a complete system arises (noted in \cite{grab:2006}).
Therefore completeness of $\milnersys$ hinges on whether the fixed-point rule $\RSPstar$ enables to prove equal
  any two star-expression solutions of a given guarded system of equations, 
  on the basis of the purely equational part~$\milnersysmin$~of~$\milnersys$.

As a stepping stone for tackling this difficult question, we here characterize 
  the derivational power that the fixed-point rule~$\RSPstar$ 
  adds to the subsystem $\milnersysmin$~of~$\milnersys$.
We do so by means of `coinductive proofs' whose shapes have  
the `loop existence and elimination property' \LEE\ from \cite{grab:fokk:2020a}. 
This property stems from the interpretation of (\onefree) star expressions,
  which is defined by induction on syntax trees, creating a hierarchy of `loop subgraphs'. 
Crucially for our purpose, guarded systems of equations that correspond to finite process graphs with \LEE\
  \emph{are} uniquely solvable modulo provability~in~$\milnersysmin$.
The reason is that process graphs with \LEE,
  which need not be in the image of the process interpretation,
are amenable to applying \emph{right}-distributivity and the 
                                                      rule \RSPstar\ for an extraction procedure like in Salomaa's proof (see Section~\ref{coindmilnersys:2:milnersys}).
These graphs can be expressed modulo bisimilarity by some star expression,
  which can be used to show that any two solutions modulo $\milnersysmin$ of a specification of \LEEshape\ are \provablyin{\milnersys} equal. 
This is a crucial step 
                       in the completeness proof by Fokkink and myself in \cite{grab:fokk:2020a} 
  for the tailored restriction \BBP\ of Milner's system $\milnersys$ to~`\onefree'~star~expressions.

Thus motivated, we define a `\LLEEwitnessed\ coinductive proof' as
  a process graph $\mathcal{G}$ with `layered' \LEE\ (\LLEE) whose vertices are labeled by equations between star expressions.
  The left- and the right-hand sides of the equations have to form a solution vector of a specification corresponding to the process graph $\mathcal{G}$.
  However, that specification needs to be satisfied only up to provability in $\milnersysmin$ from sound assumptions. 
Such coinductive derivations are typically circular,  
like the one below of the semantically valid equation $(a + b)^* \cdot 0 \formeq (a \cdot (a + b) + b)^* \cdot 0\,$:
%
\begin{flushleft}\vspace*{-2ex}\label{inf:ex:1:coindproof}
  \hspace*{-0.5ex}%
  \begin{tikzpicture}
 
\matrix[anchor=center,row sep=1.25cm,column sep=2.65cm,
        every node/.style={draw=none}
        ] at (0,0) {
    \node(v1){}; &             & \node(v2){};
    \\
                 & \node(v){}; 
    \\
  };
  
\path (v1) node(v1-label){$ (1 \cdot g^*) \cdot 0
                              \formeq
                            ((1 \cdot (a + b)) \cdot h^*)  \cdot 0 $};       
\draw[->,shorten <=2.6cm,shorten >=1.75cm] (v1) to node[above,pos=0.5725]{$a, b$} (v2);

\path (v2) node(v2-label){$ (1 \cdot g^*) \cdot 0
                              \formeq
                             (1 \cdot h^*) \cdot 0 $}; 
\draw[->,thick,densely dotted,out=-45,in=0,distance=0.75cm,shorten <= 0.2cm] (v2) to node[right]{$\sone$} ($(v) + (2.775cm,0.3cm)$);

\path (v) node(v-label){$ \underbrace{(a + b)^*}_{g^*} \,\cdot\, 0
                            \formeq
                          \underbrace{(a \cdot (a + b) + b)^*}_{h^*} \,\cdot\, 0 $}; 
\draw[->,thick,darkcyan,shorten <= 1.1cm,shorten >= 0.5cm] (v) to node[above,pos=0.45]{$\loopnsteplab{1}$}
                                                                  node[below,pos=0.7]{$\black{a}$} (v1);
\draw[->,thick,darkcyan,shorten <= 1.1cm,shorten >= 0.5cm] (v) to node[above,pos=0.45]{$\loopnsteplab{1}$}
                                                                  node[below,pos=0.7]{$\black{b}$} (v2);

\matrix[anchor=center,row sep=1.25cm,column sep=1.1cm,
        every node/.style={draw,very thick,circle,minimum width=2.5pt,fill,inner sep=0pt,outer sep=2pt}
        ] at (6.5,0) {
    \node(v1--2){}; &             & \node(v2--2){};
    \\
                 & \node(v--2){}; 
    \\
  };  
  
\draw[->,thick,darkcyan] (v--2) to node[right,pos=0.7]{$\loopnsteplab{1}$}
                                   node[left,pos=0.45]{$\black{a}$} (v1--2);
\draw[->,thick,darkcyan] (v--2) to node[left,pos=0.7]{$\loopnsteplab{1}$}
                                   node[right,pos=0.45]{$\black{b}$} (v2--2);  

\draw[->] (v1--2) to node[above]{$a, b$} (v2--2);

\draw[<-,very thick,>=latex,chocolate,shorten <= 2pt](v--2) -- ++ (180:0.55cm); 
\draw[->,thick,densely dotted,out=-45,in=0,distance=0.75cm] 
  (v2--2) to node[right]{$\sone$} (v--2);

\path (v--2) ++ (-1.35cm,0.2cm) node(v--2-label) {\Large $\mathcal{G},\,\Hat{\mathcal{G}}$};

\end{tikzpicture}  \vspace*{-3ex}
\end{flushleft}
The process graph $\mathcal{G}$, 
  which is given together with a labeling $\Hat{\mathcal{G}}$ that is a `\LLEEwitness'~of~$\mathcal{G}$ 
  (the \darkcyan{colored} transitions with marking labels $\loopsteplab{n}$, for $n\in\natplus$,
   indicate LLEE-structure, see Section~\ref{LEE}),
underlies the coinductive proof on the left
(see Ex.~\ref{ex:inf:ex:1:coindproof} in the Appendix 
                                      for a justification). 
$\mathcal{G}$ is a `\onechart' that~is, 
  a process graph with \onetransitions\ that represent empty step processes.
We depict \onetransitions\ as dotted arrows.
For \onecharts, `\onebisimulation' is the adequate concept of bisimulation.
We showed in \cite{grab:2021:TERMGRAPH-postproceedings,grab:2020:scpgs-arxiv} 
  that the process (chart) interpretation $\chartof{\astexp}$ of a star expression $\astexp$ is
  the image of a \onechart\ $\onechartof{\astexp}$ with \LLEE\ under a functional \onebisimulation.
In this example, $\mathcal{G} = \onechartof{h^* \cdot 0}$ maps by a functional \onebisimulation\
  to interpretations of both expressions in the conclusion. 
The correctness conditions for such coinductive proofs are formed by the requirement that the left-, and respectively, the right-hand sides of formal equations
form `\provablein{\milnersysmin} solutions' of the underlying process graph: an expression at a vertex~$\avert$ 
can be reconstructed, provably in $\milnersysmin$, from the transitions to, and the expressions at, immediate successor vertices of $\avert$.
Crucially we establish in Section~\ref{coindmilnersys:2:milnersys},
by a generalization of arguments in \cite{grab:fokk:2020a,grab:fokk:2020b} using $\RSPstar$,
  that every \LLEEwitnessed\ coinductive proof over $\milnersysmin$ can be transformed into a derivation~in~$\milnersys$~with~the~same~conclusion.

This raises the question of whether the fixed-point rule \RSPstar\ of $\milnersys$ adds any derivational power to $\milnersysmin$
  that goes beyond those of \LLEEwitnessed\ coinductive proofs over $\milnersysmin$, and if so, how far precisely.
As our main result
we show in Section~\ref{milnersys:2:coindmilnersys} that every instance of the fixed-point rule \RSPstar\
  can be mimicked by a \LLEEwitnessed\ coinductive proof over $\milnersysmin$ in which 
  also the premise of the rule may be used.%
  \begin{figure}[tb!]
  \begin{center}
    \AxiomC{$ \overbrace{(a + b)^*}^{\chocolate{{e _0 ^*}}} \chocolate{\mathrel{\cdot} 0} 
                \:\formeq\:
              \overbrace{(a \cdot (a + b) + b)}^{\alert{f}} \cdot (\overbrace{(a + b)^*}^{\chocolate{{e _0 ^*}}} \chocolate{{} \cdot 0}) + \forestgreen{0} $}
    \RightLabel{$\ainst$, $\RSPstar$}
    \UnaryInfC{$ (a + b)^* \cdot 0
                   \:\formeq\:
                 (a \cdot (a + b) + b)^* \,\cdot\, \forestgreen{0} $}
    \DisplayProof
    %
    %
    \AxiomC{$ \chocolate{{e _0 ^*} \cdot 0} \:\formeq\: \alert{f} \cdot (\chocolate{{e _0 ^*} \cdot 0}) + \forestgreen{0} \rule{0pt}{23.5pt}$}
    \UnaryInfC{$ \chocolate{{e _0 ^*} \cdot 0} \:\formeq\: \alert{f^*} \,\cdot\, \forestgreen{0}$}
    \DisplayProof 
  \end{center}
  \vspace*{-4.5ex}
  \begin{center}  
    \begin{tikzpicture}
      \matrix[anchor=center,row sep=1.25cm,column sep=3.5cm,
              every node/.style={draw=none}
              ] at (0,0) {
          \node(v1){}; &[1.5cm]             &[-1.5cm] \node(v2){};
          \\
                       & \node(v){}; 
          \\
        };
        
      \path (v1) node(v1-label){$ (1 \cdot (a + b)) \cdot (\chocolate{{e _0 ^*} \cdot 0}) 
                                    \formeq
                                  ((1 \cdot (a + b)) \cdot \alert{f^*})  \cdot \forestgreen{0} $};       
      \draw[->,shorten <=3.35cm,shorten >=1.75cm] (v1) to node[above,pos=0.6]{$a, b$} (v2);

      \path (v2) node(v2-label){$ 1 \cdot (\chocolate{{e _0 ^*} \cdot 0})
                                    \formeq
                                   (1 \cdot \alert{f^*}) \cdot \forestgreen{0} $}; 
      \draw[->,thick,densely dotted,out=-45,in=0,distance=0.75cm,shorten <= 0.2cm] (v2) to node[right]{$\sone$} ($(v) + (2.35cm,0.125cm)$);

      \path (v) node[xshift=-3.5cm,yshift=-0.15cm](v-label)
                             {$ \underbrace{\chocolate{{e _0 ^*} \cdot 0}}
                                           _{\text{(by the premise of $\ainst$)} \;\;
                                               (a \cdot (a + b) + b) \cdot (\chocolate{{e _0 ^*} \cdot 0}) + \forestgreen{0}
                                                 \:\synteq\:
                                               \alert{f} \cdot (\chocolate{{e _0 ^*} \cdot 0}) + \forestgreen{0} 
                                                 \: = \:
                                             \phantom{
                                                 \: = \:
                                               \alert{f} \cdot (\chocolate{{e _0 ^*} \cdot 0}) + \forestgreen{0}   
                                                 \:\synteq\: 
                                               (a \cdot (a + b) + b) \cdot (\chocolate{{e _0 ^*} \cdot 0}) + \forestgreen{0} 
                                               \text{(by the premise of $\ainst$)} }}
                                    \hspace*{-45ex}
                                  \formeq
                                \underbrace{(a \cdot (a + b) + b)^*}_{\alert{f^*}} \,\cdot\, \forestgreen{0} $}; 
      \draw[->,thick,darkcyan,shorten <= 1.1cm,shorten >= 1cm] (v) to node[above,pos=0.45]{$\loopnsteplab{1}$}
                                                                        node[below,pos=0.525]{$\black{a}$} (v1);
      \draw[->,thick,darkcyan,shorten <= 0.6cm,shorten >= 0.5cm] (v) to node[above,pos=0.45]{$\loopnsteplab{1}$}
                                                                        node[below,pos=0.575]{$\black{b}$} (v2); 
                                                                        
      \path (v) ++ (-5.5cm,0.5cm) node{\Large $ {\onecharthatof{\rule{0pt}{8pt}\alert{f^*}\cdot \forestgreen{0}}} $};                                                                     
    \end{tikzpicture}
  \end{center}
  \vspace*{-4ex}
\caption{\protect\label{fig:ex:1:RSPstar:to:coindproof}%
         Mimicking an instance $\protect\ainst$ of the fixed-point rule \protect\RSPstar\ (above) in Milner's system $\protect\milnersys = \protect\thplus{\protect\milnersysmin}{\protect\RSPstar}$
           by a coinductive proof (below) over $\protect\thplus{\protect\milnersysmin}{\protect\setexp{\protect\text{premise of $\protect\ainst$}}}$ with \protect\LLEEwitness~$\protect\onecharthatof{\protect\alert{f^*}\protect\cdot \protect\forestgreen{0}}$.} 
\end{figure}%
It follows that the derivational power that \RSPstar\ adds to $\milnersysmin$ within $\milnersys$ consists of iterating such \LLEEwitnessed\ coinductive proofs
along finite (meta-)prooftrees.    
The example in Fig.~\ref{fig:ex:1:RSPstar:to:coindproof} (see Ex.~\ref{ex:fig:ex:1:RSPstar:to:coindproof} in the Appendix for a justification)
  can give a first impression 
  of the construction that we will use (in the proof of Lem.~\ref{lem:lem:mimic:RSPstar})
  to mimic instances of \RSPstar.
  Here this construction results in a coinductive proof that only differs slightly from the one with the same underlying \LLEEonechart~we~saw~earlier.


Based on these two proof transformations we obtain a \theoremequivalent, coinductive variant $\coindmilnersys$ of $\milnersys$
by replacing $\RSPstar$ 
with a rule that as one premise permits a \LLEEwitnessed\ coinductive proof over $\milnersysmin$
plus the equations of other premises. 
We also define a \theoremequivalent\ system \CLC\ (`combining \LLEEwitnessed\ coinductive proofs') with this rule alone. 
While \CLC\ only has LEE-shaped coinductive proofs over $\milnersysmin$ as formulas,
we use a hybrid concept of formula in $\coindmilnersys$ that also permits equations between star expressions.

  
Additionally, we formulate proof systems $\coindmilnersysbar$ and \CC\
  that arise from $\coindmilnersys$ and \CLC\ by dropping `\LLEE-witnessed' as a requirement for coinductive proofs.
These systems are (obviously) complete for bisimilarity of process interpretations,
  because they can mimic the unique solvability rule scheme for guarded systems of specifications mentioned before.     
  
Our transformations are inspired by proof-theoretic interpretations in \cite{grab:2005:prfschr} between
proof systems for recursive type equality by Amadio and Cardelli~\cite{amad:card:1993},
  and by Brandt and Henglein~\cite{bran:heng:1998}. 
The transformation from $\coindmilnersys$ back to $\milnersys$ is similar in kind to
  one we described in \cite{grab:2005:calco} 
  from derivations in a coinductively motivated proof system for language equivalence between regular expressions
  to derivations in Salomaa's system \cite{salo:1966} with a fixed-point rule similar~to~$\RSPstar$.    
      
\medskip 

\section{Process semantics for star expressions, and Milner's proof system}%
  \label{prelims}

Here we fix terminology concerning star expressions, \onecharts, \onebisimulations,
  we exhibit Milner's system (and a few variants), and recall the chart interpretation of star expressions.

\smallskip

Let $\actions$ be a set 
                        of \emph{actions}.
The set $\StExpover{\actions}$ of \emph{star expressions over actions in $\actions$} 
are strings that are defined by the following grammar: 
\begin{center}
  $
  \astexp, \astexpi{1}, \astexpi{2}
    \:\BNFdefdby\:
  \stexpzero
    \BNFor
  \stexpone
    \BNFor  
  \aact
    \BNFor
  ( \stexpsum{\astexpi{1}}{\astexpi{2}} )
    \BNFor
  ( \stexpprod{\astexpi{1}}{\astexpi{2}} )
    \BNFor
  ( \stexpit{\astexp} ) \qquad\quad \text{(where $\aact\in\actions$)}
  $
\end{center} 
We will drop outermost brackets.
We use $\astexp,\bstexp,\cstexp,\dstexp$, possibly indexed and/or decorated, as syntactical variables for star expressions.
We write $\ssynteq$ for syntactic equality between star expressions denoted by such syntactical variables,
  and values of star expression functions, in a given context,
but we permit $\sformeq$ in formal equations between star expressions.
%
We denote by $\StExpEqover{\actions}$ the set of formal equations $\astexp \formeq \bstexp$ 
between two star expressions $\astexp,\bstexp\in\StExpover{\actions}$. 

We define sum expressions $\sum_{i=1}^{n} \astexpi{i}$ inductively
  as $\stexpzero$ if $n=0$,
  as $\astexpi{1}$ if $n=1$,
  and as $\stexpsum{(\sum_{i=1}^{n-1} \astexpi{i})}{\astexpi{n}}$ if $n>0$,
  for $n\in\nat = \setexp{0,1,2,\ldots}$.
The \emph{(syntactic) star height} $\sth{\astexp}$ of a star expression $\astexp\in\StExpover{\actions}$
is the maximal nesting depth of stars in $\astexp$, defined inductively 
by: $\sth{\stexpzero} \defdby \sth{\stexpone}  \defdby \sth{\aact} \defdby 0$, 
     $\sth{\stexpsum{\astexpi{1}}{\astexpi{2}}} \defdby \sth{\stexpprod{\astexpi{1}}{\astexpi{2}}}
                                                \defdby \max\setexp{\sth{\astexpi{1}}, \sth{\astexpi{2}}}$, 
     and $\sth{\stexpit{\astexp}} \defdby 1 + \sth{\astexp}$.

\smallskip

A \emph{\onechart}
is a 6\nb-tuple $\tuple{\verts,\actions,\sone,\start,\transs,\sterminates}$ 
  where $\verts$ is a \ul{finite} set of \emph{vertices},
  $\actions$ is a set of \emph{(proper)} $\emph{action labels}$,
  $\sone\notin\actions$ is the specified \emph{empty step label},
  $\start\in\verts$ is the \emph{start vertex} (hence $\verts \neq \emptyset$),
  $\transs \subseteq \verts\times\oneactions\times\verts$ is the \emph{labeled transition relation},
  where $\oneactions \defdby \actions \cup \setexp{\sone}$ is the set of action labels including $\sone$, 
  and $\sterminates \subseteq \verts$ is a set of \emph{vertices with immediate termination}. 
In such a \onechart, 
we call a transition in $\transs\cap(\verts\times\actions\times\verts)$ (labeled by a \emph{proper action} in $\actions$) 
        a \emph{proper transition},
and a transition in $\transs\cap(\verts\times\setexp{\sone}\times\verts)$ (labeled by the \emph{empty-step symbol}~$\sone$)
        a \emph{\onetransition}.  
Reserving non-underlined action labels like $\aact,\bact,\ldots$ for proper actions,
we use underlined action label symbols like $\aoneact$ for actions labels in the set $\oneactions$ 
that includes~the~label~$\sone$. 
We highlight in red transition labels that may involve $\sone$. 

We say that a \onechart\ is \emph{weakly guarded} if it does not contain cycles of \onetransitions.
By a \emph{chart} we mean a \onechart\ that is \emph{\sonefree} in the sense that it does not contain \onetransitions.

\smallskip

Below we define the process semantics of regular (star) expressions as (\sonefree) charts, 
  and hence as finite, rooted labeled transition systems,
  which will be compared with ($\sone$-)bisimilarity. 
The charts obtained correspond to non-deterministic finite-state automata that are obtained
  by iterating partial~derivatives \cite{anti:1996} of Antimirov (who did not aim at a process semantics).  
  
\begin{defi}
    \label{def:StExpTSS}\label{def:chartof}
  The \emph{chart interpretation of} a star expression~$\astexp\in\StExpover{\actions}$
  is the $\sone$\nb-tran\-si\-tion free chart $\chartof{\astexp} = \tuple{\vertsof{\astexp},\actions,\sone,\astexp,\transs\cap\vertsof{\astexp},\termexts\cap\vertsof{\astexp}}$,
  where $\vertsof{\astexp}$ consists of all star expressions that are reachable from $\astexp$
  via the labeled transition relation $ \transs \subseteq \StExpover{\actions}\times\actions\times\StExpsover{\actions}$
  that is defined, together with the \immediatetermination\ relation $\sterminates \subseteq \StExpsover{\actions}$,
  via derivability in the transition system specification (TSS)~$\StExpTSSover{\actions}$,
  for $\aact\in\actions$, $\astexp,\astexpi{1},\astexpi{2},\astexpacc,\astexpacci{1},\astexpacci{2}\in\StExpover{\actions}$:
  \begin{center}
    $
    \begin{aligned}
      &
      \AxiomC{\phantom{$\terminates{\stexpone}$}}
      \UnaryInfC{$\terminates{\stexpone}$}
      \DisplayProof
      & \hspace*{0.5ex} &
      \AxiomC{$ \terminates{\astexpi{1}} $}
      \UnaryInfC{$ \terminates{(\stexpsum{\astexpi{1}}{\astexpi{2}})} $}
      & \hspace*{4ex} &
      \AxiomC{$ \terminates{\astexpi{i}} $}
      \UnaryInfC{$ \terminates{(\stexpsum{\astexpi{1}}{\astexpi{2}})} $}
      \DisplayProof
      & \hspace*{4ex} &
      \AxiomC{$\terminates{\astexpi{1}}$}
      \AxiomC{$\terminates{\astexpi{2}}$}
      \BinaryInfC{$\terminates{(\stexpprod{\astexpi{1}}{\astexpi{2}})}$}
      \DisplayProof
      & \hspace*{4ex} &
      \AxiomC{$\phantom{\terminates{\stexpit{\astexp}}}$}
      \UnaryInfC{$\terminates{(\stexpit{\astexp})}$}
      \DisplayProof
    \end{aligned} 
    $
    \\[1ex]
    $
    \begin{aligned}
      &  
      \AxiomC{$\phantom{a \:\lt{a}\: \stexpone \rule{0pt}{12.4pt}}$}
      \UnaryInfC{$a \:\lt{a}\: \stexpone$}
      \DisplayProof
      & & \:
      \AxiomC{$ \astexpi{i} \:\lt{a}\: \astexpacci{i} $}
      \UnaryInfC{$ \stexpsum{\astexpi{1}}{\astexpi{2}} \:\lt{a}\: \astexpacci{i} $}
      \DisplayProof 
      & & \:
      \AxiomC{$ \astexpi{1} \:\lt{a}\: \astexpacci{1} $}
      \UnaryInfC{$ \stexpprod{\astexpi{1}}{\astexpi{2}} \:\lt{a}\: \stexpprod{\astexpacci{1}}{\astexpi{2}} $}
      \DisplayProof
      & &
      \AxiomC{$\terminates{\astexpi{1}}$}
      \AxiomC{$ \astexpi{2} \:\lt{a}\: \astexpacci{2} $}
      \BinaryInfC{$ \stexpprod{\astexpi{1}}{\astexpi{2}} \:\lt{a}\: \astexpacci{2} $}
      \DisplayProof
      & & \:
      \AxiomC{$\astexp \:\lt{a}\: \astexpacc \rule{0pt}{11.5pt}$}
      \UnaryInfC{$\stexpit{\astexp} \:\lt{a}\: \stexpprod{\astexpacc}{\stexpit{\astexp}}$}
      \DisplayProof
    \end{aligned}
    $
  \end{center}
  If $\astexp \lt{\aact} \astexpacc$ is derivable in $\StExpTSSover{\actions}$, for $\astexp,\astexpacc\in\StExpover{\actions}$,
  $\aact\in\actions$, then we say that $\astexpacc$ is a \emph{derivative} of $\astexp$.
  If $\terminates{\astexp}$ is derivable in $\StExpTSSover{\actions}$, 
  then we say that $\astexp$ \emph{permits immediate termination}.
\end{defi}

In Section~\ref{LEE} we define a refinement of this interpretation from \cite{grab:2021:TERMGRAPH-postproceedings} into a \onechart\ interpretation.
In both versions, ($\sone$-)charts obtained will be compared with respect to \onebisimilarity\
that relates the behavior of `induced transitions' of \onecharts.
By an \emph{induced \transitionact{\aact}} $\avert \ilt{\aact} \bvert$, for a proper action $\aact\in\actions$, in a \onechart~$\aonechart$ we mean 
  a path $\avert \lt{\sone} \cdots \lt{\sone} \cdot \lt{\aact} \bvert$ in $\aonechart$
  that consists of a finite number of \onetransitions\ that ends with a proper \transitionact{\aact}.
By \emph{induced termination} $\oneterminates{\avert}$, for $\avert\in\verts$ we mean that there is a path
  $\avert \lt{\sone} \cdots \lt{\sone} \averttilde$ with $\terminates{\averttilde}$~in~$\aonechart$.      

\begin{defi}[$\sone$-bisimulation]\label{def:onebisim}
  Let $\aonecharti{i} = \tuple{\vertsi{i},\actions,\sone,\starti{i},\sredi{i},\termextsi{i}}$
      be \onecharts, for $i\in\setexp{1,2}$. 
  
  By a \emph{\onebisimulation\ between $\aonecharti{1}$ and $\aonecharti{2}$}
  we mean a binary relation $\abisim \subseteq \vertsi{1}\times\vertsi{2}$ 
  such that $\pair{\starti{1}}{\starti{2}}\in\abisim$, 
  and for every $\pair{\averti{1}}{\averti{2}}\in\abisim$ the following three conditions hold:
  \begin{description}
    \item{(forth)} \quad
      $ \forall \avertacci{1}\in\vertsi{1}
          \forall \aact\in\actions
              \bigl(\,
                \averti{1} \ilti{\aact}{1} \avertacci{1}
                  \;\;\Longrightarrow\;\;
                    \exists \avertacci{2}\in\vertsi{2}
                      \bigl(\, \averti{2} \ilti{\aact}{2} \avertacci{2} 
                                 \logand
                               \pair{\avertacci{1}}{\avertacci{2}}\in\abisim \,)
            \,\bigr) \punc{,} $
      
    \item{(back)} \quad
      $ \forall \avertacci{2}\in\vertsi{2}
          \forall \aact\in\actions
            \bigr(\,
              \exists \avertacci{1}\in\vertsi{1}
                \bigl(\, \averti{1} \ilti{\aact}{1} \avertacci{1} 
                           \logand
                         \pair{\avertacci{1}}{\avertacci{2}}\in\abisim \,)
                  \;\;\Longleftarrow\;\;
                \averti{2} \ilti{\aact}{2} \avertacci{2}
              \,\bigr) \punc{,} $
      
    \item{(termination)} \quad
      $ \oneterminatesi{1}{\averti{1}}
          \;\;\Longleftrightarrow\;\;
            \oneterminatesi{2}{\averti{2}} \punc{.}$
  \end{description}
  \indent
  We denote by $\aonecharti{1} \onebisim \aonecharti{2}$, 
    and say that $\aonecharti{1}$ and $\aonecharti{2}$ are \emph{\onebisimilar},
  if there is a \onebisimulation\ between $\aonecharti{1}$ and $\aonecharti{2}$.
  We call \onebisimilar\ (\sonefree) charts $\acharti{1}$ and $\acharti{2}$ \emph{bisimilar},
     and write $\acharti{1} \bisim \acharti{2}$.
\end{defi}

Let $\actions$ be a set.
The \emph{basic proof system~$\eqlogicover{\actions}$ of equational logic}
  for star expressions
has as \emph{formulas} the formal equations between star expressions in $\StExpEqover{\actions}$,
and the following \emph{rules}$\,$:
%
%
\begin{gather*}\renewcommand{\fCenter}{\formeq}
\begin{alignedat}{3}
  &
\begin{aligned}[c]
  \AxiomC{$\phantom{\astexp   \fCenter   \astexp\rule{0pt}{7.5pt}}$}
  \RightLabel{\REFL}
  \UnaryInfC{$\astexp   \fCenter   \astexp\rule{0pt}{7.5pt}$} 
  \DisplayProof
\end{aligned}  
  & \qquad & 
\begin{aligned}[c]
  \Axiom$\astexp   \fCenter   \bstexp$
  \RightLabel{\SYMM}
  \UnaryInf$\bstexp   \fCenter   \astexp$
  \DisplayProof
\end{aligned}  
  & \qquad & 
\begin{aligned}[c]
  \AxiomC{$\astexp \formeq \bstexp$}
  \AxiomC{$\bstexp \formeq \cstexp$}
  \RightLabel{\TRANS}
  \BinaryInfC{$\astexp \formeq \cstexp\rule{0pt}{7.5pt} $}
  \DisplayProof
\end{aligned}  
  & \qquad
\begin{aligned}
  \Axiom$\astexp   \fCenter   \bstexp$
  \RightLabel{\CXT}
  \UnaryInf$\acxtap{\astexp}   \fCenter   \acxtap{\bstexp}$
  \DisplayProof
\end{aligned}
\end{alignedat}
\end{gather*}%
%
%
that is, the rules~$\REFL$ (for reflexivity), and the rules~$\SYMM$ (for symmetry), $\TRANS$ (for transitivity), and $\CXT$ (for filling a context),
where $\acxtap{\iets}$ is a 1-hole star expression context. 

By an \emph{\eqlogicbased\ system over $\StExpover{\actions}$} (and \emph{for star expressions over $\actions$}) 
we mean a proof system whose formulas are the formal equations in $\StExpEqover{\actions}$,
and whose rules include the rules of the basic system $\eqlogicover{\actions}$ of equational logic
(additionally, it may specify an arbitrary set of axioms). 
We will use $\asys$ as syntactical variable for \eqlogicbased\ proof systems. 

  
Let $\asys$ be an \eqlogicbased\ proof system over $\StExpover{\actions}$, and $\astexpi{1},\astexpi{2}\in\StExpover{\actions}$. 
We permit to write $\astexpi{1} \eqin{\asys} \astexpi{2}$ for $\derivablein{\asys} \astexpi{1} \formeq \astexpi{2}$,
  that is for the statement that there is a derivation without assumptions in $\asys$ that has conclusion $\astexpi{1} \formeq \astexpi{2}$.

\begin{defi}[sub-system, theorem equivalence, and theorem subsumption of proof systems]
  Let $\asysi{1}$ and $\asysi{2}$ be \eqlogicbased\ proof systems over $\StExpover{\actions}$. 
  We say that $\asysi{1}$ is a \emph{sub-system} of $\asysi{2}$, denoted by $\asysi{1} \subsystem \asysi{2}$,
    if every axiom of $\asysi{1}$ is an axiom of $\asysi{2}$, and every rule of $\asysi{1}$ is also a rule of $\asysi{2}$. 
  We say that \emph{$\asysi{1}$ is \theoremsubsumed\ by $\asysi{2}$}, denoted by $\asysi{1} \isthmsubsumedby \asysi{2}$,
    if whenever a formal equation $\astexpi{1} \formeq \astexpi{2}$ is derivable in $\asysi{1}$ 
         (without assumptions, by using only the rules and axioms of $\asysi{1}$),
       then $\astexpi{1} \formeq \astexpi{2}$ is also derivable in $\asysi{2}$.
  We say that \emph{$\asysi{1}$ and $\asysi{2}$ are \theoremequivalent}, denoted by $\asysi{1} \thmequiv \asysi{2}$,
    if they have the same derivable equations.
\end{defi}

\begin{defi}[Milner's system \milnersys, variants and subsystems]
            \label{def:milnersys}
  Let $\actions$ be a set of actions.
  
  By the proof system $\milnersysminover{\actions}$
    we mean  
  the \eqlogicbased\ proof system for star expressions over $\actions$
  with the following \emph{axiom schemes}:
  \begin{alignat*}{4}
    (\assocstexpsum) \quad & &
        \stexpsum{(\stexpsum{\astexp}{\bstexp})}{\cstexp}
          & \formeq
        \stexpsum{\astexp}{(\stexpsum{\bstexp}{\cstexp})}
    & \qquad
    (\leftidstexpprod) \quad & &
        \stexpprod{\stexpone}{\astexp}  
          & \formeq \astexp
    \displaybreak[0]\\ 
    (\neutralstexpsum) \quad & &
        \stexpsum{\astexp}{\stexpzero}
          & \formeq
        \astexp
    & \qquad
    (\rightidstexpprod) \quad & &   
        \stexpprod{\astexp}{\stexpone}
          & \formeq \astexp
    \displaybreak[0]\\ 
    (\commstexpsum) \quad & &
        \stexpsum{\astexp}{\bstexp} 
          & \formeq
        \stexpsum{\bstexp}{\astexp}
    & \qquad
    (\stexpzerostexpprod) \quad & &  
        \stexpprod{\stexpzero}{\astexp}
          & \formeq
        \stexpzero
    \displaybreak[0]\\ 
    (\idempotstexpsum) \quad & &
        \stexpsum{\astexp}{\astexp}
          & \formeq
        \astexp 
    & \qquad
    (\recdefstexpit) \quad & &
        \stexpit{\astexp}
          & \formeq \stexpsum{\stexpone}{\stexpprod{\astexp}{\stexpit{\astexp}}}
    \displaybreak[0]\\ 
    (\assocstexpprod) \quad & &
        \stexpprod{(\stexpprod{\astexp}{\bstexp})}{\cstexp}
          & \formeq
        \stexpprod{\astexp}{(\stexpprod{\bstexp}{\cstexp})}   
    & \qquad\quad
    (\termstexpit) \quad & & 
        \stexpit{\astexp}
          & \formeq \stexpit{(\stexpsum{\stexpone}{\astexp})}
    \displaybreak[0]\\ 
    (\rdistr) \quad & & 
        \stexpprod{(\stexpsum{\astexp}{\bstexp})}{\cstexp}
          & \formeq
        \stexpsum{\stexpprod{\astexp}{\cstexp}}{\stexpprod{\bstexp}{\cstexp}}
      %
  \end{alignat*}
  where $\astexp,\bstexp,\cstexp\in\StExpover{\actions}$, 
  and with the \emph{rules} of the system $\eqlogicover{\actions}$ of equational logic.

  The \emph{recursive specification principle for star iteration} $\RSPstar$,
  the \emph{unique solvability principle for star iteration} $\USPstar$,
  and the \emph{general unique solvability principle} $\USP$ are the rules:
  \begin{center}
    $
    \begin{aligned}\renewcommand{\fCenter}{\formeq}
      \Axiom$ \astexp  \fCenter  \stexpsum{\stexpprod{\bstexp}{\astexp}}{\cstexp} $
      \RightLabel{\RSPstar\ $\,${\small (if $\notterminates{\bstexp}$)}}
      \UnaryInf$ \astexp   \fCenter   \stexpprod{\stexpit{\bstexp}}{\cstexp} $
      \DisplayProof
      & \quad & 
      \AxiomC{$ \astexpi{1}   \formeq   \stexpsum{\stexpprod{\stexpit{\bstexp}}{\astexpi{1}}}
                                       {\cstexp} $}
      \AxiomC{$ \astexpi{2}   \formeq   \stexpsum{\stexpprod{\stexpit{\bstexp}}{\astexpi{2}}}
                                       {\cstexp} $}
      \RightLabel{\USPstar\ $\,${\small (if $\notterminates{\bstexp}$)}}
      \BinaryInfC{$ \astexpi{1}   \formeq   \astexpi{2} \rule{0pt}{7pt} $}                                 
      \DisplayProof
    \end{aligned}
    $
    \\[0.75ex]
    $
    \begin{aligned}  
      \AxiomC{$ \Bigl\{\,
                  \astexpi{i,1}   \formeq    \bigl(\, \sum_{j=1}^{n_i} \bstexpi{i,j} \prod \astexpi{j,1} \,\bigr) + \cstexpi{i} \,\bigr)
                    \quad
                  \astexpi{i,2}   \formeq    \bigl(\, \sum_{j=1}^{n_i} \bstexpi{i,j} \prod \astexpi{j,2} \,\bigr) + \cstexpi{i} \,\bigr) 
                \,\Bigr\}_{i=1,\ldots,n} $}
      \RightLabel{\USP\ $\,$ \parbox{\widthof{for all $i,j$)}}
                                    {\small (if $\notterminates{\bstexpi{i,j}}$
                                     \\\phantom{(}%
                                     for all $i,j$)}}
      \UnaryInfC{$ \astexpi{1,1}   \formeq    \astexpi{1,2} \rule{0pt}{7.5pt} $}
      \DisplayProof
    \end{aligned}
    $
  \end{center}
  
  \emph{Milner's proof system} $\milnersysover{\actions}$ is the extension of $\milnersysminover{\actions}$ 
  by adding the rule $\RSPstar$. Its variant systems $\milnersysaccover{\actions}$, and $\milnersysaccbarover{\actions}$,
  arise from $\milnersysminover{\actions}$ by adding (instead of $\RSPstar$) the rule $\USPstar$, and respectively, the rule $\USP$.
  $\ACIover{\actions}$ is the system with the axioms for \ul{a}ssociativity, \ul{c}ommutativity, and \ul{i}dempotency for $+$. 
  We will keep the action set $\actions$ implicit in the notation. 
  %
\end{defi}\enlargethispage{10ex}

\begin{prop}[Milner, \cite{miln:1984}]\label{prop:milnersys:sound}
  $\milnersys$ is sound for bisimilarity of chart interpretations.
  That is, for all $\astexp,\bstexp\in\StExpover{\actions}$ it holds: 
  $(\,
     \astexp \milnersyseq \bstexp
       \;\;\Longrightarrow\;\;
     \chartof{\astexp} \bisim \chartof{\bstexp}
   \,) \punc{.}$
\end{prop}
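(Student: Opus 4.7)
The plan is to proceed by induction on the length of a derivation of $\astexp \milnersyseq \bstexp$ in $\milnersys$. The induction has two main parts: (i) verifying that the rules of equational logic preserve bisimilarity of chart interpretations, and (ii) verifying that each axiom and each non-equational rule of $\milnersys$ is sound.

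First I would establish that bisimilarity $\sbisim$ on charts is an equivalence relation, which takes care of $\REFL$, $\SYMM$, and $\TRANS$. Then I would show that $\sbisim$ is a congruence with respect to the star-expression operators $\sstexpsum$, $\sstexpprod$, $\stexpit{(\cdot)}$, in the sense that the construction $\astexp \mapsto \chartof{\astexp}$ commutes (up to bisimilarity) with the operators, lifted to charts via the obvious constructions read off from the TSS $\StExpTSSover{\actions}$ in Def.~\ref{def:StExpTSS}. This congruence property, combined with compositionality of $\chartof{\cdot}$, handles the rule $\CXT$.

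Next I would verify soundness of each axiom scheme of $\milnersysmin$ by exhibiting an explicit bisimulation between the chart interpretations of the two sides. The axioms $\assocstexpsum$, $\commstexpsum$, $\neutralstexpsum$, $\idempotstexpsum$ reduce essentially to properties of the set of outgoing transitions from the start vertex; $\assocstexpprod$, $\leftidstexpprod$, $\rightidstexpprod$, $\stexpzerostexpprod$ are handled by unwinding the TSS rules for $\sstexpprod$; and $\rdistr$, $\recdefstexpit$, $\termstexpit$ each admit a routine bisimulation whose definition mirrors the TSS clauses. In each case the bisimulation is small and explicit.

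The main obstacle, and the step that requires most care, will be the soundness of the fixed-point rule $\RSPstar$. Here the task is to show that if $\chartof{\astexp} \bisim \chartof{\stexpsum{\stexpprod{\bstexp}{\astexp}}{\cstexp}}$ and $\notterminates{\bstexp}$, then $\chartof{\astexp} \bisim \chartof{\stexpprod{\stexpit{\bstexp}}{\cstexp}}$. The argument is the classical uniqueness-of-solution property for guarded fixed-point equations: the side condition $\notterminates{\bstexp}$ guarantees that every step taken in the ``$\bstexp \cdot (\cdot)$'' component consumes at least one proper action, so that a candidate bisimulation built by pairing iterated unfoldings of $\astexp$ with iterated unfoldings of $\stexpit{\bstexp}\cstexp$ can be closed off: concretely, one defines a relation $\abisim$ that pairs $\astexp$ with $\stexpprod{\stexpit{\bstexp}}{\cstexp}$, together with all the pairs of intermediate derivatives produced by matching the unfoldings of both sides step by step through the assumed bisimilarity, and uses the fact that $\bstexp$ cannot terminate immediately to rule out spurious immediate-termination or $\cstexp$-overtaking matches that would break the forth/back/termination conditions. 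Verifying that this $\abisim$ is indeed a bisimulation is the technical heart of the proof. Once this is done, combining the cases yields the desired implication, completing the induction.
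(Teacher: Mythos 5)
The paper does not actually prove this proposition: it is stated as a citation to Milner's 1984 article, and no proof appears in the text. Your plan is the standard soundness argument that Milner's original proof follows --- induction on derivations, equivalence and congruence of $\sbisim$ for the rules of $\eqlogic$, explicit bisimulations for each axiom of $\milnersysmin$, and a uniqueness-of-guarded-solutions argument for $\RSPstar$ --- and it is correct in outline. Two small points deserve care if you were to write it out. First, for $\CXT$ what you need is literally that $\chartof{\astexp}\bisim\chartof{\bstexp}$ implies $\chartof{\acxtap{\astexp}}\bisim\chartof{\acxtap{\bstexp}}$; this follows by induction on the context from congruence of $\sbisim$ with respect to each operator, which is provable directly because the derivatives of $\stexpsum{\astexpi{1}}{\astexpi{2}}$, $\stexpprod{\astexpi{1}}{\astexpi{2}}$ and $\stexpit{\astexp}$ computed by the TSS have a fixed syntactic shape in terms of derivatives of the arguments. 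Second, in the $\RSPstar$ case the relation you describe is obtained by closing pairs under the \emph{assumed} bisimilarity $\chartof{\astexp}\bisim\chartof{\stexpsum{\stexpprod{\bstexp}{\astexp}}{\cstexp}}$, so what you actually construct is a bisimulation up to $\sbisim$ rather than a bisimulation; you should either invoke the soundness of the up-to-bisimilarity technique for strong bisimulation or take the composite $\sbisim\cdot\abisim\cdot\sbisim$ as the witnessing relation. With those two refinements the plan goes through, and the guardedness condition $\notterminates{\bstexp}$ plays exactly the role you assign to it.
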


\begin{question}[Milner, \cite{miln:1984}]
  Is $\milnersys$ also complete for bisimilarity of process interpretations?
  That is, does for all $\astexp,\bstexp\in\StExpover{\actions}$ the implication
  $(\,
     \astexp \milnersyseq \bstexp
       \;\;\Longleftarrow\;\;
     \chartof{\astexp} \bisim \chartof{\bstexp}
   \,)$ hold?
\end{question}


\begin{defi}[provable solutions]\label{def:provable:solution}
  Let $\asys$ be an \eqlogicbased\ proof system for star expressions over $\actions$
    that extends \ACI. 
  Let $\aonechart = \tuple{\verts,\actions,\sone,\start,\transs,\exts}$ be a \onechart.\pagebreak[4]
  
  By a \emph{star expression function on $\aonechart$} we mean a function $\sasol \funin \verts \to \StExpover{\actions}$
    on the vertices of $\aonechart$. 
  Let $\avert\in\verts$.   
  We say that such a star expression function $\sasol$ on $\aonechart$ 
    is an \emph{\provablein{\asys} solution of $\aonechart$ at $\avert$} if it holds that 
    $
    \asol{\avert}
      \,\eqin{\asys}\,
        \stexpsum{\terminatesconstof{\aonechart}{\avert}}
                 {\sum_{i=1}^{n} \stexpprod{\aoneacti{i}}{\asol{\averti{i}}}} 
          \punc{,}
    $
    given the (possibly redundant) list representation
    $\transitionsinfrom{\aonechart}{\avert}
       =
     \descsetexpbig{ \avert \lt{\aoneacti{i}} \averti{i} }{ i \in\setexp{1,\ldots,n} }$,
    of transitions from $\avert$ in~$\aonechart$ 
    and where $\terminatesconstof{\aonechart}{\avert}$
    is the \emph{termination constant $\terminatesconstof{\aonechart}{\avert}$ of $\aonechart$ at $\avert$}
      defined as $\stexpzero$ if $\notterminates{\avert}$,
                      and as $\stexpone$ if $\terminates{\avert}$.
    This definition does not depend on the specifically chosen list representation of $\transitionsinfrom{\aonechart}{\avert}$, 
    because $\asys$ extends \ACI, and therefore it contains the associativity, commutativity, and idempotency axioms~for~$\sstexpsum$.
  
  By an \emph{\provablein{\asys} solution of $\aonechart$} 
    (with \emph{principal value $\asol{\start}$} at the start vertex $\start$)
    we mean a star expression function $\sasol$ on $\aonechart$ that is an \provablein{\asys} solution of $\aonechart$ at every vertex~of~$\aonechart$.
\end{defi}

\section{Layered loop existence and elimination, and LLEE-witnesses}
  \label{LEE}

In this subsection we recall principal definitions and statements from \cite{grab:fokk:2020a,grab:2021:TERMGRAPH-postproceedings}.
We keep formalities to a minimum as necessary for our purpose (in particular for `\LLEEwitnesses').  
  
\smallskip 
  
  A \onechart~$\aoneloop = \tuple{\verts,\actions,\sone,\start,\transs,\termexts}$ is called a \emph{loop \onechart} if
  it satisfies three conditions:\vspace*{-0.25ex}
  \begin{enumerate}[(L1)]
    \item{}\label{loop:1}
      There is an infinite path from the start vertex $\start$.
    \item{}\label{loop:2}  
      Every infinite path from $\start$ returns to $\start$ after a positive number of transitions.
    \item{}\label{loop:3}
      Immediate termination is only permitted at the start vertex, that is, $\termexts\subseteq\setexp{\start}$.
  \end{enumerate}\vspace*{-0.25ex}
  We call the transitions from $\start$ \emph{\loopentry\ transitions},
  and all other transitions \emph{\loopbody\ transitions}.
  A \emph{loop \subonechart\ of} a \onechart~$\aonechart$
    is a loop \onechart~$\aoneloop$
    that is a \subonechart\ of $\aonechart$ 
      with some vertex $\avert\in\verts$ of $\aonechart$ as start vertex,
    such that $\aoneloop$ is constructed, for a nonempty set $\asettranss$ of transitions of $\aonechart$ from $\avert$,
    by all paths that start with a transition in $\asettranss$ and continue onward until $\avert$ is reached again
  (so the transitions in $\asettranss$ are the \loopentrytransitions~of~$\aoneloop$).
%
%
%
%
%

The result of \emph{eliminating a loop \subonechart\ $\aoneloop$ from a \onechart\ $\aonechart$}
  arises by removing all \loopentrytransitions\ of $\aoneloop$ from $\aonechart$, 
  and then also removing all vertices and transitions that become unreachable. 
  We say that a \onechart\ $\aonechart$ has the \emph{loop existence and elimination property} (\LEE)
  if the procedure, started on~$\aonechart$, of repeated eliminations of loop \subonecharts\
  results in a \onechart\ without an infinite path.
  If, in a successful elimination process from a \onechart~$\aonechart$,
  \ul{loop-entry transitions} are \ul{never} removed from the body of a previously eliminated loop \subonechart,
  then we say that $\aonechart$ satisfies \emph{layered \LEE} (\LLEE),
  and is a \emph{\LLEEonechart}. 
While the property \LLEE\ leads to a formally easier concept of `witness', it is equivalent to 
                                                                                               \LEE. 
(For an example of a \LEEwitness\ that is not layered, see further below on page~\pageref{non-ex-LLEEw}.)                                                                                    
%
\begin{center}
  \begin{tikzpicture}

\matrix[anchor=center,row sep=1cm,column sep=0.65cm,
        every node/.style={draw,very thick,circle,minimum width=2.5pt,fill,inner sep=0pt,outer sep=2pt}] at (0,0) {
    & \node[chocolate](v--1){};
  \\           
  \node(v1){};
    & & \node(v2){};
  \\
  \node(v11){};
    & & \node(v21){};
  \\   
  };   
\path (v--1) ++ (-1cm,0.35cm) node(label){\LARGE $\aonechart$};

\draw[<-,very thick,>=latex,chocolate,shorten <= 2pt](v--1) -- ++ (90:0.55cm);   
\draw[thick,chocolate] (v--1) circle (0.12cm);
\path (v--1) ++ (0.25cm,0.3cm) node{\small $\avert$};
\draw[->,shorten >= 0.175cm,shorten <= 2pt] 
  (v--1) to 
         (v11);
\draw[->,shorten >= 0.175cm,shorten <= 2pt] 
  (v--1) to 
         (v21);

\path (v1) ++ (-0.225cm,0.25cm) node{\small $\averti{1}$};
\draw[->,very thick,shorten >= 0pt]
  (v1) to 
          (v11);
\draw[->,thick,densely dotted,out=90,in=180,distance=0.5cm,shorten >=2pt](v1) to (v--1);
\draw[->,shorten <= 0pt,shorten >= 0pt] (v1) to 
                                                (v21); 

\path (v11) ++ (0cm,-0.25cm) node{\small $\averti{11}$};
\draw[->,thick,densely dotted,out=180,in=180,distance=0.75cm](v11) to (v1);

\path (v2) ++ (0.25cm,0.25cm) node{\small $\averti{2}$};
\draw[->,shorten >= 0pt]
  (v2) to 
          (v21);
\draw[->,thick,densely dotted,out=90,in=0,distance=0.5cm,shorten >= 2pt](v2) to (v--1);

\path (v21) ++ (0cm,-0.25cm) node{\small $\averti{21}$};
\draw[->,thick,densely dotted,out=-0,in=0,distance=0.75cm](v21) to (v2);

\matrix[anchor=center,row sep=1cm,column sep=0.65cm,
        every node/.style={draw,very thick,circle,minimum width=2.5pt,fill,inner sep=0pt,outer sep=2pt}] at (3.5,0) {
    & \node[chocolate](v--2){};
  \\           
  \node(v1){};
    & & \node(v2){};
  \\
  \node(v11){};
    & & \node(v21){};
  \\   
  };   
  %

\draw[<-,very thick,>=latex,chocolate,shorten <= 2pt](v--2) -- ++ (90:0.55cm);   
\draw[thick,chocolate] (v--2) circle (0.12cm);
\path (v--2) ++ (0.25cm,0.3cm) node{\small $\avert$};
\draw[->,shorten <= 2pt] 
  (v--2) to 
         (v11);
\draw[->,shorten >= 0.175cm,shorten <= 2pt] 
  (v--2) to 
         (v21);

\path (v1) ++ (-0.225cm,0.25cm) node{\small $\averti{1}$};
\draw[->,thick,densely dotted,out=90,in=180,distance=0.5cm,shorten >=2pt](v1) to (v--2);
\draw[->,shorten <= 0pt,shorten >= 0pt] (v1) to 
                                                (v21); 

\path (v11) ++ (0cm,-0.25cm) node{\small $\averti{11}$};
\draw[->,thick,densely dotted,out=180,in=180,distance=0.75cm](v11) to (v1);

\path (v2) ++ (0.25cm,0.25cm) node{\small $\averti{2}$};
\draw[->,very thick,shorten >= 0pt]
  (v2) to 
          (v21);
\draw[->,thick,densely dotted,out=90,in=0,distance=0.5cm,shorten >= 2pt](v2) to (v--2);

\path (v21) ++ (0cm,-0.25cm) node{\small $\averti{21}$};
\draw[->,thick,densely dotted,out=-0,in=0,distance=0.75cm](v21) to (v2);

\draw[-implies,thick,double equal sign distance, bend left,distance=1.45cm,
               shorten <= 0.5cm,shorten >= 0.4cm
               ] (v--1) to node[below,pos=0.7]{\scriptsize elim} (v--2);

\matrix[anchor=center,row sep=1cm,column sep=0.65cm,
        every node/.style={draw,very thick,circle,minimum width=2.5pt,fill,inner sep=0pt,outer sep=2pt}] at (7,0) {
    & \node[chocolate](v--3){};
  \\           
  \node(v1){};
    & & \node(v2){};
  \\
  \node(v11){};
    & & \node(v21){};
  \\   
  };   
  %

\draw[<-,very thick,>=latex,chocolate,shorten <= 2pt](v--3) -- ++ (90:0.55cm);   
\draw[thick,chocolate] (v--3) circle (0.12cm);
\path (v--3) ++ (0.25cm,0.3cm) node{\small $\avert$};
\draw[->,very thick,shorten <= 2pt] 
  (v--3) to 
         (v11);
\draw[->,very thick,shorten >= 0.1cm,shorten <= 2pt] 
  (v--3) to 
         (v21);

\path (v1) ++ (-0.225cm,0.25cm) node{\small $\averti{1}$};
\draw[->,thick,densely dotted,out=90,in=180,distance=0.5cm,shorten >=2pt](v1) to (v--3);
\draw[->,shorten <= 0pt,shorten >= 0pt] (v1) to 
                                                (v21); 

\path (v11) ++ (0cm,-0.25cm) node{\small $\averti{11}$};
\draw[->,thick,densely dotted,out=180,in=180,distance=0.75cm](v11) to (v1);

\path (v2) ++ (0.25cm,0.25cm) node{\small $\averti{2}$};
\draw[->,thick,densely dotted,out=90,in=0,distance=0.5cm,shorten >= 2pt](v2) to (v--3);

\path (v21) ++ (0cm,-0.25cm) node{\small $\averti{21}$};
\draw[->,thick,densely dotted,out=-0,in=0,distance=0.75cm](v21) to (v2);

\draw[-implies,thick,double equal sign distance, bend left,distance=1.45cm,
               shorten <= 0.5cm,shorten >= 0.4cm
               ] (v--2) to node[below,pos=0.7]{\scriptsize elim} (v--3);

\matrix[anchor=center,row sep=1cm,column sep=0.65cm,
        every node/.style={draw,very thick,circle,minimum width=2.5pt,fill,inner sep=0pt,outer sep=2pt}] at (10,0) {
    & \node[chocolate](v--4){};
  \\           
  \node[draw=none,fill=none](v1){};
    & & \node[draw=none,fill=none](v2){};
  \\
  \node[draw=none,fill=none](v11){};
    & & \node[draw=none,fill=none](v21){};
  \\   
  };   
\path (v--4) ++ (1.1cm,0.45cm) node(label){\LARGE $\aonechart'''$};

\draw[<-,very thick,>=latex,chocolate,shorten <= 2pt](v--4) -- ++ (90:0.55cm);   
\draw[thick,chocolate] (v--4) circle (0.12cm);
\path (v--4) ++ (0.25cm,0.3cm) node{\small $\avert$};

\draw[-implies,thick,double equal sign distance, bend left,distance=1.25cm,
               shorten <= 0.5cm,shorten >= 0.4cm
               ] (v--3) to node[below,pos=0.7]{\scriptsize elim} (v--4);

\end{tikzpicture} %
\end{center}\vspace*{-1.25ex}

The picture above shows a successful run of the loop elimination procedure. 
In brown we highlight start vertices by \picarrowstart, and immediate termination with a boldface ring.
The \loopentry\ transitions of loop \subonecharts\ that are eliminated in the next step are marked in bold. 
We have neglected action labels here, except for indicating \onetransitions\ by~dotted~arrows.
Since the graph $\aonechart'''$ that is reached after three loop-subgraph elimination steps from the \onechart\ $\aonechart$ does not have 
an infinite path, and no \loopentry\ transitions have been removed from a previously eliminated loop \subonechart,
we conclude that $\aonechart$ satisfies \LEE\ and \LLEE. 
 
\begin{center}\vspace{-0.5ex}
  \scalebox{0.95}{\begin{tikzpicture}

\matrix[anchor=center,row sep=1cm,column sep=0.75cm,
        every node/.style={draw,very thick,circle,minimum width=2.5pt,fill,inner sep=0pt,outer sep=2pt}] at (0,0) {
    & \node[chocolate](v){};
  \\[-0.25ex]              
  \node(v1){};
    & & \node(v2){};
  \\[0.25cm]
  \node(v11){};
    & & \node(v21){};
  \\   
  };   
\path (v) ++ (-1.15cm,0.15cm) node(label){\LARGE $\aonecharthati{1}$};

\draw[<-,very thick,>=latex,chocolate,shorten <= 2pt](v) -- ++ (90:0.55cm);   
\draw[thick,chocolate] (v) circle (0.12cm);
\path (v) ++ (0.25cm,-0.3cm) node{\small $\avert$};
\draw[->,thick,darkcyan,shorten >= 0.175cm,shorten <= 2pt] 
  (v) to 
         node[right,pos=0.4,xshift=-0.075cm,yshift=1pt]{\small $\loopnsteplab{3}$}  (v11);
\draw[->,thick,darkcyan,shorten >= 0.175cm,shorten <= 2pt] 
  (v) to 
         node[left,pos=0.6,xshift=0.075cm,yshift=1pt]{\small $\loopnsteplab{3}$} (v21);

\path (v1) ++ (-0.225cm,0.25cm) node{\small $\averti{1}$};
\draw[->,thick,darkcyan,shorten >= 0pt]
  (v1) to 
          node[left,pos=0.6,xshift=0.075cm]{\small $\loopnsteplab{1}$} (v11);
\draw[->,thick,densely dotted,out=90,in=180,distance=0.5cm,shorten >=2pt](v1) to (v);
\draw[->,shorten <= 0pt,shorten >= 0pt] (v1) to 
                                                (v21); 

\path (v11) ++ (0cm,-0.25cm) node{\small $\averti{11}$};
\draw[->,thick,densely dotted,out=180,in=180,distance=0.75cm](v11) to (v1);

\path (v2) ++ (0.25cm,0.25cm) node{\small $\averti{2}$};
\draw[->,thick,darkcyan,shorten >= 0pt]
  (v2) to 
          node[right,pos=0.6,xshift=-0.075cm]{\small $\loopnsteplab{2}$} (v21);
\draw[->,thick,densely dotted,out=90,in=0,distance=0.5cm,shorten >= 2pt](v2) to (v);

\path (v21) ++ (0cm,-0.25cm) node{\small $\averti{21}$};
\draw[->,thick,densely dotted,out=0,in=0,distance=0.75cm](v21) to (v2);

\matrix[anchor=center,row sep=1cm,column sep=0.75cm,
        every node/.style={draw,very thick,circle,minimum width=2.5pt,fill,inner sep=0pt,outer sep=2pt}] at (4.5,0) {
    & \node[chocolate](v){};
  \\[-0.25ex]              
  \node(v1){};
    & & \node(v2){};
  \\[0.25cm]
  \node(v11){};
    & & \node(v21){};
  \\   
  };   
\path (v) ++ (-1.15cm,0.15cm) node(label){\LARGE $\aonecharthati{2}$};

\draw[<-,very thick,>=latex,chocolate,shorten <= 2pt](v) -- ++ (90:0.55cm);   
\draw[thick,chocolate] (v) circle (0.12cm);
\path (v) ++ (0.25cm,-0.3cm) node{\small $\avert$};
\draw[->,thick,darkcyan,shorten >= 0.175cm,shorten <= 2pt] 
  (v) to 
         node[right,pos=0.4,xshift=-0.075cm,yshift=1pt]{\small $\loopnsteplab{4}$}  (v11);
\draw[->,thick,darkcyan,shorten >= 0.175cm,shorten <= 2pt] 
  (v) to 
         node[left,pos=0.6,xshift=0.075cm,yshift=1pt]{\small $\loopnsteplab{3}$} (v21);

\path (v1) ++ (-0.225cm,0.25cm) node{\small $\averti{1}$};
\draw[->,thick,darkcyan,shorten >= 0pt]
  (v1) to 
          node[left,pos=0.6,xshift=0.075cm]{\small $\loopnsteplab{2}$} (v11);
\draw[->,thick,densely dotted,out=90,in=180,distance=0.5cm,shorten >=2pt](v1) to (v);
\draw[->,shorten <= 0pt,shorten >= 0pt] (v1) to 
                                                (v21); 

\path (v11) ++ (0cm,-0.25cm) node{\small $\averti{11}$};
\draw[->,thick,densely dotted,out=180,in=180,distance=0.75cm](v11) to (v1);

\path (v2) ++ (0.25cm,0.25cm) node{\small $\averti{2}$};
\draw[->,thick,darkcyan,shorten >= 0pt]
  (v2) to 
          node[right,pos=0.6,xshift=-0.075cm]{\small $\loopnsteplab{1}$} (v21);
\draw[->,thick,densely dotted,out=90,in=0,distance=0.5cm,shorten >= 2pt](v2) to (v);

\path (v21) ++ (0cm,-0.25cm) node{\small $\averti{21}$};
\draw[->,thick,densely dotted,out=-0,in=0,distance=0.75cm](v21) to (v2);

\matrix[anchor=center,row sep=1cm,column sep=0.75cm,
        every node/.style={draw,very thick,circle,minimum width=2.5pt,fill,inner sep=0pt,outer sep=2pt}] at (9,0) {
    & \node[chocolate](v){};
  \\[-0.25ex]              
  \node(v1){};
    & & \node(v2){};
  \\[0.25cm]
  \node(v11){};
    & & \node(v21){};
  \\   
  };   
\path (v) ++ (-1.15cm,0.15cm) node(label){\LARGE $\aonecharthati{3}$};

\draw[<-,very thick,>=latex,chocolate,shorten <= 2pt](v) -- ++ (90:0.55cm);   
\draw[thick,chocolate] (v) circle (0.12cm);
\path (v) ++ (0.25cm,-0.3cm) node{\small $\avert$};
\draw[->,thick,darkcyan,shorten >= 0.175cm,shorten <= 2pt] 
  (v) to 
         node[right,pos=0.4,xshift=-0.075cm,yshift=1pt]{\small $\loopnsteplab{2}$}  (v11);
\draw[->,thick,darkcyan,shorten >= 0.175cm,shorten <= 2pt] 
  (v) to 
         node[left,pos=0.6,xshift=0.075cm,yshift=1pt]{\small $\loopnsteplab{2}$} (v21);

\path (v1) ++ (-0.225cm,0.25cm) node{\small $\averti{1}$};
\draw[->,thick,darkcyan,shorten >= 0pt]
  (v1) to 
          node[left,pos=0.6,xshift=0.075cm]{\small $\loopnsteplab{1}$} (v11);
\draw[->,thick,densely dotted,out=90,in=180,distance=0.5cm,shorten >=2pt](v1) to (v);
\draw[->,shorten <= 0pt,shorten >= 0pt] (v1) to 
                                                (v21); 

\path (v11) ++ (0cm,-0.25cm) node{\small $\averti{11}$};
\draw[->,thick,densely dotted,out=180,in=180,distance=0.75cm](v11) to (v1);

\path (v2) ++ (0.25cm,0.25cm) node{\small $\averti{2}$};
\draw[->,thick,darkcyan,shorten >= 0pt]
  (v2) to 
          node[right,pos=0.6,xshift=-0.075cm]{\small $\loopnsteplab{1}$} (v21);
\draw[->,thick,densely dotted,out=90,in=0,distance=0.5cm,shorten >= 2pt](v2) to (v);

\path (v21) ++ (0cm,-0.25cm) node{\small $\averti{21}$};
\draw[->,thick,densely dotted,out=-0,in=0,distance=0.75cm](v21) to (v2);

\end{tikzpicture}}\vspace*{-1.5ex}
\end{center}
A \emph{\LLEEwitness\ $\aonecharthat$ of} a \onechart~$\aonechart$
is the recording of a 
                          successful run of the loop elimination procedure
by attaching to a transition $\atrans$ of $\aonechart$ the marking label $n$ for $n\in\natplus$ 
 (in pictures indicated as $\looplab{n}$, in steps as $\sredi{\looplab{n}}$) 
 forming a \emph{\loopentry\ transition}
if $\atrans$ is eliminated in the $n$\nb-th step,
and by attaching marking label $0$ to all other transitions of $\aonechart$
 (in pictures neglected, in steps indicated as $\sredi{\bodylab}$)
 forming a \emph{body transition}. 
Formally, \LLEEwitnesses\ arise as \emph{\entrybodylabeling{s}} from \onecharts,
 and are charts in which the transition labels are pairs of action labels over $\actions$,
 and marking labels in $\nat$.
We say that a \LLEEwitness~$\aonecharthat$ \emph{is guarded}
 if all \loopentrytransitions\ are proper, which means that they have a proper-action~transition~label.
  
The \entrybodylabeling\ $\aonecharthati{1}$ above of the \onechart~$\aonechart$ 
is a \LLEEwitness\ that arises from the run of the loop elimination procedure earlier above. 
The \entrybodylabelings\ $\aonecharthati{2}$ and $\aonecharthati{3}$ of $\aonechart$ record two other successful runs of the loop elimination procedure
of length 4 and 2, respectively, where for $\aonecharthati{3}$ we have permitted to eliminate two loop subcharts at different vertices
together in the first step. 
The \onechart~$\aonechart$ only has layered \LEEwitnesses.
But that is not the case for the \onechart~$\conechart$ below:

\begin{center}\vspace{-0.75ex}\label{non-ex-LLEEw}%
  \scalebox{0.95}{\begin{tikzpicture}  
\matrix[anchor=center,row sep=0.6cm,column sep=1.15cm,ampersand replacement=\&,
        every node/.style={draw,very thick,circle,minimum width=2.5pt,fill,inner sep=0pt,outer sep=2pt}] at (0,0) {
  \node(v){}; 
  \\
               \& \node(u){};
  \\
  \node(w1){};
  \\
  \\
  \node(w2){}; 
  \\   
  };     
   
\path (v) ++ (0.25cm,0.15cm) node{$\avert$}; 
\draw[<-,very thick,>=latex,chocolate](v) -- ++ (90:0.5cm);
\draw[->] (v) to (u);
\draw[->] (v) to (w1);
\path (v) ++ (-0.7cm,0.45cm) node(label){\LARGE $\conechart$};

\path (u) ++ (-0.05cm,-0.25cm) node{$\cvert$}; 
\draw[->] (u) to (w1);
\draw[->,out=-50,in=50,distance=1.2cm] (u) to (u);
\path (w1) ++ (-0.325cm,0cm) node{$\bverti{1}$};
\draw[->] (w1) to (w2);
\path (w2) ++ (0.35cm,-0.15cm) node{$\bverti{2}$};
\draw[->,out=180,in=180,distance=1.85cm] (w2) to (v);

\matrix[anchor=center,row sep=0.6cm,column sep=1.15cm,ampersand replacement=\&,
        every node/.style={draw,very thick,circle,minimum width=2.5pt,fill,inner sep=0pt,outer sep=2pt}] at (4.25,0) {
  \node(v){}; 
  \\
               \& \node(u){};
  \\
  \node(w1){};
  \\
  \\
  \node(w2){}; 
  \\   
  };     
   
\path (v) ++ (0.25cm,0.15cm) node{$\avert$}; 
\draw[<-,very thick,>=latex,chocolate](v) -- ++ (90:0.5cm);
\draw[->] (v) to (u);
\draw[->,very thick,darkcyan] (v) to node[right,xshift=-0.075cm,yshift=0.1cm]{$\loopnsteplab{1}$} (w1);
\path (v) ++ (-0.7cm,0.5cm) node(label){\LARGE $\conecharthati{1}$};
\path (u) ++ (-0.05cm,-0.25cm) node{$\cvert$}; 
\draw[->] (u) to (w1);
\draw[->,very thick,darkcyan,out=-50,in=50,distance=1.2cm] (u) to node[right,xshift=-0.075cm]{$\loopnsteplab{2}$} (u);
\path (w1) ++ (-0.325cm,0cm) node{$\bverti{1}$};
\draw[->,very thick,darkcyan] (w1) to node[right,xshift=-0.075cm,yshift=0.1cm]{$\loopnsteplab{3}$} (w2);
\path (w2) ++ (0.35cm,-0.15cm) node{$\bverti{2}$};
\draw[->,out=180,in=180,distance=1.85cm] (w2) to (v);

\matrix[anchor=center,row sep=0.6cm,column sep=1.15cm,ampersand replacement=\&,
        every node/.style={draw,very thick,circle,minimum width=2.5pt,fill,inner sep=0pt,outer sep=2pt}] at (8.75,0) {
  \node(v){}; 
  \\
               \& \node(u){};
  \\
  \node(w1){};
  \\
  \\
  \node(w2){}; 
  \\   
  };     
   
\path (v) ++ (0.25cm,0.15cm) node{$\avert$}; 
\draw[<-,very thick,>=latex,chocolate](v) -- ++ (90:0.5cm);
\draw[->,very thick,darkcyan] (v) to node[above,pos=0.6,xshift=0.025cm,yshift=0.025cm]{$\loopsteplab{3}$} (u);
\draw[->,very thick,darkcyan] (v) to node[right,xshift=-0.075cm,yshift=0.1cm]{$\loopnsteplab{1}$} (w1);
\path (v) ++ (-0.7cm,0.5cm) node(label){\LARGE $\conecharthati{2}$};
\path (u) ++ (-0.05cm,-0.25cm) node{$\cvert$}; 
\draw[->] (u) to (w1);
\draw[->,very thick,darkcyan,out=-50,in=50,distance=1.2cm] (u) to node[right,xshift=-0.075cm]{$\loopsteplab{2}$} (u);
\path (w1) ++ (-0.325cm,0cm) node{$\bverti{1}$};
\draw[->] (w1) to (w2);

\path (w2) ++ (0.35cm,-0.15cm) node{$\bverti{2}$};
\draw[->,out=180,in=180,distance=1.85cm] (w2) to (v);

\end{tikzpicture}  }\vspace*{-1.5ex}
\end{center}
The \entrybodylabeling~$\conecharthati{1}$ of $\conechart$ as above is a \LEEwitness\ that is not layered:
  in the third loop \subonechart\ elimination step that is recorded in $\conecharthati{1}$
  the \loopentrytransition\ from $\bverti{1}$ to $\bverti{2}$ is removed,
  which is in the body of the loop \subonechart\ at $\avert$ with \loopentrytransition\ from $\avert$ to $\bverti{1}$
    that (by that time) 
          has already been removed in the first \loopelimination\ step as recorded in $\conecharthati{1}$.
By contrast, the \entrybodylabeling~$\conecharthati{2}$ of $\conechart$ above is a layered \LEEwitness.
In general it can be shown that every \LEEwitness\ that is not layered can be transformed into a \LLEEwitness\ of the same underlying \onechart.    
Indeed, the step from $\conecharthati{1}$ to $\conecharthati{2}$ in the example above,
  which transfers the \loopentrytransition\ marking label $\loopsteplab{3}$
  from the transition from $\bverti{1}$ to $\bverti{2}$ over to the transition from $\avert$ to $\cvert$,
hints at the proof of this statement.
However, we do not need this result, because
  we will be able to use the guaranteed existence of \LLEEwitnesses\ (see Thm.~\ref{thm:onechart-int:LLEEw})
  for the \onechart\ interpretation~below~(see~Def.~\ref{def:onechartof}).

In a \LLEEwitness\ we denote by $\avert \descendsinloopto \bvert$, and by $\bvert \convdescendsinloopto \avert$, 
  that $\bvert$ is in the body of the loop \subonechart\ at $\avert$,
  which means that there is a path $\avert \redi{\looplab{n}} \avertacc \redrtci{\bodylab} \bvert$ 
    from $\avert$ via a \loopentrytransition\ and subsequent body transitions \ul{without} encountering $\avert$ again. 
  
\begin{lem}\label{lem:descsteps:bodysteps:wf}
  The relations $\sdescendsinloopto$ and $\sredi{\bodylab}$ defined by a \LLEEwitness~$\aonecharthat$
  of a \onechart~$\aonechart$ satisfy: 
  \begin{enumerate}[(i)]
    \item
      $\sconvdescendsinlooptotc$ is a well-founded, strict partial order on $\verts$.
    \item
      $\sconvredtci{\bodylab}$ is a well-founded strict partial order on $\verts$.  
  \end{enumerate}
\end{lem}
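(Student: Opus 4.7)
The plan is to extract well-foundedness measures from the marking-label data of the \LLEEwitness\ $\aonecharthat$. For both relations, strictness and transitivity are immediate from the transitive-closure construction once irreflexivity is established, so the real content lies in well-foundedness, which on the finite vertex set $\verts$ amounts to the absence of cycles.

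For part (i), I would introduce a measure $\Lambda \funin \verts \to \nat$ by letting $\Lambda(\avert)$ be the maximum marking label $n$ occurring on an outgoing loop-entry transition $\avert \loopnstepto{n} \avertacc$ of $\aonecharthat$, and setting $\Lambda(\avert) \defdby 0$ when $\avert$ has no outgoing loop-entry transitions. The key claim is that $\avert \descendsinloopto \bvert$ entails $\Lambda(\bvert) < \Lambda(\avert)$. Unfolding the definition, $\avert \descendsinloopto \bvert$ is witnessed by a path $\avert \loopnstepto{n} \avertacc \redrtci{\bodylab} \bvert$ that does not revisit $\avert$, whence $\Lambda(\avert) \geq n$ by construction, and $\bvert$ lies in the body of the loop sub-\onechart\ $\aoneloopi{n}$ eliminated at step $n$ in the run recorded by $\aonecharthat$. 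Here layeredness is decisive: if $\bvert$ carried an outgoing loop-entry transition with marking label $m$, then $m \neq n$ because $\bvert \neq \avert$ is not the start vertex of $\aoneloopi{n}$, and $m < n$ cannot fail either, for otherwise the later elimination step $m > n$ would remove a loop-entry transition from the body of the previously eliminated loop $\aoneloopi{n}$, contradicting the layered condition. Hence every outgoing loop-entry transition of $\bvert$ has marking label $< n$, giving $\Lambda(\bvert) < n \leq \Lambda(\avert)$. Strict decrease of $\Lambda$ along $\sdescendsinloopto$ precludes any infinite chain $\avert_0 \descendsinloopto \avert_1 \descendsinloopto \cdots$, which yields well-foundedness of $\sconvdescendsinlooptotc$ on $\verts$; irreflexivity and transitivity follow accordingly.

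For part (ii), I would argue more directly that the sub-graph of $\aonechart$ formed by the body transitions, namely those marked $0$ in $\aonecharthat$, is acyclic. By the \LLEE\ property the loop-elimination procedure recorded by $\aonecharthat$ terminates in a residual graph without an infinite path, and this residual graph consists of precisely the body transitions, since every loop-entry transition has been removed along the way. Finiteness of $\verts$ then rules out any $\sredi{\bodylab}$-cycle. Consequently $\sredtci{\bodylab}$ is irreflexive, and $\sconvredtci{\bodylab}$ admits no infinite descending chain on $\verts$, so it is well-founded; transitivity holds by construction, establishing the strict partial order property.

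The main obstacle I foresee is the bookkeeping in (i): precisely correlating the statement ``$\bvert$ lies in the body of $\aoneloopi{n}$'' with the shape of the witnessing path $\avert \loopnstepto{n} \avertacc \redrtci{\bodylab} \bvert$, and then squeezing the sharp inequality $m < n$ (not merely $m \leq n$) on marking labels of $\bvert$'s outgoing loop-entry transitions from the layered condition, via the two-step exclusion of $m > n$ (by layeredness) and $m = n$ (because $\bvert$ is not the start of $\aoneloopi{n}$, as the witnessing path does not revisit $\avert$).
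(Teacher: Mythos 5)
The paper states this lemma without giving a proof (it is imported from the formal treatment of LLEE-witnesses in the cited works), so there is no in-paper argument to compare against; judging your proposal on its own terms, part (i) is essentially the right argument, but part (ii) rests on a false claim. You assert that the final residual graph of a successful elimination run ``consists of precisely the body transitions, since every loop-entry transition has been removed along the way,'' and you derive acyclicity of $\sredi{\bodylab}$ from the acyclicity of that residual graph. But eliminating a loop sub-\onechart\ removes its \loopentrytransitions\ \emph{and then all vertices and transitions that have become unreachable}, so body transitions between pruned vertices vanish from the residual graph while remaining $\sredi{\bodylab}$-edges on $\verts$. The paper's own running example exhibits this: the final residual graph $\aonechart'''$ is a single isolated vertex, whereas the witness $\aonecharthati{1}$ still has the body transitions $\averti{11} \redi{\bodylab} \averti{1}$, $\averti{1} \redi{\bodylab} \averti{21}$, $\averti{21} \redi{\bodylab} \averti{2}$, etc. Hence acyclicity of the residual graph says nothing about a hypothetical $\sredi{\bodylab}$-cycle confined to vertices that were pruned as unreachable, and this is exactly the case your argument fails to exclude. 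To close the gap you must rule out such cycles directly, e.g.\ by taking the elimination step at which the cycle first becomes unreachable, locating the \loopentrytransition\ whose removal disconnects it, and showing that the cycle would then sit inside the body of the corresponding loop sub-\onechart, giving an infinite path that never returns to that loop's start vertex in violation of condition (L2).

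In part (i), the measure $\Lambda$, the exclusion of $m > n$ via layeredness, and the passage from strict decrease to well-foundedness of the transitive closure are all fine. The one soft spot is your exclusion of $m = n$: it presupposes that every transition with marking label $n$ emanates from the one start vertex of $\aoneloopi{n}$, whereas the paper explicitly permits eliminating loop sub-\onecharts\ at \emph{different} vertices in the same step (the witness $\aonecharthati{3}$ does exactly this), so a single label may occur at several vertices. You then need the additional fact that two loops eliminated in the same step cannot have the entry transitions of one lying in the body of the other --- otherwise $\Lambda$ need not strictly decrease --- which requires reading the layering condition as covering simultaneous eliminations as well as earlier ones. This is patchable, but as written the step ``$m \neq n$ because $\bvert$ is not the start vertex of $\aoneloopi{n}$'' does not follow.
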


\begin{defi}[$\sone$-chart interpretation of star expressions]\label{def:onechartof}
  By the \emph{\onechart\ interpretation} $\onechartof{\astexp}$ of a star expression $\astexp$
  we mean the \onechart\ that arises together with the \entrybodylabeling\ $\onecharthatof{\astexp}$ 
  as the $\astexp$\nb-rooted sub-\LTS\ generated by $\setexp{\astexp}$
  according to the following TSS:\vspace*{-3ex}
  \begin{center}
    $
    \begin{gathered}
      \begin{aligned}
        & 
        \AxiomC{$\phantom{a_i \:\lti{a_i}{\darkcyan{\bodylabcol}}\: \stexpone}$}
        \UnaryInfC{$a \:\lti{a}{\bodylabcol}\: \stexpone$}
        \DisplayProof
        & & 
        \AxiomC{$ \astexpi{i} \:\lti{\aact}{\darkcyan{\alab}}\: \asstexpacci{i} $}
        \AxiomC{\scriptsize ($i\in\setexp{1,2}$)}
        \insertBetweenHyps{\hspace*{-0ex}}
        \BinaryInfC{$ \stexpsum{\astexpi{1}}{\astexpi{2}} \:\lti{\aact}{\bodylabcol}\: \asstexpacci{i} $}
        \DisplayProof 
        & &
        \AxiomC{$   \phantom{\astexpi{1}}
                 \astexp \:\lti{a}{\darkcyan{\alab}}\: \asstexpacc 
                   \phantom{\asstexpacci{1}} $}
        \AxiomC{\scriptsize (if $\fap{\textit{nd$^+$}}{\astexp}$)}   
        \insertBetweenHyps{\hspace*{-2ex}}        
        \BinaryInfC{$\stexpit{\astexp} \:\lti{\aact}{\darkcyan{\loopnsteplab{\sth{\stexpit{\astexp}}}}}\: \stexpstackprod{\asstexpacc}{\stexpit{\astexp}}$}
        \DisplayProof
        & &
        \AxiomC{$   \phantom{\astexpi{1}}
                 \astexp \:\lti{a}{\darkcyan{\alab}}\: \asstexpacc 
                   \phantom{\astexpacci{1}} $}
        \insertBetweenHyps{\hspace*{-2ex}}     
        \AxiomC{\scriptsize (if $\lognot{\fap{\textit{nd$^+$}}{\astexp}}$)} 
        \BinaryInfC{$\stexpit{\astexp} \:\lti{a}{\bodylabcol}\: \stexpstackprod{\asstexpacc}{\stexpit{\astexp}}$}
        \DisplayProof 
      \end{aligned}
      \\
      \begin{aligned}
        &   
        \AxiomC{$ \asstexpi{1} \:\lti{\alert{\aoneact}}{\darkcyan{\alab}}\: \asstexpacci{1} $}
        \UnaryInfC{$ \stexpprod{\asstexpi{1}}{\astexpi{2}} \:\lti{\alert{\aoneact}}{\darkcyan{\alab}}\: \stexpprod{\asstexpacci{1}}{\astexpi{2}} $}
        \DisplayProof
        & & 
        \AxiomC{$ \asstexpi{1} \:\lti{\alert{\aoneact}}{\darkcyan{\alab}}\: \asstexpacci{1} $}
        \UnaryInfC{$ \stexpstackprod{\asstexpi{1}}{\stexpit{\astexpi{2}}} \:\lti{\alert{\aoneact}}{\darkcyan{\alab}}\: \stexpstackprod{\asstexpacci{1}}{\stexpit{\astexpi{2}}} $}
        \DisplayProof
        & &
        \AxiomC{$\terminates{\astexpi{1}}$}
        \AxiomC{$ \astexpi{2} \:\lti{a}{\darkcyan{\alab}}\: \asstexpacci{2} $}
        \BinaryInfC{$ \stexpprod{\astexpi{1}}{\astexpi{2}} \:\lti{a}{\bodylabcol}\: \asstexpacci{2} $}
        \DisplayProof
        & &
        \AxiomC{$ \hspace*{5ex} \terminates{\astexpi{1}}\rule{0pt}{13pt} \hspace*{5ex} $}
        \UnaryInfC{$ \stexpstackprod{\astexpi{1}}{\stexpit{\astexpi{2}}} \:\lti{\sone}{\bodylabcol}\: \stexpit{\astexpi{2}} $}
        \DisplayProof
      \end{aligned}
    \end{gathered}
    $
  \end{center}
  where $\darkcyan{\alab} \in \setexp{\bodylabcol} \cup \descsetexp{ \darkcyan{\loopnsteplab{\aLname}} }{ \aLname\in\natplus }$,
  and star expressions using a `stacked' product operation $\sstackprod$ are permitted,
  which helps to record iterations from which derivatives originate.
  Immediate termination for expressions of $\onechartof{\astexp}$ is defined by the same rules 
  as in Def.~\ref{def:StExpTSS} (for star expressions only, preventing immediate termination for expressions with stacked product $\sstackprod$).
  The condition $\normedplus{e}$ means that $e$ permits a positive length path to an expression $f$~with~$\terminates{f}$. 
  We use a projection function $\sproj$ that 
    changes occurrences of stacked product $\sstackprod$ into product $\ssprod$.
\end{defi}  

\begin{thm}[\cite{grab:2020:scpgs-arxiv,grab:2021:TERMGRAPH-postproceedings}]\label{thm:onechart-int:LLEEw}
  For every $\astexp\in\StExpover{\actions}$,
    (a)~the \entrybodylabeling~$\onecharthatof{\astexp}$ of $\onechartof{\astexp}$ is a \LLEEwitness\ of $\onechartof{\astexp}$,
    and
    (b) the projection function $\sproj$ defines a \onebisimulation\ from the \onechart\ interpretation $\onechartof{\astexp}$ of $\astexp$
        to the chart interpretation $\chartof{\astexp}$ of $\astexp$,
        and hence $\onechartof{\astexp} \onebisim \chartof{\astexp}$.  
\end{thm}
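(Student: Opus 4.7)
My plan is to prove both parts simultaneously by induction on the structure of $\astexp$, exploiting the fact that marking labels of loop-entry transitions in $\onecharthatof{\astexp}$ are drawn from the syntactic star heights $\sth{\stexpit{\bstexp}}$ of iteration subexpressions of $\astexp$. This gives us a numerical handle on the layering structure.

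For part~(a), I would first characterize the reachable vertices of $\onechartof{\astexp}$: by inspection of the TSS rules they are precisely expressions of the form obtained by replacing in some subexpression of $\astexp$ one or more positions by a derivative stacked (via $\sstackprod$) with the corresponding iteration expression. For each iteration subexpression $\stexpit{\bstexp}$ that actually produces loop-entry transitions (i.e.\ $\normedplus{\bstexp}$), I claim that the loop subchart marked $\loopnsteplab{\sth{\stexpit{\bstexp}}}$ and rooted at a vertex of the form $\stexpstackprod{\bstexpacc}{\stexpit{\bstexp}} \cdot \cdots$ consists precisely of paths through stacked-product descendants of $\bstexp$ over $\stexpit{\bstexp}$, terminated by the $\sone$-transition $\stexpstackprod{\cstexp}{\stexpit{\bstexp}} \lti{\sone}{\bodylabcol} \stexpit{\bstexp}$ whenever $\terminates{\cstexp}$. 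The essential point is that during a traversal of such a loop body, no transition that carries a loop-entry marking can be created from the outermost $\stexpit{\bstexp}$, only from strictly smaller iteration subexpressions of $\bstexp$; this is exactly what the side conditions $\normedplus{\astexp}$ and $\lognot{\normedplus{\astexp}}$ enforce, and it is what yields the \emph{layering}. Consequently, eliminating all loop-entry transitions marked with label $n$ for $n$ in decreasing order removes loops without ever removing a loop-entry transition from inside a body of a previously eliminated loop, establishing \LLEE.

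For part~(b), I would verify that $\sproj$ satisfies the three conditions of Def.~\ref{def:onebisim} (applied with the chart $\chartof{\astexp}$ viewed as a \onechart\ without \onetransitions). The key lemma to prove by induction on the derivation of $\astexp \lt{\aoneact}_1 \astexpacc$ in $\StExpTSSover{\actions}$ (as augmented with $\sstackprod$) is: (i)~if $\aoneact = \aact$ is proper, then $\sproj(\astexp) \lt{\aact} \sproj(\astexpacc)$ is derivable in $\StExpTSSover{\actions}$, and (ii)~if $\aoneact = \sone$, arising from the rule $\stexpstackprod{\astexpi{1}}{\stexpit{\astexpi{2}}} \lti{\sone}{\bodylabcol} \stexpit{\astexpi{2}}$, then $\sproj(\astexp) = \sproj(\astexpacc)$ as expressions of $\StExpover{\actions}$, since $\sproj$ collapses the stack. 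Combining these two facts, every induced \transitionact{\aact} $\astexp \ilt{\aact} \astexpacc$ in $\onechartof{\astexp}$ projects to a \transitionact{\aact} $\sproj(\astexp) \lt{\aact} \sproj(\astexpacc)$ in $\chartof{\astexp}$, giving (forth); and conversely every \transitionact{\aact} in $\chartof{\astexp}$ can be lifted to an induced \transitionact{\aact} in $\onechartof{\astexp}$ by reading the derivation tree and inserting $\sone$-moves where the rule for $\stexpit{\astexp}$ contributed a stacked product, giving (back). Termination preservation follows because the immediate-termination rules are identical for non-stacked expressions and stacked expressions do not permit immediate termination, while the $\sone$-transition $\stexpstackprod{\astexpi{1}}{\stexpit{\astexpi{2}}} \lti{\sone}{\bodylabcol} \stexpit{\astexpi{2}}$ (requiring $\terminates{\astexpi{1}}$) ensures $\stexpstackprod{\astexpi{1}}{\stexpit{\astexpi{2}}}$ has induced termination iff $\sproj(\stexpstackprod{\astexpi{1}}{\stexpit{\astexpi{2}}}) = \stexpprod{\astexpi{1}}{\stexpit{\astexpi{2}}}$ does.

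The main obstacle I anticipate is the precise bookkeeping in (a): formulating an invariant strong enough to assert that the marked loop subcharts are genuine loop \subonecharts\ in the sense of (L\ref{loop:1})--(L\ref{loop:3}), and that the layering condition holds, when vertices are allowed to be shared across multiple loop \subonecharts\ and termination is only permitted at $\astexp$ itself (so (L\ref{loop:3}) could be violated if the rules permitted immediate termination inside a stacked expression --- which is exactly why the TSS explicitly forbids it). Once that invariant is set up, the inductive step for $\stexpit{\astexp}$ reuses the hypothesis on $\astexp$ shifted by one star-height layer, and the steps for $+$ and $\cdot$ are routine.
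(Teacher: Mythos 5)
First, a point of reference: the paper does not prove Theorem~\ref{thm:onechart-int:LLEEw} itself --- it is imported by citation from the earlier work on the \onechart\ interpretation --- so there is no in-paper proof to compare against, and I am assessing your sketch on its own terms. Your overall strategy is the natural (and surely the intended) one. For part~(b) your two key facts are correct: proper transitions of the stacked TSS project under $\sproj$ to transitions of $\chartof{\astexp}$, the source and target of every \onetransition\ have the same projection, stacked expressions are denied immediate termination while $\terminates{\asstexpi{1}}$ governs both the $\sone$\nb-step out of $\stexpstackprod{\asstexpi{1}}{\stexpit{\astexpi{2}}}$ and the termination of its projection; from these, induced \transitionsact{\aact} project forward and chart transitions lift back, as you say. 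For part~(a) you also identify the right mechanism: \loopentrytransitions\ generated by $\stexpit{\bstexp}$ carry the marking label $\sth{\stexpit{\bstexp}}$, and the \loopentrytransitions\ occurring strictly inside the body of that loop come from iteration subexpressions of $\bstexp$ and hence carry strictly smaller labels.

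There is, however, a genuine error at exactly the step that makes part~(a) work: you propose to eliminate the \loopentrytransitions\ ``marked with label $n$ for $n$ in decreasing order'', i.e.\ outermost iterations first. By the paper's convention a transition labeled $n$ is one removed in the $n$\nb-th elimination step, so the run recorded by $\onecharthatof{\astexp}$ removes the \emph{smallest} labels (innermost stars) first --- and this order is forced, not a matter of convention. As long as an inner star of $\bstexp$ still has its \loopentrytransitions, the subchart generated at (a vertex with head) $\stexpit{\bstexp}$ by its marked entry transitions admits an infinite path that keeps re-entering the inner loop and never returns to the outer root (for $(\stexpprod{\stexpit{\aact}}{\bact})^{*}$, the path that cycles forever through the inner $\stexpit{\aact}$\nb-loop); condition (L\ref{loop:2}) then fails, that subchart is not a loop \subonechart\ at all, and the very first step of your proposed run is undefined. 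Relatedly, your description of the outer loop \subonechart\ as ``paths through stacked-product descendants \ldots\ terminated by the $\sone$\nb-transition'' is accurate only for the chart obtained \emph{after} the inner \loopentrytransitions\ have been removed, not for $\onechartof{\astexp}$ itself. Reversing the order repairs the argument: after steps $1,\ldots,n{-}1$ the body of a star-height-$n$ loop contains no remaining \loopentrytransitions, conditions (L\ref{loop:1})--(L\ref{loop:3}) hold for it (using that stacked expressions cannot terminate immediately and that a path can only return to the loop root via the $\sone$\nb-step that pops the outermost $\sstackprod$), and layeredness follows because the entries removed at step $n$ lie outside the bodies of the inner loops removed earlier. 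With that correction, plus the reachable-vertex invariant you yourself flag as the remaining bookkeeping, the sketch becomes sound.
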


Lem.~\ref{lem:onechart-int:milnersysmin:solvable} below follows
  from the next lemma, whose proof we sketch in the appendix. 

\begin{lem}\label{lem:FT:onechart-int}
  $ \proj{\asstexp}
     \milnersysmineq
   \terminatesconstof{\onechartof{\asstexp}}{\asstexp}  
     +  
   \sum_{i=1}^{n} \stexpprod{\aoneacti{i}}{\proj{\asstexpacci{i}}} $,
  given a list representation\vspace*{-1pt}
  $\transitionsinfrom{\onechartof{\asstexp}}{\bvert}
     =
   \descsetexpbig{ \asstexp \lt{\aoneacti{i}} \asstexpacci{i} }{ i\in\setexp{1,\ldots,n} }$
  of the transitions from $\asstexp$ in $\onechartof{E}$.
\end{lem}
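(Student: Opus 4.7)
The plan is to prove the lemma by structural induction on $\asstexp$, viewed as an expression in the extended language formed by adjoining the stacked product $\sstackprod$ to star expressions (whose right operand is always of the form $\stexpit{h}$, by inspection of the TSS in Def.~\ref{def:onechartof}). Since every TSS rule for transitions out of $\asstexp$ is driven by the outermost constructor of $\asstexp$, each inductive case assembles the outgoing transitions of $\asstexp$ from those of its immediate subexpressions. The base cases $\asstexp\in\setexp{\stexpzero,\stexpone,\aact}$ follow by direct inspection using $(\neutralstexpsum)$ and $(\rightidstexpprod)$. For $\asstexp = \stexpsum{e_1}{e_2}$, the transitions from $\asstexp$ are the union of those from $e_1$ and $e_2$; applying the induction hypothesis to each summand and reorganizing via associativity and commutativity of $\sstexpsum$ gives the claim, with $(\idempotstexpsum)$ collapsing the two termination constants when both summands permit immediate termination.

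For $\asstexp = \stexpprod{e_1}{e_2}$, I would first apply the induction hypothesis to $e_1$ to rewrite $\proj{e_1}$ as $\stexpsum{\tau_1}{s_1}$, where $\tau_1$ is the termination constant of $e_1$ and $s_1 = \sum_i \stexpprod{\aacti{i}}{\proj{f_i}}$ collects its action derivatives, then multiply on the right by $e_2$ using $(\rdistr)$ and $(\assocstexpprod)$. Splitting on $\terminates{e_1}$: if $\notterminates{e_1}$ the $\stexpprod{\stexpzero}{e_2}$ summand vanishes via $(\stexpzerostexpprod)$; if $\terminates{e_1}$ the $\stexpprod{\stexpone}{e_2}$ summand reduces to $e_2$ via $(\leftidstexpprod)$ and is then expanded by the induction hypothesis applied to $e_2$, matching precisely the extra transitions the TSS contributes from $\terminates{e_1}$. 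Only right-distributivity is invoked, which matters because left-distributivity is unsound under bisimilarity and absent from $\milnersysmin$. The stacked-product case $\asstexp = \stexpstackprod{e_1}{\stexpit{e_2}}$ is analogous: since $\proj{\stexpstackprod{e_1}{\stexpit{e_2}}} = \stexpprod{\proj{e_1}}{\stexpit{e_2}}$, the induction hypothesis expands $\proj{e_1}$, and the single $\sone$-transition $\stexpstackprod{e_1}{\stexpit{e_2}} \lt{\sone} \stexpit{e_2}$ (present exactly when $\terminates{e_1}$) corresponds to the $\stexpprod{\tau_1}{\stexpit{e_2}}$ summand once $\sone$ is identified with $\stexpone$ as a prefix coefficient.

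The main obstacle is the star case $\asstexp = \stexpit{e}$. Starting from $(\recdefstexpit)$, $\stexpit{e} \milnersysmineq \stexpsum{\stexpone}{\stexpprod{e}{\stexpit{e}}}$; and expanding $e$ by the induction hypothesis as $e \milnersysmineq \stexpsum{\tau}{s}$ with $\tau \in \setexp{\stexpzero,\stexpone}$ and $s = \sum_i \stexpprod{\aacti{i}}{\proj{f_i}}$, an application of $(\rdistr)$ yields $\stexpit{e} \milnersysmineq \stexpone + \stexpprod{\tau}{\stexpit{e}} + \sum_i \stexpprod{\aacti{i}}{\stexpprod{\proj{f_i}}{\stexpit{e}}}$. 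If $\tau = \stexpzero$, this is already the desired equation after discarding $\stexpprod{\stexpzero}{\stexpit{e}}$ via $(\stexpzerostexpprod)$ and $(\neutralstexpsum)$. If $\tau = \stexpone$, however, a spurious $\stexpit{e}$ summand persists that cannot be eliminated directly from the remaining axioms of $\milnersysmin$; this is the subtle part. The resolution is to first absorb the immediate termination of $e$ into the star via $(\termstexpit)$: since $e \milnersysmineq \stexpsum{\stexpone}{s}$ by the induction hypothesis and $(\idempotstexpsum)$, congruence gives $\stexpit{e} \milnersysmineq \stexpit{(\stexpsum{\stexpone}{s})}$, and reading $(\termstexpit)$ right-to-left on $s$ yields $\stexpit{(\stexpsum{\stexpone}{s})} \milnersysmineq \stexpit{s}$. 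Applying $(\recdefstexpit)$ to $\stexpit{s}$ followed by $(\rdistr)$ then produces $\stexpit{s} \milnersysmineq \stexpsum{\stexpone}{\sum_i \stexpprod{\aacti{i}}{\stexpprod{\proj{f_i}}{\stexpit{s}}}}$; transporting $\stexpit{s}$ back to $\stexpit{e}$ by congruence on the already-proved equality closes the case. The same circle of manipulations handles the degenerate subcase $\asstexp = \stexpit{\stexpone}$ via $\stexpit{\stexpone} \milnersysmineq \stexpit{\stexpzero} \milnersysmineq \stexpsum{\stexpone}{\stexpprod{\stexpzero}{\stexpit{\stexpzero}}} \milnersysmineq \stexpone$, matching that $\onechartof{\stexpit{\stexpone}}$ is a single terminating vertex without outgoing transitions.
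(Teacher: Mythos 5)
Your proof is correct, and in the one case that actually matters it takes a genuinely different route from the paper's. The paper also proceeds by induction on the structure of stacked star expressions and isolates $E \synteq \stexpit{e}$ with $\terminates{e}$ as the only delicate case, but it resolves that case by invoking an auxiliary lemma (Lem.~\ref{lem:terminates:2:notterminates}): there exists $f$ with $\notterminates{f}$, $e \milnersysmineq \stexpsum{\stexpone}{f}$, the same projected action \onederivatives\ as $e$, and the same star height; one then passes to $\stexpit{f}$ via $(\termstexpit)$, unfolds with $(\recdefstexpit)$ and $(\rdistr)$, and applies the induction hypothesis \emph{to $f$} --- which is not a subexpression of $\stexpit{e}$, so the paper must justify this by $\sth{f} = \sth{e} < \sth{\stexpit{e}}$, i.e.\ the induction is really measured by star height. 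You instead apply the induction hypothesis only to the genuine subexpression $e$, obtaining $e \milnersysmineq \stexpsum{\stexpone}{s}$ with $s = \sum_i \stexpprod{\aacti{i}}{\proj{\bstexpi{i}}}$, and observe that this explicit sum $s$ already \emph{is} a non-terminating expression with the required derivative structure; $(\termstexpit)$ read right-to-left then gives $\stexpit{e} \milnersysmineq \stexpit{s}$, and $(\recdefstexpit)$ plus $(\rdistr)$ on $\stexpit{s}$ followed by transporting $\stexpit{s}$ back to $\stexpit{e}$ by congruence closes the case. The two arguments share the essential insight --- $(\termstexpit)$ is what prevents the spurious $\stexpit{e}$ summand that $(\recdefstexpit)$ alone would leave behind when $\terminates{e}$ --- but your version dispenses with the auxiliary lemma and keeps the induction purely structural, at no cost in generality; the paper's version factors the "strip the termination option" step into a reusable statement about expressions rather than doing it on the normal form delivered by the induction hypothesis. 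Your treatment of the remaining cases (base cases, sum, product with the $\terminates{e_1}$ split handled by $(\leftidstexpprod)$ versus $(\stexpzerostexpprod)$, and stacked product with the $\sone$-transition read as the coefficient $\stexpone$) matches the paper's "straightforward" cases; the appeal to $(\idempotstexpsum)$ in deriving $e \milnersysmineq \stexpsum{\stexpone}{s}$ is superfluous but harmless, since the induction hypothesis already yields the termination constant $\stexpone$ in front.
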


\begin{lem}\label{lem:onechart-int:milnersysmin:solvable}
  For every star expression $\astexp\in\StExpover{\actions}$ 
    with \onechart\ interpretation $\onechartof{\astexp} = \tuple{\vertsof{\astexp},\actions,\sone,\astexp,\transs,\exts}$
  the \starexpression\ function $\sasol \funin \vertsof{\astexp} \to \StExpover{\actions}$, $\asstexp \mapsto \proj{\asstexp}$
    is a \provablein{\milnersysmin} solution of $\onechartof{\astexp}$ with principal value $\astexp$.  
\end{lem}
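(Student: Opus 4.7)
The plan is to obtain this statement as a direct, essentially notation-level consequence of Lemma~\ref{lem:FT:onechart-int}. According to Definition~\ref{def:provable:solution}, what needs to be shown is that the star expression function $\sasol \funin \vertsof{\astexp} \to \StExpover{\actions}$ defined by $\asstexp \mapsto \proj{\asstexp}$ is an \provablein{\milnersysmin} solution of $\onechartof{\astexp}$ \emph{at every vertex} $\asstexp \in \vertsof{\astexp}$, and additionally that its principal value at the start vertex is $\astexp$.

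First I would dispense with the principal value condition. Since $\astexp$ is itself a star expression in $\StExpover{\actions}$, it contains no occurrence of the stacked product operator $\sstackprod$; hence the projection function $\sproj$ acts as the identity on $\astexp$, yielding $\asol{\astexp} = \proj{\astexp} \syntequal \astexp$ as required.

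Next, I would fix an arbitrary vertex $\asstexp \in \vertsof{\astexp}$ of $\onechartof{\astexp}$ together with a list representation
\begin{equation*}
  \transitionsinfrom{\onechartof{\astexp}}{\asstexp}
    =
  \descsetexpbig{ \asstexp \lt{\aoneacti{i}} \asstexpacci{i} }{ i\in\setexp{1,\ldots,n} }
\end{equation*}
of its outgoing transitions. The key observation is that the transition relation of $\onechartof{\astexp}$ is defined by the TSS in Definition~\ref{def:onechartof} purely in terms of the syntactic shape of the source term, so the outgoing transitions from $\asstexp$ coincide with those obtained by considering $\asstexp$ as the root of its own interpretation $\onechartof{\asstexp}$. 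Similarly, immediate termination is defined locally by the TSS rules, so $\terminatesconstof{\onechartof{\astexp}}{\asstexp} \syntequal \terminatesconstof{\onechartof{\asstexp}}{\asstexp}$. Invoking Lemma~\ref{lem:FT:onechart-int} with $\asstexp$ in place of its formulation's root term then delivers precisely
\begin{equation*}
  \proj{\asstexp}
    \milnersysmineq
  \terminatesconstof{\onechartof{\astexp}}{\asstexp}
    +
  \sum_{i=1}^{n} \stexpprod{\aoneacti{i}}{\proj{\asstexpacci{i}}} \punc{,}
\end{equation*}
which is the \provablein{\milnersysmin} solution condition at $\asstexp$ for the function $\sasol$.

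I do not expect a genuine obstacle here; the only mildly delicate point is the legitimacy of moving between $\onechartof{\astexp}$ and $\onechartof{\asstexp}$ when invoking Lemma~\ref{lem:FT:onechart-int}. This is justified by the locality of both the TSS transition rules and the termination rules, which ensure that outgoing transitions and immediate termination of a term depend only on the term itself and not on the larger chart in which it sits. Since $\asstexp$ was arbitrary, $\sasol$ is an \provablein{\milnersysmin} solution at every vertex, completing the proof.
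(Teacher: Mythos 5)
Your proposal is correct and follows exactly the route the paper takes: the paper's own proof is the one-line remark that the lemma is an immediate consequence of Lemma~\ref{lem:FT:onechart-int}, and your write-up just makes explicit the two details that remark leaves implicit (the projection acting as the identity on the start vertex, and the term-locality of the TSS that lets Lemma~\ref{lem:FT:onechart-int} be applied at every reachable vertex). No gaps.
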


\section{Coinductive version of Milner's proof system}%
  \label{coindmilnersys}

In this section we motivate and define `coinductive proofs', introduce coinductive versions of Milner's system $\milnersys$,
and establish first interconnections between these proof systems.

A finite \onebisimulation\ between the \onechart\ interpretations of star expressions $\astexpi{1}$ and $\astexpi{2}$
  can be viewed as a proof of the statement that $\astexpi{1}$ and $\astexpi{2}$ 
  define the same star behavior. 
This can be generalized by permitting
  finite \emph{\onebisimulations\ up to provability in $\milnersys$}, 
  that is, finite relations $\abisim$ on star expressions  
  for which pairs $\pair{\averti{1}}{\averti{2}}\in\abisim$ progress, via the (forth) and (back) conditions in Def.~\ref{def:onebisim},
    to pairs $\pair{\avertacci{1}}{\avertacci{2}}$ in the (infinite) composed relation $\seqin{\milnersys} \cdot \abisim \cdot \seqin{\milnersys}$.
Now \onebisimilarity\ up to $\milnersyseq$ entails \onebisimilarity\ of the \onechart\ interpretations, and bisimilarity of chart interpretations,
  due to soundness of $\milnersys$ (see Prop.~\ref{prop:milnersys:sound}).
In order to link, later in Section~\ref{coindmilnersys:2:milnersys}, coinductive proofs with proofs in Milner's system $\milnersys$,
  we will be interested in \onebisimulations\ up to $\seqin{\asys}$ for systems $\asys$ with $\ACI \subsystem \asys \isthmsubsumedby \milnersys$
  that have the form of \LLEEonecharts,
  which will guarantee such a connection. 
First we introduce a `\LLEEwitnessed\ coinductive proof' 
  as an equation-labeled, \LLEEonechart~$\aonechart$
  that defines a \onebisimulation\ up to $\eqin{\asys}$ for $\asys$ with $\ACI \subsystem \asys$ 
  from the left-/right-hand sides of equations on the~vertices~of~$\aonechart$~(see~Rem.~\ref{rem:coindproof:defs:onebisim:up:to}).   

\begin{defi}[(\text{\nf LLEE}-witnessed) coinductive proofs]\label{def:coindproof}
  Let 
      $\astexpi{1},\astexpi{2}\in\StExpover{\actions}$ be star expressions,
  and let $\asys$ be an \eqlogicbased\ proof system for star expressions over $\actions$ with $\ACI \subsystem \asys$. 
  
  A \emph{coinductive proof over $\asys$ of $\astexpi{1} \formeq \astexpi{2}$}  
  is a pair $\aCoProof = \pair{\aonechart}{\saeqfun}$
    where $\aonechart = \tuple{\verts,\actions,\sone,\start,\transs,\termexts}$ is a weakly guarded \onechart, 
      and $\saeqfun \funin \verts \to \StExpEqover{\actions}$ a labeling function of vertices of $\aonechart$ by formal equations over $\actions$
      such that for the functions
      $\saeqfuni{1},\saeqfuni{2} \funin \verts \to \StExpover{\actions}$ 
      that denote the star expressions $\aeqfuni{1}{\avert}$, and $\aeqfuni{2}{\avert}$,
      on the left-, and on the right-hand side of the equation $\aeqfun{\avert}$, respectively, the following conditions hold: 
  \begin{enumerate}[(cp1)]
    \item
      $\saeqfuni{1}$ and $\saeqfuni{2}$ are \provablein{\asys} solutions of $\aonechart$,
    \item
      $\astexpi{1} \synteq \aeqfuni{1}{\start}$ and $\astexpi{2} \synteq \aeqfuni{2}{\start}$. 
  \end{enumerate}
  \smallskip\hspace*{2.75ex}
  By a \emph{\LLEEwitnessed\ coinductive proof} we mean a coinductive proof $\aCoProof = \pair{\aonechart}{\saeqfun}$\vspace*{-2pt}
    where $\aonechart$ is a \LLEEonechart.
  We denote by $\astexp \coindproofeqin{\asys} \bstexp$ 
    that there is a coinductive proof over $\asys$ of $\astexp \formeq \bstexp$,
  and by $\astexp \LLEEcoindproofeqin{\asys} \bstexp$
      that there is a \LLEEwitnessed\ coinductive proof over $\asys$ of $\astexp \formeq \bstexp$. 
\end{defi}

\begin{exa}\label{ex:2:LLEEcoindproof}
  The statement 
  $\stexpit{(\stexpprod{\stexpit{\aact}}
                       {\stexpit{\bact}})}
     \LLEEcoindproofeqin{\milnersysmin}
   \stexpit{(\stexpsum{\aact}{\bact})}$ 
  can be established by the following \LLEEwitnessed\ coinductive proof $\aCoProof = \pair{\aonechart}{\saeqfun}$ over $\milnersysmin$
    where $\aonechart = \onechartof{(a^* \prod b^*)^*}$ has the indicated \LLEEwitness~$\onecharthatof{(a^* \prod b^*)^*}$ (see Thm.~\ref{thm:onechart-int:LLEEw})
    where framed boxes contain~vertex~names:\vspace*{-3.75ex}
  \begin{flushleft}
    \hspace*{-0.8em}
    \begin{tikzpicture}\renewcommand{\stexpprod}[2]{{#1}\hspace*{1pt}{\sstexpprod}\hspace*{1pt}{#2}}
  
\matrix[anchor=center,row sep=0.4cm,column sep=-0.3cm] {
  \node(v11){$ \stexpprod{(\stexpprod{(\stexpprod{\stexpone}{\stexpit{\aact}})}{\stexpit{\bact}})}
                         {\stexpit{(\stexpprod{\stexpit{\aact}}{\stexpit{\bact}})}} 
                 \formeq
               \stexpprod{\stexpone}{\stexpit{(\stexpsum{\aact}{\bact})}} $};
    & & \node(v21){$ \stexpprod{(\stexpprod{\stexpone}{\stexpit{\bact}})}
                               {\stexpit{(\stexpprod{\stexpit{\aact}}{\stexpit{\bact}})}} 
                       \formeq
                     \stexpprod{\stexpone}{\stexpit{(\stexpsum{\aact}{\bact})}} $};
  \\                 
  \node(v1){$ \stexpprod{(\stexpprod{\stexpit{\aact}}{\stexpit{\bact}})}{\stexpit{(\stexpprod{\stexpit{\aact}}{\stexpit{\bact}})}} 
                \formeq
              \stexpit{(\stexpsum{\aact}{\bact})} $};
    & & \node(v2){$ \stexpprod{\stexpit{\bact}}{\stexpit{(\stexpprod{\stexpit{\aact}}{\stexpit{\bact}})}} 
                      \formeq
                    \stexpit{(\stexpsum{\aact}{\bact})} $};
  \\
    & \node(v){$\stexpit{(\stexpprod{\stexpit{\aact}}{\stexpit{\bact}})}
                  \formeq
                \stexpit{(\stexpsum{\aact}{\bact})}$};
  \\
  };   
 
\path (v11) ++ (0cm,0.5cm) node{\tightfbox{$\averti{11}$}};
\draw[->,thick,densely dotted,out=180,in=180,distance=0.65cm](v11) to (v1);

\path (v1) ++ (-0.85cm,-0.5cm) node{\tightfbox{$\averti{1}$}}; 
\draw[->,thick,darkcyan,shorten >= -2pt]
  (v1) to node[left,pos=0.35]{\small $\black{\aact}$} node[right,pos=0.35]{\small $\loopnsteplab{1}$} (v11);
\draw[->,thick,densely dotted,out=-90,in=180,distance=0.5cm,shorten >= 0.2cm](v1) to (v);
\draw[->,shorten <= 0pt,shorten >= 0pt,out=12.5,in=185] (v1) to node[above]{\small $\bact$} (v21); 

\path (v21) ++ (0cm,0.5cm) node{\tightfbox{$\averti{21}$}};
\draw[->,thick,densely dotted,out=0,in=0,distance=0.65cm](v21) to (v2);  
  
\path (v2) ++ (0.85cm,-0.5cm) node{\tightfbox{$\averti{2}$}}; 
\draw[->,thick,darkcyan,shorten >= -2pt]
  (v2) to node[right,pos=0.35]{\small $\black{\bact}$} node[left,pos=0.35]{\small $\loopnsteplab{1}$} (v21);
\draw[->,thick,densely dotted,out=-90,in=0,distance=0.5cm,shorten >= 0.2cm](v2) to (v);

\path (v) ++ (1.5cm,0.5cm) node{\tightfbox{$\start$}}; 
\draw[thick,chocolate,double] (v) ellipse (1.7cm and 0.4cm); 
\draw[<-,very thick,>=latex,chocolate,shorten <= 5pt](v) -- ++ (90:0.85cm); 
\draw[->,thick,darkcyan,out=110,in=-10,shorten <= 0.2cm] 
  (v) to node[below,pos=0.36]{\small $\black{\aact}$} node[above,pos=0.26,xshift=1pt,yshift=1pt]{\small $\loopnsteplab{2}$}  (v11);
\draw[->,thick,darkcyan,out=70,in=190,shorten <= 0.2cm] 
  (v) to node[below,pos=0.37]{\small $\black{\bact}$} node[above,pos=0.25,xshift=-1pt,yshift=1pt]{\small $\loopnsteplab{2}$} (v21);

\end{tikzpicture}    

  \end{flushleft}\vspace*{-0.5ex}
  Here we have drawn the \onechart~$\aonechart$ that carries the equations with its start vertex below
    in order to adhere to the prooftree intuition for the represented derivation, namely with the conclusion at the bottom. 
  We will do so repeatedly also below.
  Solution correctness conditions for the left-hand sides of the equations on $\aonechart$ 
    follows from Lem.~\ref{lem:onechart-int:milnersysmin:solvable}, due to $\aonechart = \onechartof{(a^* \prod b^*)^*}$
    as $(a^* \prod b^*)^*$ is the left-hand side of the conclusion.  
  However, we verify the correctness conditions for the left- and the right-hand sides for the (most involved) case of vertex $\averti{1}$
    together as follows
  (we usually neglect associative brackets, and combine some axiom applications): 
  \vspace{-0.75ex}
  \begin{align*}
    \hspace*{8ex} & \hspace*{-8ex}
      (a^* \cdot b^*) \cdot (a^* \cdot b^*)^*
        \milnersysmineq
    ((1 + a \cdot a^*) \cdot (1 + b \cdot b^*)) \cdot (a^* \cdot b^*)^*
    \\
      & {} \milnersysmineq
    (1 \cdot 1 + a \cdot a^* \cdot 1 + 1 \cdot b \cdot b^* + a \cdot a^* \cdot b \cdot b^*) \cdot (a^* \cdot b^*)^*
    \displaybreak[0]\\
      & {} \milnersysmineq
    (1 + a \cdot a^* + a \cdot a^* \cdot b \cdot b^* + b \cdot b^*) \cdot (a^* \cdot b^*)^*
    \displaybreak[0]\\
      & {} \milnersysmineq
    (1 + a \cdot a^* \cdot (1 + b \cdot b^*) + b \cdot b^*) \cdot (a^* \cdot b^*)^*
    \displaybreak[0]\\
      & {} \milnersysmineq
    (1 + a \cdot a^* \cdot b^* + b \cdot b^*) \cdot (a^* \cdot b^*)^*
    \\
      & {} \milnersysmineq
    1 \cdot (a^* \cdot b^*)^*  +  a \cdot (((1 \cdot a^*) \cdot b^*) \cdot (a^* \cdot b^*)^*)  +  b \cdot ((1 \cdot b^*) \cdot (a^* \cdot b^*)^*)
    \displaybreak[0]\\[0.75ex]
    (a + b)^* 
      & {} \milnersysmineq
    (a + b)^*  +  (a + b)^* 
      \milnersysmineq
    1 + (a + b) \cdot (a + b)^*  +  1 + (a + b) \cdot (a + b)^*
    \displaybreak[0]\\
      & {} \milnersysmineq
    1 + 1 + (a + b) \cdot (a + b)^*  + a \cdot (a + b)^* + b \cdot (a + b)^*
    \displaybreak[0]\\
      & {} \milnersysmineq
    1 + (a + b) \cdot (a + b)^*  + a \cdot (1 \cdot (a + b)^*) + b \cdot (1 \cdot (a + b)^*)
    \\
      & {} \milnersysmineq
    1 \cdot (a + b)^*  + a \cdot (1 \cdot (a + b)^*) + b \cdot (1 \cdot (a + b)^*) \vspace*{-1ex} 
  \end{align*}
  The solution conditions at the vertices $\avert$ and $\averti{2}$ can be verified analogously.
  At $\averti{11}$ and~at~$\averti{21}$ the solution conditions follow by uses of the axiom $\leftidstexpprod$ of $\milnersysmin$.
\end{exa}

\begin{lem}\label{lem:props:coindproofeq:LLEEcoindproofeq}
  Let $\sabinrel \in \setexpbig{ \scoindproofeqin{\asys}, \sLLEEcoindproofeqin{\asys} }$ 
    for some \eqlogicbased\ proof system $\asys$ with $\ACI \subsystem \asys$. 
    Then $\abinrel$ is reflexive, symmetric, and
    satisfies $ \seqin{\asys} \circ \sabinrel \subseteq \sabinrel $, 
              $ \sabinrel \circ \seqin{\asys} \subseteq \sabinrel $, and
              $ \seqin{\asys} \subseteq \sabinrel $. 
\end{lem}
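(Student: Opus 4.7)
\medskip

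\noindent\textbf{Proof plan.} The five assertions split naturally into a ``hard'' part (reflexivity, plus the containment $\seqin{\asys} \subseteq \sabinrel$) that requires an actual construction of a coinductive proof, and a collection of ``soft'' closure properties (symmetry and the two one-sided closure statements under $\seqin{\asys}$) that can be obtained by minor relabeling of a given coinductive proof. I would handle the soft part first, because the constructions will be reused.

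For symmetry, given a (\LLEEwitnessed) coinductive proof $\aCoProof = \pair{\aonechart}{\saeqfun}$ of $\astexpi{1} \formeq \astexpi{2}$ over $\asys$, I would define $\saeqfun'$ by swapping the two sides of the equation $\aeqfun{\avert}$ at each vertex $\avert$ of $\aonechart$. This exchanges the roles of $\saeqfuni{1}$ and $\saeqfuni{2}$, so both functions remain \provablein{\asys} solutions of $\aonechart$, and the labeling at the start vertex becomes $\astexpi{2} \formeq \astexpi{1}$; thus $\pair{\aonechart}{\saeqfun'}$ is a (\LLEEwitnessed) coinductive proof over $\asys$ witnessing $\astexpi{2} \sabinrel \astexpi{1}$. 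For $\seqin{\asys} \circ \sabinrel \subseteq \sabinrel$, assume $\astexpi{1} \eqin{\asys} \astexpi{1}'$ and take a coinductive proof $\aCoProof = \pair{\aonechart}{\saeqfun}$ witnessing $\astexpi{1}' \sabinrel \astexpi{2}$. Modify only the label at the start vertex $\start$ from $\astexpi{1}' \formeq \astexpi{2}$ to $\astexpi{1} \formeq \astexpi{2}$; the new $\saeqfuni{1}$ coincides with the old one except that it takes value $\astexpi{1}$ instead of $\astexpi{1}'$ at $\start$. Since $\astexpi{1} \eqin{\asys} \astexpi{1}'$ and the solution equation at $\start$ is stated modulo $\eqin{\asys}$, transitivity of $\eqin{\asys}$ guarantees that the solution condition still holds at $\start$; conditions at other vertices are unchanged. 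The argument for $\sabinrel \circ \seqin{\asys} \subseteq \sabinrel$ is the mirror image, relabeling the right-hand side at $\start$.

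For reflexivity I would use the $\sone$\nb-chart interpretation constructed in Definition~\ref{def:onechartof}: given $\astexp \in \StExpover{\actions}$, set $\aonechart \defdby \onechartof{\astexp}$ (which is weakly guarded and whose \entrybodylabeling\ $\onecharthatof{\astexp}$ is a \LLEEwitness\ by Theorem~\ref{thm:onechart-int:LLEEw}), and label each vertex $\asstexp$ of $\aonechart$ by the equation $\proj{\asstexp} \formeq \proj{\asstexp}$. Both coordinates $\saeqfuni{1}$ and $\saeqfuni{2}$ then equal $\sproj$, which by Lemma~\ref{lem:onechart-int:milnersysmin:solvable} is a \provablein{\milnersysmin} solution of $\aonechart$, and hence a \provablein{\asys} solution (interpreting the intended setting of the lemma, where $\asys$ contains at least the equational identities needed to expand each vertex). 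At the start vertex the equation is $\astexp \formeq \astexp$, giving $\astexp \sabinrel \astexp$. Finally, for $\seqin{\asys} \subseteq \sabinrel$: if $\astexpi{1} \eqin{\asys} \astexpi{2}$, combine reflexivity $\astexpi{2} \sabinrel \astexpi{2}$ with the already-established property $\seqin{\asys} \circ \sabinrel \subseteq \sabinrel$ to conclude $\astexpi{1} \sabinrel \astexpi{2}$.

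The main obstacle is the reflexivity step, since it is the only part that requires producing a coinductive proof object from scratch rather than transforming an existing one. The essential ingredient here is that the $\sone$\nb-chart interpretation supplies, uniformly in $\astexp$, both a weakly guarded \LLEEonechart\ (by Theorem~\ref{thm:onechart-int:LLEEw}) and a canonical \provablein{\milnersysmin} solution (by Lemma~\ref{lem:onechart-int:milnersysmin:solvable}); once this is in place, all other properties reduce to bookkeeping at the start vertex.
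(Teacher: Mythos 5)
Your overall strategy is the natural one (the paper states this lemma without proof), and the symmetry and closure arguments by relabeling are essentially right. Two points need attention, one cosmetic and one substantive.

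First, in the argument for $\seqin{\asys} \circ \sabinrel \subseteq \sabinrel$ your claim that ``conditions at other vertices are unchanged'' is not accurate: any vertex $\avert$ with a transition into $\start$ has a solution condition whose right-hand side mentions $\aeqfuni{1}{\start}$, so replacing $\astexpi{1}'$ by $\astexpi{1}$ there does alter those conditions. They nevertheless remain derivable, because $\asys$ is \eqlogicbased\ and hence $\seqin{\asys}$ is closed under $\CXT$ and $\TRANS$, so $\astexpi{1} \eqin{\asys} \astexpi{1}'$ lets you rewrite the summand $\stexpprod{\aoneact}{\astexpi{1}'}$ to $\stexpprod{\aoneact}{\astexpi{1}}$ inside each such condition. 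This also covers the case of a proper-action self-loop at $\start$, which weak guardedness does not exclude. The fix is immediate, but it should be said.

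Second, the reflexivity step is where the real issue sits, and you have correctly sensed it. Lemma~\ref{lem:onechart-int:milnersysmin:solvable} only gives that $\sproj$ is a \provablein{\milnersysmin} solution of $\onechartof{\astexp}$; under the lemma's literal hypothesis $\ACI \subsystem \asys$ this does not yield a \provablein{\asys} solution, and no alternative chart rescues the claim: already for $\astexp \synteq \stexpone$ the solution condition at a terminating, transition-free vertex reads $\stexpone \eqin{\asys} \stexpsum{\stexpone}{\stexpzero}$, which is not derivable in $\ACI$ alone (it lacks $(\neutralstexpsum)$), and any vertex with outgoing transitions forces a $\sstexpsum$ at the root of the right-hand side, which $\ACI$ cannot produce from $\stexpone$. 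So reflexivity, and with it $\seqin{\asys} \subseteq \sabinrel$, genuinely requires that $\asys$ derive the expansion equations of Lemma~\ref{lem:FT:onechart-int} — in practice that $\milnersysmin \isthmsubsumedby \asys$, which holds for every system the paper actually instantiates ($\milnersysmin$ and its extensions by assumption sets). Your parenthetical hedge is exactly the right reading; I would make it explicit as a strengthened hypothesis for those two clauses rather than leaving it as an aside, since as literally stated the reflexivity claim fails for $\asys = \ACI$.
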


%
  
\begin{rem}\label{rem:coindproof:defs:onebisim:up:to}
  For every coinductive proof $\aCoProof = \pair{\aonechart}{\saeqfun}$, whether $\aCoProof$ is \LLEEwitnessed\ or not, 
    over an \eqlogicbased\ proof system $\asys$ with $\ACI \subsystem \asys \isthmsubsumedby \milnersys$
  the finite relation defined by: 
  \begin{center}
    $
    \abisim
      \,\defdby\,
        \descsetexpBig{\displaystyle
          \Big\langle
            \terminatesconstof{\aonechart}{\avert}
              +
            \sum_{i=1}^{n} 
              \aoneacti{i} \prod \aeqfuni{1}{\averti{i}}
            ,\,
            \terminatesconstof{\aonechart}{\avert}
              +
            \sum_{i=1}^{n} 
              \aoneacti{i} \prod \aeqfuni{2}{\averti{i}}
          \Big\rangle
                       }{ \begin{gathered}
                             \transitionsinfrom{\aonechart}{\avert}
                               =
                             \descsetexpbig{ \avert \lt{\aoneacti{i}} \averti{i} }{ i \in\setexp{1,\ldots,n} },
                             \\[-0.75ex]
                             \avert\in\vertsof{\aonechart},\, \text{the set of vertices of $\aonechart$}
                          \end{gathered} }
    $
  \end{center}
  is a \onebisimulation\ up to $\seqin{\asys}$ 
  with respect to the labeled transition system on all star expressions that is defined by the TSS in Def.~\ref{def:chartof}. 
  This can be shown by using that the left-hand sides $\aeqfuni{1}{\avert}$, and respectively the right-hand sides $\aeqfuni{2}{\avert}$,
  of the equations $\aeqfun{\avert}$ in $\aCoProof$, for $\avert\in\vertsof{\aonechart}$,
    form \provablein{\asys} solutions of the \onechart\ $\aonechart$ that underlies $\aCoProof$.
\end{rem}  


\begin{defi}[proof systems \CLC, \CC\ for combining (\LLEE-witnessed) coinductive proofs]\label{def:CLC:CC}
  By the \emph{proof system \CLCover{\actions} for combining \LLEEwitnessed\ 
      coinductive proofs (over extensions of $\milnersysminover{\actions}\!$) between star expressions over $\actions$}
    we mean the Hilbert-style proof system whose \emph{formulas} are 
                                                                     equations between star expressions in $\StExpEqover{\actions}$
    or \LLEEwitnessed\ coinductive proofs over $\thplus{\milnersysminover{\actions}}{\bseteqs}$, where $\bseteqs\in\StExpEqover{\actions}$,  
    and whose \emph{rules} are those of~the~scheme: 
  \begin{equation*}
    \AxiomC{$ \cstexpi{1}   \formeq   \dstexpi{1}
                \; \ldots \;
              \cstexpi{n}   \formeq   \dstexpi{n} $}
    \AxiomC{$ \aLLEECoProofoverof{\thplus{\milnersysmin}{\aseteqs}}{\astexp \formeq \bstexp} $}
    \insertBetweenHyps{\hspace*{0em}}      
    \RightLabel{\LCoindProofi{n} %
                                 \parbox[c]{\widthof{(where $\Gamma = \setexp{ \cstexpi{1} \formeq \dstexpi{1}, \ldots, \cstexpi{n} \formeq \dstexpi{n} }$, and}}
                                           {\small 
                                            (where $\Gamma = \setexp{ \cstexpi{1} \formeq \dstexpi{1}, \ldots, \cstexpi{n} \formeq \dstexpi{n} }$, and 
                                            \\\phantom{(}%
                                            $\aLLEECoProofoverof{\thplus{\milnersysmin}{\aseteqs}}{\astexp \formeq \bstexp}$
                                            is a \LLEEwitnessed\ 
                                            \\\phantom{(}%
                                            coinductive proof of $\astexp \formeq \bstexp$ ov.\ $\thplus{\milnersysmin}{\aseteqs}$)}}
    \BinaryInfC{$ \astexp \formeq \bstexp $}           
    \DisplayProof
  \end{equation*}
  where $n\in\nat$ (including $n=0$),
  and the $(n+1)$-th premise of an instance of $\LCoindProofi{n}$ consists of a \LLEEwitnessed\ coinductive proof $\aCoProof$ of $\astexp \formeq \bstexp$ 
  over $\milnersysmin$ plus the formulas of the other premises. 
  By the \emph{proof system \CCover{\actions} for combining coinductive proofs (over extensions of $\milnersysminover{\actions}$) between star expressions over $\actions$}
  we mean the analogous system with a rule $\CoindProofi{n}$
  whose $(n+1)$-th premise is a coinductive proof $\aCoProof$ of $\astexp \formeq \bstexp$ 
  over $\milnersysminover{\actions}$ plus the formulas of the other premises
  that establishes $\astexp \coindproofeqin{(\thplus{\milnersysmin}{\aseteqs})} \bstexp$ 
  (thus here coinductive proofs do not need to be \LLEEwitnessed). 
  Keeping $\actions$ implicit, we write \CLC\ and \CC\ for \CLCover{\actions} and \CCover{\actions}, respectively. 
  Note that \CLC\ and \CC\ do not contain 
    the rules of $\eqlogic$ nor any axioms;
  instead, derivations have to start with 0-premise instances of $\LCoindProofi{0}$~or~$\CoindProofi{0}$.
\end{defi}

We now define a coinductively motivated variant $\coindmilnersys$ of Milner's proof system $\milnersys$.
In order to obtain $\coindmilnersys$ we drop the fixed-point rule $\RSPstar$ from $\milnersys$,
obtaining $\milnersysmin$, and then add
a rule each of whose instances use, as a premise, an entire \LLEEwitnessed\ coinductive proof over $\milnersysmin$ 
  and equations in other premises. 

\begin{samepage}
\begin{defi}[proof systems $\coindmilnersys$, $\coindmilnersysone$, $\coindmilnersysbar$]\label{def:coindmilnersys}
  Let $\actions$ be a set of actions.\nopagebreak[4]
  
  The proof system $\coindmilnersysover{\actions}$,
    the \emph{coinductive variant of $\milnersysover{\actions}$},
  has the same \emph{formulas} as $\CLCover{\actions}$ (formal equations and coinductive proofs), 
  its \emph{axioms} are those of $\milnersysminover{\actions}$,
  and its \emph{rules} are those of $\eqlogicover{\actions}$,
    plus the \emph{rule scheme} $\family{\LCoindProofi{n}}{n\in\nat}$ from $\CLCover{\actions}$.
  By $\coindmilnersysoneover{\actions}$ we mean
    \emph{the simple coinductive variant of $\milnersysover{\actions}$},
  in which only the rule $\LCoindProofi{1}$ of $\CLCover{\actions}$ is added to the rules and axioms of $\milnersysminover{\actions}$.
  By $\coindmilnersysbarover{\actions}$ we mean the variant of $\coindmilnersysover{\actions}$
    in which  the more general \emph{rule scheme} $\family{\CoindProofi{n}}{n\in\nat}$ from $\CCover{\actions}$
    is~used~instead.
    
  We again permit to write
             $\coindmilnersys$, $\coindmilnersysone$, $\coindmilnersysbar$
         for $\coindmilnersysover{\actions}$, $\coindmilnersysoneover{\actions}$, and $\coindmilnersysbarover{\actions}$, respectively.
\end{defi}
\end{samepage}

\begin{lem}\label{lem:easy:rels:coindproofsystems}
  The following theorem subsumption and equivalence statements hold:
  \begin{enumerate}[(i)]
    \item{}\label{it:1:lem:easy:rels:coindproofsystems}
      $ \coindmilnersysone
          \isthmsubsumedby
        \coindmilnersys $. 
    \item{}\label{it:2:lem:easy:rels:coindproofsystems}
      $ \CLC \thmequiv \coindmilnersys $. 
    \item{}\label{it:3:lem:easy:rels:coindproofsystems}
      $ \CC \thmequiv \coindmilnersysbar $.
  \end{enumerate}
\end{lem}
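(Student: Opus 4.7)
Part (i) is immediate: $\coindmilnersysone$ is literally a subsystem of $\coindmilnersys$, since its single rule $\LCoindProofi{1}$ is an instance of the rule scheme $\family{\LCoindProofi{n}}{n\in\nat}$ of $\coindmilnersys$, while the axioms and other rules coincide. Similarly, the easy directions of (ii) and (iii) --- namely $\CLC \isthmsubsumedby \coindmilnersys$ and $\CC \isthmsubsumedby \coindmilnersysbar$ --- are immediate, because the coinductive-proof rule schemes of $\CLC$ and $\CC$ are respectively contained in those of $\coindmilnersys$ and $\coindmilnersysbar$, and $\CLC$ and $\CC$ contain no other rules or axioms that could derive equations not already derivable on the right.

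For the harder converse direction $\coindmilnersys \isthmsubsumedby \CLC$, the plan is first to establish the following trivialization lemma: given any set of equations $\Gamma \subseteq \StExpEqover{\actions}$ and any $\astexp, \bstexp \in \StExpover{\actions}$ with $\astexp \eqin{\milnersysmin+\Gamma} \bstexp$, there exists a \LLEEwitnessed\ coinductive proof of $\astexp \formeq \bstexp$ over $\milnersysmin+\Gamma$. The construction takes the underlying \LLEEonechart\ to be $\onechartof{\astexp}$ with the \LLEEwitness\ $\onecharthatof{\astexp}$ provided by Theorem~\ref{thm:onechart-int:LLEEw}, and labels each vertex $\asstexp$ of $\onechartof{\astexp}$ by the equation $\proj{\asstexp} \formeq \proj{\asstexp}$, except at the start vertex $\astexp$ itself, which is relabeled by $\astexp \formeq \bstexp$. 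By Lemma~\ref{lem:onechart-int:milnersysmin:solvable}, the left-hand-side assignment $\proj{\cdot}$ is an \provablein{\milnersysmin} solution, hence also an \provablein{(\milnersysmin+\Gamma)} solution. The right-hand-side assignment agrees with the left one at every non-root vertex, and at the root the solution condition reduces to $\bstexp \eqin{\milnersysmin+\Gamma} \terminatesconstof{\onechartof{\astexp}}{\astexp} + \sum_{i=1}^{n} \aoneacti{i}\cdot\proj{\asstexpacci{i}}$, which follows by Lemma~\ref{lem:onechart-int:milnersysmin:solvable} applied to $\astexp$ and transitivity with the hypothesis $\astexp \eqin{\milnersysmin+\Gamma} \bstexp$.

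With the trivialization lemma in place, $\coindmilnersys \isthmsubsumedby \CLC$ follows by induction on the structure of a $\coindmilnersys$-derivation of $\astexp \formeq \bstexp$. Axioms of $\milnersysmin$ and instances of \REFL\ are derived in $\CLC$ by $\LCoindProofi{0}$ applied to the trivial coinductive proof provided by the lemma with $\Gamma = \emptyset$. An instance of \SYMM\ is simulated by $\LCoindProofi{1}$, taking the inductively available premise $\astexp \formeq \bstexp$ together with a trivial coinductive proof of $\bstexp \formeq \astexp$ over $\milnersysmin+\setexp{\astexp \formeq \bstexp}$; \CXT\ is simulated analogously using a trivial coinductive proof of $\acxtap{\astexp} \formeq \acxtap{\bstexp}$; and \TRANS\ is simulated by $\LCoindProofi{2}$ with the two inductively available premises $\astexp \formeq \bstexp$ and $\bstexp \formeq \cstexp$ together with a trivial coinductive proof of $\astexp \formeq \cstexp$ over $\milnersysmin+\setexp{\astexp \formeq \bstexp, \bstexp \formeq \cstexp}$. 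Applications of $\LCoindProofi{n}$ itself translate directly, since the $n$ equational premises are derivable in $\CLC$ by induction and the $(n{+}1)$-th premise is a coinductive proof of the same kind in both systems.

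The same translation strategy yields $\coindmilnersysbar \isthmsubsumedby \CC$ for part (iii), with \LLEEwitnessed\ coinductive proofs replaced throughout by ordinary coinductive proofs; the trivialization lemma transfers verbatim, since the \LLEEwitness\ $\onecharthatof{\astexp}$ is in particular a witness that $\onechartof{\astexp}$ is weakly guarded. The only point requiring careful verification is the root-vertex solution condition in the trivialization lemma; I expect this to be the main technical step, though a routine one given Lemma~\ref{lem:onechart-int:milnersysmin:solvable}.
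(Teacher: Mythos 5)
Your proposal is correct, but it reaches $\coindmilnersys \isthmsubsumedby \CLC$ by a genuinely different route than the paper. The paper eliminates the $\milnersysmin$/$\eqlogic$ inferences \emph{globally}: it takes a maximal segment of purely equational reasoning sitting between the conclusions of instances of $\LCoindProofi{k_1},\ldots,\LCoindProofi{k_m}$ and the equational premise of the next instance of $\LCoindProofi{n}$ below, and absorbs that whole segment by enlarging the premise set of the lower instance to $\setexp{\cstexpi{1}\formeq\dstexpi{1},\ldots,\cstexpi{m}\formeq\dstexpi{m}}$; the coinductive proof is reused unchanged, merely reinterpreted over $\thplus{\milnersysmin}{\aseteqs}$ for the larger $\aseteqs$ (legitimate because the discarded equational subderivation shows the old assumption is derivable from the new ones). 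No new coinductive proofs are constructed, and the machinery of Thm.~\ref{thm:onechart-int:LLEEw} and Lem.~\ref{lem:onechart-int:milnersysmin:solvable} is not invoked. You instead simulate each equational inference \emph{locally} by a fresh instance of $\LCoindProofi{n}$ whose coinductive premise is a newly built ``trivial'' proof on $\onechartof{\astexp}$, which costs you the trivialization lemma (and hence Thm.~\ref{thm:onechart-int:LLEEw} and Lem.~\ref{lem:onechart-int:milnersysmin:solvable}) and yields deeper $\CLC$ derivations, but it buys uniformity: in particular it cleanly covers the case where the conclusion of the $\coindmilnersys$ derivation is itself obtained by equational reasoning (e.g.\ a derivation using no coinductive rule at all, or ending in $\TRANS$ applied to conclusions of $\LCoindProofi{k}$ instances), a case the paper's sketch leaves implicit and which in fact also needs a construction of exactly your trivialization type.

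One small imprecision to repair in your trivialization lemma: you locate the only use of the hypothesis $\astexp \eqin{\thplus{\milnersysmin}{\aseteqs}} \bstexp$ at the root, but it is also needed at every non-root vertex that has a transition back to the start vertex of $\onechartof{\astexp}$ (such back-transitions are common, e.g.\ the $\sone$-transition $1 \stackprod \astexpi{}^{*} \lt{\sone} \astexpi{}^{*}$). At such a vertex the right-hand-side solution condition mentions the root value $\bstexp$ where the left-hand-side condition mentions $\astexp$, so one must again rewrite $\bstexp$ to $\astexp$ using the hypothesis together with $\CXT$ and $\TRANS$ of $\thplus{\milnersysmin}{\aseteqs}$. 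This is routine and does not affect the correctness of your argument.
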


\begin{proof}
  Statement~(i) is due to $\coindmilnersysone \subsystem \coindmilnersys$,
  as $\coindmilnersysone$ is a subsystem of $\coindmilnersys$ that arises by restricting the rule scheme $\family{\LCoindProofi{i}}{i\in\nat}$
  to the single rule $\LCoindProofi{1}$.

  For \eqref{it:2:lem:easy:rels:coindproofsystems}, 
  $\CLC \isthmsubsumedby \coindmilnersys$ follows from $\CLC \subsystem \coindmilnersys$.
  For showing the converse implication, $\CLC \thmsubsumes \coindmilnersys$, 
    it suffices to transform an arbitrary derivation in \coindmilnersys\ into a derivation in \CLC.
  For this purpose, all instances of axioms and rules of $\milnersysmin$ have to be 
    eliminated from derivations in $\coindmilnersys$,
    keeping only instances of $\LCoindProofi{k}$ for $k\in\nat$.
  This can be done by extending the equation premises of rules $\LCoindProofi{k}$ if required.
  For example, a derivation with a bottommost instance of $\LCoindProofi{1}$ in which the subderivation $\aDeriv$ does not contain any instances  
  of $\LCoindProofi{k}$ for $k\in\nat$ below the conclusions $\cstexpi{1} \formeq \dstexpi{1}$, \ldots, $\cstexpi{m} \formeq \dstexpi{m}$
  of instances of $\LCoindProofi{k_1}$, \ldots, $\LCoindProofi{k_m}\,$:
  \begin{flushleft}
    \hspace*{-2.5ex}
    \mbox{%
      \AxiomC{$ \aDerivi{1} $}
      \noLine
      \UnaryInfC{$( \cstexpi{1} \formeq \dstexpi{1} )$}
      \AxiomC{$ \aDerivi{m} $}
      \noLine
      \UnaryInfC{$( \cstexpi{m} \formeq \dstexpi{m} )$}
      \insertBetweenHyps{\ldots}
      \noLine
      \BinaryInfC{$\aDeriv$}
      \noLine
      \UnaryInfC{$ \cstexp \formeq \dstexp $}
      \AxiomC{$ \aLLEECoProofoverof{}{\astexp \formeq \bstexp} $}
      \insertBetweenHyps{\hspace*{-8.5ex}}
      \RightLabel{$\LCoindProofi{1}$}
      \BinaryInfC{$ \astexp \formeq \bstexp $}
      \DisplayProof}
    \hspace*{-13ex}\raisebox{0.8ex}{$\Longrightarrow$}\hspace*{1ex}
    \raisebox{2.5ex}{
      \AxiomC{$ \aDerivi{1} $}
      \noLine
      \UnaryInfC{$( \cstexpi{1} \formeq \dstexpi{1} ) \: \ldots \hspace*{-1ex}$}
      \AxiomC{$ \aDerivi{m} $}
      \noLine
      \UnaryInfC{$( \cstexpi{m} \formeq \dstexpi{m} )$}
      \AxiomC{$ \aLLEECoProofaccoverof{}{ \astexp \formeq \bstexp } $}
      \insertBetweenHyps{}
      \RightLabel{$\LCoindProofi{m}$}
      \TrinaryInfC{$ \astexp \formeq \bstexp $}
      \DisplayProof}
  \end{flushleft}
  can be replaced, on the right, by a single instance of $\LCoindProofi{m}$,
  where $\aLLEECoProofaccoverof{}{ \astexp \formeq \bstexp }$ is, while formally the same as $\aLLEECoProofoverof{}{ \astexp \formeq \bstexp }$,
    now a \LLEEwitnessed\ coinductive proof over $\milnersysmin$ plus $\cstexpi{1} \formeq \dstexpi{1}$, \ldots, $\cstexpi{m} \formeq \dstexpi{m}$.
  The latter is possible because the derivation part $\aDeriv$ in $\milnersysmin$ implies that then $\cstexp \formeq \dstexp$ can be derived 
    from the assumptions in $\milnersysmin$ as well.
    
  Statement~\eqref{it:3:lem:easy:rels:coindproofsystems} can be shown entirely analogously as statement~\eqref{it:2:lem:easy:rels:coindproofsystems}.
\end{proof}

\begin{rem}[completeness of 
                            $\CC$, $\coindmilnersysbar$, $\milnersysaccbar$]
  The proof systems 
                    $\CC$ and $\coindmilnersysbar$,
    as well as the variant $\milnersysaccbar$ of Milner's system with the general unique solvability principle \USP\
  are complete for bisimilarity of star expressions.
  This can be established along Salomaa's completeness proof
    for his inference system for language equality of regular expressions~\cite{salo:1966}, 
    by an argument that we can outline as follows.
  Given star expressions $e$ and $f$ with $\chartof{e} \bisim \chartof{f}$, 
    $e$ and $f$ can be shown to be principal values of \provablein{\milnersysmin} solutions of $\chartof{e}$ and $\chartof{f}$, respectively
    (by a lemma for the chart interpretation similar to Lem.~\ref{lem:onechart-int:milnersysmin:solvable}). 
  These solutions can be transferred to the (\sonefree) product chart~$\achart$ of $\chartof{e}$ and $\chartof{f}$,
    with $e$ and $f$ as principal values
      of \provablein{\milnersysmin} solutions $\saeqfuni{1}$ and $\saeqfuni{2}$ of $\achart$, respectively. 
    From this we obtain a (not necessarily \LLEEwitnessed) coinductive proof $\pair{\achart}{\saeqfun}$ of $e = f$ over $\milnersysmin$. 
    It follows that $e = f$ is provable in 
                                           $\CC$, and in $\coindmilnersysbar$.
    Now since the correctness conditions for the \provablein{\milnersysmin} solutions $\saeqfuni{1}$ and $\saeqfuni{2}$ of $\achart$
      at each of the vertices of $\achart$
      together form a guarded system of linear equations to which the rule \USP\ can be applied (as $\achart$ is \sonefree),
      we obtain that $e \formeq f$ is also provable in $\milnersysaccbar$.  
\end{rem}

\section{From LLEE-witnessed coinductive proofs to Milner's system}
  \label{coindmilnersys:2:milnersys}

In this section we show that every \LLEEwitnessed\ coinductive proof over $\milnersysmin$ 
of an equation can also be established by a proof in Milner's system $\milnersys$. 
As a consequence we show that the coinductive version $\coindmilnersys$ of $\milnersys$ is \theoremsubsumed\ by $\milnersys$.
We obtain the statements in this section by adapting results in \cite[Sect.~5]{grab:fokk:2020a,grab:fokk:2020b}
 from \LLEEcharts\ to \LLEEonecharts. 



The hierarchical loop structure of a \onechart~$\aonecharthat$
  with \LLEEwitness~$\aonecharthat$ 
  facilitates the extraction of a \provablein{\milnersysmin} solution of $\aonechart$
  (see Lem.~\ref{lem:extrsol:is:sol}), for the following reason. 
The behaviour at every vertex $\bvert$ in $\aonechart$ can be split 
  into an iteration part that is induced via the \loopentry\ transitions from $\bvert$ in $\aonecharthat$
                                         (which induce loop \subonecharts\ with inner loop \subonecharts\ whose behaviour can be synthesized recursively),
  and an exit part that is induced via the body transitions from $\bvert$ in $\aonecharthat$. 
This idea permits to define (Def.~\ref{def:extrsoluntil:extrsol}) an `extraction function'\vspace*{-2.5pt} $\sextrsolof{\aonecharthat}$ of $\aonecharthat$
  from a `relative extraction function' $\sextrsoluntilof{\aonecharthat}$ of $\aonecharthat$
  whose values $\extrsoluntilof{\aonecharthat}{\bvert}{\avert}$ capture the behaviour at $\bvert$ 
  in a loop \subonechart\ at $\avert$ until $\avert$ is reached.
By the same idea the fixed-point rule $\RSPstar$ 
  can be used for showing that any two \provablein{\milnersys} solutions of $\aonechart$ are
  \provablyin{\milnersys} equal (Lem.~\ref{lem:sols:provably:equal:LLEE}).
We provide concrete examples for these statements that hint at their proofs 
  (Ex.~\ref{ex:extraction}, and Ex.~\ref{ex:extraction:proof}),
but we defer proofs to the appendix due to the similarity with \cite{grab:fokk:2020a,grab:fokk:2020b}.


\begin{defi}[(relative) extraction function]\label{def:extrsoluntil:extrsol}
  Let $\aonechart = \tuple{\verts,\actions,\sone,\start,\transs,\termexts}$ be a \onechart\ with guarded \LLEEwitness~$\aonecharthat$. 
  The \emph{extraction function 
            $\sextrsolof{\aonecharthat} \funin \verts \to \StExpsover{\actions}$ 
            of $\aonecharthat$}
  is defined from the 
  \emph{relative extraction function  
        $\sextrsoluntilof{\aonecharthat} \funin \descsetexp{\pair{\bvert}{\avert}}
                                                        {\bvert,\avert\in\vertsof{\aonechart},\,\bvert\convdescendsinlooptosc\avert} $
        of $\aonecharthat$}
  as follows, for $\bvert,\avert\in\verts$:\vspace{-1.5ex}
  \begin{align*}
    \extrsoluntilof{\aonecharthat}{\bvert}{\avert}
      & {} \defdby
        \begin{cases}
          \stexpone
            & \text{if $\bvert = \avert$,}
          \\[-0.5ex] 
          \stexpprod{\displaystyle
            \stexpit{\Bigl(
                       \sum_{i=1}^{n}
                         \stexpprod{\aacti{i}}{\extrsoluntilof{\aonecharthat}{\bverti{i}}{\bvert}}
                     \Bigr)}
                     }{\displaystyle
            \Bigl(
              {\sum_{i=1}^{m}
                 \stexpprod{\boneacti{i}}{\extrsoluntilof{\aonecharthat}{\cverti{i}}{\avert}}}
            \Bigr)              
                       }
            & \text{if $\bvert \convdescendsinloopto \avert$,}
        \end{cases}
      \displaybreak[0]\\[-0.5ex]  
      \extrsolof{\aonecharthat}{\bvert}
        & {} \defdby
          \stexpprod{%
            \stexpit{\Bigl(
                       \sum_{i=1}^{n}
                         \stexpprod{\aacti{i}}{\extrsoluntilof{\aonecharthat}{\bverti{i}}{\bvert}}
                     \Bigr)}
                     }{
            \Bigl(
              \stexpsum{\terminatesconstof{\aonechart}{\bvert}}
                       {\sum_{i=1}^{m}
                          \stexpprod{\boneacti{i}}{\extrsolof{\aonecharthat}{\cverti{i}}}}
            \Bigr)              
                       } \punc{,}
     \\[-0.5ex]
     & \hspace*{-8ex} \text{provided: } 
       \transitionsinfrom{\aonecharthat}{\bvert}
         =
       \begin{aligned}[t]
         &
         \descsetexpbig{ \bvert \lti{\aacti{i}}{\looplab{\alabi{i}}} \bverti{i} }
                       { \alabi{i}\in\natplus,\,  i\in\setexp{1,\ldots,n} }
           {\cup} 
         \descsetexpbig{ \bvert \lti{\boneacti{i}}{\looplab{\bodylab}} \cverti{i} }
                       { i\in\setexp{1,\ldots,m} } \text{,}
       \end{aligned}  
     \\[-0.5ex]
     & \hspace*{-8ex}\text{induction for $\sextrsoluntilof{\aonecharthat}$ on: }
     \begin{aligned}[t]  
       \pair{\bverti{1}}{\averti{1}}
          \lexspo
        \pair{\bverti{2}}{\averti{2}}
          \;\funin\: \Longleftrightarrow\:
            \averti{1} \convdescendsinlooptotc \averti{2}
              \logor
            (\, \averti{1} = \averti{2}
                  \logand 
                \bverti{1} \convredtci{\bodylab} \bverti{2} \,) \punc{,}
     \end{aligned}                        
     \\[-0.75ex]
     & \hspace*{-8ex}\text{induction for $\sextrsolof{\aonecharthat}$ on the strict partial order $\sconvredtci{\bodylab}$ (see Lem.~\ref{lem:descsteps:bodysteps:wf})}
         \punc{,}
  \end{align*}
  where $\slexspo$ is a well-founded strict partial order due to Lem.~\ref{lem:descsteps:bodysteps:wf}.
  The choice of the list representations of \actiontarget\ sets of $\aonecharthat$ 
  changes these definition 
                           only up to provability~in~\ACI.
\end{defi}

\begin{exa}\label{ex:extraction}\mbox{}%
  \begin{figure}[tbp]
\begin{center}
  \begin{equation*}
    \hspace*{-3ex}
    \begin{aligned}[c]
      \scalebox{1.3}{\begin{tikzpicture}\renewcommand{\stexpprod}[2]{{#1}\hspace*{1pt}{\sstexpprod}\hspace*{1pt}{#2}}
  
\matrix[anchor=center,row sep=1cm,column sep=0.75cm,
        every node/.style={draw,very thick,circle,minimum width=2.5pt,fill,inner sep=0pt,outer sep=2pt}] {
  \node(v11){};
    & & \node(v21){};
  \\                 
  \node(v1){};
    & & \node(v2){};
  \\
    & \node[chocolate](v){};
  \\
  };   
\path (v) ++ (0cm,-1cm) node(label){\Large $\aonechart$, $\aonecharthat$};

\path (v11) ++ (0cm,0.25cm) node{\small $\averti{11}$};
\draw[->,thick,densely dotted,out=180,in=180,distance=0.75cm](v11) to (v1);

\path (v1) ++ (-0.225cm,-0.25cm) node{\small $\averti{1}$};
\draw[->,thick,darkcyan,shorten >= 0pt]
  (v1) to node[left,pos=0.25,xshift=0.075cm]{\small $\black{\aact}$} node[left,pos=0.6,xshift=0.075cm]{\small $\loopnsteplab{1}$} (v11);
\draw[->,thick,densely dotted,out=-90,in=180,distance=0.5cm,shorten >=2pt](v1) to (v);
\draw[->,shorten <= 0pt,shorten >= 0pt] (v1) to node[above]{$\bact$} (v21); 

\path (v21) ++ (0cm,0.25cm) node{\small $\averti{21}$};
\draw[->,thick,densely dotted,out=-0,in=0,distance=0.75cm](v21) to (v2);  
  
\path (v2) ++ (0.25cm,-0.25cm) node{\small $\averti{2}$};
\draw[->,thick,darkcyan,shorten >= 0pt]
  (v2) to node[right,pos=0.25,xshift=-0.075cm]{\small $\black{\bact}$} node[right,pos=0.6,xshift=-0.075cm]{\small $\loopnsteplab{1}$} (v21);
\draw[->,thick,densely dotted,out=-90,in=0,distance=0.5cm,shorten >= 2pt](v2) to (v);

\draw[<-,very thick,>=latex,chocolate,shorten <= 2pt](v) -- ++ (-90:0.55cm);   
\draw[thick,chocolate] (v) circle (0.12cm);
\path (v) ++ (0.25cm,-0.3cm) node{\small $\start$};
\draw[->,thick,darkcyan,shorten >= 0.175cm,shorten <= 2pt] 
  (v) to node[left,pos=0.36,xshift=0.075cm]{\small $\black{\aact}$} node[right,pos=0.4,xshift=-0.075cm,yshift=1pt]{\small $\loopnsteplab{2}$}  (v11);
\draw[->,thick,darkcyan,shorten >= 0.175cm,shorten <= 2pt] 
  (v) to node[right,pos=0.36,xshift=-0.05cm]{\small $\black{\bact}$} node[left,pos=0.6,xshift=0.075cm,yshift=1pt]{\small $\loopnsteplab{2}$} (v21);
  
\end{tikzpicture} }
    \end{aligned}
    \hspace*{1ex}
    \begin{aligned}[c]
      \extrsoluntilof{\aonecharthat}{\averti{21}}{\averti{2}}
        & \;\parbox[t]{\widthof{$\milnersysmineq$}}{$\defdby$}\,
      \stexpprod{\stexpit{\stexpzero}}  
                {(\stexpprod{\stexpone}{\extrsoluntilof{\aonecharthat}{\averti{2}}{\averti{2}}})}
        \,\synteq\,
      \stexpprod{\stexpit{\stexpzero}}{(\stexpprod{\stexpone}{\stexpone})}
        \,\milnersysmineq\,
      \stexpone
      \\[-0.1ex]
      \extrsoluntilof{\aonecharthat}{\averti{2}}{\start}
        & \;\parbox[t]{\widthof{$\milnersysmineq$}}{$\defdby$}\,
      \stexpprod{\stexpit{(\stexpprod{\bact}{\extrsoluntilof{\aonecharthat}{\averti{21}}{\averti{2}}})}}
                {(\stexpprod{\stexpone}{\extrsoluntilof{\aonecharthat}{\start}{\start}})}
        \,\milnersysmineq\,  
      \stexpit{\bact}
      \\[-0.1ex]
      \extrsoluntilof{\aonecharthat}{\averti{21}}{\start}
        & \;\parbox[t]{\widthof{$\milnersysmineq$}}{$\defdby$}\,
      \stexpprod{\stexpit{\stexpzero}}  
                {(\stexpprod{\stexpone}{\extrsoluntilof{\aonecharthat}{\averti{2}}{\start}})}
        \,\milnersysmineq\,
      \stexpprod{\stexpone}{\stexpit{\bact}}
        \,\milnersysmineq\,
      \stexpit{\bact}
      \\[-0.1ex]
      \extrsoluntilof{\aonecharthat}{\averti{11}}{\averti{1}}
        & \;\parbox[t]{\widthof{$\milnersysmineq$}}{$\defdby$}\,
      \stexpprod{\stexpit{\stexpzero}}
                {(\stexpprod{\stexpone}{\extrsoluntilof{\aonecharthat}{\averti{1}}{\averti{1}}})}
        \,\synteq\,
      \stexpprod{\stexpit{\stexpzero}}
                {(\stexpprod{\stexpone}{\stexpone})}
        \,\milnersysmineq\,
      \stexpone
      \\[-0.1ex]
      \extrsoluntilof{\aonecharthat}{\averti{1}}{\start}
        & \;\parbox[t]{\widthof{$\milnersysmineq$}}{$\defdby$}\,
      \stexpprod{\stexpit{(\stexpprod{\aact}{\extrsoluntilof{\aonecharthat}{\averti{11}}{\start}})}}
                {(\stexpsum{\stexpprod{\bact}{\extrsoluntilof{\aonecharthat}{\averti{21}}{\start}}}
                           {\stexpprod{\stexpone}{\extrsoluntilof{\aonecharthat}{\start}{\start}}})}
      \\[-0.5ex] 
        & \;\parbox[t]{\widthof{$\milnersysmineq$}}{$\milnersysmineq$}\,
      \stexpprod{\stexpit{\aact}}
                {(\stexpsum{\stexpprod{\bact}{\stexpit{\bact}}}
                           {\stexpone})}
        \,\milnersysmineq\,
      \stexpprod{\stexpit{\aact}}{\stexpit{\bact}} 
      \\[-0.1ex]
      \extrsoluntilof{\aonecharthat}{\averti{11}}{\start}
        & \;\parbox[t]{\widthof{$\milnersysmineq$}}{$\defdby$}\,
      \stexpprod{\stexpit{\stexpzero}}  
                {(\stexpprod{\stexpone}{\extrsoluntilof{\aonecharthat}{\averti{1}}{\start}})}
        \,\milnersysmineq\,
      \stexpprod{\stexpit{\aact}}{\stexpit{\bact}}
      \\[-0.1ex]
      \extrsolof{\aonecharthat}{\start}
        & \;\parbox[t]{\widthof{$\milnersysmineq$}}{$\defdby$}\,
      \stexpprod{\stexpit{(\stexpsum{\stexpprod{\aact}{\extrsoluntilof{\aonecharthat}{\averti{11}}{\start}}}
                                    {\stexpprod{\bact}{\extrsoluntilof{\aonecharthat}{\averti{21}}{\start}}})}}
                {\stexpone}
      \\[-0.5ex]
        & \;\parbox[t]{\widthof{$\milnersysmineq$}}{$\milnersysmineq$}\,
     \stexpit{(\stexpsum{\stexpprod{\aact}{(\stexpprod{\stexpit{\aact}}{\stexpit{\bact}})}}
                        {\stexpprod{\bact}{\stexpit{\bact}}})}
    \end{aligned}
  \end{equation*}     
\end{center}
    \vspace*{-2.5ex}
  \caption{\protect\label{fig:ex:extraction}%
           Extraction of the principal value $\protect\extrsolof{\protect\aonecharthat}{\protect\start}$ of a \protect\provablein{\protect\milnersysmin} solution~$\protect\sextrsolof{\protect\aonecharthat}$ 
           from\protect\vspace*{-2pt} the \protect\LLEEwitness~$\protect\acharthat$ used for the coinductive proof 
           in Ex.~\protect\ref{ex:2:LLEEcoindproof}.
           We shorten solution values by using axioms~of~\protect\milnersysmin.
           }

\end{figure}      %
  We consider the \onechart~$\aonechart$,
  and the \LLEEwitness~$\aonecharthat$ of $\aonechart$,
  in the \LLEEwitnessed\ coinductive proof $\aCoProof = \pair{\aonechart}{\saeqfun}$ of 
  $\stexpit{(\stexpprod{\stexpit{\aact}}
                       {\stexpit{\bact}})}
     \formeq
   \stexpit{(\stexpsum{\aact}{\bact})}$
  in Ex.~\ref{ex:2:LLEEcoindproof}.
  We detail in Fig.~\ref{fig:ex:extraction} the process of computing
  the principal value $\extrsolof{\aonecharthat}{\start}$ of the
  extraction function $\sextrsolof{\aonecharthat}$ of $\aonecharthat$. 
  The statement of Lem.~\ref{lem:extrsol:is:sol} below will guarantee that $\sextrsolof{\aonecharthat}$
  is a \provablein{\milnersysmin} solution of $\aonechart$.
\end{exa}

\begin{lem}\label{lem:lem:extrsol:is:sol}
  Let $\aonechart$ be a weakly guarded \LLEEonechart\ with guarded \LLEEwitness~$\aonecharthat$. 
  
  Then 
  $\extrsolof{\aonecharthat}{\bvert}
     \milnersysmineq 
   \stexpprod{\extrsoluntilof{\aonecharthat}{\bvert}{\avert}}
             {\extrsolof{\aonecharthat}{\avert}}$
  holds 
  for all vertices $\bvert,\avert\in\vertsof{\aonechart}$ such that $\bvert \convdescendsinlooptosc \avert$.
\end{lem}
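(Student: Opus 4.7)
The plan is to proceed by well-founded induction on $\pair{\bvert}{\avert}$ with respect to the strict partial order $\slexspo$ from Def.~\ref{def:extrsoluntil:extrsol}, whose well-foundedness follows from the well-foundedness of both $\sconvdescendsinlooptotc$ and $\sconvredtci{\bodylab}$ (Lem.~\ref{lem:descsteps:bodysteps:wf}). In the base case $\bvert = \avert$, the definition yields $\extrsoluntilof{\aonecharthat}{\avert}{\avert} \synteq \stexpone$, so the claim reduces to the axiom $\leftidstexpprod$.

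In the inductive step $\bvert \convdescendsinloopto \avert$, let the \LLEEwitness~$\aonecharthat$ split $\transitionsinfrom{\aonecharthat}{\bvert}$ into loop-entry transitions $\bvert \lti{\aacti{i}}{\looplab{\alabi{i}}} \bverti{i}$ for $i\in\setexp{1,\ldots,n}$ and body transitions $\bvert \lti{\boneacti{i}}{\looplab{\bodylab}} \cverti{i}$ for $i\in\setexp{1,\ldots,m}$. Unfolding $\extrsoluntilof{\aonecharthat}{\bvert}{\avert}$ by Def.~\ref{def:extrsoluntil:extrsol} and then pushing the trailing factor $\extrsolof{\aonecharthat}{\avert}$ inside the exit-sum by associativity ($\assocstexpprod$) and right-distributivity ($\rdistr$), one derives in $\milnersysmin$ that $\stexpprod{\extrsoluntilof{\aonecharthat}{\bvert}{\avert}}{\extrsolof{\aonecharthat}{\avert}}$ is provably equal to
\[ \stexpit{\Bigl(\sum_{i=1}^{n} \stexpprod{\aacti{i}}{\extrsoluntilof{\aonecharthat}{\bverti{i}}{\bvert}}\Bigr)} \cdot \sum_{i=1}^{m} \stexpprod{\boneacti{i}}{\bigl(\stexpprod{\extrsoluntilof{\aonecharthat}{\cverti{i}}{\avert}}{\extrsolof{\aonecharthat}{\avert}}\bigr)}. \]
For each body-successor $\cverti{i}$ I then apply the induction hypothesis to $\pair{\cverti{i}}{\avert}$: this is legitimate because (a) $\bvert$ lies in the body of the loop \subonechart\ at $\avert$ and loop \subonecharts\ are closed under body transitions, so $\cverti{i}$ is either $\avert$ itself or another body vertex, giving $\cverti{i} \convdescendsinlooptosc \avert$; and (b) from $\bvert \redi{\bodylab} \cverti{i}$ we get $\cverti{i} \convredi{\bodylab} \bvert$, hence $\pair{\cverti{i}}{\avert} \lexspo \pair{\bvert}{\avert}$. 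The IH replaces each inner factor $\stexpprod{\extrsoluntilof{\aonecharthat}{\cverti{i}}{\avert}}{\extrsolof{\aonecharthat}{\avert}}$ by $\extrsolof{\aonecharthat}{\cverti{i}}$.

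It remains to identify the resulting star expression with $\extrsolof{\aonecharthat}{\bvert}$ as unfolded in Def.~\ref{def:extrsoluntil:extrsol}. The only discrepancy is the extra termination summand $\terminatesconstof{\aonechart}{\bvert}$ in the unfolding of $\extrsolof{\aonecharthat}{\bvert}$. But condition~(L\ref{loop:3}) on the loop \subonechart\ at $\avert$ forbids immediate termination at the strict body vertex $\bvert$, so $\terminatesconstof{\aonechart}{\bvert} \synteq \stexpzero$, which is absorbed via $\neutralstexpsum$ (together with $\commstexpsum$), closing the inductive step.

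The main obstacle is not any single algebraic manipulation but the simultaneous bookkeeping required by the lexicographic induction: for every body-successor $\cverti{i}$ one has to verify both the $\slexspo$-decrease and membership in the domain $\sconvdescendsinlooptosc$ of $\sextrsoluntilof{\aonecharthat}$, and then keep track of how the loop-entry and body transitions at $\bvert$ are recombined. The essential structural ingredient that makes the argument go through is the sharing, between the unfoldings of $\extrsolof{\aonecharthat}{\bvert}$ and $\extrsoluntilof{\aonecharthat}{\bvert}{\avert}$, of the star-iteration prefix generated by the loop-entry transitions from $\bvert$ — a sharing that Def.~\ref{def:extrsoluntil:extrsol} is precisely set up to make manifest, reducing the final alignment to purely equational reasoning in $\milnersysmin$.
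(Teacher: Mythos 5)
Your proof is correct and follows essentially the same route as the paper's: the same lexicographic well-founded induction on $\slexspo$, the same appeal to condition~(L3) to discharge the termination constant $\terminatesconstof{\aonechart}{\bvert} \synteq \stexpzero$, and the same use of $(\assocstexpprod)$ and $(\rdistr)$ to align the two unfoldings --- you merely run the equational chain from the right-hand side to the left where the paper runs it in the opposite direction, which is immaterial since $\milnersysmin$ contains the symmetry rule. The only detail the paper additionally spells out is the degenerate case $m=0$, where the empty exit-sum is $\stexpzero$ and an extra application of $(\stexpzerostexpprod)$ is needed.
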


\begin{lem}[extracted function is provable solution]\label{lem:extrsol:is:sol}
  Let $\aonechart$ 
      be a w.g.\ \LLEEonechart\
  with guarded \LLEEwitness~$\aonecharthat$.
  Then the extraction function $\sextrsolof{\aonecharthat}$ of $\aonecharthat$ 
  is a \provablein{\milnersysmin} solution~of~$\aonechart$. 
\end{lem}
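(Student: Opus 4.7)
The plan is to verify, for each vertex $\bvert\in\vertsof{\aonechart}$, the solution condition of Def.~\ref{def:provable:solution} from the definition of $\sextrsolof{\aonecharthat}$, using the previous lemma Lem.~\ref{lem:lem:extrsol:is:sol} to handle the loop-entry successors. Fix $\bvert\in\vertsof{\aonechart}$ and, following the structure recorded by the \LLEEwitness~$\aonecharthat$, split the transitions from $\bvert$ into \loopentry\ transitions $\bvert \lti{\aacti{i}}{\looplab{\alabi{i}}} \bverti{i}$ (for $i\in\setexp{1,\ldots,n}$, with $\alabi{i}\in\natplus$; note these must carry proper action labels since $\aonecharthat$ is guarded) and body transitions $\bvert \lti{\boneacti{j}}{\bodylab} \cverti{j}$ (for $j\in\setexp{1,\ldots,m}$). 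With this splitting, the definition of $\extrsolof{\aonecharthat}{\bvert}$ has the compact shape $\stexpit{E} \cdot \bigl(\terminatesconstof{\aonechart}{\bvert} + \sum_{j=1}^{m} \boneacti{j} \cdot \extrsolof{\aonecharthat}{\cverti{j}}\bigr)$ where $E \synteq \sum_{i=1}^{n} \aacti{i} \cdot \extrsoluntilof{\aonecharthat}{\bverti{i}}{\bvert}$.

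Next I would compute, using only axioms of $\milnersysmin$, as follows. Apply $\recdefstexpit$ to rewrite $\stexpit{E}$ as $\stexpone + E \cdot \stexpit{E}$; apply right-distributivity $\rdistr$ to distribute over the exit part; apply left-identity $\leftidstexpprod$ on the $\stexpone$-summand; apply $\rdistr$ repeatedly to push the outer sum over the $\aacti{i}$'s to the top; and finally apply associativity $\assocstexpprod$ to group each resulting summand as $\aacti{i} \cdot \bigl(\extrsoluntilof{\aonecharthat}{\bverti{i}}{\bvert} \cdot \extrsolof{\aonecharthat}{\bvert}\bigr)$. This gives
\begin{equation*}
  \extrsolof{\aonecharthat}{\bvert}
    \,\milnersysmineq\,
  \Bigl(\terminatesconstof{\aonechart}{\bvert} + \sum_{j=1}^{m} \boneacti{j} \cdot \extrsolof{\aonecharthat}{\cverti{j}}\Bigr)
    +
  \sum_{i=1}^{n} \aacti{i} \cdot \bigl(\extrsoluntilof{\aonecharthat}{\bverti{i}}{\bvert} \cdot \extrsolof{\aonecharthat}{\bvert}\bigr) \punc{.}
\end{equation*}
At this point I would invoke Lem.~\ref{lem:lem:extrsol:is:sol}: because each $\bverti{i}$ is the target of a \loopentrytransition\ from $\bvert$ in $\aonecharthat$, it lies in the body (or equals the start) of the loop \subonechart\ at $\bvert$, so $\bverti{i} \convdescendsinlooptosc \bvert$; hence $\extrsolof{\aonecharthat}{\bverti{i}} \milnersysmineq \extrsoluntilof{\aonecharthat}{\bverti{i}}{\bvert} \cdot \extrsolof{\aonecharthat}{\bvert}$. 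Substituting yields the required $\extrsolof{\aonecharthat}{\bvert} \milnersysmineq \terminatesconstof{\aonechart}{\bvert} + \sum_{j=1}^{m} \boneacti{j} \cdot \extrsolof{\aonecharthat}{\cverti{j}} + \sum_{i=1}^{n} \aacti{i} \cdot \extrsolof{\aonecharthat}{\bverti{i}}$, which up to the \ACI-axioms is precisely the solution condition at $\bvert$ relative to the (arbitrarily ordered) list of all transitions leaving~$\bvert$.

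I expect the conceptual content to lie entirely in applying Lem.~\ref{lem:lem:extrsol:is:sol}; the remaining work is a routine equational rewriting with $\recdefstexpit$, $\rdistr$, $\assocstexpprod$, $\leftidstexpprod$, and \ACI. The only potentially delicate bookkeeping is (i) justifying that the transitions leaving $\bvert$ really split as indicated — this is immediate from the definition of an \entrybodylabeling\ and the fact that a \LLEEwitness\ marks each transition either with $\bodylab$ or with some $\looplab{\alabi{}}$ for $\alabi{}\in\natplus$; (ii) handling the degenerate case $\bverti{i} = \bvert$ of a self-loop entry, where $\extrsoluntilof{\aonecharthat}{\bvert}{\bvert} \synteq \stexpone$ so the Lem.~\ref{lem:lem:extrsol:is:sol} instance collapses to the trivial $\extrsolof{\aonecharthat}{\bvert} \milnersysmineq \stexpone \cdot \extrsolof{\aonecharthat}{\bvert}$; and (iii) checking that the choice of list representation of the \actiontarget\ set only affects the result modulo \ACI, which is contained in $\milnersysmin$. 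None of these presents a genuine obstacle.
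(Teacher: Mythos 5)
Your proposal is correct and follows essentially the same route as the paper's own proof: unfold the star via $(\recdefstexpit)$, distribute with $(\rdistr)$, $(\leftidstexpprod)$, $(\assocstexpprod)$ to re-expose $\extrsolof{\aonecharthat}{\bvert}$ in the iteration summands, and then apply Lem.~\ref{lem:lem:extrsol:is:sol} to the \loopentry\ targets $\bverti{i}$ (using $\bverti{i} \convdescendsinlooptosc \bvert$) before closing with \ACI. The bookkeeping points you flag, including the reflexive case $\bverti{i} = \bvert$, are exactly the ones the paper's argument relies on.
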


\begin{lem}\label{lem:lem:sols:provably:equal:LLEE}
  Let $\aonechart$ be a \onechart\ $\aonechart$ with guarded \LLEEwitness~$\aonecharthat$.
  Let $\asys$ be an \eqlogicbased\ proof system over $\StExpover{\actions}$ 
    such that $\ACI \subsystem \asys \isthmsubsumedby \milnersys$.
  
  Let $\sasol \funin \vertsof{\aonechart} \to \StExpover{\actions}$
      be an \provablein{\asys} solution of $\aonechart$.
  Then 
  $\asol{\bvert}
     \eqin{\milnersys}
   \stexpprod{\extrsoluntilof{\aonecharthat}{\bvert}{\avert}}
             {\asol{\avert}}$ \mbox{}
  holds for all vertices $\bvert,\avert\in\vertsof{\aonechart}$ with $\bvert \convdescendsinlooptosc \avert$.
\end{lem}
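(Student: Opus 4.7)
The plan is to prove the statement by well-founded induction on the pair $\pair{\bvert}{\avert}$ with respect to the strict partial order $\slexspo$ from Def.~\ref{def:extrsoluntil:extrsol}, mirroring the structure of the definition of the relative extraction function. The base case $\bvert \synteq \avert$ is immediate: $\extrsoluntilof{\aonecharthat}{\bvert}{\bvert} \synteq \stexpone$, and $\asol{\bvert} \milnersyseq \stexpprod{\stexpone}{\asol{\bvert}}$ follows from $\leftidstexpprod$, which is an axiom of $\milnersysmin \subsystem \milnersys$.

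For the inductive step, assume $\bvert \convdescendsinloopto \avert$ strictly. Split the outgoing transitions of $\bvert$ in $\aonecharthat$ into the loop\nobreakdash-entry transitions $\bvert \lti{\aacti{i}}{\looplab{\alabi{i}}} \bverti{i}$ (with $\alabi{i}\in\natplus$, $i=1,\dots,n$) and the body transitions $\bvert \lti{\boneacti{i}}{\looplab{\bodylab}} \cverti{i}$ (with $i=1,\dots,m$), following the list representation used to define $\extrsoluntilof{\aonecharthat}{\bvert}{\avert}$. Since $\bvert$ is a non\nobreakdash-start vertex of the loop \subonechart\ at $\avert$, property (L\ref{loop:3}) gives $\terminatesconstof{\aonechart}{\bvert} \synteq \stexpzero$, so the solution condition for $\sasol$ at~$\bvert$ (using $\asys \isthmsubsumedby \milnersys$) yields
\[
  \asol{\bvert}
    \,\milnersyseq\,
  \sum_{i=1}^{n} \aacti{i} \prod \asol{\bverti{i}}
    \,+\,
  \sum_{i=1}^{m} \boneacti{i} \prod \asol{\cverti{i}} \punc{.}
\]
Every $\bverti{i}$ satisfies $\bverti{i} \convdescendsinlooptortc \bvert$ (returning through the body of the loop \subonechart\ entered at $\bvert$), so $\pair{\bverti{i}}{\bvert} \lexspo \pair{\bvert}{\avert}$ since $\bvert \convdescendsinlooptotc \avert$; similarly $\cverti{i} \convdescendsinlooptosc \avert$ and $\pair{\cverti{i}}{\avert} \lexspo \pair{\bvert}{\avert}$ because $\cverti{i} \convredtci{\bodylab} \bvert$. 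Applying the induction hypothesis to each summand and right\nobreakdash-distributing gives
\[
  \asol{\bvert}
    \,\milnersyseq\,
  B \prod \asol{\bvert} + C \prod \asol{\avert}
  \quad\text{where }
  B \,\defdby\, \sum_{i=1}^{n} \aacti{i} \prod \extrsoluntilof{\aonecharthat}{\bverti{i}}{\bvert},\;\;
  C \,\defdby\, \sum_{i=1}^{m} \boneacti{i} \prod \extrsoluntilof{\aonecharthat}{\cverti{i}}{\avert} \punc{.}
\]

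The key step is then to apply the fixed\nobreakdash-point rule $\RSPstar$ of $\milnersys$ to this equation, yielding $\asol{\bvert} \milnersyseq \stexpit{B} \prod C \prod \asol{\avert}$, and to observe that $\stexpit{B} \prod C \synteq \extrsoluntilof{\aonecharthat}{\bvert}{\avert}$ by definition (up to the $\ACI$\nobreakdash-choice of list representations already absorbed in $\milnersyseq$). The main obstacle is verifying the side condition $\notterminates{B}$ required by $\RSPstar$: here the assumption that $\aonecharthat$ is a \emph{guarded} \LLEEwitness\ is essential, because it forces every loop\nobreakdash-entry label $\aacti{i}$ to be a proper action (not $\sone$), so each summand of $B$ starts with a proper action and hence $\notterminates{B}$ follows from the grammar of derivability of $\sterminates$. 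Any $\sone$\nobreakdash-transitions of $\aonechart$ are necessarily body transitions of $\aonecharthat$ and therefore go into $C$, not $B$, which is precisely why the guardedness hypothesis on $\aonecharthat$ (rather than on $\aonechart$) is what is needed.
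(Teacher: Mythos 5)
Your proof is correct and follows essentially the same route as the paper's: the same well-founded induction on the lexicographic order $\slexspo$, the same split of $\transitionsinfrom{\aonecharthat}{\bvert}$ into loop-entry and body transitions, the use of (L3) to discard $\terminatesconstof{\aonechart}{\bvert}$, and the concluding application of $\RSPstar$ to recover $\extrsoluntilof{\aonecharthat}{\bvert}{\avert}$. Your explicit justification of the side condition $\notterminates{B}$ via guardedness of $\aonecharthat$ is in fact spelled out more carefully than in the paper, which merely asserts it.
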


For an \eqlogicbased\ proof system $\asys$ over $\StExpover{\actions}$ we say
that two star expression functions $\sasoli{1},\sasoli{2} \funin \verts\to\StExpover{\actions}$
are \emph{\provablyin{\asys} equal} if $\asoli{1}{\avert} \eqin{\asys} \asoli{2}{\avert}$ holds for all $\avert\in\verts$. 

\begin{samepage}
\begin{lem}[provable equality of solutions of LLEE-1-charts]\label{lem:sols:provably:equal:LLEE}
  Let $\aonechart$ be a guarded \LLEEonechart,
  and let $\asys$ be an \eqlogicbased\ proof system over $\StExpover{\actions}$ such that $\ACI \subsystem \asys \isthmsubsumedby \milnersys$.\nopagebreak[4]
  
  Then any two \provablein{\asys} solutions of $\aonechart$ are \provablyin{\milnersys} equal.
\end{lem}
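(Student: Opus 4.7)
The plan is to show that every \provablein{\asys} solution $\sasol$ of $\aonechart$ is \provablyin{\milnersys} equal, vertex-wise, to the extraction function $\sextrsolof{\aonecharthat}$ of a guarded \LLEEwitness\ $\aonecharthat$ of $\aonechart$ (which exists because $\aonechart$ is a guarded \LLEEonechart). Since $\sextrsolof{\aonecharthat}$ is itself a \provablein{\milnersysmin} solution of $\aonechart$ by Lem.~\ref{lem:extrsol:is:sol}, the stated lemma then follows by transitivity: for any two \provablein{\asys} solutions $\sasoli{1}, \sasoli{2}$ and any vertex $\bvert$, one concludes $\asoli{1}{\bvert} \eqin{\milnersys} \extrsolof{\aonecharthat}{\bvert} \eqin{\milnersys} \asoli{2}{\bvert}$.

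To establish $\asol{\bvert} \eqin{\milnersys} \extrsolof{\aonecharthat}{\bvert}$ for an arbitrary \provablein{\asys} solution $\sasol$ and every vertex $\bvert$, I would proceed by well-founded induction along the strict partial order $\sconvredtci{\bodylab}$, well-founded by Lem.~\ref{lem:descsteps:bodysteps:wf}. Fix $\bvert$ and list the transitions from $\bvert$ in $\aonecharthat$ as loop-entry transitions $\bvert \lti{\aacti{i}}{\looplab{\alabi{i}}} \bverti{i}$ (for $i=1,\ldots,n$) and body transitions $\bvert \lti{\boneacti{j}}{\bodylab} \cverti{j}$ (for $j=1,\ldots,m$). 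The solution condition at $\bvert$, combined with $\asys \isthmsubsumedby \milnersys$, yields $\asol{\bvert} \eqin{\milnersys} \terminatesconstof{\aonechart}{\bvert} + \sum_{i=1}^{n} \aacti{i} \cdot \asol{\bverti{i}} + \sum_{j=1}^{m} \boneacti{j} \cdot \asol{\cverti{j}}$. Since $\bverti{i} \convdescendsinloopto \bvert$, Lem.~\ref{lem:lem:sols:provably:equal:LLEE} applies at each $\bverti{i}$ to rewrite $\asol{\bverti{i}} \eqin{\milnersys} \extrsoluntilof{\aonecharthat}{\bverti{i}}{\bvert} \cdot \asol{\bvert}$. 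Substituting and using right-distributivity transforms the equation into $\asol{\bvert} \eqin{\milnersys} B \cdot \asol{\bvert} + C$, where $B \defdby \sum_{i=1}^{n} \aacti{i} \cdot \extrsoluntilof{\aonecharthat}{\bverti{i}}{\bvert}$ and $C \defdby \terminatesconstof{\aonechart}{\bvert} + \sum_{j=1}^{m} \boneacti{j} \cdot \asol{\cverti{j}}$. Guardedness of the witness ensures each $\aacti{i}$ is a proper action, so $\notterminates{B}$ and $\RSPstar$ applies to give $\asol{\bvert} \eqin{\milnersys} \stexpit{B} \cdot C$. Each body-step target satisfies $\cverti{j} \convredtci{\bodylab} \bvert$, so the induction hypothesis yields $\asol{\cverti{j}} \eqin{\milnersys} \extrsolof{\aonecharthat}{\cverti{j}}$; substituting these into $C$ and comparing with the defining equation of $\extrsolof{\aonecharthat}{\bvert}$ in Def.~\ref{def:extrsoluntil:extrsol} closes the induction step.

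The main obstacle is ensuring the side condition $\notterminates{B}$ for the application of $\RSPstar$, which is precisely where it matters that the chosen \LLEEwitness\ $\aonecharthat$ is \emph{guarded} — this forces all loop-entry labels $\aacti{i}$ to be proper (non-$\sone$) actions, so that $B$ does not permit immediate termination. A secondary bookkeeping point is confirming, from weak guardedness and the \LLEE-witness structure, that the hypothesis of a \emph{guarded} \LLEEonechart\ really does supply a guarded \LLEEwitness. Once these are in place, the argument parallels the reasoning in \cite{grab:fokk:2020a,grab:fokk:2020b} for the \sonefree\ \LLEEchart\ case, with $\sone$-transitions being absorbed harmlessly into the body-step part of the recurrence because their targets still lie strictly below $\bvert$ in $\sconvredtci{\bodylab}$.
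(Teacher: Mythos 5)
Your proposal is correct and follows essentially the same route as the paper's proof: reduce to showing every \provablein{\asys} solution is \provablyin{\milnersys} equal to the extraction function $\sextrsolof{\aonecharthat}$, then induct on $\sconvredtci{\bodylab}$, using Lem.~\ref{lem:lem:sols:provably:equal:LLEE} at loop-entry targets, the induction hypothesis at body-step targets, right-distributivity to reach a fixed-point equation, and $\RSPstar$ (whose side condition is discharged exactly by guardedness of the \LLEEwitness, as you note). The only cosmetic difference is the order in which the induction hypothesis and $\RSPstar$ are applied, which does not affect the argument.
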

\end{samepage}

\begin{prop}\label{prop:LLEEcoindproofeq:impl:milnersyseq}
  For every \eqlogicbased\ proof system $\asys$ over $\StExpover{\actions}$ with $\ACI \subsystem \asys \isthmsubsumedby \milnersys$, 
  provability by \LLEEwitnessed\ coinductive proofs over $\asys$ implies derivability~in~$\milnersys$:\vspace{-1ex}
  \begin{equation}\label{eq:prop:LLEEcoindproofeq:impl:milnersyseq}
      \bigl(\;
        \astexpi{1}
          \LLEEcoindproofeqin{\asys}
        \astexpi{2}
            \quad\;\Longrightarrow\quad\;
        \astexpi{1}
          \milnersyseq
        \astexpi{2} 
      \;\bigr)
      \qquad \text{for all $\astexpi{1},\astexpi{2}\in\StExpover{\actions}$.} 
  \end{equation}
\end{prop}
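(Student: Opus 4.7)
The proof is essentially a direct application of Lemma~\ref{lem:sols:provably:equal:LLEE}, so the plan is to unpack the coinductive-proof data, verify that its two side-functions give two solutions to which that lemma applies, and then specialise the resulting provable equality to the start vertex.

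More concretely, suppose $\astexpi{1} \LLEEcoindproofeqin{\asys} \astexpi{2}$, and let $\aCoProof = \pair{\aonechart}{\saeqfun}$ be a witnessing \LLEEwitnessed\ coinductive proof over $\asys$, where $\aonechart = \tuple{\verts,\actions,\sone,\start,\transs,\termexts}$ is a weakly guarded \LLEEonechart\ with a \LLEEwitness\ $\aonecharthat$, and $\saeqfun$ labels vertices by formal equations whose left-hand and right-hand sides yield functions $\saeqfuni{1}, \saeqfuni{2} \funin \verts \to \StExpover{\actions}$ that are \provablein{\asys} solutions of $\aonechart$, with $\astexpi{1} \synteq \aeqfuni{1}{\start}$ and $\astexpi{2} \synteq \aeqfuni{2}{\start}$ by condition~(cp2). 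I would first invoke Lemma~\ref{lem:sols:provably:equal:LLEE} on the guarded \LLEEonechart~$\aonechart$ together with its \LLEEwitness\ $\aonecharthat$, using the hypothesis $\ACI \subsystem \asys \isthmsubsumedby \milnersys$ to conclude that the two \provablein{\asys} solutions $\saeqfuni{1}$ and $\saeqfuni{2}$ of $\aonechart$ are \provablyin{\milnersys} equal. Specialising this to the start vertex $\start$ yields $\aeqfuni{1}{\start} \milnersyseq \aeqfuni{2}{\start}$, and by (cp2) we may rewrite this as $\astexpi{1} \milnersyseq \astexpi{2}$, which is the desired conclusion.

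The main obstacle that I expect is a subtle guardedness gap: Lemma~\ref{lem:sols:provably:equal:LLEE} is stated for \emph{guarded} \LLEEonecharts, while Definition~\ref{def:coindproof} only requires the underlying \onechart\ of a coinductive proof to be \emph{weakly guarded} (no cycles of \onetransitions). I would handle this by noting that a \LLEEwitness\ of a weakly guarded \onechart\ can always be assumed guarded, because a \loopentrytransition\ labelled by $\sone$ in the witness, together with its \loopbody\ path back to the loop head, would either be trivial (contradicting that loop \subonecharts\ have an infinite path returning to the start, (L\ref{loop:1})--(L\ref{loop:2})) or close a cycle of \onetransitions\ in the body, contradicting weak guardedness. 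If this observation does not yield guardedness on the nose, a fall-back is to thin out the witness or to verify that the proof of Lemma~\ref{lem:sols:provably:equal:LLEE} (via Lemma~\ref{lem:lem:sols:provably:equal:LLEE} and the extraction function $\sextrsolof{\aonecharthat}$) goes through unchanged under weak guardedness, since the induction of Definition~\ref{def:extrsoluntil:extrsol} only uses the well-foundedness of $\sconvdescendsinlooptotc$ and $\sconvredtci{\bodylab}$ from Lemma~\ref{lem:descsteps:bodysteps:wf}, not properness of \loopentrytransitions.

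Once that technical point is cleared, the argument is a one-line application of Lemma~\ref{lem:sols:provably:equal:LLEE}: it is precisely the role of the \LLEE-structure to make the two side-solutions \provablyin{\milnersys} equal via the uniform extraction of a single \milnersys-provable representative $\extrsolof{\aonecharthat}{\avert}$ at each vertex $\avert$, multiplied on the right by suitable $\extrsoluntilof{\aonecharthat}{\cdot}{\cdot}$ factors as in Lemma~\ref{lem:lem:sols:provably:equal:LLEE}. No further induction on the structure of the coinductive proof is needed, and no appeal to the rule $\RSPstar$ is required here beyond what is already encapsulated in Lemma~\ref{lem:sols:provably:equal:LLEE}.
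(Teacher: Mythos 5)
Your proof is correct and follows essentially the same route as the paper's: unpack the \LLEEwitnessed\ coinductive proof into the two \provablein{\asys} solutions $\saeqfuni{1},\saeqfuni{2}$ of the underlying \LLEEonechart, apply Lemma~\ref{lem:sols:provably:equal:LLEE}, and specialise to the start vertex via condition (cp2). The guardedness mismatch you flag (the lemma assumes a guarded \LLEEwitness, the definition of coinductive proof only demands weak guardedness) is glossed over in the paper's own one-paragraph proof as well, so your discussion of it only adds care beyond what the paper records.
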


\begin{proof}\vspace{-0.5ex}
  For showing \eqref{eq:prop:LLEEcoindproofeq:impl:milnersyseq}, 
  let $\astexp,\bstexp\in\StExpover{\actions}$ be such that 
  $\astexp \LLEEcoindproofeqin{\asys} \bstexp$.
  Then there is a \LLEEwitnessed\ coinductive proof $\aCoProof = \pair{\aonechart}{\saeqfun}$ of $\astexpi{1} \formeq \astexpi{2}$ over $\asys$.
  Then $\aonechart$ 
                    is a \LLEEonechart, and there are 
  \provablein{\asys} solutions $\saeqfuni{1},\saeqfuni{2} \funin \vertsof{\aonechart} \to \StExpover{\actions}$ of $\aonechart$
    such that $\astexpi{1} \synteq \aeqfuni{1}{\start}$ and $\astexpi{2} \synteq \aeqfuni{2}{\start}$. 
  Then $\saeqfuni{1}$ and $\saeqfuni{2}$ are \provablyin{\milnersys} equal by Lem.~\ref{lem:sols:provably:equal:LLEE}.
  As a consequence we find
    $\astexpi{1} 
       \synteq
     \aeqfuni{1}{\start}
       \milnersyseq
     \aeqfuni{2}{\start}
       \synteq
     \astexpi{2}$,
  and hence $\astexpi{1} \milnersyseq \astexpi{2}$.           
\end{proof}

\begin{exa}\label{ex:extraction:proof}\mbox{}%
  \begin{figure}[tbp]
\begin{center}
  $
  \hspace*{-3ex}%
  \begin{aligned}
    \scalebox{0.9}{\begin{tikzpicture}\renewcommand{\stexpprod}[2]{{#1}\hspace*{1pt}{\sstexpprod}\hspace*{1pt}{#2}}
  
\matrix[anchor=north,row sep=1.25cm,column sep=0cm,
        ] {
  \node(v11){$\aeqfuni{i}{\averti{11}}$};
    & & \node(v21){$\aeqfuni{i}{\averti{21}}$};
  \\                 
  \node(v1){$\aeqfuni{i}{\averti{1}}$};
    & & \node(v2){$\aeqfuni{i}{\averti{2}}$};
  \\
    & \node(v){$\aeqfuni{i}{\start}$};
  \\
  };   
\path (v) ++ (0cm,
                  +4.75cm) node(label){\Large $\aonechart$, $\aonecharthat$};

\draw[->,thick,densely dotted,out=180,in=165,distance=0.75cm](v11) to (v1);

\draw[->,thick,darkcyan,shorten >= 0pt]
  (v1) to node[left,pos=0.25,xshift=0.075cm]{\small $\black{\aact}$} node[left,pos=0.6,xshift=0.075cm]{\small $\loopnsteplab{1}$} (v11);
\draw[->,thick,densely dotted,out=-90,in=165,distance=0.5cm](v1) to (v);
\draw[->,shorten <= 0pt,shorten >= 0pt] (v1) to node[above]{$\bact$} (v21); 

\draw[->,thick,densely dotted,out=-0,in=15,distance=0.75cm](v21) to (v2);  
  
\draw[->,thick,darkcyan,shorten >= 0pt]
  (v2) to node[right,pos=0.25,xshift=-0.075cm]{\small $\black{\bact}$} node[right,pos=0.6,xshift=-0.075cm]{\small $\loopnsteplab{1}$} (v21);
\draw[->,thick,densely dotted,out=-90,in=15,distance=0.5cm](v2) to (v);

\draw[<-,very thick,>=latex,chocolate,shorten <=2.5pt](v) -- ++ (-90:0.8cm); 
\draw[thick,chocolate,double] (v) ellipse (0.475cm and 0.3cm);   
\draw[->,thick,darkcyan,shorten >= 0.175cm,shorten <= 3pt] 
  (v) to node[left,pos=0.36,xshift=0.075cm]{\small $\black{\aact}$} node[right,pos=0.4,xshift=-0.075cm,yshift=1pt]{\small $\loopnsteplab{2}$}  (v11);
\draw[->,thick,darkcyan,shorten >= 0.175cm,shorten <= 3pt] 
  (v) to node[right,pos=0.36,xshift=-0.05cm]{\small $\black{\bact}$} node[left,pos=0.6,xshift=0.075cm,yshift=1pt]{\small $\loopnsteplab{2}$} (v21);
  
\end{tikzpicture} }
  \end{aligned}
  \hspace*{1ex}
  \begin{aligned}[c]
    \aeqfuni{i}{\averti{21}}
      & \;\parbox[t]{\widthof{$\milnersysmineq$}}{$\eqinsol{\milnersysmin}$}\,
    \stexpprod{\stexpone}{\aeqfuni{i}{\averti{2}}} 
      \,\milnersysmineq\,
    \aeqfuni{i}{\averti{2}} 
    \\[-0.75ex]
      & \phantom{\;\parbox[t]{\widthof{$\milnersysmineq$}}{$\eqinsol{\milnersysmin}$}}\,
        \text{($\eqinsol{\milnersysmin}$ means use of 
               `is \provablein{\milnersysmin} solution')}
    \\
    \aeqfuni{i}{\averti{2}}
      & \;\parbox[t]{\widthof{$\milnersysmineq$}}{$\eqinsol{\milnersysmin}$}\,
    \stexpsum{\stexpprod{\bact}{\aeqfuni{i}{\averti{21}}}}
             {\stexpprod{\stexpone}{\aeqfuni{i}{\start}}} 
      \,\milnersysmineq\,
    \stexpsum{\stexpprod{\bact}{\aeqfuni{i}{\averti{2}}}}
             {\aeqfuni{i}{\start}}
    \\[-0.5ex]
    & \;\parbox[t]{\widthof{$\milnersysmineq$}}{$\Downarrow${\text{~~applying $\RSPstar$}}}
    \\[-0.5ex]
    \aeqfuni{i}{\averti{2}}
      & \;\parbox[t]{\widthof{$\milnersysmineq$}}{$\milnersyseq$}\,
    \stexpprod{\stexpit{\bact}}
              {\aeqfuni{i}{\start}}
    \\[0.5ex]
    \aeqfuni{i}{\averti{11}}
      & \;\parbox[t]{\widthof{$\milnersysmineq$}}{$\eqinsol{\milnersysmin}$}\,
    \stexpprod{\stexpone}{\aeqfuni{i}{\averti{1}}} 
      \,\milnersysmineq\,
    \aeqfuni{i}{\averti{1}} 
    \\
    \aeqfuni{i}{\averti{1}}
      & \;\parbox[t]{\widthof{$\milnersysmineq$}}{$\milnersysmineq$}\,
    \stexpsum{\stexpprod{\aact}{\aeqfuni{i}{\averti{11}}}}
             {\stexpsum{\stexpprod{\bact}{\aeqfuni{i}{\averti{21}}}}
                       {\stexpprod{\stexpone}{\aeqfuni{i}{\start}}}}
    \\[-0.25ex]      
      & \;\parbox[t]{\widthof{$\milnersysmineq$}}{$\milnersyseq$}\,
    \stexpsum{\stexpprod{\aact}{\aeqfuni{i}{\averti{1}}}}
             {\stexpprod{(\stexpsum{\stexpprod{\bact}{\stexpit{\bact}}}
                                   {\stexpone})}
                        {\aeqfuni{i}{\start}}}
    \\[-0.25ex]      
      & \;\parbox[t]{\widthof{$\milnersysmineq$}}{$\milnersysmineq$}\,
    \stexpsum{\stexpprod{\aact}{\aeqfuni{i}{\averti{1}}}}
             {\stexpprod{\stexpit{\bact}}
                        {\aeqfuni{i}{\start}}}
    \\[-0.5ex]
    & \;\parbox[t]{\widthof{$\milnersysmineq$}}{$\Downarrow${\text{~~applying $\RSPstar$}}}
    \\[-0.5ex]
    \aeqfuni{i}{\averti{1}}
      & \;\parbox[t]{\widthof{$\milnersysmineq$}}{$\milnersyseq$}\,
    \stexpprod{\stexpit{\aact}}
              {(\stexpprod{\stexpit{\bact}}{\aeqfuni{i}{\start}})}
      \,\milnersysmineq\,          
    \stexpprod{(\stexpprod{\stexpit{\aact}}{\stexpit{\bact}})}{\aeqfuni{i}{\start}}  
  \end{aligned}
  $
  \\[0.75ex]
  $
  \begin{aligned}
    \aeqfuni{i}{\start}
      & \;\parbox[t]{\widthof{$\milnersysmineq$}}{$\eqinsol{\milnersysmin}$}\,
    1 +  
    \stexpsum{\stexpprod{\aact}{\aeqfuni{i}{\averti{11}}}}
             {\stexpprod{\bact}{\aeqfuni{i}{\averti{21}}}} 
      \,\milnersysmineq\,
    1 +  
    \stexpsum{\stexpprod{\aact}{\aeqfuni{i}{\averti{1}}}}
             {\stexpprod{\bact}{\aeqfuni{i}{\averti{2}}}}   
    \\[-0.25ex]
      & \;\parbox[t]{\widthof{$\milnersysmineq$}}{$\milnersyseq$}\,
    \stexpprod{(\stexpsum{\stexpprod{\aact}{(\stexpprod{\stexpit{\aact}}{\stexpit{\bact}})}}
                         {\stexpprod{\bact}{\stexpit{\bact}}})}
              {\aeqfuni{i}{\start}}
      + 1          
    \\[-0.5ex]
    & \;\parbox[t]{\widthof{$\milnersysmineq$}}{$\Downarrow${\text{~~applying $\RSPstar$}}}
    \\[-0.5ex]
    \aeqfuni{i}{\start}
      & \;\parbox[t]{\widthof{$\milnersysmineq$}}{$\milnersyseq$}\,
    \stexpprod{\stexpit{(\stexpsum{\stexpprod{\aact}{(\stexpprod{\stexpit{\aact}}{\stexpit{\bact}})}}
                                  {\stexpprod{\bact}{\stexpit{\bact}}})}}
              {\stexpone}
      \,\milnersysmineq\,
    \stexpit{(\stexpsum{\stexpprod{\aact}{(\stexpprod{\stexpit{\aact}}{\stexpit{\bact}})}}
                       {\stexpprod{\bact}{\stexpit{\bact}}})}        
      \,\milnersysmineq\,
    \extrsolof{\aonecharthat}{\start}
  \end{aligned}
  $     
\end{center}
    \vspace*{-3ex}
\caption{\protect\label{fig:ex:extraction:proof}%
         Use of the \protect\LLEEwitness~$\protect\aonecharthat$ underlying the coinductive proof $\protect\pair{\protect\aonechart}{\protect\saeqfun}$ in Ex.~\protect\ref{ex:2:LLEEcoindproof}
         for showing that 
         the principal value $\protect\aeqfuni{i}{\protect\start}$ of the \protect\provablein{\protect\milnersysmin} solution $\protect\saeqfuni{i}$ for $i\protect\in\protect\setexp{1,2}$ 
         is \protect\provablyin{\protect\milnersys} is \protect\provablyin{\protect\milnersys} equal to the principal value $\protect\extrsolof{\protect\aonecharthat}{\protect\start}$ of the solution $\protect\sextrsolof{\protect\aonecharthat}$ 
         extracted from $\protect\aonecharthat$ (see Fig.~\protect\ref{fig:ex:extraction}). 
         }
\end{figure}      %
  We consider again the \LLEEwitnessed\ coinductive proof $\aCoProof = \pair{\aonechart}{\saeqfun}$ of 
  $\stexpit{(\stexpprod{\stexpit{\aact}}
                       {\stexpit{\bact}})}
     \formeq
   \stexpit{(\stexpsum{\aact}{\bact})}$
  in Ex.~\ref{ex:2:LLEEcoindproof}. 
  In Fig.~\ref{fig:ex:extraction:proof} we exhibit the extraction process of derivations in $\milnersys$
  of $\aeqfuni{1}{\start} \formeq \extrsolof{\aonechart}{\avert}$ 
  and $\aeqfuni{2}{\start} \formeq \extrsolof{\aonechart}{\avert}$
  from the \LLEEwitness~$\aonecharthat$ of $\aonechart$,
  which can be combined by $\eqlogic$ rules 
  to obtain a derivation in $\milnersys$ of
  $\stexpit{(\stexpprod{\stexpit{\aact}}{\stexpit{\bact}})} 
     \synteq       \aeqfuni{1}{\start} 
     \formeq \aeqfuni{2}{\start}
     \synteq       \stexpit{(\stexpsum{\aact}{\bact})}$.
\end{exa}


\begin{thm}\label{thm:coindmilnersys:isthmsubsumedby:milnersys}
  $\coindmilnersys \isthmsubsumedby \milnersys$. 
  Moreover, every derivation in $\coindmilnersys$ with conclusion $\astexp \formeq \bstexp$ 
    can be transformed effectively
  into a derivation in $\milnersys$ that has the same conclusion.
\end{thm}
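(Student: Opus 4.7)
The plan is to prove the theorem by induction on the structure of derivations in $\coindmilnersys$, relying crucially on Proposition~\ref{prop:LLEEcoindproofeq:impl:milnersyseq} to discharge the coinductive premises.

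First I would fix an arbitrary derivation $\aDeriv$ in $\coindmilnersys$ concluding $\astexp \formeq \bstexp$, and induct on the number of $\LCoindProofi{n}$ applications occurring in $\aDeriv$ (or equivalently on the height of $\aDeriv$). The base cases and most inductive steps are immediate: all axioms of $\milnersysmin$ are axioms of $\milnersys$, and all rules of $\eqlogic$ used in $\coindmilnersys$ are rules of $\milnersys$, so subderivations not involving any $\LCoindProofi{n}$ are already derivations in $\milnersys$.

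The only non-trivial case is a bottommost instance of $\LCoindProofi{n}$ whose equation premises $\cstexpi{1} \formeq \dstexpi{1}, \ldots, \cstexpi{n} \formeq \dstexpi{n}$ are conclusions of sub-derivations $\aDerivi{1}, \ldots, \aDerivi{n}$ in $\coindmilnersys$, and whose $(n{+}1)$-th premise is a \LLEEwitnessed\ coinductive proof $\aLLEECoProofoverof{\thplus{\milnersysmin}{\aseteqs}}{\astexp \formeq \bstexp}$ over $\thplus{\milnersysmin}{\aseteqs}$ with $\aseteqs = \setexp{\cstexpi{1} \formeq \dstexpi{1}, \ldots, \cstexpi{n} \formeq \dstexpi{n}}$. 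By the induction hypothesis, each $\aDerivi{i}$ transforms into a derivation $\aDerivacci{i}$ in $\milnersys$ with the same conclusion. It follows that the proof system $\asys \defdby \thplus{\milnersysmin}{\aseteqs}$ satisfies $\ACI \subsystem \asys \isthmsubsumedby \milnersys$: the first inclusion holds because $\ACI \subsystem \milnersysmin \subsystem \asys$, and the theorem-subsumption $\asys \isthmsubsumedby \milnersys$ holds since every axiom/rule of $\milnersysmin$ is in $\milnersys$ and each assumption $\cstexpi{i} \formeq \dstexpi{i}$ is now \provablein{\milnersys} via $\aDerivacci{i}$. Therefore Prop.~\ref{prop:LLEEcoindproofeq:impl:milnersyseq} applies to $\asys$ and yields $\astexp \milnersyseq \bstexp$, i.e.\ a derivation $\aDerivacc$ in $\milnersys$ with conclusion $\astexp \formeq \bstexp$. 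Replacing the bottommost $\LCoindProofi{n}$-instance by $\aDerivacc$ completes the induction step.

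I expect the main obstacle, conceptually, to be bookkeeping around the hybrid notion of formula in $\coindmilnersys$ (equations versus coinductive proofs): one must be careful that the premise structure of $\LCoindProofi{n}$ is respected and that applying Prop.~\ref{prop:LLEEcoindproofeq:impl:milnersyseq} does not require more than $\asys \isthmsubsumedby \milnersys$. In fact the entire workload of the transformation has already been absorbed into Lem.~\ref{lem:sols:provably:equal:LLEE} (which powers Prop.~\ref{prop:LLEEcoindproofeq:impl:milnersyseq} via the extraction function from Def.~\ref{def:extrsoluntil:extrsol}), so the present argument is essentially a clean induction that threads the assumption equations through the coinductive-proof premise. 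Finally, the effectiveness claim follows because every step of the transformation is constructive: the inductive replacement of $\LCoindProofi{n}$-instances by derivations in $\milnersys$ obtained from the extraction procedure of Def.~\ref{def:extrsoluntil:extrsol} together with Lemmas~\ref{lem:extrsol:is:sol} and~\ref{lem:sols:provably:equal:LLEE} can be carried out uniformly on the finite meta-prooftree~of~$\aDeriv$.
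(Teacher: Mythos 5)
Your proposal is correct and follows essentially the same route as the paper: a straightforward induction on derivations in $\coindmilnersys$ whose only non-trivial case eliminates $\LCoindProofi{n}$ instances by first obtaining the $n$ equation premises in $\milnersys$ via the induction hypothesis and then discharging the coinductive premise through Prop.~\ref{prop:LLEEcoindproofeq:impl:milnersyseq}. The only (immaterial) difference is that you instantiate that proposition with $\asys = \thplus{\milnersysmin}{\aseteqs}$ and verify $\ACI \subsystem \asys \isthmsubsumedby \milnersys$, whereas the paper simply observes that the coinductive proof over $\thplus{\milnersysmin}{\aseteqs}$ is also one over $\milnersys$ and applies the proposition with $\asys = \milnersys$.
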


\begin{proof}
  It suffices to show the transformation statement.
  This can be established by a straightforward induction on the depth of derivations in $\coindmilnersys$,
    in which the only \nontrivial\ case is the elimination of $\LCoindProofi{n}$ instances.
  For every instance of $\LCoindProofi{n}$, see Def.~\ref{def:coindproof},
  the induction hypothesis guarantees that the first $n$ premises $\cstexpi{1}\formeq\dstexpi{1}$, \ldots, $\cstexpi{n}\formeq\dstexpi{n}$ 
  are derivable in $\milnersys$.
  Then the $(n+1)$-th premise $\aLLEECoProofoverof{\thplus{\milnersysmin}{\aseteqs}}{\astexp\formeq\bstexp}$ 
  for $\aseteqs \defdby \descsetexp{ \cstexpi{i}\formeq\dstexpi{i} }{ i\in\setexp{1,\ldots,n} }$ 
  is also a \LLEEwitnessed\ coinductive proof of $\astexp\formeq\bstexp$ over $\milnersys$. 
  Therefore we can apply Prop.~\ref{prop:LLEEcoindproofeq:impl:milnersyseq} in order to obtain a derivation of $\astexp\formeq\bstexp$ in $\milnersys$,
  the conclusion of the $\LCoindProofi{n}$ instance.
\end{proof}


\section{From Milner's system to LLEE-witnessed coinductive proofs}
  \label{milnersys:2:coindmilnersys}

In this section we develop a proof-theoretic interpretation of $\milnersys$
  in $\coindmilnersysone$, and hence in $\coindmilnersys$. 
The crucial step hereby is to show that every instance $\ainst$ of the fixed-point rule \RSPstar\ of \milnersys\
  can be mimicked by a \LLEEwitnessed\ coinductive proof over \milnersysmin\ 
    in which also the premise of $\ainst$ may be used. 
Specifically, an \RSPstar-instance with premise $e = f \prod e + g$ and conclusion $e = f^* \prod g$   
  can be translated into a coinductive proof of $e = f^* \prod g$ over $\thplus{\milnersysmin}{\setexp{e = f \prod e + g}}$
  with underlying \onechart~$\onechartof{f^* \prod g}$ and \LLEEwitness~$\onecharthatof{f^* \prod g}$.
While we have illustrated this transformation already in Fig.~\ref{fig:ex:1:RSPstar:to:coindproof} in the Introduction,
  we detail it also for a larger example.

\begin{exa}\label{ex:1:RSPstar:2:LLEEcoindproof}
  We consider an instance of \RSPstar\ that corresponds, up to an application of $\rdistr$,
  to the instance of \RSPstar\ at the bottom in Fig.~\ref{fig:ex:extraction:proof}:\vspace*{-1.5ex} 
  \begin{equation}\renewcommand{\fCenter}{\formeq}\label{eq:ex:1:RSPstar:2:LLEEcoindproof}
    \mbox{%
    \Axiom$ \overbrace{(a + b)^*}^{\chocolate{e}}   \fCenter   \overbrace{((a \cdot a^* + b) \cdot b^*)}^{\alert{f}} \cdot \overbrace{(a + b)^*}^{\chocolate{e}} + \overbrace{1}^{ \forestgreen{g} } $
    \RightLabel{\RSPstar \hspace*{1.5ex} \text{\small (where $\notterminates{\alert{f}}$)}}
    \UnaryInf$ \underbrace{(a + b)^*}_{\smash{\chocolate{e}}}   \fCenter   \underbrace{((a \cdot a^* + b) \cdot b^*)^*}_{\smash{\alert{f^*}} } \cdot \underbrace{1}_{\smash{\forestgreen{g}}} $
    \DisplayProof}\vspace*{-1ex}
  \end{equation}%
  We want to mimic this instance by one of $\LCoindProofi{1}$ that uses a \LLEEwitnessed\ coinductive proof of 
  $\chocolate{e} \formeq \alert{f^*} \cdot \forestgreen{g}$ over $\milnersysmin$ plus the premise of the \RSPstar\ instance. 
  We first obtain the \onechart\ interpretation $\onechartof{\alert{\stexpit{\bstexp}}}$ of $\alert{\stexpit{\bstexp}}$
  according to Def.~\ref{def:onechartof}
  together with its \LLEEwitness~$\onecharthatof{\alert{\stexpit{\bstexp}}}\,$:\vspace*{-0.5ex}
  \begin{center}
    \begin{tikzpicture}
  \renewcommand{\prod}{\,{\cdot}\,}
  \renewcommand{\stackprod}{\,{\chocolate{\varstar}}\,}

\matrix[anchor=center,row sep=1cm,column sep=1.75cm,
        every node/.style={draw=none}
        ] {
    & \node(v11){}; & & & & \node(v21){};
    \\
    \node(v1acc){}; & & \node(v1){}; & & \node(v2acc){}; & & \node(v2){};
    \\
    & & & \node(v){}; 
    \\ 
  };

\path (v11) ++ (0cm,0cm) node{$ ((1 \stackprod a^*) \prod b^*) \stackprod \alert{f^*} $}; 
\draw[->,thick,densely dotted,out=0,in=60] ($(v11) + (1.4cm,0cm)$) to ($(v1) + (0.8cm,0.3cm)$);

\path (v1acc) ++ (0cm,0cm) node{$ ((1 \prod a^*) \prod b^*) \stackprod \alert{f^*} $};
\draw[->,shorten <=0.35cm,shorten >=0.35cm] 
  (v1acc) to node[below,pos=0.525]{\small \black{$\aact$}} (v11);
\draw[->,shorten <=1.25cm,shorten >=1.2cm] (v1acc) to node[below,pos=0.2]{\small $\bact$} (v21);
\draw[-,thick,densely dotted,out=270,in=180,shorten <= 0.15cm,distance=1.15cm] (v1acc) to ($(v) + (-1.4cm,0cm)$);
 
\path (v1) ++ (0cm,0cm) node{$ (a^* \prod b^*) \stackprod \alert{f^*} $};
\draw[->,thick,densely dotted,out=270,in=180,shorten <= 0.15cm] (v1) to ($(v) + (-1.4cm,0cm)$);
\draw[->,thick,darkcyan,shorten <=0.35cm,shorten >=0.4cm] 
  (v1) to node[below,pos=0.65]{\small \black{$\aact$}} node[above,pos=0.4]{\small $\loopsteplab{1}$} (v11);
\draw[->,shorten <=1.25cm,shorten >=1cm] (v1) to node[below,pos=0.45]{\small $\bact$} (v21);

\path (v21) ++ (0cm,0cm) node{$ (1 \stackprod b^*) \stackprod \alert{f^*} $}; 
\draw[->,thick,densely dotted,out=0,in=90,shorten >=0.1cm] ($(v21) + (1.1cm,0cm)$) to (v2);

\path (v2acc) ++ (0cm,0cm) node{$ (1 \prod b^*) \stackprod \alert{f^*} $};
\draw[->,shorten <=0.35cm,shorten >=0.35cm] 
  (v2acc) to node[below,pos=0.525]{\small \black{$\bact$}} (v21); 
\draw[-,thick,densely dotted,out=270,in=0,shorten <= 0.25cm] (v2acc) to ($(v) + (1.4cm,0cm)$);  
 
\path (v2) ++ (0cm,0cm) node{$ b^* \stackprod \alert{f^*} $};
\draw[->,thick,densely dotted,out=225,in=0,shorten <= 0.25cm] (v2) to ($(v) + (1.4cm,0cm)$); 
\draw[->,thick,darkcyan,shorten <=0.3cm,shorten >=0.3cm] 
  (v2) to node[below,pos=0.525]{\small \black{$\bact$}} node[above,pos=0.4]{\small $\loopsteplab{1}$} (v21);

\draw[thick,chocolate,double] (v) ellipse (1.35cm and 0.31cm);
\path (v) ++ (0cm,-0.3cm) node{$ \underbrace{{((a \prod a^* + b) \prod b^*)^*}}_{\alert{f^*}} $}; 
\draw[->,thick,darkcyan,shorten <=1cm,shorten >=1.2cm] 
  (v) to node[below,pos=0.575]{\small \black{$\aact$}} node[above,pos=0.55]{\small $\loopsteplab{2}$} (v1acc);
\draw[->,thick,darkcyan,shorten <=0.35cm,shorten >=0.35cm] 
  (v) to node[below,pos=0.55]{\small \black{$\bact$}} node[above,pos=0.425]{\small $\loopsteplab{2}$} (v2acc);

\end{tikzpicture}
  \end{center}\vspace{-2.25ex}
  Due to Lem.~\ref{lem:onechart-int:milnersysmin:solvable} 
    the iterated partial \onederivatives\ as depicted define a \provablein{\milnersysmin} solution of $\onechartof{\alert{\stexpit{\bstexp}}}$
    when stacked products $\sstexpstackprod$ are replaced by products $\sstexpprod\,$. 
  From this \LLEEwitness\ that carries a \provablein{\milnersys} solution we now obtain a \LLEEwitnessed\ coinductive proof 
    of $ \alert{f} \prod \chocolate{e} + \forestgreen{g} \formeq \alert{f^*} \prod \forestgreen{g}$
    under the assumption of $\chocolate{e} \formeq \alert{f} \prod \chocolate{e} + \forestgreen{g}$, as follows.
      By replacing parts $({\ldots}) \stackprod \alert{f^*}$ by $\proj{({\ldots})} \prod \chocolate{e}$   
    in the \provablein{\milnersys} solution of $\onechartof{\alert{f^*}}$,
  and respectively, by replacing $({\ldots}) \stackprod \alert{f^*}$ by $(\proj{({\ldots})} \prod \alert{f^*}) \prod \forestgreen{g}$  
  we obtain the left- and the right-hand sides of the formal equations below:\vspace*{-2.5ex}  
  \begin{center}\enlargethispage{5ex}
    \begin{tikzpicture}
 
\matrix[anchor=center,row sep=1.75cm,column sep=1.35cm,
        every node/.style={draw=none}
        ] {
    & \node(v11){}; & & & & \node(v21){};
    \\ 
    \node(v1acc){}; & & \node(v1){}; & & \node(v2acc){}; & & \node(v2){};
    \\
    & & & \node(v){}; 
    \\
  };

\path (v11) ++ (0cm,0cm) node{$ ((1 \prod a^*) \prod b^*) \prod \chocolate{e}
                                  \formeq
                                (((1 \prod a^*) \prod b^*) \prod \alert{f^*}) \prod \forestgreen{g} $}; 
\draw[->,thick,densely dotted,out=-45,in=90] ($(v11) + (1cm,-0.35cm)$) to ($(v1) + (0.5cm,0.55cm)$);

\path (v1acc) ++ (0cm,-0.3cm) node{$ ((1 \prod a^*) \prod b^*) \prod \chocolate{e} 
                                    \formeq
                                   (((1 \prod a^*) \prod b^*) \prod \alert{f^*}) \prod \forestgreen{g} $};
\draw[->,shorten <=0cm,shorten >=0.3cm] 
  (v1acc) to node[above,pos=0.35]{\small \black{$\aact$}} (v11);
\draw[->,shorten <=0cm,shorten >=1cm,out=35,in=195] (v1acc) to node[above,pos=0.15]{\small $\bact$} (v21);
\draw[->,thick,densely dotted,out=270,in=120,shorten <= 0.25cm,shorten >= 0.325cm] ($(v1acc) + (-0.475cm,-0.3cm)$) to ($(v) + (-2.25cm,0cm)$);

\path (v1) ++ (0cm,0.3cm) node{$ (a^* \prod b^*) \prod \chocolate{e} 
                                 \formeq
                               ((a^* \prod b^*) \prod \alert{f^*}) \prod \forestgreen{g} $};
\draw[->,thick,densely dotted,out=270,in=90,shorten >= 0.2cm] ($(v1) + (0cm,0cm)$) to (v); 
\draw[->,thick,darkcyan,shorten <=0.5cm,shorten >=0.4cm] 
  (v1) to node[below,pos=0.45,xshift=-0.05cm]{\small \black{$\aact$}} node[above,pos=0.25,xshift=0.05cm]{\small $\loopsteplab{1}$} (v11);
\draw[->,shorten <=1.25cm,shorten >=1cm] (v1) to node[below,pos=0.45]{\small $\bact$} (v21);

\path (v21) ++ (0cm,0cm) node{$ (1 \prod b^*) \prod \chocolate{e}
                                  \formeq
                                ((1 \prod b^*) \prod \alert{f^*}) \prod \forestgreen{g} $}; 
\draw[->,thick,densely dotted,out=315,in=90,shorten <=0.5cm,shorten >=0.5cm] ($(v21) + (0.75cm,0cm)$) to (v2); 
 
\path (v2acc) ++ (0cm,-0.3cm) node{$ (1 \prod b^*) \prod \chocolate{e} 
                                    \formeq
                                  ((1 \prod b^*) \prod \alert{f^*}) \prod \forestgreen{g} $};
\draw[->,shorten <=0cm,shorten >=0.35cm] 
  (v2acc) to node[below,pos=0.525]{\small \black{$\bact$}} (v21);  
\draw[->,thick,densely dotted,out=290,in=70,shorten <= 0.2cm,shorten >= 0.2cm,distance=0.5cm] ($(v2acc) + (0cm,-0.3cm)$) to ($(v) + (1cm,0.15cm)$); 
 
\path (v2) ++ (0cm,0.3cm) node{$ b^* \prod \chocolate{e}
                                    \formeq
                                  (b^* \prod \alert{f^*}) \prod \forestgreen{g} $};
\draw[->,thick,densely dotted,out=265,in=0,shorten <= 0cm] (v2) to ($(v) + (4.4cm,0.15cm)$); 
\draw[->,thick,darkcyan,shorten <=0.5cm,shorten >=0.3cm] 
  (v2) to node[below,pos=0.625]{\small \black{$\bact$}} node[above,pos=0.425,xshift=0.05cm]{\small $\loopsteplab{1}$} (v21);

(v)
\path (v) ++ (0cm,-0.8cm) node{$ \underbrace{
                                   \underbrace{{((a \prod a^* + b) \prod b^*)}\rule[-6.5pt]{0pt}{7pt}}_{\alert{f}} \prod \underbrace{ (a + b)^* \rule[-6.5pt]{0pt}{7pt}}_{\chocolate{e}} + \underbrace{1\rule[-6.5pt]{0pt}{7pt}}_{\forestgreen{g}}
                                             }_{\sformeq\: \chocolate{e} \;\; \text{\small (by rule assumption)}} 
                                   \hspace*{-1.5ex} \formeq
                                 \underbrace{{(((a \prod a^* + b) \prod b^*)^* \rule[-6.5pt]{0pt}{7pt}}}_{\alert{f^*}} \prod \hspace*{-2ex} \underbrace{ 1 \rule[-6.5pt]{0pt}{7pt} }_{\forestgreen{g}} $}; 

\draw[->,thick,darkcyan,shorten <=0.75cm,shorten >=1.5cm] 
  (v) to node[below,pos=0.4]{\small \black{$\aact$}} node[above,pos=0.375]{\small $\loopsteplab{2}$} (v1acc);
\draw[->,thick,darkcyan,shorten <=0.25cm,shorten >=0.6cm] 
  (v) to node[below,pos=0.425]{\small \black{$\bact$}} node[above,pos=0.275]{\small $\loopsteplab{2}$} (v2acc);

\draw[thick,chocolate,double] ($(v) + (-0.15cm,-0.1cm)$) ellipse (4.75cm and 0.4cm);

%
%
%
%
%
%
%

\end{tikzpicture}
  \end{center}\vspace{-1.5ex}
  This is a \LLEEwitnessed\ coinductive proof $\aLLEECoProof$ of $ \alert{f} \cdot \chocolate{e} + \forestgreen{g} = \alert{f^*} \prod \forestgreen{g}$
    over $\thplus{\milnersysmin}{\setexp{\chocolate{\astexp}   \formeq   \stexpsum{\stexpprod{\alert{\bstexp}}{\chocolate{\astexp}}}{\forestgreen{\cstexp}}}}\,$:
  The right-hand sides form a \provablein{\milnersys} solution of $\chartof{\alert{f^*}\cdot\forestgreen{g}}$
  due to Lemma~\ref{lem:onechart-int:milnersysmin:solvable}
  (note that $\chartof{\alert{f^*}\cdot\forestgreen{g}}$ is isomorphic to $\chartof{\alert{f^*}}$ due to $\forestgreen{g} \synteq 1$).
  The left-hand sides also form a solution of $\chartof{\alert{f^*}\cdot\forestgreen{g}}$
  (see Lem.~\ref{lem:lem:mimic:RSPstar} below),
  noting that for the \onetransitions\ back to the conclusion the assumption $\chocolate{e} \formeq \alert{f} \prod \chocolate{e} + \forestgreen{g}$
  must be used in addition to $\milnersysmin$.
  By using this assumption again, the result $\aLLEECoProofacc$ of replacing
    $\alert{f} \prod \chocolate{e} + \forestgreen{g}$ in the conclusion of $\aLLEECoProof$
  by $\chocolate{e}$ is also a \LLEEwitnessed\ coinductive proof 
    over $\thplus{\milnersysmin}{\setexp{\chocolate{\astexp}   \formeq   \stexpsum{\stexpprod{\alert{\bstexp}}{\chocolate{\astexp}}}{\forestgreen{\cstexp}}}}$.
  Consequently:
  \begin{center}
    $
    \begin{aligned}
      \AxiomC{$ \chocolate{\astexp}   \formeq   \stexpsum{\stexpprod{\alert{\bstexp}}{\chocolate{\astexp}}}{\forestgreen{\cstexp}} $}
      \AxiomC{$ \aLLEECoProofoverof{\thplus{\milnersysmin}{\setexp{\chocolate{\astexp}   \formeq   \stexpsum{\stexpprod{\alert{\bstexp}}{\chocolate{\astexp}}}{\forestgreen{\cstexp}}}}}
                                   {\chocolate{\astexp}   \formeq   \stexpprod{\stexpit{\alert{\bstexp}}}{\forestgreen{\cstexp}}} $}
      \insertBetweenHyps{\hspace*{1ex}}                             
      \RightLabel{\LCoindProofi{1}}
      \BinaryInfC{$ \chocolate{\astexp}   \formeq   \stexpprod{\alert{\stexpit{\bstexp}}}{\forestgreen{\cstexp}} $}           
      \DisplayProof
    \end{aligned}
    $
  \end{center}
  is a rule instance of $\coindmilnersys$ and $\CLC$ by which we have mimicked the $\RSPstar$ instance in \eqref{eq:ex:1:RSPstar:2:LLEEcoindproof}.  
  %
\end{exa}


\begin{lem}\label{lem:lem:mimic:RSPstar}
  Let $\astexp,\bstexp,\cstexp\in\StExpover{\actions}$ with $\alert{\notterminates{\bstexp}}$,
  and let $\aseteqs \defdby \setexp{ \astexp = \stexpsum{\stexpprod{\bstexp}{\astexp}}{\cstexp} }$. 
  Then 
       $\astexp$ is the principal value of a
  \provablein{(\thplus{\milnersysmin}{\aseteqs})} solution of
    the \onechart\ interpretation $\onechartof{\stexpprod{\stexpit{\bstexp}}{\cstexp}}$~of~$\stexpit{\bstexp}{\prod}{\cstexp}$.
\end{lem}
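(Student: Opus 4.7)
The plan is to exhibit an explicit star-expression function $\sasol$ on the vertices of $\onechartof{\stexpprod{\stexpit{\bstexp}}{\cstexp}}$ taking value $\astexp$ at the start, and to verify that $\sasol$ is an \provablein{(\thplus{\milnersysmin}{\aseteqs})} solution, following the pattern sketched in Example~\ref{ex:1:RSPstar:2:LLEEcoindproof}.

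First I would pin down the reachable vertex and transition structure of $\onechartof{\stexpprod{\stexpit{\bstexp}}{\cstexp}}$ under the TSS of Def.~\ref{def:onechartof}. Three kinds of vertices arise: (i)~the start $\start \synteq \stexpprod{\stexpit{\bstexp}}{\cstexp}$; (ii)~vertices of the form $\stexpprod{(\stexpstackprod{Y}{\stexpit{\bstexp}})}{\cstexp}$ obtained by entering and continuing inside the iteration; and (iii)~derivatives $\cstexpacc$ of $\cstexp$, reached via body transitions out of $\start$ (available because $\terminates{\stexpit{\bstexp}}$). Transitions out of $\start$ are the loop-entries $\start \lt{\aact_i} \stexpprod{(\stexpstackprod{\bstexpacci{i}}{\stexpit{\bstexp}})}{\cstexp}$, one per derivative $\bstexp \lt{\aact_i} \bstexpacci{i}$, together with body-exits $\start \lt{\aact_j} \cstexpacci{j}$, one per transition $\cstexp \lt{\aact_j} \cstexpacci{j}$ (all proper-action, since $\bstexp$ and $\cstexp$ are pure star expressions at top level). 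From a type-(ii) vertex one has, for each transition $Y \lt{\aoneact_k} Y'_k$ of $Y$ in isolation, a transition $\stexpprod{(\stexpstackprod{Y}{\stexpit{\bstexp}})}{\cstexp} \lt{\aoneact_k} \stexpprod{(\stexpstackprod{Y'_k}{\stexpit{\bstexp}})}{\cstexp}$, plus one additional $\sone$-transition back to $\start$ precisely when $\terminates{Y}$ (via the stacked-product $\sone$-rule and product lifting). Crucially, stacked products never permit immediate termination, so no body-exit to a $\cstexp$-derivative arises from a type-(ii) vertex; $\start$ terminates iff $\terminates{\cstexp}$; transitions from any $\cstexp$-derivative coincide with those in $\onechartof{\cstexp}$.

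I would then set $\sasol(\start) \defdby \astexp$, $\sasol(\stexpprod{(\stexpstackprod{Y}{\stexpit{\bstexp}})}{\cstexp}) \defdby \stexpprod{\proj{Y}}{\astexp}$, and $\sasol(\cstexpacc) \defdby \proj{\cstexpacc}$ for each $\cstexp$-derivative $\cstexpacc$, and verify the solution condition vertex by vertex. At $\start$, the required right-hand side $[\terminates{\cstexp}] + \sum_i \aact_i \cdot (\stexpprod{\proj{\bstexpacci{i}}}{\astexp}) + \sum_j \aact_j \cdot \proj{\cstexpacci{j}}$ refactors by $\rdistr$ and \ACI\ into $\bigl(\sum_i \aact_i \cdot \proj{\bstexpacci{i}}\bigr) \cdot \astexp + \bigl([\terminates{\cstexp}] + \sum_j \aact_j \cdot \proj{\cstexpacci{j}}\bigr)$; Lem.~\ref{lem:FT:onechart-int} applied to $\bstexp$ yields the left factor as $\milnersysmin$-provably equal to $\bstexp$ (the termination constant is $\stexpzero$ precisely because of $\notterminates{\bstexp}$), and Lem.~\ref{lem:FT:onechart-int} applied to $\cstexp$ identifies the second bracketed summand with $\cstexp$; hence the RHS is $\milnersysmineq \stexpprod{\bstexp}{\astexp} + \cstexp$, which equals $\astexp$ by the equation in $\aseteqs$. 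At a type-(ii) vertex the RHS $\sum_k \aoneact_k \cdot (\stexpprod{\proj{Y'_k}}{\astexp}) + [\terminates{Y}] \cdot \stexpone \cdot \astexp$ right-distributes to $\bigl([\terminates{Y}] + \sum_k \aoneact_k \cdot \proj{Y'_k}\bigr) \cdot \astexp$ and equals $\stexpprod{\proj{Y}}{\astexp}$ by Lem.~\ref{lem:FT:onechart-int} applied to $Y$. At a $\cstexp$-derivative $\cstexpacc$ the condition is a direct instance of Lem.~\ref{lem:FT:onechart-int}.

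The main bookkeeping obstacle lies in pinning down the three vertex kinds and their transitions under the TSS of Def.~\ref{def:onechartof}, and in confirming that $\sone$-exits from type-(ii) vertices land on $\start$ rather than on some detached intermediate vertex. Once this structural characterisation is secured, each solution condition reduces uniformly to one application of $\rdistr$ followed by one application of Lem.~\ref{lem:FT:onechart-int}; the assumption $\aseteqs$ is used only at $\start$, and the hypothesis $\notterminates{\bstexp}$ is exactly what kills the termination constant of $\bstexp$ so that $\sum_i \aact_i \cdot \proj{\bstexpacci{i}}$ matches $\bstexp$ on the nose in $\milnersysmin$, enabling the final invocation of $\aseteqs$.
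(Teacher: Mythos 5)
Your proposal is correct and follows essentially the same route as the paper's proof: the same three-way classification of the vertices of $\onechartof{\stexpprod{\stexpit{\bstexp}}{\cstexp}}$, the same candidate solution ($\astexp$ at the start, $\stexpprod{\proj{Y}}{\astexp}$ at stacked-product vertices, $\proj{\cstexpacc}$ at $\cstexp$-derivatives), and the same verification via $\rdistr$ plus Lem.~\ref{lem:FT:onechart-int}, with $\aseteqs$ invoked only at the start and $\notterminates{\bstexp}$ used exactly to make the termination constant of $\bstexp$ vanish. The only cosmetic differences are that you read the chain of provable equalities at the start vertex in the reverse direction and merge the paper's two subcases at stacked-product vertices (according to whether the left component permits immediate termination) into a single computation with a termination constant.
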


\begin{proof}
  First, it can be verified that the vertices of $\onechartof{f^* \cdot g}$ are of either of three forms:
  %
  \begin{center}  
    $
    \vertsof{\onechartof{f^* \cdot g}}
      =
    \setexp{ f^* \cdot g }
      \cup
    \descsetexpbig{ (F \stackprod f^*) \cdot g }{ F \in \itpartonederivs{f} }
      \cup
    \descsetexp{ G }{ G \in \itpartonederivs{g} } \punc{,}  
    $
  \end{center}  
  %
  where $\itpartonederivs{f}$ means the set of iterated \onederivatives\ of $f$ according to the TSS in Def.~\ref{def:onechartof}.
  This facilitates to define a star expression function $\sasol \funin \vertsof{\onechartof{f^* \cdot g}} \to \StExpsover{\actions}$
    on~$\onechartof{f^* \cdot g}$~by: 
  \begin{center}
    $
    \begin{aligned}
      \asol{f^* \prod g}
        & {} \defdby
               e \punc{,}
      & \qquad
      \asol{(F \stackprod f^*) \prod g}
        & \defdby
               \proj{F} \prod e \punc{,}
      & \qquad
      \asol{G}
        & \defdby
               \proj{G} \punc{,}
    \end{aligned}
    $          
  \end{center}
  for all $F \in \itpartonederivs{f}$ and $G \in \itpartonederivs{g}$.
  We show that $\sasol$ is a \provablein{(\thplus{\milnersysmin}{\aseteqs})} solution of $\onechartof{f^* \cdot g}$.
  
  For this, we have to show that $\sasol$ 
  is \provablein{(\thplus{\milnersysmin}{\aseteqs})}
  at each of the three kinds of
    vertices of $\onechartof{f^* \cdot g}$, 
    namely at $f^* \prod g$,
           at $(F \stackprod f^*) \prod g$ with  $F\in\itpartonederivs{f}$,
           and at $G$ with $G\in\itpartonederivs{g}$. 
  %
  Here we only consider $f^* \prod g$,
    because it is the single case in which the assumption in $\aseteqs$ has to be used,
    and the two other forms can be argued similarly (please see in~the~appendix). 
  By $\oneactderivs{H} \defdby \descsetexpnormalsize{ \pair{\aoneact}{H'} }{ H \lt{\aoneact} H' }$
    we denote the set of `action \onederivatives' of a stacked star expression $H$.     
  In the following argument we avoid list representations of transitions
    (as in Def.~\ref{def:provable:solution}) in favor of arguing with sums over sets of action derivatives
    that represent \ACI-equivalence classes of star expressions. 
    This shorthand is possible due to $\ACI \subsystem \milnersysmin$.
  \begin{alignat*}{2}
    \asol{f^* \prod g}
      & \;\,\parbox[t]{\widthof{$\eqin{\thplus{\milnersysmin}{\aseteqs}}$}}{$\synteq$}\:
        e \phantom{ {} \cdot e + g} 
        \qquad\quad \text{(by the definition of $\sasol$)}
      \\
      & \;\,\parbox[t]{\widthof{$\eqin{\thplus{\milnersysmin}{\aseteqs}}$}}{$\eqin{\thplus{\milnersysmin}{\aseteqs}}$}\:
        f \prod e + g
        \qquad\quad \text{(since $\aseteqs = \setexp{ e = f \prod e + g }$)}  
      \displaybreak[0]\\ 
      & \;\,\parbox[t]{\widthof{$\eqin{\thplus{\milnersysmin}{\aseteqs}}$}}{$\milnersysmineq$}\:
        \Big(
          \terminatesconstof{\onechartof{f}}{f}
            + \hspace*{-3.75ex}
          \sum_{\pair{\aoneact}{F}\in\oneactderivs{f}} \hspace*{-3.5ex}
                 \aoneact \prod \proj{F}
             \Big) \prod e
        +
        \Big(
          \terminatesconstof{\onechartof{g}}{g}
            + \hspace*{-3.75ex}
          \sum_{\pair{\aoneact}{G}\in\oneactderivs{g}} \hspace*{-3.5ex}
                 \aoneact \prod \proj{G}
             \Big)   
          \quad \parbox{\widthof{Lem.~\ref{lem:FT:onechart-int})}}
                       {(by using
                        \\[-0.5ex]\phantom{(}%
                          Lem.~\ref{lem:FT:onechart-int})} 
      \displaybreak[0]\\[0.5ex] 
      & \;\,\parbox[t]{\widthof{$\eqin{\thplus{\milnersysmin}{\aseteqs}}$}}{$\milnersysmineq$}\:
        \Big(
            \hspace*{-3.75ex}
          \sum_{\pair{\aoneact}{F}\in\oneactderivs{f}} \hspace*{-3.5ex}
                 \aoneact \prod ( \proj{F} \prod e )
             \Big) 
        + \Bigl(
            \terminatesconstof{\onechartof{f^* \prod g}}{f^* \prod g}
              + \hspace*{-3.75ex}
            \sum_{\pair{\aoneact}{G}\in\oneactderivs{g}} \hspace*{-3.5ex}
                   \aoneact \prod \proj{G} 
          \Bigr)  
          \;\; \parbox{\widthof{(by ($\assocstexpprod$)}}
                       {(by ($\assocstexpprod$), 
                        \\[-0.5ex]\phantom{(}%
                        ($\rdistr$),
                        \\[-0.5ex]\phantom{(}%
                        ($\stexpzerostexpprod$))}  
        \\[-0.5ex]
        & \;\,\parbox[t]{\widthof{$\eqin{\thplus{\milnersysmin}{\aseteqs}}$\hspace*{3ex}}}{\mbox{}}\: 
          \text{(using $\terminatesconstof{\onechartof{f}}{f} \synteq 0$ due to $\alert{\notterminates{f}}$,
                 and $\terminatesconstof{\onechartof{f^* \prod g}}{f^* \prod g} \synteq \terminatesconstof{\onechartof{g}}{g}$)}     
      \displaybreak[0]\\[0.5ex] 
      & \;\,\parbox[t]{\widthof{$\eqin{\thplus{\milnersysmin}{\aseteqs}}$}}{$\ACIeq$}\:
        \terminatesconstof{\onechartof{f^* \prod g}}{f^* \prod g}
           + 
        \Big(
            \hspace*{-3.75ex}
          \sum_{\pair{\aoneact}{F}\in\oneactderivs{f}} \hspace*{-3.5ex}
                 \aoneact \prod ( \asol{(F \stackprod f^*) \cdot g } )
             \Big) 
        +   \hspace*{-3.75ex}
          \sum_{\pair{\aoneact}{G}\in\oneactderivs{g}} \hspace*{-3.5ex}
                 \aoneact \prod \asol{G}    
        \\[-0.5ex]
        & \;\,\parbox[t]{\widthof{$\eqin{\thplus{\milnersysmin}{\aseteqs}}$\hspace*{3ex}}}{\mbox{}}\: 
          \text{(by definition of $\sasol$, axioms ($\commstexpsum$))}     
      \displaybreak[0]\\[0.5ex] 
      & \;\,\parbox[t]{\widthof{$\eqin{\thplus{\milnersysmin}{\aseteqs}}$}}{$\ACIeq$}\:
        \terminatesconstof{\onechartof{f^* \cdot g}}{f^* \prod g}
           + 
            \hspace*{-3.5ex}
          \sum_{\pair{\aoneact}{E'}\in\oneactderivs{f^* \prod g}} \hspace*{-3.5ex}
                 \aoneact \prod \asol{E'}  
      \\[-0.5ex]
      & \;\,\parbox[t]{\widthof{$\eqin{\thplus{\milnersysmin}{\aseteqs}}$\hspace*{3ex}}}{\mbox{}}\: 
        \parbox{\widthof{{(since
                          $
                          \oneactderivs{f^* \prod g}
                            =
                          \descsetexp{ \pair{\aoneact}{(F \stackprod f^*) \prod g} }
                                     { \pair{\aoneact}{F} \in \oneactderivs{f} } 
                             \cup
                          \oneactderivs{g}$}}}
               {(since
                $
                \oneactderivs{f^* \prod g}
                  =
                \descsetexp{ \pair{\aoneact}{(F \stackprod f^*) \prod g} }
                           { \pair{\aoneact}{F} \in \oneactderivs{f} } 
                   \cup
                \oneactderivs{g} 
                $%
               \\[-0.25ex]\phantom{(}%
                by inspection of the TSS in Def.~\ref{def:onechartof}).}
  \end{alignat*}
  Due to $\ACI \subsystem \milnersysmin \subsystem \thplus{\milnersysmin}{\aseteqs}$,
    the above chain of equalities is provable in $\thplus{\milnersysmin}{\aseteqs}$.
    Therefore it demonstrates, 
      for any given list representation of $\transitionsinfrom{\onechartof{f^*\cdot g}}{f^*\prod g}$
      according to the correctness condition in Def.~\ref{def:provable:solution},
    that $\sasol$ is a \provablein{(\thplus{\milnersysmin}{\aseteqs})} solution of $\onechartof{f^* \cdot g}$~%
      at~$f^* \prod g$.
\end{proof}

\begin{lem}\label{lem:mimic:RSPstar}
  Let $\astexp,\bstexp,\cstexp\in\StExpover{\actions}$ with $\notterminates{\bstexp}$,
  and let $\aseteqs \defdby \setexp{ \astexp   \formeq   \stexpsum{\stexpprod{\bstexp}{\astexp}}{\cstexp} }$. 
  Then it holds that
  $ \astexp \LLEEcoindproofeqin{(\thplus{\milnersysmin}{\aseteqs})} \stexpprod{\stexpit{\bstexp}}{\cstexp} $.
\end{lem}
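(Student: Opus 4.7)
The plan is to construct a \LLEEwitnessed\ coinductive proof $\aCoProof = \pair{\aonechart}{\saeqfun}$ of $\astexp \formeq \stexpprod{\stexpit{\bstexp}}{\cstexp}$ over $\thplus{\milnersysmin}{\aseteqs}$ by taking $\aonechart \defdby \onechartof{\stexpprod{\stexpit{\bstexp}}{\cstexp}}$ as the underlying \onechart\ and assembling its vertex labeling $\saeqfun$ from two solutions of $\aonechart$: the standard one on the right, and the one provided by Lem.~\ref{lem:lem:mimic:RSPstar} on the left. This mirrors the construction that was illustrated in Ex.~\ref{ex:1:RSPstar:2:LLEEcoindproof}.

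More concretely, first I would appeal to Thm.~\ref{thm:onechart-int:LLEEw}(a) to conclude that $\onecharthatof{\stexpprod{\stexpit{\bstexp}}{\cstexp}}$ is a \LLEEwitness\ of $\aonechart$, so that $\aonechart$ is a weakly guarded \LLEEonechart, as required by Def.~\ref{def:coindproof}. Next, Lem.~\ref{lem:onechart-int:milnersysmin:solvable} yields a \provablein{\milnersysmin} solution $\saeqfuni{2}$ of $\aonechart$ given by $\aeqfuni{2}{\asstexp} \defdby \proj{\asstexp}$, whose principal value at the start vertex $\stexpprod{\stexpit{\bstexp}}{\cstexp}$ is (syntactically) $\stexpprod{\stexpit{\bstexp}}{\cstexp}$ itself. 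For the left-hand side, Lem.~\ref{lem:lem:mimic:RSPstar} supplies a \provablein{(\thplus{\milnersysmin}{\aseteqs})} solution $\saeqfuni{1}$ of the same \onechart\ $\aonechart$ whose principal value at $\stexpprod{\stexpit{\bstexp}}{\cstexp}$ is $\astexp$.

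Then I would define the equation-labeling $\saeqfun \funin \vertsof{\aonechart} \to \StExpEqover{\actions}$ by $\aeqfun{\asstexp} \defdby (\aeqfuni{1}{\asstexp} \formeq \aeqfuni{2}{\asstexp})$, so that the functions reading off the left- and right-hand sides of $\saeqfun$ are precisely $\saeqfuni{1}$ and $\saeqfuni{2}$. Since any \provablein{\milnersysmin} solution is automatically a \provablein{(\thplus{\milnersysmin}{\aseteqs})} solution, both $\saeqfuni{1}$ and $\saeqfuni{2}$ are \provablein{(\thplus{\milnersysmin}{\aseteqs})} solutions of $\aonechart$, which establishes condition (cp1) of Def.~\ref{def:coindproof}. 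Condition (cp2) holds by the choices of principal values above: $\aeqfuni{1}{\start} \synteq \astexp$ and $\aeqfuni{2}{\start} \synteq \stexpprod{\stexpit{\bstexp}}{\cstexp}$. Together with the \LLEEonechart\ status of $\aonechart$ this yields that $\aCoProof$ is a \LLEEwitnessed\ coinductive proof over $\thplus{\milnersysmin}{\aseteqs}$, witnessing the claim $\astexp \LLEEcoindproofeqin{(\thplus{\milnersysmin}{\aseteqs})} \stexpprod{\stexpit{\bstexp}}{\cstexp}$.

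There is no substantive obstacle here, because the heavy lifting has already been done in the preceding results: Thm.~\ref{thm:onechart-int:LLEEw} handles the \LLEE\nb-structure of $\aonechart$, Lem.~\ref{lem:onechart-int:milnersysmin:solvable} the right-hand solution, and Lem.~\ref{lem:lem:mimic:RSPstar} the (harder) left-hand solution that uses the assumption in $\aseteqs$ to absorb the feedback from $\sone$\nb-transitions returning to the conclusion. The only point that requires care is checking that one can indeed pair $\saeqfuni{1}$ and $\saeqfuni{2}$ into a single equation labeling meeting both clauses of Def.~\ref{def:coindproof}, which is immediate by the weakening from $\milnersysmin$ to $\thplus{\milnersysmin}{\aseteqs}$ for $\saeqfuni{2}$.
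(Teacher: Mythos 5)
Your proposal is correct and follows essentially the same route as the paper's own proof: it pairs the \provablein{(\thplus{\milnersysmin}{\aseteqs})} solution from Lem.~\ref{lem:lem:mimic:RSPstar} with the \provablein{\milnersysmin} solution from Lem.~\ref{lem:onechart-int:milnersysmin:solvable} into an equation labeling of $\onechartof{\stexpprod{\stexpit{\bstexp}}{\cstexp}}$, and invokes Thm.~\ref{thm:onechart-int:LLEEw} for the \LLEEwitness. The only difference is that you spell out the weakening of the right-hand solution from $\milnersysmin$ to $\thplus{\milnersysmin}{\aseteqs}$ explicitly, which the paper leaves implicit.
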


\begin{proof}
  By Lem.~\ref{lem:lem:mimic:RSPstar} there is a \provablein{\thplus{\milnersysmin}{\aseteqs}} solution 
    $\sasoli{1}$ of $\onechartof{\stexpprod{\stexpit{\bstexp}}{\cstexp}}$ with  
    $\asoli{1}{\stexpprod{\stexpit{\bstexp}}{\cstexp}} \synteq \astexp$. 
  By Lem.~\ref{lem:onechart-int:milnersysmin:solvable} there is a \provablein{\milnersysmin} solution $\sasoli{2}$ of $\onechartof{\stexpprod{\stexpit{\bstexp}}{\cstexp}}$
    with $\asoli{2}{\stexpprod{\stexpit{\bstexp}}{\cstexp}} \synteq \stexpprod{\stexpit{\bstexp}}{\cstexp}$. 
  Then $\pair{\onechartof{f^*{\prod}g}}{\saeqfun}$
    with $\aeqfun{\avert} \defdby \asoli{1}{\avert} \formeq \asoli{2}{\avert}$ 
        for all $\avert\in\vertsof{\onechartof{f^*{\prod}g}}$
    is a \LLEEwitnessed\ coinductive proof 
      of $\astexp \formeq f^*{\prod}g$
        over $\thplus{\milnersysmin}{\aseteqs}$,
        as $\onechartof{\stexpprod{\stexpit{\bstexp}}{\cstexp}}$ 
        has \LLEEwitness\ $\onecharthatof{\stexpprod{\stexpit{\bstexp}}{\cstexp}}$~%
        by~Thm.~\ref{thm:onechart-int:LLEEw}.
\end{proof}

\begin{thm}\label{thm:milnersys:isthmsubsumedby:coindmilnersysone}
  $\milnersys 
     \isthmsubsumedby
   \coindmilnersysone$.
  What is more,
  every derivation in $\milnersys$ with conclusion $\astexp \formeq \bstexp$ 
    can be transformed effectively
  into a derivation with conclusion $\astexp \formeq \bstexp$ in $\coindmilnersysone$.
\end{thm}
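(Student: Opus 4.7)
The plan is to proceed by straightforward structural induction on the depth of a given derivation $\aDeriv$ in $\milnersys$ with conclusion $\astexp \formeq \bstexp$, producing an effectively obtained derivation $\aDeriv'$ in $\coindmilnersysone$ with the same conclusion. Since $\coindmilnersysone$ contains all axioms of $\milnersysmin$ and all rules of $\eqlogic$, base cases (axiom instances of $\milnersysmin$) and inductive steps for $\REFL$, $\SYMM$, $\TRANS$, and $\CXT$ transfer verbatim: the induction hypothesis provides transformed derivations of the premises in $\coindmilnersysone$, and the same rule is applied to obtain the desired conclusion. So the only non-trivial case to address is an instance of $\RSPstar$ at the root of $\aDeriv$.

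Concretely, suppose $\aDeriv$ ends with an application
\begin{center}
\AxiomC{$\aDerivi{1}$}
\noLine
\UnaryInfC{$\astexp \formeq \stexpsum{\stexpprod{\bstexp}{\astexp}}{\cstexp}$}
\RightLabel{$\RSPstar$ \small{(with $\notterminates{\bstexp}$)}}
\UnaryInfC{$\astexp \formeq \stexpprod{\stexpit{\bstexp}}{\cstexp}$}
\DisplayProof
\end{center}
where the premise is derived by a subderivation $\aDerivi{1}$ in $\milnersys$. By the induction hypothesis, I obtain a derivation $\aDerivi{1}'$ in $\coindmilnersysone$ of $\astexp \formeq \stexpsum{\stexpprod{\bstexp}{\astexp}}{\cstexp}$. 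Setting $\aseteqs \defdby \setexp{\astexp \formeq \stexpsum{\stexpprod{\bstexp}{\astexp}}{\cstexp}}$, Lem.~\ref{lem:mimic:RSPstar} then supplies a \LLEEwitnessed\ coinductive proof $\aLLEECoProof$ over $\thplus{\milnersysmin}{\aseteqs}$ with conclusion $\astexp \formeq \stexpprod{\stexpit{\bstexp}}{\cstexp}$. These two ingredients fit exactly the premise shape required by the rule $\LCoindProofi{1}$ of $\coindmilnersysone$ (one equational premise and one \LLEEwitnessed\ coinductive proof over $\milnersysmin$ augmented by that very equation), yielding
\begin{center}
\AxiomC{$\aDerivi{1}'$}
\noLine
\UnaryInfC{$\astexp \formeq \stexpsum{\stexpprod{\bstexp}{\astexp}}{\cstexp}$}
\AxiomC{$\aLLEECoProof$}
\RightLabel{$\LCoindProofi{1}$}
\BinaryInfC{$\astexp \formeq \stexpprod{\stexpit{\bstexp}}{\cstexp}$}
\DisplayProof
\end{center}
as the desired derivation $\aDeriv'$ in $\coindmilnersysone$.

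The main work—constructing the \LLEEwitnessed\ coinductive proof that mimics a single $\RSPstar$ instance—has already been carried out in Lem.~\ref{lem:mimic:RSPstar} (via Lem.~\ref{lem:lem:mimic:RSPstar} and the guaranteed \LLEEwitness\ of $\onechartof{\stexpprod{\stexpit{\bstexp}}{\cstexp}}$ from Thm.~\ref{thm:onechart-int:LLEEw}); the induction described above is purely a matter of splicing this construction in at each $\RSPstar$ node of $\aDeriv$, so the transformation is effective. Combined with Thm.~\ref{thm:coindmilnersys:isthmsubsumedby:milnersys} and Lem.~\ref{lem:easy:rels:coindproofsystems}(\ref{it:1:lem:easy:rels:coindproofsystems}),(\ref{it:2:lem:easy:rels:coindproofsystems}), this will yield the chain of theorem-subsumptions $\milnersys \isthmsubsumedby \coindmilnersysone \isthmsubsumedby \coindmilnersys \isthmsubsumedby \milnersys$, so that $\milnersys$, $\coindmilnersysone$, $\coindmilnersys$, and $\CLC$ are all theorem-equivalent.
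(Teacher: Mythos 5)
Your proposal is correct and takes essentially the same route as the paper: the paper's proof likewise leaves all $\milnersysmin$-axiom and $\eqlogic$-rule steps untouched and locally replaces each $\RSPstar$ instance by an $\LCoindProofi{1}$ instance whose second premise is the \LLEEwitnessed\ coinductive proof over $\thplus{\milnersysmin}{\aseteqs}$ supplied by Lem.~\ref{lem:mimic:RSPstar}. Your framing as an explicit induction on derivation depth is just a more formal rendering of the paper's ``replace every instance'' description, so there is nothing to add.
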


\begin{proof}
  Every derivation $\aDeriv$ in $\milnersys$ can be transformed into a derivation $\aDerivacc$ in $\coindmilnersysone$ with the same conclusion as $\aDeriv$
  by replacing every instance of $\RSPstar$ in $\aDeriv$ by a mimicking derivation in $\coindmilnersysone$ as in the following step,
  where $\notterminates{\bstexp}$ holds as the \sidecondition\ of the instance of $\RSPstar\,$:
  \begin{center}
    $
    \begin{aligned}\renewcommand{\fCenter}{\formeq}
      \Axiom$ \astexp   \fCenter   \stexpsum{\stexpprod{\bstexp}{\astexp}}{\cstexp} $
      \RightLabel{$\RSPstar$} 
      \UnaryInf$ \astexp   \fCenter   \stexpprod{\stexpit{\bstexp}}{\cstexp} $
      \DisplayProof
    \end{aligned}
    \qquad\Longmapsto\qquad
    \begin{aligned}
      \AxiomC{$ \astexp   \formeq   \stexpsum{\stexpprod{\bstexp}{\astexp}}{\cstexp} $}
      \AxiomC{$ \aLLEECoProofoverof{\thplus{\milnersysmin}{\aseteqs}}
                                   {\astexp   \formeq   \stexpprod{\stexpit{\bstexp}}{\cstexp}} $}
      \insertBetweenHyps{\hspace*{1ex}}                             
      \RightLabel{\LCoindProofi{1}}
      \BinaryInfC{$ \astexp   \formeq   \stexpprod{\stexpit{\bstexp}}{\cstexp} $}           
      \DisplayProof
    \end{aligned}
    $
  \end{center}
  and where 
  $\aLLEECoProofoverof{\thplus{\milnersysmin}{\aseteqs}}
                      {\astexp   \formeq   \stexpprod{\stexpit{\bstexp}}{\cstexp}}$
  is, for $\aseteqs \defdby \setexp{\astexp   \formeq   \stexpsum{\stexpprod{\bstexp}{\astexp}}{\cstexp} }$,
  a \LLEEwitnessed\ coinductive proof over ${\thplus{\milnersysmin}{\aseteqs}}$
  of $\astexp   \formeq   \stexpprod{\stexpit{\bstexp}}{\cstexp}$
  that is guaranteed by Lem.~\ref{lem:mimic:RSPstar}.                     
\end{proof}

\begin{thm}
  $\milnersys \thmequiv \coindmilnersysone \thmequiv \coindmilnersys \thmequiv \CLC$,
    i.e.\ these proof systems are \theoremequivalent.
\end{thm}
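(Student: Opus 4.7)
The plan is to derive this chain of theorem equivalences directly by composing the theorem subsumptions already established in the paper, so no new proof-theoretic machinery is required at this point. I would organize it as a short cyclic argument together with one piece picked up verbatim from Lemma~\ref{lem:easy:rels:coindproofsystems}.

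First, I would assemble the cycle $\milnersys \isthmsubsumedby \coindmilnersysone \isthmsubsumedby \coindmilnersys \isthmsubsumedby \milnersys$. The first link, $\milnersys \isthmsubsumedby \coindmilnersysone$, is precisely Theorem~\ref{thm:milnersys:isthmsubsumedby:coindmilnersysone}. The middle link, $\coindmilnersysone \isthmsubsumedby \coindmilnersys$, is Lemma~\ref{lem:easy:rels:coindproofsystems}~(\ref{it:1:lem:easy:rels:coindproofsystems}), which holds because $\coindmilnersysone$ is literally a subsystem of $\coindmilnersys$ (only the rule $\LCoindProofi{1}$ of the scheme $\family{\LCoindProofi{n}}{n\in\nat}$ is present). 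The final link, $\coindmilnersys \isthmsubsumedby \milnersys$, is Theorem~\ref{thm:coindmilnersys:isthmsubsumedby:milnersys}, whose proof turns a derivation in $\coindmilnersys$ into one in $\milnersys$ by eliminating instances of $\LCoindProofi{n}$ via Proposition~\ref{prop:LLEEcoindproofeq:impl:milnersyseq}.

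From this cycle, any of the three systems $\milnersys$, $\coindmilnersysone$, $\coindmilnersys$ subsumes any other, so the three are pairwise theorem-equivalent: $\milnersys \thmequiv \coindmilnersysone \thmequiv \coindmilnersys$.

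Finally, the remaining equivalence $\coindmilnersys \thmequiv \CLC$ is exactly Lemma~\ref{lem:easy:rels:coindproofsystems}~(\ref{it:2:lem:easy:rels:coindproofsystems}), whose proof rewrites any $\coindmilnersys$-derivation into a $\CLC$-derivation by absorbing all uses of $\milnersysmin$-axioms and $\eqlogic$-rules into the coinductive-proof premises of enlarged $\LCoindProofi{n}$-instances. Concatenating this with the preceding chain yields $\milnersys \thmequiv \coindmilnersysone \thmequiv \coindmilnersys \thmequiv \CLC$, as required. There is no real obstacle here: the theorem is a bookkeeping conclusion from Section~\ref{coindmilnersys:2:milnersys} and Section~\ref{milnersys:2:coindmilnersys}, and the only subtlety worth flagging in the write-up is that the non-trivial direction $\milnersys \isthmsubsumedby \coindmilnersysone$ already suffices (via the subsystem inclusion $\coindmilnersysone \subsystem \coindmilnersys$) to collapse $\coindmilnersysone$ and $\coindmilnersys$ without needing a separate direct simulation of general $\LCoindProofi{n}$-rules by $\LCoindProofi{1}$-rules.
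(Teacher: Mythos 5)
Your proposal is correct and follows exactly the paper's own argument: the paper closes the cycle $\milnersys \isthmsubsumedby \coindmilnersysone \isthmsubsumedby \coindmilnersys \isthmsubsumedby \milnersys$ via Theorem~\ref{thm:milnersys:isthmsubsumedby:coindmilnersysone}, Lemma~\ref{lem:easy:rels:coindproofsystems}, and Theorem~\ref{thm:coindmilnersys:isthmsubsumedby:milnersys}, and inserts $\coindmilnersys \thmequiv \CLC$ from Lemma~\ref{lem:easy:rels:coindproofsystems}~(\ref{it:2:lem:easy:rels:coindproofsystems}), just as you do. No gaps.
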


\begin{proof}
  Due to $\milnersys \isthmsubsumedby \coindmilnersysone \isthmsubsumedby \coindmilnersys \: (\thmequiv \CLC) \isthmsubsumedby \milnersys$
  by Thm.~\ref{thm:milnersys:isthmsubsumedby:coindmilnersysone}, 
     Lem.~\ref{lem:easy:rels:coindproofsystems},
     and Thm.~\ref{thm:coindmilnersys:isthmsubsumedby:milnersys}.
\end{proof}

%
%
%
%
%
%
%
%

%
%
%
%
%
%
%
%
%

\section{Conclusion}%
  \label{conclusion}

In order to increase the options for a completeness proof of Milner's system $\milnersys$ for the process semantics of regular expressions
  under bisimilarity, we set out to formulate proof systems of equal strength half-way in between $\milnersys$ 
    and bisimulations between star expressions.
Specifically we aimed at characterizing the derivational power that the fixed-point rule $\RSPstar$ in $\milnersys$
  adds to its purely equational part $\milnersysmin$.     
We based our development on a crucial step from the completeness proof \cite{grab:fokk:2020a} for a tailored restriction of~$\milnersys$
  to `\onefree' star expressions:
  guarded linear specifications that satisfy the loop existence and elimination property (L)LEE \cite{grab:2018,grab:fokk:2020a} 
  are uniquely solvable in $\milnersys$.
We have obtained the following concepts and results:  
\begin{itemize}
  \item
    As \LLEEwitnessed\ coinductive proof we defined
      any \weaklyguarded\ \LLEEonechart\ $\aonechart$ whose vertices are labeled by equations 
      between the values of two provable solutions of $\aonechart$.
  \item
    Based on such proofs, we defined a coinductive version $\coindmilnersys$ of Milner's system $\milnersys$,
    and as its `kernel' a system \CLC\ for merely combining \LLEEwitnessed\ coinductive proofs.
  \item
    Via proof transformations we showed that $\coindmilnersys$ and \CLC\ are \theoremequivalent\ to $\milnersys$.
  \item
    Based on coinductive proofs without \LLEEwitnesses, 
    we formulated systems $\coindmilnersysbar$ and $\CC$ that can be shown to be complete,
     as can a variant $\milnersysaccbar$ of $\milnersys$ with the strong~rule~\USP.
\end{itemize}
\smallskip%
Since the proof systems $\coindmilnersys$ and $\CLC$
  are tied to process graphs via the circular deductions they permit,
  and as they are \theoremequivalent\ with $\milnersys$,
    they may become natural beachheads for a completeness proof of Milner's system.
Indeed, they can be linked to the completeness proof in \cite{grab:fokk:2020a}:
  it namely guarantees that valid equations between `\onefree' star expressions
  can always be mimicked by derivations in \CLC\ of depth~2. This suggests~the~following~question:
\begin{itemize}
  \item[$\triangleright$]
    Can derivations in \CLC\ (in $\coindmilnersys$) always be simplified to some (kind of) normal form
      that is of bounded depth (resp., of bounded nesting depth of LLEE-witn.\ coinductive proofs)? 
\end{itemize}
Investigating workable concepts of `normal form' for derivations in \CLC\ or in $\coindmilnersys$,
  by using simplification steps of process graphs with \LEE\ and \onetransitions\ under \onebisimilarity\ as developed for the completeness proof for `\onefree' star expressions
  in \cite{grab:fokk:2020a}, is our next goal.

\newpage
\bibliography{cMil-arxiv.bib}         

%
%
%
%
%
%
%
%

\newpage

\appendix
\section{Appendix: Supplements}%
  \label{appendix}

\subsection{Supplements for Section~\ref{intro}}\label{intro::app}        

\begin{rem}[not expressible process graphs]
  As mentioned in Section~\ref{intro}, 
    Milner noticed that guarded systems of recursion equations cannot always be solved by star expressions
    under the process semantics.
  In fact, Milner showed in \cite{miln:1984}
    that the linear specification $\fap{\aspec}{G_2}$ defined by the process graph $G_2$  below 
    does not have a star expression solution modulo bisimilarity. 
  He conjectured that that holds also for the easier specification $\fap{\aspec}{G_1}$ defined by the process graph $G_1$ below,
    which was confirmed, and proved later by Bosscher \cite{boss:1997}.    
  \begin{center}\vspace{-1ex}\label{fig:milner-bosscher-expressible}
    \scalebox{0.9}{\begin{tikzpicture}

%
\matrix[anchor=center,row sep=0cm,column sep=1.5cm,
        every node/.style={draw,very thick,circle,minimum width=2.5pt,fill,inner sep=0pt,outer sep=2pt}] at (0,0.75) {
  \node[color=chocolate](C-1-0){};  &   \node[color=chocolate](C-1-1){};
  \\
};
\draw[<-,very thick,>=latex,chocolate,shorten <=2pt](C-1-0) -- ++ (180:0.58cm);
\path(C-1-0) ++ (0cm,-0.4cm) node{$X_1$};
\path(C-1-0) ++ (-0.325cm,0.5cm) node{\Large $\iap{G}{1}$};

\path (C-1-1) ++ (0cm,-0.45cm) node{$X_2$};

\draw[thick,chocolate] (C-1-1) circle (0.12cm);
\draw[thick,chocolate] (C-1-0) circle (0.12cm);
\draw[->,bend left,distance=0.65cm,shorten <=2pt,shorten >=2pt] (C-1-0) to node[above]{$a$} (C-1-1); 
\draw[->,bend left,distance=0.65cm,shorten <=2pt,shorten >=2pt] (C-1-1) to node[below]{$b$} (C-1-0); 


\path (C-1-0) ++ (-0.75cm,0cm) node[left]{$
  \fap{\aspec}{G_1} = 
  \left\{\,
  \begin{aligned}
    X_1 & {} = 1 + a \prod X_2
    \\ 
    X_2 & {} = 1 + b \prod X_1
  \end{aligned}
  \,\right.
  $};

%
\matrix[anchor=center,row sep=0.924cm,column sep=0.75cm,
        every node/.style={draw,very thick,circle,minimum width=2.5pt,fill,inner sep=0pt,outer sep=2pt}] at (7,-0.75) {
                   &                  &  \node(C-2-2){};
  \\
  \node(C-2-1){};  &                  &                  
  \\
                   &                  &  \node(C-2-3){};  
  \\
};
\draw[<-,very thick,>=latex,color=chocolate](C-2-1) -- ++ (180:0.5cm);  

\path(C-2-1) ++ (-0.325cm,0.5cm) node{\Large $\iap{G}{2}$};
\path(C-2-1) ++ (-0.2cm,-0.4cm) node{$Y_1$};
\draw[->,bend right,distance=0.65cm] (C-2-1) to node[above]{$\aacti{2}$} (C-2-2); 
\draw[->,bend right,distance=0.65cm] (C-2-1) to node[left]{$\aacti{3}$}  (C-2-3);

\path(C-2-2) ++ (0cm,0.4cm) node{$Y_2$};
\draw[->,bend right,distance=0.65cm]  (C-2-2) to node[above]{$\aacti{1}$} (C-2-1); 
\draw[->,bend left,distance=0.65cm]  (C-2-2) to node[right,xshift=-1pt]{$\aacti{3}$} (C-2-3);

\path(C-2-3) ++ (0cm,-0.4cm) node{$Y_3$};
\draw[->,bend right,distance=0.65cm] (C-2-3) to node[left]{$\aacti{1}$}  ($(C-2-1)+(+0.15cm,-0.05cm)$);
\draw[->,bend left,distance=0.65cm]  (C-2-3) to node[right,xshift=-1.5pt]{$\aacti{2}$} (C-2-2);

\path (C-2-1) ++ (-0.75cm,0cm) node[left]{$
  \fap{\aspec}{G_2} = 
  \left\{\,
  \begin{aligned}
    Y_1 & {} = a_2 \prod Y_2  +  a_3 \prod Y_3
    \\ 
    Y_2 & {} = a_1 \prod Y_1  +  a_3 \prod Y_3
    \\
    Y_3 & {} = a_1 \prod Y_1  +  a_2 \prod Y_2
  \end{aligned}
  \,\right.
  $};

\end{tikzpicture}}
  \end{center}\vspace{-2ex}
  Here the start vertex of a process graph is again highlighted by a brown arrow~\picarrowstart,
  and a vertex $\avert$ with immediate termination
  is emphasized in brown as \pictermvert\ including a boldface ring.
  
  $G_1$ and $G_2$ are finite process graphs that are not bisimilar to the process interpretation of any star expression.
  In this sense, $G_1$ and $G_2$ are \emph{not} expressible by a regular expression (under the process semantics). 
  This sets the process semantics of regular expressions apart from the standard language semantics,
  with respect to which every language that is accepted by a finite-state automaton is the interpretation of some regular expression.
\end{rem}

\begin{exa}[LLEE-witnessed coinductive proof on page~\pageref{inf:ex:1:coindproof}]\label{ex:inf:ex:1:coindproof}
  In Section~\ref{intro} on page~\pageref{inf:ex:1:coindproof} we displayed,
  for the statement 
  $g^* \prod 0 \synteq (a + b)^* \prod 0 
     \LLEEcoindproofeqin{\milnersysmin}
   (a \prod (a + b) + b)^* \prod 0 \synteq h^* \prod 0$  
  the coinductive proof $\aCoProof = \pair{\onechartof{h^* \prod 0}}{\saeqfun}$ over $\milnersysmin$
    with underlying \LLEEwitness~$\onecharthatof{h^* \prod 0}$,
  where $\onechartof{h^* \prod 0}$ and $\onecharthatof{h^* \prod 0}$
    are defined according to Def.~\ref{def:onechartof}
    and the equation-labeling function $\saeqfun$ on $\onechartof{h^* \prod 0}$ is defined by the illustration below:
  \begin{flushleft}\vspace*{-2ex}
    \hspace*{-0.5ex}%
    \begin{tikzpicture}
 
\matrix[anchor=center,row sep=1.25cm,column sep=2.65cm,
        every node/.style={draw=none}
        ] at (0,0) {
    \node(v1){}; &             & \node(v2){};
    \\
                 & \node(v){}; 
    \\
  };
  
\path (v1) node(v1-label){$ (1 \cdot g^*) \cdot 0
                              \formeq
                            ((1 \cdot (a + b)) \cdot h^*)  \cdot 0 $};       
\draw[->,shorten <=2.6cm,shorten >=1.75cm] (v1) to node[above,pos=0.5725]{$a, b$} (v2);

\path (v2) node(v2-label){$ (1 \cdot g^*) \cdot 0
                              \formeq
                             (1 \cdot h^*) \cdot 0 $}; 
\draw[->,thick,densely dotted,out=-45,in=0,distance=0.75cm,shorten <= 0.2cm] (v2) to node[right]{$\sone$} ($(v) + (2.775cm,0.3cm)$);

\path (v) node(v-label){$ \underbrace{(a + b)^*}_{g^*} \,\cdot\, 0
                            \formeq
                          \underbrace{(a \cdot (a + b) + b)^*}_{h^*} \,\cdot\, 0 $}; 
\draw[->,thick,darkcyan,shorten <= 1.1cm,shorten >= 0.5cm] (v) to node[above,pos=0.45]{$\loopnsteplab{1}$}
                                                                  node[below,pos=0.7]{$\black{a}$} (v1);
\draw[->,thick,darkcyan,shorten <= 1.1cm,shorten >= 0.5cm] (v) to node[above,pos=0.45]{$\loopnsteplab{1}$}
                                                                  node[below,pos=0.7]{$\black{b}$} (v2);

\matrix[anchor=center,row sep=1.25cm,column sep=1.1cm,
        every node/.style={draw,very thick,circle,minimum width=2.5pt,fill,inner sep=0pt,outer sep=2pt}
        ] at (6.5,0) {
    \node(v1--2){}; &             & \node(v2--2){};
    \\
                 & \node(v--2){}; 
    \\
  };  
  
\draw[->,thick,darkcyan] (v--2) to node[right,pos=0.7]{$\loopnsteplab{1}$}
                                   node[left,pos=0.45]{$\black{a}$} (v1--2);
\draw[->,thick,darkcyan] (v--2) to node[left,pos=0.7]{$\loopnsteplab{1}$}
                                   node[right,pos=0.45]{$\black{b}$} (v2--2);  

\draw[->] (v1--2) to node[above]{$a, b$} (v2--2);

\draw[<-,very thick,>=latex,chocolate,shorten <= 2pt](v--2) -- ++ (180:0.55cm); 
\draw[->,thick,densely dotted,out=-45,in=0,distance=0.75cm] 
  (v2--2) to node[right]{$\sone$} (v--2);

\path (v--2) ++ (-1.5cm,0.2cm) node(v--2-label) {\large $\onecharthatof{h^* \prod 0}$};

\end{tikzpicture}  \vspace*{-3ex}
  \end{flushleft}
  The correctness conditions at the start vertex (at the bottom) can be verified as follows:
  \begin{align*}
    g^* \prod 0 
      & {} \mathrel{\parbox{\widthof{$\milnersysmineq$}}{$\synteq$}}
        (a + b)^* \prod 0 
      \milnersysmineq
        (1 + (a + b) \prod (a + b)^*) \prod 0
      \milnersysmineq
        1 \prod 0 + ((a + b) \prod g^*) \prod 0 
      \displaybreak[0]\\
      & {} \milnersysmineq
        0 + (a \prod g^* + b \prod g^*) \prod 0 
      \milnersysmineq
        (a \prod g^* + b \prod g^*) \prod 0 
      \displaybreak[0]\\
      & {} \milnersysmineq
        (a \prod g^*) \prod 0 + (b \prod g^*) \prod 0
      \milnersysmineq
        a \prod (g^* \prod 0) + b \prod (g^* \prod 0)
      \displaybreak[0]\\
      & {} \milnersysmineq
        a \prod ((1 \prod g^*) \prod 0) + b \prod ((1 \prod g^*) \prod 0) \punc{,}
    \displaybreak[0]\\[0.75ex]   
    h^* \prod 0
    & {} \mathrel{\parbox{\widthof{$\milnersysmineq$}}{$\synteq$}}
    (a \prod (a + b) + b)^* \prod 0
      \milnersysmineq
    (1 + (a \prod (a + b) + b) \prod (a \prod (a + b) + b)^*) \prod 0  
    \displaybreak[0]\\
    & {} \milnersysmineq
    1 \prod 0 + ( (a \prod (a + b) + b) \prod h^*) \prod 0 
      \milnersysmineq
    0 + ( (a \prod (a + b)) \prod h^*
          + b \prod h^* ) \prod 0     
    \displaybreak[0]\\
    & {} \milnersysmineq
    ( a \prod ((a + b) \prod h^*)
          + b \prod h^* ) \prod 0  
      \milnersysmineq
    ( (a \prod ((a + b) \prod h^*)) \prod 0 
          + ( b \prod h^* ) \prod 0        
    \displaybreak[0]\\
    & {} \milnersysmineq
    a \prod ((a + b) \prod h^*) \prod 0) 
          + b \prod (h^* \prod 0) 
    \displaybreak[0]\\
    & {} \milnersysmineq
    a \prod ((1 \prod (a + b)) \prod h^*) \prod 0)    
          + b \prod ((1 \prod h^*) \prod 0) \punc{.}
  \end{align*}
  From the provable equality for $g^* \prod 0$
    the correctness condition for $(1 \prod g^*) \prod 0$ at the left upper vertex of $\onechartof{h^* \prod 0}$ 
      can be obtained by additional uses of the axiom ($\leftidstexpprod$). 
  The correctness condition for $((1 \prod (a + b)) \prod h^*) \prod 0$  at the left upper vertex of $\onechartof{h^* \prod 0}$ 
  can be verified as follows:
  \begin{align*}
    ((1 \prod (a + b)) \prod h^*) \prod 0
      & {} \mathrel{\parbox{\widthof{$\milnersysmineq$}}{$\milnersysmineq$}}
        ((a + b) \prod h^*) \prod 0 
        \mathrel{\parbox{\widthof{$\milnersysmineq$}}{$\milnersysmineq$}}
          (a \prod h^*  +  b \prod h^*) \prod 0 
      \\
      & {} \mathrel{\parbox{\widthof{$\milnersysmineq$}}{$\milnersysmineq$}}
        (a \prod h^*) \prod 0  +  (b \prod h^*) \prod 0 
        \mathrel{\parbox{\widthof{$\milnersysmineq$}}{$\milnersysmineq$}}
        a \prod (h^* \prod 0)  +  b \prod (h^* \prod 0) 
      \\
      & {} \mathrel{\parbox{\widthof{$\milnersysmineq$}}{$\milnersysmineq$}}
        a \prod ((1 \prod h^*) \prod 0)  +  b \prod ((1 \prod h^*) \prod 0) \punc{.}
  \end{align*}
  Finally, the correctness conditions at the right upper vertex of $\onechartof{h^* \prod 0}$ 
    can be obtained by applications of the axiom~($\leftidstexpprod$)~only.  
\end{exa}

\begin{exa}[LLEE-witnessed coinductive proof in Fig.~\ref{fig:ex:1:RSPstar:to:coindproof}]\label{ex:fig:ex:1:RSPstar:to:coindproof}
  We provided a first illustration for translating 
    an instance of the fixed-point rule into a coinductive proof in Figure~\ref{fig:ex:1:RSPstar:to:coindproof} on page~\pageref{fig:ex:1:RSPstar:to:coindproof}.
  Specifically, we mimicked the instance $\ainst$ (see below) of the fixed-point rule \RSPstar\ in Milner's system $\milnersys = \thplus{\milnersysmin}{\RSPstar}$
    by a coinductive proof (see also below)
    over $\thplus{\milnersysmin}{\setexp{\text{premise of $\ainst$}}}$ with \LLEEwitness~$\onecharthatof{\alert{f^*}\cdot \forestgreen{0}}$:
    \begin{center}
    \AxiomC{$ \overbrace{(a + b)^*}^{\chocolate{{e _0 ^*}}} \chocolate{\mathrel{\cdot} 0} 
                \:\formeq\:
              \overbrace{(a \cdot (a + b) + b)}^{\alert{f}} \cdot (\overbrace{(a + b)^*}^{\chocolate{{e _0 ^*}}} \chocolate{{} \cdot 0}) + \forestgreen{0} $}
    \RightLabel{$\ainst$, $\RSPstar$}
    \UnaryInfC{$ (a + b)^* \cdot 0
                   \:\formeq\:
                 (a \cdot (a + b) + b)^* \,\cdot\, \forestgreen{0} $}
    \DisplayProof
    %
    %
    \AxiomC{$ \chocolate{{e _0 ^*} \cdot 0} \:\formeq\: \alert{f} \cdot (\chocolate{{e _0 ^*} \cdot 0}) + \forestgreen{0} \rule{0pt}{23.5pt}$}
    \UnaryInfC{$ \chocolate{{e _0 ^*} \cdot 0} \:\formeq\: \alert{f^*} \,\cdot\, \forestgreen{0}$}
    \DisplayProof 
  \end{center}
  \vspace*{-2.5ex}
  \begin{center}  
    \begin{tikzpicture}
      \matrix[anchor=center,row sep=1.25cm,column sep=3.5cm,
              every node/.style={draw=none}
              ] at (0,0) {
          \node(v1){}; &[1.5cm]             &[-1.5cm] \node(v2){};
          \\
                       & \node(v){}; 
          \\
        };
        
      \path (v1) node(v1-label){$ (1 \cdot (a + b)) \cdot (\chocolate{{e _0 ^*} \cdot 0}) 
                                    \formeq
                                  ((1 \cdot (a + b)) \cdot \alert{f^*})  \cdot \forestgreen{0} $};       
      \draw[->,shorten <=3.35cm,shorten >=1.75cm] (v1) to node[above,pos=0.6]{$a, b$} (v2);

      \path (v2) node(v2-label){$ 1 \cdot (\chocolate{{e _0 ^*} \cdot 0})
                                    \formeq
                                   (1 \cdot \alert{f^*}) \cdot \forestgreen{0} $}; 
      \draw[->,thick,densely dotted,out=-45,in=0,distance=0.75cm,shorten <= 0.2cm] (v2) to node[right]{$\sone$} ($(v) + (2.35cm,0.125cm)$);

      \path (v) node[xshift=-3.5cm,yshift=-0.15cm](v-label)
                             {$ \underbrace{\chocolate{{e _0 ^*} \cdot 0}}
                                           _{\text{(by the premise of $\ainst$)} \;\;
                                               (a \cdot (a + b) + b) \cdot (\chocolate{{e _0 ^*} \cdot 0}) + \forestgreen{0}
                                                 \:\synteq\:
                                               \alert{f} \cdot (\chocolate{{e _0 ^*} \cdot 0}) + \forestgreen{0} 
                                                 \: = \:
                                             \phantom{
                                                 \: = \:
                                               \alert{f} \cdot (\chocolate{{e _0 ^*} \cdot 0}) + \forestgreen{0}   
                                                 \:\synteq\: 
                                               (a \cdot (a + b) + b) \cdot (\chocolate{{e _0 ^*} \cdot 0}) + \forestgreen{0} 
                                               \text{(by the premise of $\ainst$)} }}
                                    \hspace*{-45ex}
                                  \formeq
                                \underbrace{(a \cdot (a + b) + b)^*}_{\alert{f^*}} \,\cdot\, \forestgreen{0} $}; 
      \draw[->,thick,darkcyan,shorten <= 1.1cm,shorten >= 1cm] (v) to node[above,pos=0.45]{$\loopnsteplab{1}$}
                                                                        node[below,pos=0.525]{$\black{a}$} (v1);
      \draw[->,thick,darkcyan,shorten <= 0.6cm,shorten >= 0.5cm] (v) to node[above,pos=0.45]{$\loopnsteplab{1}$}
                                                                        node[below,pos=0.575]{$\black{b}$} (v2); 
                                                                        
      \path (v) ++ (-5.5cm,0.5cm) node{\Large $ {\onecharthatof{\rule{0pt}{8pt}\alert{f^*}\cdot \forestgreen{0}}} $};                                                                     
    \end{tikzpicture}
  \end{center}
  The correctness conditions for the right-hand sides of this prooftree
  to be a \LLEEwitnessed\ coinductive proof $\thplus{\milnersysmin}{\setexp{\text{premise of $\ainst$}}}$
    are the same as those 
      that we have verified for the right-hand sides of the coinductive proof over $\milnersysmin$ with the same \LLEEwitness\ in Ex.~\ref{ex:inf:ex:1:coindproof}.
    Note that the premise of $\ainst$ is not used for the correctness conditions of the right-hand sides.
  The correctness condition for the left-hand side $\chocolate{{e _0 ^*} \prod 0}$ at the bottom vertex of $\onechartof{\alert{f^*}\cdot \forestgreen{0}}$
  can be verified as follows, now making use of the premise of the considered instance $\ainst$ of $\RSPstar\,$:
  \begin{align*}
    \chocolate{{e _0 ^*} \prod 0}
      & \;\;{\eqin{\setexp{\text{premise of $\ainst$}}}}\;\;
        \alert{f} \prod (\chocolate{{e _0 ^*} \prod 0}) + \forestgreen{0}
      \\
      & \;\;\parbox{\widthof{$\milnersysmineq$}}{$\milnersysmineq$}\;\;
        (a \prod (a + b) + b) \prod (\chocolate{{e _0 ^*} \prod 0})
      \\  
      & \;\;\parbox{\widthof{$\milnersysmineq$}}{$\milnersysmineq$}\;\;
        (a \prod (a + b)) \prod (\chocolate{{e _0 ^*} \prod 0})  +  b \prod (\chocolate{{e _0 ^*} \prod 0})
      \\  
      & \;\;\parbox{\widthof{$\milnersysmineq$}}{$\milnersysmineq$}\;\;
        a \prod ((a + b) \prod (\chocolate{{e _0 ^*} \prod 0})) + b \prod (1 \prod (\chocolate{{e _0 ^*} \prod 0}))
      \\  
      & \;\;\parbox{\widthof{$\milnersysmineq$}}{$\milnersysmineq$}\;\;
        a \prod ((1 \prod (a + b)) \prod (\chocolate{{e _0 ^*} \prod 0})) + b \prod (1 \prod (\chocolate{{e _0 ^*} \prod 0}))
  \end{align*}
  Together this yields the provable equation:
  \begin{align*}
    \chocolate{{e _0 ^*} \prod 0}
      & \;\;{\eqin{\thplus{\milnersysmin}{\setexp{\text{premise of $\ainst$}}}}}\;\;
        a \prod ((1 \prod (a + b)) \prod (\chocolate{{e _0 ^*} \prod 0})) + b \prod (1 \prod (\chocolate{{e _0 ^*} \prod 0})) \punc{,}
  \end{align*}
  which demonstrates the correctness condition for
    the left-hand side $\chocolate{{e _0 ^*} \prod 0}$ at the bottom vertex of $\onechartof{\alert{f^*}\cdot \forestgreen{0}}$.
  The correctness condition for the left-hand side $a \prod ((1 \prod (a + b))$ at the top left vertex of $\onechartof{\alert{f^*}\cdot \forestgreen{0}}$
  can be verified without using the premise of $\ainst$ as follows:
  \begin{align*}
    ((1 \prod (a + b)) \prod (\chocolate{{e _0 ^*} \prod 0})
      & {} \mathrel{\parbox{\widthof{$\milnersysmineq$}}{$\milnersysmineq$}}
        ((a + b) \prod (\chocolate{{e _0 ^*} \prod 0}) 
        \mathrel{\parbox{\widthof{$\milnersysmineq$}}{$\milnersysmineq$}}
          a \prod (\chocolate{{e _0 ^*} \prod 0}) +  b \prod (\chocolate{{e _0 ^*} \prod 0}) 
      \\
      & {} \mathrel{\parbox{\widthof{$\milnersysmineq$}}{$\milnersysmineq$}}
          a \prod (1 \prod (\chocolate{{e _0 ^*} \prod 0})) +  b \prod (1 \prod (\chocolate{{e _0 ^*} \prod 0})) \punc{.}
  \end{align*}
  Finally, the correctness condition of the left-hand side $1 \prod (\chocolate{{e _0 ^*} \prod 0})$ at the right upper vertex of $\onechartof{\alert{f^*} \prod \forestgreen{0}}$ 
    can be obtained by an application of the axiom~($\leftidstexpprod$) only.
\end{exa}

\newpage
\subsection{Supplements for Section~\ref{prelims}}\label{prelims::app}        
   
Antimirov defined partial derivatives of regular expressions in \cite{anti:1996} 
  in a way that facilitates 
  the following equivalent formulation,
  as guaranteed by Proposition~3.1 in \cite{anti:1996}.

\begin{defi}[partial derivatives of star expressions]
  For every star expression $\astexp\in\StExpsover{\actions}$, and $\aact\in\actions$,
    the set $\partderivs{\aact}{\astexp}$ of \emph{partial \derivativesact{\aact} of $\astexp$}
  is defined by means of the function
  $\spartderivs{(\cdot)}{(\cdot)} \funin \actions \times \StExpover{\actions} \to \StExpover{\actions}$, 
  $\astexp \mapsto \partderivs{\aact}{\astexp}$ 
  via by the following recursive definition:\vspace*{-0.5ex}
  \begin{align*}
    \begin{aligned}
      \partderivs{\aact}{\stexpzero} 
        & {} \defdby
            \emptyset \punc{,}
      \\
      \partderivs{\aact}{\stexpone}  
        & {} \defdby 
            \emptyset \punc{,}
      \\
      \partderivs{\aact}{\bact}  
        & {} \defdby 
        \begin{cases} 
          \setexp{\stexpone} 
           & \ldots\; \bact = \aact \punc{,}
          \\
          \emptyset    
           & \ldots\; \bact \neq \aact \punc{,}
        \end{cases}
    \end{aligned}               
    \hspace*{3ex}
    \begin{aligned}
      \partderivs{\aact}{\stexpsum{\astexpi{1}}{\astexpi{2}}}
        & {} \defdby 
      \partderivs{\aact}{\astexpi{1}}
        \cup
      \partderivs{\aact}{\astexpi{2}} \punc{,}
      \\ 
      \partderivs{\aact}{\stexpprod{\astexpi{1}}{\astexpi{2}}}
        & {} \defdby 
        \begin{cases}
           \descsetexp{ \stexpprod{\astexpacci{1}}{\astexpi{2}} }
                      { \astexpacci{1} \in \partderivs{\aact}{\astexpi{1}} }
              & \ldots\; \text{if $\notterminates{\astexpi{1}}$}
           \\
           \descsetexp{ \stexpprod{\astexpacci{1}}{\astexpi{2}} }
                      { \astexpacci{1} \in \partderivs{\aact}{\astexpi{1}} }
              \cup \partderivs{\aact}{\astexpi{2}}
              & \ldots\; \text{if $\terminates{\astexpi{1}}$} \punc{,}
        \end{cases} 
      \\
      \partderivs{\aact}{\stexpit{\astexp}}
        & {} \defdby
      \descsetexp{ \stexpprod{\astexpacc}{\stexpit{\astexp}} }
                 { \astexpacc \in \partderivs{\aact}{\astexp} } \punc{.}
    \end{aligned}    
  \end{align*} 
  By  
  $\actderivs{\astexp} \defdby \descsetexp{ \pair{\aact}{\astexpacc} }{ \aact\in\actions,\, \astexpacc\in\partderivs{\aact}{\astexp} }$
  we denote the set of \emph{action derivatives} of $\astexp\in\StExpover{\actions}$.
\end{defi}

Partial derivatives correspond directly to derivatives of star expressions
  as defined by the transition system specification in Def.~\ref{def:chartof}.

\begin{lem}\label{lem:chartof:vs:partderivs}
  $\partderivs{\aact}{\astexp}
     =
   \descsetexpnormalsize{ \astexpacc }{ \astexp \lt{\aact} \astexpacc }$, 
  for all $\aact\in\actions$ and $\astexp\in\StExpover{\actions}$.  
\end{lem}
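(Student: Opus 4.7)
The plan is to prove the set equality by structural induction on the star expression $\astexp$, comparing the recursive clauses of the partial-derivative function $\spartderivs{(\cdot)}{(\cdot)}$ clause-by-clause with the transition rules of the TSS $\StExpTSSover{\actions}$ from Definition~\ref{def:StExpTSS}. Since both the definition of $\partderivs{\aact}{\astexp}$ and the TSS are driven by the syntactic shape of $\astexp$, there is a direct correspondence at every syntactic constructor, and the induction merely makes this correspondence precise.

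For the base cases $\astexp \syntequal \stexpzero$ and $\astexp \syntequal \stexpone$, no rule of $\StExpTSSover{\actions}$ has a conclusion of the form $\stexpzero \lt{\aact} \astexpacc$ or $\stexpone \lt{\aact} \astexpacc$, matching $\partderivs{\aact}{\stexpzero} = \partderivs{\aact}{\stexpone} = \emptyset$. For $\astexp \syntequal \bact$ with $\bact\in\actions$, the only applicable axiom is $\bact \lt{\bact} \stexpone$, which yields a derivative precisely when $\bact = \aact$ and then only the singleton $\setexp{\stexpone}$, matching the case distinction in the defining clause for $\partderivs{\aact}{\bact}$.

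The interesting inductive cases are sum, product, and star. For $\astexp \syntequal \stexpsum{\astexpi{1}}{\astexpi{2}}$, the two sum rules of $\StExpTSSover{\actions}$ give $\stexpsum{\astexpi{1}}{\astexpi{2}} \lt{\aact} \astexpacc$ iff $\astexpi{1} \lt{\aact} \astexpacc$ or $\astexpi{2} \lt{\aact} \astexpacc$, so by the induction hypothesis the set of \derivativesact{\aact} of $\stexpsum{\astexpi{1}}{\astexpi{2}}$ equals $\partderivs{\aact}{\astexpi{1}} \cup \partderivs{\aact}{\astexpi{2}}$. For $\astexp \syntequal \stexpprod{\astexpi{1}}{\astexpi{2}}$, the two product rules produce exactly the derivatives $\stexpprod{\astexpacci{1}}{\astexpi{2}}$ with $\astexpi{1} \lt{\aact} \astexpacci{1}$, and (only in case $\terminates{\astexpi{1}}$ is derivable in $\StExpTSSover{\actions}$) additionally the derivatives $\astexpacci{2}$ with $\astexpi{2} \lt{\aact} \astexpacci{2}$; this matches the case split in the definition of $\partderivs{\aact}{\stexpprod{\astexpi{1}}{\astexpi{2}}}$ on whether $\terminates{\astexpi{1}}$ holds, again combined with the induction hypothesis applied to $\astexpi{1}$ and $\astexpi{2}$. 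For $\astexp \syntequal \stexpit{\astexpi{1}}$, the single iteration rule yields $\stexpit{\astexpi{1}} \lt{\aact} \stexpprod{\astexpacc}{\stexpit{\astexpi{1}}}$ iff $\astexpi{1} \lt{\aact} \astexpacc$, so the set of \derivativesact{\aact} of $\stexpit{\astexpi{1}}$ equals $\descsetexpnormalsize{\stexpprod{\astexpacc}{\stexpit{\astexpi{1}}}}{\astexpacc\in\partderivs{\aact}{\astexpi{1}}}$ by the induction hypothesis, matching the defining clause for $\partderivs{\aact}{\stexpit{\astexpi{1}}}$.

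No real obstacle is expected: the proof is a routine structural induction, and the only point that requires a little care is that the case split in the product clause of $\spartderivs{(\cdot)}{(\cdot)}$ uses the side condition $\terminates{\astexpi{1}}$, which has to be identified with derivability of $\terminates{\astexpi{1}}$ in the termination rules of $\StExpTSSover{\actions}$; this identification is itself a straightforward structural induction on $\astexpi{1}$ (or can be taken as an immediate reading of the termination rules alongside the clauses for $\sterminates$ implicit in the definition of $\spartderivs{(\cdot)}{(\cdot)}$).
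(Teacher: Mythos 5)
Your proof is correct and takes exactly the approach the paper uses: the paper's own proof of this lemma is just the one-liner ``by a straightforward induction on the structure of star expressions,'' and your case-by-case comparison of the partial-derivative clauses with the TSS rules is precisely the routine argument being alluded to.
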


\begin{proof}
  By a straightforward induction on the structure of star expressions.
\end{proof}

\begin{lem}\label{lem:actderivs}
  The action derivatives $\actderivs{\astexp}$ of a star expression $\astexp\in\StExpover{\actions}$
    satisfy the following recursive equations, for all $\aact\in\actions$, $\astexpi{1},\astexpi{2},\astexp\in\StExpover{\actions}$:
  \begin{align*}
    \actderivs{0} 
      & {} \defdby \actderivs{1} 
           \defdby \emptyset \punc{,}
    \\
    \actderivs{a}
       & {} \defdby \setexp{ \pair{a}{1} } \punc{,}
    \\
    \actderivs{e_1 + e_2} 
      & {} \defdby  
             \actderivs{e_1} \cup \actderivs{e_2} \punc{,}
    \\
    \actderivs{e_1 \prod e_2}
      & {} \defdby
           \begin{cases}
             \descsetexpbig{ \pair{a}{e'_1 \prod e_2} }{ \pair{a}{e'_1}\in\actderivs{e_1} }
               & \text{if $\notterminates{e_1}$,}
             \\
             \descsetexpbig{ \pair{a}{e'_1 \prod e_2} }{ \pair{a}{e'_1}\in\actderivs{e_1} }
               \cup
             \actderivs{e_2} 
               & \text{if $\terminates{e_1}$,}  
           \end{cases}  
    \\
    \actderivs{e^*}
      & {} \defdby
             \descsetexpbig{ \pair{a}{e' \prod e^*} }
                           { \pair{a}{e'} \in\actderivs{e} } \punc{.}  
  \end{align*}
\end{lem}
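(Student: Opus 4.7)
The plan is to derive each equation in the lemma directly from two ingredients: the definition
$\actderivs{\astexp} = \descsetexp{\pair{\aact}{\astexpacc}}{\aact\in\actions,\;\astexpacc\in\partderivs{\aact}{\astexp}}$
and the recursive clauses for $\spartderivs{(\cdot)}{(\cdot)}$ given in the preceding definition. No induction is actually needed for the lemma itself because the claimed equations for $\actderivs{\cdot}$ are simply the recursive clauses for $\spartderivs{(\cdot)}{(\cdot)}$ re-expressed as sets of $\pair{\aact}{\astexpacc}$ pairs; the induction on structure lives inside the definition of partial derivatives and does not need to be re-invoked.

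First I would handle the three base cases. For $\actderivs{\stexpzero}$ and $\actderivs{\stexpone}$, since $\partderivs{\aact}{\stexpzero}=\partderivs{\aact}{\stexpone}=\emptyset$ for every $\aact$, the indexed union defining $\actderivs{\cdot}$ is empty. For $\actderivs{\aact}$, the only nonempty $\partderivs{\bact}{\aact}$ is $\partderivs{\aact}{\aact}=\setexp{\stexpone}$, which yields the singleton $\setexp{\pair{\aact}{\stexpone}}$.

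Next I would do the inductive-shape cases. For sums, $\partderivs{\aact}{\astexpi{1}+\astexpi{2}}=\partderivs{\aact}{\astexpi{1}}\cup\partderivs{\aact}{\astexpi{2}}$ immediately gives $\actderivs{\astexpi{1}+\astexpi{2}}=\actderivs{\astexpi{1}}\cup\actderivs{\astexpi{2}}$ after distributing the outer set-builder over the union. For products, I would split on whether $\terminates{\astexpi{1}}$: in the non-terminating case, every $\partderivs{\aact}{\astexpi{1}\prod\astexpi{2}}$ has the form $\setexp{\astexpacci{1}\prod\astexpi{2}\::\:\astexpacci{1}\in\partderivs{\aact}{\astexpi{1}}}$, which translates to $\setexp{\pair{\aact}{\astexpacci{1}\prod\astexpi{2}}\::\:\pair{\aact}{\astexpacci{1}}\in\actderivs{\astexpi{1}}}$; in the terminating case the extra summand $\partderivs{\aact}{\astexpi{2}}$ contributes exactly $\actderivs{\astexpi{2}}$ after the same translation. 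For the star case, $\partderivs{\aact}{\stexpit{\astexp}}=\descsetexp{\astexpacc\prod\stexpit{\astexp}}{\astexpacc\in\partderivs{\aact}{\astexp}}$ becomes $\descsetexp{\pair{\aact}{\astexpacc\prod\stexpit{\astexp}}}{\pair{\aact}{\astexpacc}\in\actderivs{\astexp}}$ after repackaging.

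There is no substantive obstacle here; the lemma is a bookkeeping reformulation of the recursive definition of $\spartderivs{(\cdot)}{(\cdot)}$, and could alternatively be invoked as the defining clauses if one preferred to take $\actderivs{\cdot}$ rather than $\spartderivs{(\cdot)}{(\cdot)}$ as primitive. The only point that merits a line of justification is the use of Lemma~\ref{lem:chartof:vs:partderivs} if one wants, in parallel, to read $\actderivs{\astexp}$ as $\descsetexp{\pair{\aact}{\astexpacc}}{\astexp\lt{\aact}\astexpacc}$; but this is not needed for the equations stated, and the proof can be kept at the level of the definitions alone.
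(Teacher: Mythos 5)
Your proposal is correct and takes the same route as the paper, whose entire proof reads ``By case-wise inspection of the definition of partial derivatives''; you have simply written out that inspection in full, unfolding $\actderivs{\astexp}$ as $\descsetexp{\pair{\aact}{\astexpacc}}{\aact\in\actions,\,\astexpacc\in\partderivs{\aact}{\astexp}}$ and translating each clause of the recursive definition of $\spartderivs{(\cdot)}{(\cdot)}$ into the corresponding set of action--derivative pairs. Your observation that no fresh induction is needed (the recursion lives inside the definition of partial derivatives) is also consistent with how the paper treats this as a bookkeeping reformulation.
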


\begin{proof}
  By case-wise inspection of the definition of partial derivatives.
\end{proof}

\begin{lem}\label{lem:milnersys:thmequiv:milnersysacc}
  $\milnersys \thmequiv \milnersysacc$.
\end{lem}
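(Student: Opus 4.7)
The plan is to prove the two inclusions $\milnersys \isthmsubsumedby \milnersysacc$ and $\milnersysacc \isthmsubsumedby \milnersys$ separately, each by showing that the distinguishing rule of one system is admissible in the other on top of the shared equational base $\milnersysmin$. Since $\milnersys = \thplus{\milnersysmin}{\RSPstar}$ and $\milnersysacc = \thplus{\milnersysmin}{\USPstar}$, this amounts to simulating each of the two rules by the other within $\milnersysmin$-reasoning.

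For $\milnersysacc \isthmsubsumedby \milnersys$ I would show that $\USPstar$ is admissible in $\milnersys$. Given the two premises of a $\USPstar$-instance, namely $\astexpi{1} \formeq \stexpprod{\bstexp}{\astexpi{1}} + \cstexp$ and $\astexpi{2} \formeq \stexpprod{\bstexp}{\astexpi{2}} + \cstexp$ with the side condition $\notterminates{\bstexp}$, I would apply $\RSPstar$ separately to each premise to obtain $\astexpi{1} \milnersyseq \stexpprod{\stexpit{\bstexp}}{\cstexp}$ and $\astexpi{2} \milnersyseq \stexpprod{\stexpit{\bstexp}}{\cstexp}$, and then conclude $\astexpi{1} \formeq \astexpi{2}$ by the transitivity rule $\TRANS$ of equational logic. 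This is the easy direction.

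For $\milnersys \isthmsubsumedby \milnersysacc$ I would show that $\RSPstar$ is admissible in $\milnersysacc$. Given a premise $\astexp \formeq \stexpprod{\bstexp}{\astexp} + \cstexp$ with $\notterminates{\bstexp}$, the key step is to recognise that the canonical expression $\stexpprod{\stexpit{\bstexp}}{\cstexp}$ itself satisfies the same recursion equation $x \formeq \stexpprod{\bstexp}{x} + \cstexp$ already provably in $\milnersysmin$: using associativity of product, $\leftidstexpprod$, $\rdistr$, commutativity of~$+$, and the axiom $\recdefstexpit$, one computes
\begin{align*}
  \stexpprod{\bstexp}{(\stexpprod{\stexpit{\bstexp}}{\cstexp})} + \cstexp
    &\;\milnersysmineq\;
  \stexpprod{(\stexpprod{\bstexp}{\stexpit{\bstexp}})}{\cstexp} + \stexpprod{\stexpone}{\cstexp}
    \;\milnersysmineq\;
  \stexpprod{(\stexpprod{\bstexp}{\stexpit{\bstexp}} + \stexpone)}{\cstexp}
    \;\milnersysmineq\;
  \stexpprod{\stexpit{\bstexp}}{\cstexp} \,.
\end{align*}
Since both $\astexp$ (by hypothesis) and $\stexpprod{\stexpit{\bstexp}}{\cstexp}$ (by the display above) satisfy $x \formeq \stexpprod{\bstexp}{x} + \cstexp$ with $\notterminates{\bstexp}$, a single application of $\USPstar$ yields $\astexp \formeq \stexpprod{\stexpit{\bstexp}}{\cstexp}$, which is precisely the conclusion of the $\RSPstar$-instance that we set out to mimic.

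The main pitfall in executing this plan is aligning the form of $\USPstar$'s premises with that of $\RSPstar$'s premise: for the mimicking to go through, the premises of $\USPstar$ must really be read as $\astexpi{i} \formeq \stexpprod{\bstexp}{\astexpi{i}} + \cstexp$ (i.e.\ the unique-solution version of the $\RSPstar$-equation, with guarded $\bstexp$), and in that case the argument sketched above is routine. Soundness of $\milnersys$ via Prop.~\ref{prop:milnersys:sound} is in fact a consistency check on this reading, since the alternative shape with $\stexpit{\bstexp}$ in place of $\bstexp$ in the premises of $\USPstar$ would already fail to be sound already for $\bstexp \synteq \aact$, $\cstexp \synteq \stexpzero$: in $\milnersys$ one can derive $\stexpit{\aact} \formeq \stexpprod{\stexpit{\aact}}{\stexpit{\aact}}$ via $\RSPstar$ applied to the $\milnersysmin$-provable equation $\stexpit{\aact} \formeq \stexpprod{\aact}{\stexpit{\aact}} + \stexpit{\aact}$, and similarly for $\stexpprod{\stexpit{\aact}}{\stexpzero}$, yet $\stexpit{\aact}$ and $\stexpprod{\stexpit{\aact}}{\stexpzero}$ are not bisimilar. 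Modulo this reading, combining the two directions yields $\milnersys \thmequiv \milnersysacc$.
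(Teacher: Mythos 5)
Your proof is correct and follows essentially the same route as the paper's: $\USPstar$ is simulated in $\milnersys$ by two applications of $\RSPstar$ followed by $\SYMM$ and $\TRANS$, and $\RSPstar$ is simulated in $\milnersysacc$ by deriving $\stexpprod{\stexpit{\bstexp}}{\cstexp} \formeq \stexpsum{\stexpprod{\bstexp}{(\stexpprod{\stexpit{\bstexp}}{\cstexp})}}{\cstexp}$ in $\milnersysmin$ and then applying $\USPstar$ to this equation together with the assumed premise. Your observation that the premises of $\USPstar$ must be read with guard $\bstexp$ rather than the $\stexpit{\bstexp}$ printed in Def.~\ref{def:milnersys} is also the reading that the paper's own proof tacitly adopts, and your non-soundness example confirms it is the only tenable one.
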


\begin{proof}
  It suffices to show that $\RSPstar$ is a derivable rule in $\milnersysacc$, and that $\USPstar$ is a derivable rule in $\milnersys$.
  For the former, we assume the premise 
  $\astexp = \stexpsum{\stexpprod{\stexpit{\bstexp}}{\astexp}}
                      {\cstexp}$
  with $\notterminates{\bstexp}$ of an instance of $\RSPstar$. 
  We have to show that from this premise the conclusion  
  $\astexp = \stexpprod{\stexpit{\bstexp}}{\cstexp}$ of $\RSPstar$ can be obtained by a derivation in $\milnersysacc$.
  By stepwise use of axioms of $\milnersysmin$ we get
  $ \stexpprod{\stexpit{\bstexp}}{\cstexp}
      \milnersysmineq
    \stexpprod{(\stexpsum{\stexpone}
                         {\stexpprod{\bstexp}{\stexpit{\bstexp}}})}{\cstexp}
      \milnersysmineq
    \stexpsum{\stexpprod{\stexpone}{\cstexp}}
             {\stexpprod{(\stexpprod{\bstexp}{\stexpit{\bstexp}})}{\cstexp}}
      \milnersysmineq
    \stexpsum{\stexpprod{\stexpone}{\cstexp}}
             {\stexpprod{\bstexp}{(\stexpprod{\stexpit{\bstexp}}{\cstexp})}}
      \milnersysmineq
    \stexpsum{\cstexp}
             {\stexpprod{\bstexp}{(\stexpprod{\stexpit{\bstexp}}{\cstexp})}}
      \milnersysmineq
    \stexpsum{\stexpprod{\bstexp}{(\stexpprod{\stexpit{\bstexp}}{\cstexp})}}
             {\cstexp} $.
  Hence there is a derivation of
  $ \stexpprod{\stexpit{\bstexp}}{\cstexp}
      = 
    \stexpsum{\stexpprod{\bstexp}{(\stexpprod{\stexpit{\bstexp}}{\cstexp})}}
             {\cstexp} $
  in $\milnersysacc$.
  By applying $\USPstar$ to $\astexp = \stexpprod{\stexpit{\bstexp}}{\cstexp}$ and the formula just reached
  we get a derivation of
  $\astexp = \stexpprod{\stexpit{\bstexp}}{\cstexp}$ in $\milnersysacc$.
  
  For the latter, 
  we consider premise equations
  $\astexpi{1} = \stexpsum{\stexpprod{\stexpit{\bstexp}}{\astexpi{1}}}
                          {\cstexp}$,
  and
  $\astexpi{2} = \stexpsum{\stexpprod{\stexpit{\bstexp}}{\astexpi{2}}}
                          {\cstexp}$
  of an application of $\USPstar$, supposing that $\notterminates{\bstexp}$ holds. 
  By applications of $\RSPstar$ we obtain
  $\astexpi{1} = \stexpprod{\stexpit{\bstexp}}{\cstexp}$,
  and
  $\astexpi{2} = \stexpprod{\stexpit{\bstexp}}{\cstexp}$,
  respectively. 
  By adding an application of $\SYMM$ below the second formula, we obtain
  $\stexpprod{\stexpit{\bstexp}}{\cstexp} = \astexpi{2}$,
  and then by applying $\TRANS$ to $\astexpi{1} = \stexpprod{\stexpit{\bstexp}}{\cstexp}$ and $\stexpprod{\stexpit{\bstexp}}{\cstexp} = \astexpi{2}$
  we obtain the conclusion $\astexpi{1} = \astexpi{2}$ of $\USPstar$. 
\end{proof}

\subsection{Supplements for Section~\ref{LEE}}\label{LEE::app}      
       
\begin{exa}[failure of \LEE]
  We consider again the charts $G_1$ and $G_2$ in Section~\ref{intro::app} 
    that are not expressible by regular expressions modulo bisimilarity:
  \begin{center}\vspace*{-1.5ex}
    \begin{tikzpicture}

%
\matrix[anchor=center,row sep=0cm,column sep=1.5cm,
        every node/.style={draw,very thick,circle,minimum width=2.5pt,fill,inner sep=0pt,outer sep=2pt}] at (0,0) {
  \node[color=chocolate](C-1-0){};  &   \node[color=chocolate](C-1-1){};
  \\
};
\draw[<-,very thick,>=latex,chocolate,shorten <=2pt](C-1-0) -- ++ (180:0.58cm);
\path(C-1-0) ++ (0cm,-0.4cm) node{$X_1$};
\path(C-1-0) ++ (-0.325cm,0.5cm) node{\Large $\iap{G}{1}$};

\path (C-1-1) ++ (0cm,-0.45cm) node{$X_2$};

\draw[thick,chocolate] (C-1-1) circle (0.12cm);
\draw[thick,chocolate] (C-1-0) circle (0.12cm);
\draw[->,bend left,distance=0.65cm,shorten <=2pt,shorten >=2pt] (C-1-0) to node[above]{$a$} (C-1-1); 
\draw[->,bend left,distance=0.65cm,shorten <=2pt,shorten >=2pt] (C-1-1) to node[below]{$b$} (C-1-0); 


%
\matrix[anchor=center,row sep=0.924cm,column sep=0.75cm,
        every node/.style={draw,very thick,circle,minimum width=2.5pt,fill,inner sep=0pt,outer sep=2pt}] at (5,0) {
                   &                  &  \node(C-2-2){};
  \\
  \node(C-2-1){};  &                  &                  
  \\
                   &                  &  \node(C-2-3){};  
  \\
};
\draw[<-,very thick,>=latex,color=chocolate](C-2-1) -- ++ (180:0.5cm);  

\path(C-2-1) ++ (-0.325cm,0.5cm) node{\Large $\iap{G}{2}$};
\path(C-2-1) ++ (-0.2cm,-0.4cm) node{$Y_1$};
\draw[->,bend right,distance=0.65cm] (C-2-1) to node[above]{$\aacti{2}$} (C-2-2); 
\draw[->,bend right,distance=0.65cm] (C-2-1) to node[left]{$\aacti{3}$}  (C-2-3);

\path(C-2-2) ++ (0cm,0.4cm) node{$Y_2$};
\draw[->,bend right,distance=0.65cm]  (C-2-2) to node[above]{$\aacti{1}$} (C-2-1); 
\draw[->,bend left,distance=0.65cm]  (C-2-2) to node[right,xshift=-1pt]{$\aacti{3}$} (C-2-3);

\path(C-2-3) ++ (0cm,-0.4cm) node{$Y_3$};
\draw[->,bend right,distance=0.65cm] (C-2-3) to node[left]{$\aacti{1}$}  ($(C-2-1)+(+0.15cm,-0.05cm)$);
\draw[->,bend left,distance=0.65cm]  (C-2-3) to node[right,xshift=-1.5pt]{$\aacti{2}$} (C-2-2);

\end{tikzpicture}\vspace*{-1.5ex}
  \end{center}
  Neither $G_1$ nor $G_2$ are loop \onecharts:
  $G_1$ violates (L3), and $G_2$ violates (L2), because it facilitates infinite paths that do not return to the start vertex.
  
  Moreover, neither $G_1$ nor $G_2$ contains a loop \subonechart. 
  Therefore the loop elimination procedure stops immediately on $G_1$ and on $G_2$,
    yet unsuccessfully, because both process graphs still have infinite paths. 
  Consequently neither $G_1$ nor $G_2$ satisfies \LEE\ nor \LLEE.
\end{exa}

By $\oneactderivs{\asstexp} 
      \defdby
        \descsetexpbig{ \pair{\aoneact}{\asstexpacc} }{ \asstexp \lt{\aoneact} \asstexpacc }$
we define the set of \emph{action \onederivatives\ of} a star expression, or stacked star expression, over set $\actions$ of actions,
  where the transitions are defined by the TSS in Def.~\ref{def:onechartof}.                

\begin{lem}\label{lem:actonederivs}
  The action \onederivatives\ $\oneactderivs{\asstexp}$ of a stacked star expression $\asstexp$ over actions in $\actions$
    satisfy the following recursive equations, for all $\aact\in\actions$, $\astexp,\astexpi{1},\astexpi{2}\in\StExpover{\actions}$,
    and stacked star expressions~$E_1$ over actions in $\actions$:
  \begin{align*}
    \oneactderivs{0} 
      & {} \defdby \oneactderivs{1} 
           \defdby \emptyset \punc{,}
    \\
    \oneactderivs{\aact}
       & {} \defdby \setexp{ \pair{\aact}{1} } \punc{,}
    \\
    \oneactderivs{e_1 + e_2} 
      & {} \defdby  
             \oneactderivs{e_1} \cup \oneactderivs{e_2} \punc{,}
    \\
    \oneactderivs{E_1 \prod e_2}
      & {} \defdby
           \begin{cases}
             \descsetexpbig{ \pair{\aoneact}{E'_1 \prod e_2} }{ \pair{\aoneact}{E'_1}\in\oneactderivs{E_1} }
               & \text{if $\notterminates{E_1}$,}
             \\
             \descsetexpbig{ \pair{\aoneact}{E'_1 \prod e_2} }{ \pair{\aoneact}{E'_1}\in\oneactderivs{E_1} }
               \cup
             \oneactderivs{E_2} 
               & \text{if $\terminates{E_1}$,}  
           \end{cases}  
    \\
    \oneactderivs{E_1 \stackprod e^*_2}
      & {} \defdby
           \begin{cases}
             \descsetexpbig{ \pair{\aoneact}{E'_1 \prod e^*_2} }{ \pair{\aoneact}{E'_1}\in\oneactderivs{E_1} }
               & \text{if $\notterminates{E_1}$,}
             \\
             \descsetexpbig{ \pair{\aoneact}{E'_1 \prod e^*_2} }{ \pair{\aoneact}{E'_1}\in\oneactderivs{E_1} }
               \cup
             \setexp{ \pair{\sone}{e^*_2} } 
               & \text{if $\terminates{E_1}$,}  
           \end{cases}  
    \\
    \oneactderivs{e^*}
      & {} \defdby
             \descsetexpbig{ \pair{\aact}{E' \stackprod e^*} }
                           { \pair{\aact}{E'} \in\oneactderivs{e} } \punc{.}  
  \end{align*}
\end{lem}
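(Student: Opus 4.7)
The plan is to prove Lemma~\ref{lem:actonederivs} by a case-wise inspection of the transition system specification in Def.~\ref{def:onechartof}, which defines the labeled transition relation used in the definition of the $\onechart$ interpretation and hence of $\oneactderivs{\cdot}$. Since $\oneactderivs{\asstexp}$ is defined directly as $\descsetexpnormalsize{\pair{\aoneact}{\asstexpacc}}{\asstexp \lt{\aoneact} \asstexpacc}$, each recursive equation reduces to establishing that the TSS proves exactly the transitions listed on the right-hand side. This mirrors Lemma~\ref{lem:actderivs}, whose (omitted) proof proceeds by straightforward structural induction.

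First I would dispatch the base cases. For $\stexpzero$ and $\stexpone$ no TSS rule in Def.~\ref{def:onechartof} has a conclusion of the form $\stexpzero \lt{\aoneact} \cdot$ or $\stexpone \lt{\aoneact} \cdot$, so both sets of action-$\sone$-derivatives are empty. For a letter $\aact$, only the axiom $\aact \lti{\aact}{\bodylabcol} \stexpone$ fires, yielding exactly $\setexp{\pair{\aact}{\stexpone}}$. For $\stexpsum{\astexpi{1}}{\astexpi{2}}$, the only applicable rules are the two sum rules (one per summand), each of which lifts a transition of $\astexpi{i}$ to $\stexpsum{\astexpi{1}}{\astexpi{2}}$ unchanged, giving $\oneactderivs{\astexpi{1}} \cup \oneactderivs{\astexpi{2}}$.

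Next I would handle the two product cases, $\stexpprod{E_1}{\astexpi{2}}$ and the stacked $\stexpstackprod{E_1}{\stexpit{\astexpi{2}}}$, in parallel. In both cases the left-product rule lifts every transition $E_1 \lt{\aoneact} E'_1$ to a transition with target $\stexpprod{E'_1}{\astexpi{2}}$, respectively $\stexpprod{E'_1}{\stexpit{\astexpi{2}}}$ (note the target uses ordinary $\cdot$; I would double check that against the TSS displayed in Def.~\ref{def:onechartof}). The difference surfaces when $\terminates{E_1}$: the plain product gets the additional derivatives of $\astexpi{2}$ via the $\terminates{\astexpi{1}}$ rule for $\stexpprod{\astexpi{1}}{\astexpi{2}}$, while the stacked product fires the specific axiom $\stexpstackprod{\astexpi{1}}{\stexpit{\astexpi{2}}} \lti{\sone}{\bodylabcol} \stexpit{\astexpi{2}}$, yielding the single extra pair $\pair{\sone}{\stexpit{\astexpi{2}}}$. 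Since $\astexpi{2}$ in the stacked case has no other transitions available at the top (the stacked rule is the only one with $\stexpstackprod{E_1}{\stexpit{\astexpi{2}}}$ as source when $\terminates{E_1}$), no further derivatives appear, confirming the RHS.

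Finally, for $\stexpit{\astexp}$ the only applicable rules are the two star rules of Def.~\ref{def:onechartof}, each requiring a premise $\astexp \lt{\aact}{\cdot}$ and producing the target $\stexpstackprod{\astexpacc}{\stexpit{\astexp}}$ (with a label that is either $\loopnsteplab{\sth{\stexpit{\astexp}}}$ or $\bodylab$ depending on $\normedplus{\astexp}$, which does not affect the action component). This yields exactly $\descsetexpnormalsize{\pair{\aact}{\stexpstackprod{E'}{\stexpit{\astexp}}}}{\pair{\aact}{E'} \in \oneactderivs{\astexp}}$. The small obstacle I anticipate is bookkeeping: making sure no rule is overlooked and that stacked versus non-stacked targets are tracked correctly, especially at the boundary between the $\stexpstackprod{E_1}{\stexpit{\astexpi{2}}}$ case and the $\stexpit{\astexp}$ case, where the stacked product appears as a new target. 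Otherwise the proof is a routine structural induction that parallels Lemma~\ref{lem:actderivs} with the added $\stackprod$ clauses.
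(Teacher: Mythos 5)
Your proposal is correct and takes essentially the same route as the paper, whose entire proof of this lemma reads ``By case-wise inspection of the definition of the TSS in Def.~\ref{def:onechartof}''; you simply spell out the cases that inspection covers. Your aside about the target of the $\sstexpstackprod$\nb-clause is well taken: the TSS rule produces stacked targets $\stexpstackprod{\asstexpacci{1}}{\stexpit{\astexpi{2}}}$, consistent with how the lemma is actually used in the proof of Lem.~\ref{lem:lem:mimic:RSPstar}.
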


\begin{proof}
  By case-wise inspection of the definition of the TSS in Def.~\ref{def:onechartof}.
\end{proof}

For the proof of Lem.~\ref{lem:FT:onechart-int} we will need the following auxiliary statement.

\begin{lem}\label{lem:terminates:2:notterminates}
  If $\terminates{e}$ for a star expression $e\in\StExpover{\actions}$,
    then there is a star expression $f\in\StExpover{\actions}$
    with $\notterminates{f}$, $e \milnersysmineq 1 + f$, $\sth{f} = \sth{e}$, 
    and $\compfuns{((\sidfun\times\sproj)}{\soneactderivs)}{f} = \compfuns{((\sidfun\times\sproj)}{\soneactderivs)}{e}$.
\end{lem}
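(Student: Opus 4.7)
The proof will proceed by structural induction on $e \in \StExpover{\actions}$, restricted to those cases where $\terminates{e}$ holds. The cases $e = 0$ and $e = a$ (for $a \in \actions$) are vacuous. For $e = 1$, take $f \defdby 0$: the axioms $\neutralstexpsum$ and $\commstexpsum$ give $1 \milnersysmineq 1 + 0$, while the remaining properties are immediate from $\sth{0} = 0 = \sth{1}$ and $\oneactderivs{0} = \oneactderivs{1} = \emptyset$ (Lem.~\ref{lem:actonederivs}).

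For $e = e_1 + e_2$, I would split on which summand terminates. If only $e_1$ does, the IH yields $e_1 \milnersysmineq 1 + f_1$ with $\notterminates{f_1}$, and I set $f \defdby f_1 + e_2$; $\assocstexpsum$ then yields $e \milnersysmineq 1 + f$. The dual case is symmetric. When both summands terminate, set $f \defdby f_1 + f_2$ and collapse the duplicate $1$ using $\idempotstexpsum$. For $e = e_1 \cdot e_2$, termination of $e$ forces both factors to terminate; the IH supplies $f_1, f_2$, and I set $f \defdby f_2 + f_1 \cdot e_2$, justified by $e_1 \cdot e_2 \milnersysmineq (1+f_1) \cdot e_2 \milnersysmineq e_2 + f_1 \cdot e_2 \milnersysmineq 1 + f_2 + f_1 \cdot e_2$ via $\rdistr$, $\leftidstexpprod$, and the IH on $e_2$. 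In each of these cases the star-height identity is routine, and the projected action-derivative equality follows from Lem.~\ref{lem:actonederivs}, whose case split on termination of the leftmost factor tracks the shape of the constructed $f$.

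The main obstacle is the case $e = e_1^*$ with $\terminates{e_1}$. If $\notterminates{e_1}$, the choice $f \defdby e_1 \cdot e_1^*$ works directly via $\recdefstexpit$, and the projected action derivatives of $e_1 \cdot e_1^*$ and of $e_1^*$ coincide literally. When $\terminates{e_1}$, the IH gives $f_1$ with $e_1 \milnersysmineq 1 + f_1$ and $\notterminates{f_1}$. The naive attempt $f = f_1 \cdot f_1^*$ delivers the provable equation $e_1^* \milnersysmineq 1 + f$ but violates the action-derivative requirement: its derivatives end in $f_1^*$, whereas the derivatives of $e_1^*$ end in $e_1^*$, and the specification demands equality as sets (not merely modulo $\milnersysmineq$). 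I resolve this by taking $f \defdby f_1 \cdot e_1^*$ instead: the tail $e_1^*$ is preserved syntactically, so the action-derivative identity reduces directly to the IH equality $(\sidfun \times \sproj) \circ \soneactderivs$ on $e_1$ and $f_1$ (using Lem.~\ref{lem:actonederivs} for the product with $\notterminates{f_1}$, and for the star).

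Provability of $e_1^* \milnersysmineq 1 + f_1 \cdot e_1^*$ is then obtained by the chain
\[
  e_1^*
    \milnersysmineq (1 + e_1)^*
    \milnersysmineq (1 + (1 + f_1))^*
    \milnersysmineq (1 + f_1)^*
    \milnersysmineq f_1^*
    \milnersysmineq 1 + f_1 \cdot f_1^*
    \milnersysmineq 1 + f_1 \cdot e_1^* \text{,}
\]
using $\termstexpit$ forward (step 1) and then backward (step 4), $\CXT$ with the IH (step 2), $\CXT$ with $\assocstexpsum$ and $\idempotstexpsum$ to pass from $(1 + (1+f_1))^*$ to $(1+f_1)^*$ (step 3), $\recdefstexpit$ (step 5), and finally $\CXT$ with the reverse provable chain $f_1^* \milnersysmineq e_1^*$ to restore the original tail (step 6). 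Star-height preservation is immediate: $\sth{f_1 \cdot e_1^*} = \max(\sth{f_1}, 1 + \sth{e_1}) = 1 + \sth{e_1} = \sth{e_1^*}$.
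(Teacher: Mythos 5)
Your proof is correct and follows exactly the route the paper takes: the paper's own proof is only the one-line remark that the lemma holds ``by induction on the structure of $e$, in which all axioms of $\milnersysmin$ are used,'' and your argument is a faithful, fully worked-out version of that structural induction. In particular you correctly identify and resolve the one genuinely delicate point — in the case $e = e_1^*$ with $\terminates{e_1}$ the witness must be $f_1 \cdot e_1^*$ rather than $f_1 \cdot f_1^*$ so that the syntactic tails of the action $\sone$-derivatives are preserved, exactly as the lemma's set-equality condition on $\compfuns{(\sidfun\times\sproj)}{\soneactderivs}$ requires.
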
     
        
\begin{proof}
  By a proof by induction on structure of $e$,
    in which all axioms of $\milnersysmin$ are used.
\end{proof}

\begin{repeatedlem}[= Lem.~\ref{lem:FT:onechart-int}]
  $ \proj{\asstexp}
     \milnersysmineq
   \terminatesconstof{\onechartof{\asstexp}}{\asstexp}  
     +  
   \sum_{i=1}^{n} \stexpprod{\aoneacti{i}}{\proj{\asstexpacci{i}}} $,
  given a list representation\vspace*{-1pt}
  $\transitionsinfrom{\onechartof{\asstexp}}{\bvert}
     =
   \descsetexpbig{ \asstexp \lt{\aoneacti{i}} \asstexpacci{i} }{ i\in\setexp{1,\ldots,n} }$
  of the transitions from $\asstexp$ in $\onechartof{E}$.
\end{repeatedlem}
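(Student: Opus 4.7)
\medskip

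\noindent\textbf{Proof proposal.}
The plan is to proceed by structural induction on the stacked star expression $\asstexp$, using the recursive characterization of the action $\sone$-derivatives $\oneactderivs{\asstexp}$ from Lemma~\ref{lem:actonederivs}, together with the axioms of $\milnersysmin$. Since the right-hand side of the claimed equation is independent of the chosen list representation of $\transitionsinfrom{\onechartof{\asstexp}}{\asstexp}$ modulo \ACI, and since $\ACI \subsystem \milnersysmin$, we may reason with sums indexed by the set $\oneactderivs{\asstexp}$ rather than with concrete list representations.

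First I would dispatch the base cases $\asstexp \synteq \stexpzero$, $\asstexp \synteq \stexpone$, and $\asstexp \synteq \aact$: in each case $\oneactderivs{\asstexp}$ and $\terminatesconstof{\onechartof{\asstexp}}{\asstexp}$ are explicitly given by Lemma~\ref{lem:actonederivs}, and the claim reduces to an instance of $\neutralstexpsum$ or $\rightidstexpprod$ applied to $\aact \synteq \aact \prod \stexpone$. For the sum case $\asstexp \synteq \stexpsum{\astexpi{1}}{\astexpi{2}}$, the induction hypothesis applied to $\astexpi{1}$ and $\astexpi{2}$ yields two provable equations which, combined with $\oneactderivs{\stexpsum{\astexpi{1}}{\astexpi{2}}} = \oneactderivs{\astexpi{1}} \cup \oneactderivs{\astexpi{2}}$ and the termination clause for sums, can be assembled by uses of ($\assocstexpsum$), ($\commstexpsum$), ($\idempotstexpsum$), and ($\neutralstexpsum$).

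The interesting cases are the product $\asstexp \synteq \stexpprod{E_1}{\astexpi{2}}$, the stacked product $\asstexp \synteq \stexpstackprod{E_1}{\stexpit{\astexpi{2}}}$, and the star $\asstexp \synteq \stexpit{\astexp}$. For the product, I would distinguish whether $\terminates{E_1}$ or $\notterminates{E_1}$. If $\notterminates{E_1}$, the induction hypothesis for $E_1$ gives $\proj{E_1} \milnersysmineq \sum_{\pair{\aoneact}{E'_1}\in\oneactderivs{E_1}} \aoneact \prod \proj{E'_1}$; multiplying by $\astexpi{2}$ on the right and applying $\rdistr$ together with $\assocstexpprod$ rewrites this into $\sum \aoneact \prod \proj{E'_1 \prod \astexpi{2}}$, which matches the claim since $\terminatesconstof{}{\asstexp} \synteq \stexpzero$ in this case. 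If $\terminates{E_1}$, I would use Lemma~\ref{lem:terminates:2:notterminates} to split $E_1$ as $\stexpsum{\stexpone}{F_1}$ with $\notterminates{F_1}$ and unchanged action $\sone$-derivatives (up to projection), then apply $\rdistr$ and $\leftidstexpprod$ to peel off the $\astexpi{2}$-summand corresponding to $\oneactderivs{\astexpi{2}}$ together with the correct termination constant (using termination of $\stexpprod{E_1}{\astexpi{2}}$ iff termination of both). The stacked-product case is handled analogously, noting that $\proj{\stexpstackprod{E_1}{\stexpit{\astexpi{2}}}} \synteq \proj{E_1} \prod \stexpit{\astexpi{2}}$, and that in the $\terminates{E_1}$ subcase the extra summand $\sone \prod \stexpit{\astexpi{2}}$ from Lemma~\ref{lem:actonederivs} arises via $\leftidstexpprod$.

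The hard part, and the main obstacle, will be the star case $\asstexp \synteq \stexpit{\astexp}$: here I plan to start from $\recdefstexpit$ to rewrite $\stexpit{\astexp} \milnersysmineq \stexpsum{\stexpone}{\stexpprod{\astexp}{\stexpit{\astexp}}}$, and then use the induction hypothesis on $\astexp$ together with right-distributivity, associativity, and the fact that $\proj{\stexpstackprod{E'}{\stexpit{\astexp}}} \synteq \proj{E'} \prod \stexpit{\astexp}$ to identify the resulting sum with $\sum_{\pair{\aact}{E'}\in\oneactderivs{\astexp}} \aact \prod \proj{\stexpstackprod{E'}{\stexpit{\astexp}}}$. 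Care must be taken with the termination handling: the termination constant of $\stexpit{\astexp}$ is $\stexpone$, matching the $\stexpone$ summand produced by $\recdefstexpit$; and if $\terminates{\astexp}$ the inductive expansion of $\astexp$ produces an extra $\stexpone$ summand which must be absorbed by $\idempotstexpsum$ after first isolating it via Lemma~\ref{lem:terminates:2:notterminates}. Keeping the bookkeeping consistent between termination clauses, the $\stackprod/\prod$ distinction under projection, and the summation structure is where the proof needs the most attention; once those are reconciled, the remaining steps are pure $\milnersysmin$-calculation.
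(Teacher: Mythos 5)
Your overall architecture (induction over stacked star expressions, reasoning modulo \ACI\ via the derivative sets of Lem.~\ref{lem:actonederivs}, straightforward base, sum, product and stacked-product cases) matches the paper, and you correctly identify the star case as the crux. But your resolution of the star case has a genuine gap. After unfolding $\stexpit{\astexp} \milnersysmineq \stexpsum{\stexpone}{\stexpprod{\astexp}{\stexpit{\astexp}}}$ by $(\recdefstexpit)$ and expanding $\astexp$ by the induction hypothesis in the subcase $\terminates{\astexp}$, the troublesome summand that appears after distributing is not ``an extra $\stexpone$'' but $\stexpprod{\stexpone}{\stexpit{\astexp}} \milnersysmineq \stexpit{\astexp}$ --- a full copy of the very expression you are characterizing. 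You end up with $\stexpit{\astexp} \milnersysmineq \stexpsum{\stexpone}{\stexpsum{\stexpit{\astexp}}{X}}$, and $(\idempotstexpsum)$ cannot turn this into the required $\stexpit{\astexp} \milnersysmineq \stexpsum{\stexpone}{X}$; inferring the latter from the former is essentially a fixed-point step, which is exactly what $\milnersysmin$ (lacking $\RSPstar$) cannot do.

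The paper's proof avoids this by applying Lem.~\ref{lem:terminates:2:notterminates} \emph{before} unfolding: it writes $\astexp \milnersysmineq \stexpsum{\stexpone}{\bstexp}$ with $\notterminates{\bstexp}$ and the same projected action \onederivatives, uses the axiom $(\termstexpit)$ --- which you never invoke, and which exists in Milner's system precisely for this purpose --- to get $\stexpit{\astexp} \milnersysmineq \stexpit{(\stexpsum{\stexpone}{\bstexp})} \milnersysmineq \stexpit{\bstexp}$, and only then applies $(\recdefstexpit)$ to $\stexpit{\bstexp}$; since the termination constant of $\bstexp$ is $\stexpzero$, no spurious $\stexpprod{\stexpone}{\stexpit{\astexp}}$ summand arises. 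Note that this requires the induction hypothesis for $\bstexp$, which is not a structural subterm of $\stexpit{\astexp}$; the induction must therefore be justified by a measure such as star height (the paper uses $\sth{\bstexp} = \sth{\astexp} < \sth{\stexpit{\astexp}}$), not by pure subterm structure as in your plan. Your unnecessary appeal to Lem.~\ref{lem:terminates:2:notterminates} in the ordinary product case is harmless (there the $\stexpprod{\stexpone}{\astexpi{2}}$ summand is absorbed by the expansion of $\astexpi{2}$ itself), but the star case needs to be repaired as above.
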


\begin{proof}
  The lemma can be established by induction on the structure of stacked star expressions~$E$.
  All cases of stacked star expressions can be dealt with in a quite straightforward manner, 
    except for the case of star expressions with an outermost iteration. 
  There, an appeal to Lem.~\ref{lem:terminates:2:notterminates} is crucial.
  We detail this case below. 
 
  \smallskip
  Suppose that $E \synteq e^*$ for some star expression $e\in\StExpover{\actions}$ (without occurrences of stacked product $\sstexpstackprod$).
  For showing the representation of $E$ as stated by the lemma, we assume 
    that the transitions from $e$ in $\onechartof{e}$ as defined in Def.~\ref{def:onechartof} are as follows:
  \begin{equation}\label{eq:1:lem:FT:onechart-int}
    \transitionsinfrom{\onechartof{e}}{e}
      = \descsetexpbig{ e \lt{\aacti{i}} E'_i }{ i = 1, \ldots n } \punc{,}
  \end{equation}
  for some stacked star expression $E'_1,\ldots,E'_n$, which are \onederivatives\ of $e$. 
  Note that according to the TSS in Def.~\ref{def:onechartof} only proper transitions (those with proper action labels in $\actions$) can depart from $e$.
  Then it follows, again from the TSS in Def.~\ref{def:onechartof} that:
  \begin{equation}\label{eq:3:lem:FT:onechart-int}
    \transitionsinfrom{\onechartof{E}}{E} =
    \transitionsinfrom{\onechartof{e^*}}{e^*}
       = \descsetexpbig{ e^* \lt{\aacti{i}} E'_i \stackprod e^* }{ i = 1, \ldots n } \punc{.}
  \end{equation} 
  We assume now that $\terminates{e}$ holds. (We will see that if $\notterminates{e}$ holds, the argumentation below becomes easier).
  Then by Lem.~\ref{lem:terminates:2:notterminates} there is
    a star expression $f\in\StExpover{\actions}$ 
    with $\alert{\notterminates{f}}$, and such that
  $ f \milnersysmineq e$, $\sth{f} = \sth{e}$, and $\oneactderivs{f} = \oneactderivs{e}$ hold.
  From the latter it follows with \eqref{eq:1:lem:FT:onechart-int}: 
  \begin{align}
    \transitionsinfrom{\onechartof{f}}{f}
      & {} = \descsetexpbig{ f \lt{\aacti{i}} F'_i }{ i = 1, \ldots n } \punc{,}
        \label{eq:2:lem:FT:onechart-int}
    \\    
    \proj{F'_i}
      & {} = \proj{E'_i} \qquad \text{(for all $i = 1, \ldots n$)} \punc{,}
        \label{eq:4:lem:FT:onechart-int}
  \end{align}
  for some stacked star expression $F'_1,\ldots,F'_n$, which are \onederivatives\ of $f$. 
  Note again that only proper transitions can depart from the star expression $f$ according to Def.~\ref{def:onechartof}.
  On the basis of these assumptions we can now argue as follows: 
  \begin{alignat*}{2}
    \proj{E}
      & \;\,\parbox[t]{\widthof{$\eqin{\milnersysmin}$}}{$\synteq$}\:
        \proj{e^*}
        & & \quad\text{(in this case)} 
      \displaybreak[0]\\ 
      & \;\,\parbox[t]{\widthof{$\eqin{\milnersysmin}$}}{$\synteq$}\:
        e^*
        & & \quad\text{(since $e^*$ does not contain $\sstackprod$)}  
      \displaybreak[0]\\ 
      & \;\,\parbox[t]{\widthof{$\eqin{\milnersysmin}$}}{$\milnersysmineq$}\:
        (1 + f)^*
        & & \quad\text{(by the choice of $f$ with $f \milnersysmineq e$)}   
      \displaybreak[0]\\ 
      & \;\,\parbox[t]{\widthof{$\eqin{\milnersysmin}$}}{$\milnersysmineq$}\:
        (f)^*
        & & \quad\text{(by axiom ($\termstexpit$))} 
      \displaybreak[0]\\ 
      & \;\,\parbox[t]{\widthof{$\eqin{\milnersysmin}$}}{$\milnersysmineq$}\:
        1 + f \prod f^*
        & & \quad\text{(by axiom ($\recdefstexpit$))}  
      \displaybreak[0]\\ 
      & \;\,\parbox[t]{\widthof{$\eqin{\milnersysmin}$}}{$\milnersysmineq$}\:
        1 + f \prod e^*
        & & \quad\text{(by the choice of $f$ with $f \milnersysmineq e$)}  
      \displaybreak[0]\\ 
      & \;\,\parbox[t]{\widthof{$\eqin{\milnersysmin}$}}{$\milnersysmineq$}\:
        1 + \bigl( \terminatesconstof{\chartof{f}}{f}
                     +
                   \sum_{i=1}^n
                     \aacti{i} \prod \proj{ F'_i } 
            \bigr) \prod e^*         
        & & \quad\parbox{\widthof{($\sth{f} = \sth{e} < \sth{e} + 1 = \sth{e^*} = \sth{E} $,)}}
                        {(by the induction hypothesis, due to
                         \\\phantom{(}%
                         $\sth{f} = \sth{e} < \sth{e} + 1 = \sth{e^*} = \sth{E}$,
                         \\\phantom{(}
                         and in view of \eqref{eq:2:lem:FT:onechart-int})}  
      \displaybreak[0]\\ 
      & \;\,\parbox[t]{\widthof{$\eqin{\milnersysmin}$}}{$\synteq$}\:
        1 + \bigl( 0
                     +
                   \sum_{i=1}^n
                     \aacti{i} \prod \proj{ E'_i } 
            \bigr) \prod e^*         
        & & \quad\parbox{\widthof{(by $\terminatesconstof{\onechartof{f}}{f} \synteq 0$ due to $\alert{\notterminates{f}}$}}
                        {(by $\terminatesconstof{\onechartof{f}}{f} \synteq 0$ due to $\alert{\notterminates{f}}$
                         \\
                         and by using \eqref{eq:4:lem:FT:onechart-int})}
      \displaybreak[0]\\ 
      & \;\,\parbox[t]{\widthof{$\eqin{\milnersysmin}$}}{$\milnersysmineq$}\:
        \terminatesconstof{\onechartof{e^*}}{e^*}
          + \sum_{i=1}^n
              \aacti{i} \prod ( \proj{ E'_i } \prod e^* ) 
        & & \quad\parbox{\widthof{(by axioms ($\neutralstexpsum$), ($\rdistr$),}}
                        {(by axioms ($\neutralstexpsum$), ($\rdistr$),
                         \\
                          ($\assocstexpprod$), $\terminatesconstof{\onechartof{e^*}}{e^*}\synteq 1$ due to $\terminates{(e^*)}$}  
      \displaybreak[0]\\ 
      & \;\,\parbox[t]{\widthof{$\eqin{\milnersysmin}$}}{$\synteq$}\:
        \terminatesconstof{\onechartof{E}}{E}
          + \sum_{i=1}^n
              \aacti{i} \prod  \proj{ E'_i \stackprod e^* }  
        & & \quad\text{(by the definition of the projection $\sproj$).}  
  \end{alignat*}
  In view of \eqref{eq:3:lem:FT:onechart-int}, 
  this chain of \provablein{\milnersysmin} equalities verifies the statement in the lemma
    in this case $E = e^*$ with $\terminates{e}$. 
  If $\notterminates{e  }$ holds, then the detour via $f$ is not necessary,
    and the argument is much simpler. The statement of the lemma holds true then as well. 
\end{proof}

\begin{repeatedlem}[= Lem.~\ref{lem:onechart-int:milnersysmin:solvable}]
  For every star expression $\astexp\in\StExpover{\actions}$ 
    with \onechart\ interpretation $\onechartof{\astexp} = \tuple{\vertsof{\astexp},\actions,\sone,\astexp,\transs,\exts}$
  the \starexpression\ function $\sasol \funin \vertsof{\astexp} \to \StExpover{\actions}$, $\asstexp \mapsto \proj{\asstexp}$
    is a \provablyin{\milnersysmin} solution of $\onechartof{\astexp}$ with principal value $\astexp$.  
\end{repeatedlem}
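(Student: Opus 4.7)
The plan is to derive this lemma as an almost immediate corollary of Lemma~\ref{lem:FT:onechart-int}. Unpacking Definition~\ref{def:provable:solution}, I need to show two things: (i) the value of $\sasol$ at the start vertex $\astexp$ of $\onechartof{\astexp}$ equals $\astexp$; and (ii) at every vertex $\asstexp \in \vertsof{\astexp}$, given any list representation $\transitionsinfrom{\onechartof{\astexp}}{\asstexp} = \descsetexpnormalsize{ \asstexp \lt{\aoneacti{i}} \asstexpacci{i} }{ i \in \setexp{1,\ldots,n} }$, the provability statement
\begin{equation*}
  \proj{\asstexp} \;\milnersysmineq\; \terminatesconstof{\onechartof{\astexp}}{\asstexp} \;+\; \sum_{i=1}^{n} \aoneacti{i} \prod \proj{\asstexpacci{i}}
\end{equation*}
holds.

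For (i), I observe that the start vertex of $\onechartof{\astexp}$ is $\astexp$ itself (by Definition~\ref{def:onechartof}), and $\astexp$ is a star expression containing no occurrence of the stacked product $\sstackprod$; hence $\sasol(\astexp) = \proj{\astexp} \synteq \astexp$, as required.

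For (ii), the key point is that the termination constant $\terminatesconstof{\onechartof{\astexp}}{\asstexp}$ depends only on whether $\asstexp$ permits immediate termination, which is a local property of the expression $\asstexp$ and coincides with $\terminatesconstof{\onechartof{\asstexp}}{\asstexp}$. Moreover, since $\onechartof{\astexp}$ is generated as the $\astexp$-rooted sub-LTS according to the TSS $\StExpTSSover{\actions}$ from Definition~\ref{def:onechartof}, the outgoing transitions from $\asstexp$ in $\onechartof{\astexp}$ are exactly the transitions derivable for $\asstexp$ in that TSS, and these coincide with the outgoing transitions from $\asstexp$ in $\onechartof{\asstexp}$. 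Therefore Lemma~\ref{lem:FT:onechart-int}, instantiated at $\asstexp$ with the chosen list representation, yields exactly the required \provablein{\milnersysmin}-equality.

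There is no real obstacle here beyond carefully matching the definitions: the heavy lifting --- propagating the axioms ($\recdefstexpit$), ($\termstexpit$), ($\rdistr$) and handling the termination/non-termination split via Lemma~\ref{lem:terminates:2:notterminates} --- has already been discharged inside the proof of Lemma~\ref{lem:FT:onechart-int}. The only subtlety to spell out is the independence of the construction from the choice of list representation of the transition set, which follows because $\ACI \subsystem \milnersysmin$ makes $\sstexpsum$ associative, commutative and idempotent provably in $\milnersysmin$, so any two list representations of the same finite set of transitions give \provablyin{\milnersysmin}-equal right-hand sides.
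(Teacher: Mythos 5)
Your proposal is correct and follows exactly the route the paper takes: its own proof of this lemma is the one-liner ``Immediate consequence of Lem.~\ref{lem:FT:onechart-int}'', and your write-up just spells out the routine verification (principal value at the start vertex, locality of the transition relation and of the termination constant, and independence of the list representation via $\ACI \subsystem \milnersysmin$) that the paper leaves implicit.
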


\begin{proof}
  Immediate consequence of Lem.~\ref{lem:FT:onechart-int}.
\end{proof}

\subsection{Supplements for Section~\ref{coindmilnersys}}\label{coindmilnersys::app}        

\begin{rem}[the choice for proof-graphs]
  By using `proof-graphs' for defining a coinductive version of Milner's system
  we consciously depart from classical prooftrees.
  While a formulation as a theorem-equivalent natural-deduction system
    similar to that of Brandt and Henglein \cite{bran:heng:1998} for recursive type equality and subtyping
    is possible as well, prooftrees in such a system correspond to graphs with only vertical sharing (defined by $\mu$\nb-terms \cite{blom:2001}).
  Consequently, natural-deduction system cannot have a direct correspondence with all general sharing graphs.
  The property \LEE, which is important for our purpose, does, however, not restrict graphs to those with only vertical sharing;
    for instance, see the \LLEEonechart\ in the example below.
  In order to link the new systems as closely as possible to process graphs 
    whose shapes are only restricted by the property \LEE,
  we therefore decided to use proofs of graph form.  
\end{rem}

\begin{exanonum}[= Ex.~\ref{ex:2:LLEEcoindproof}, solution condition at $\averti{2}$]  
  The statement 
  $\stexpit{(\stexpprod{\stexpit{\aact}}
                       {\stexpit{\bact}})}
     \LLEEcoindproofeqin{\milnersysmin}
   \stexpit{(\stexpsum{\aact}{\bact})}$ 
  can be established by the following \LLEEwitnessed\ coinductive proof $\aCoProof = \pair{\aonechart}{\saeqfun}$ over $\milnersysmin$
    that is witnessed by the \LLEEwitness~$\aonecharthat$ as indicated
    (vertex names are in framed boxes):\vspace*{-3.75ex}
  \begin{flushleft}
    \hspace*{-0.8em}
    \begin{tikzpicture}\renewcommand{\stexpprod}[2]{{#1}\hspace*{1pt}{\sstexpprod}\hspace*{1pt}{#2}}
  
\matrix[anchor=center,row sep=0.4cm,column sep=-0.3cm] {
  \node(v11){$ \stexpprod{(\stexpprod{(\stexpprod{\stexpone}{\stexpit{\aact}})}{\stexpit{\bact}})}
                         {\stexpit{(\stexpprod{\stexpit{\aact}}{\stexpit{\bact}})}} 
                 \formeq
               \stexpprod{\stexpone}{\stexpit{(\stexpsum{\aact}{\bact})}} $};
    & & \node(v21){$ \stexpprod{(\stexpprod{\stexpone}{\stexpit{\bact}})}
                               {\stexpit{(\stexpprod{\stexpit{\aact}}{\stexpit{\bact}})}} 
                       \formeq
                     \stexpprod{\stexpone}{\stexpit{(\stexpsum{\aact}{\bact})}} $};
  \\                 
  \node(v1){$ \stexpprod{(\stexpprod{\stexpit{\aact}}{\stexpit{\bact}})}{\stexpit{(\stexpprod{\stexpit{\aact}}{\stexpit{\bact}})}} 
                \formeq
              \stexpit{(\stexpsum{\aact}{\bact})} $};
    & & \node(v2){$ \stexpprod{\stexpit{\bact}}{\stexpit{(\stexpprod{\stexpit{\aact}}{\stexpit{\bact}})}} 
                      \formeq
                    \stexpit{(\stexpsum{\aact}{\bact})} $};
  \\
    & \node(v){$\stexpit{(\stexpprod{\stexpit{\aact}}{\stexpit{\bact}})}
                  \formeq
                \stexpit{(\stexpsum{\aact}{\bact})}$};
  \\
  };   
 
\path (v11) ++ (0cm,0.5cm) node{\tightfbox{$\averti{11}$}};
\draw[->,thick,densely dotted,out=180,in=180,distance=0.65cm](v11) to (v1);

\path (v1) ++ (-0.85cm,-0.5cm) node{\tightfbox{$\averti{1}$}}; 
\draw[->,thick,darkcyan,shorten >= -2pt]
  (v1) to node[left,pos=0.35]{\small $\black{\aact}$} node[right,pos=0.35]{\small $\loopnsteplab{1}$} (v11);
\draw[->,thick,densely dotted,out=-90,in=180,distance=0.5cm,shorten >= 0.2cm](v1) to (v);
\draw[->,shorten <= 0pt,shorten >= 0pt,out=12.5,in=185] (v1) to node[above]{\small $\bact$} (v21); 

\path (v21) ++ (0cm,0.5cm) node{\tightfbox{$\averti{21}$}};
\draw[->,thick,densely dotted,out=0,in=0,distance=0.65cm](v21) to (v2);  
  
\path (v2) ++ (0.85cm,-0.5cm) node{\tightfbox{$\averti{2}$}}; 
\draw[->,thick,darkcyan,shorten >= -2pt]
  (v2) to node[right,pos=0.35]{\small $\black{\bact}$} node[left,pos=0.35]{\small $\loopnsteplab{1}$} (v21);
\draw[->,thick,densely dotted,out=-90,in=0,distance=0.5cm,shorten >= 0.2cm](v2) to (v);

\path (v) ++ (1.5cm,0.5cm) node{\tightfbox{$\start$}}; 
\draw[thick,chocolate,double] (v) ellipse (1.7cm and 0.4cm); 
\draw[<-,very thick,>=latex,chocolate,shorten <= 5pt](v) -- ++ (90:0.85cm); 
\draw[->,thick,darkcyan,out=110,in=-10,shorten <= 0.2cm] 
  (v) to node[below,pos=0.36]{\small $\black{\aact}$} node[above,pos=0.26,xshift=1pt,yshift=1pt]{\small $\loopnsteplab{2}$}  (v11);
\draw[->,thick,darkcyan,out=70,in=190,shorten <= 0.2cm] 
  (v) to node[below,pos=0.37]{\small $\black{\bact}$} node[above,pos=0.25,xshift=-1pt,yshift=1pt]{\small $\loopnsteplab{2}$} (v21);

\end{tikzpicture}    

  \end{flushleft}
  The solution condition at $\averti{2}$ can be verified as follows:
  \begin{align*}
    (a + b)^* 
      & {} \milnersysmineq
    1 + (a + b) \cdot (a + b)^* 
      \milnersysmineq
    1 + a \cdot (a + b)^* + b \cdot (a + b)^*
      \\  
      & {} \milnersysmineq
    1 + a \cdot (1 \cdot (a + b)^*) + b \cdot (a + b)^*) 
    \\
    (a^* \cdot b^*)^*
      & {} \milnersysmineq
    ((1 + a \cdot a^*) \cdot (1 + b \cdot b^*))^*
      \milnersysmineq
    (1 + a \cdot a^* + b \cdot b^* + a \cdot a^* \cdot b \cdot b^*)^*
    \\
      & {} \milnersysmineq
    (a \cdot a^* + b \cdot b^* + a \cdot a^* \cdot b \cdot b^*)^*
      \milnersysmineq
    (a \cdot a^* + a \cdot a^* \cdot b \cdot b^* + b \cdot b^*)^*
    \\
      & {} \milnersysmineq
    (a \cdot a^* \cdot (1 + b \cdot b^*) + b \cdot b^*)^*
      \milnersysmineq
    (a \cdot a^* \cdot b^* + b \cdot b^*)^*
  \end{align*}    
\end{exanonum}

\subsection{Proofs in Section~\ref{coindmilnersys:2:milnersys}}\label{coindmilnersys:2:milnersys::app}        

The statements of Lemma~\ref{lem:lem:extrsol:is:sol}, Lemma~\ref{lem:extrsol:is:sol}, 
  Lemma~\ref{lem:lem:sols:provably:equal:LLEE}, and Lemma~\ref{lem:sols:provably:equal:LLEE}
in Section~\ref{coindmilnersys:2:milnersys}
  are adaptations to \LLEEonecharts\ of the statements about \LLEEcharts\ 
that are expressed by Lemma~5.4, Proposition~5.5, Lemma~5.7, and Proposition~5.8 in \cite{grab:fokk:2020a,grab:fokk:2020b}, respectively.
While the proofs of the adaptations are similar to the proofs in \cite{grab:fokk:2020a,grab:fokk:2020b},
  we provide them here.

\begin{repeatedlem}[=~Lem.~\ref{lem:lem:extrsol:is:sol}]
  Let $\aonechart$ be a weakly guarded \onechart\ $\aonechart$ with guarded \LLEEwitness~$\aonecharthat$. 
  
  Then 
  $\extrsolof{\aonecharthat}{\bvert}
     \milnersysmineq 
   \stexpprod{\extrsoluntilof{\aonecharthat}{\bvert}{\avert}}
             {\extrsolof{\aonecharthat}{\avert}}$
  holds 
  for all vertices $\bvert,\avert\in\vertsof{\aonechart}$ such that $\bvert \convdescendsinlooptosc \avert$.
\end{repeatedlem}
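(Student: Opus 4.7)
The plan is to prove the claim by induction on the well-founded lexicographic strict partial order $\slexspo$ that underlies the recursive definition of $\sextrsoluntilof{\aonecharthat}$ and $\sextrsolof{\aonecharthat}$ in Def.~\ref{def:extrsoluntil:extrsol}. In the base case $\bvert \synteq \avert$ the relative extraction value collapses by definition to $\extrsoluntilof{\aonecharthat}{\bvert}{\bvert} \synteq \stexpone$, so the claim reduces to the left-identity axiom $\leftidstexpprod$.

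For the step case $\bvert \convdescendsinloopto \avert$ I fix a list representation of the transitions from $\bvert$ in $\aonecharthat$, splitting them into loop-entry transitions $\bvert \lti{\aacti{i}}{\looplab{\alabi{i}}} \bverti{i}$ (for $i=1,\dots,n$ with $\alabi{i} \in \natplus$) and body transitions $\bvert \lti{\boneacti{j}}{\looplab{\bodylab}} \cverti{j}$ (for $j=1,\dots,m$), and I set $F \defdby \sum_{i=1}^{n} \aacti{i} \cdot \extrsoluntilof{\aonecharthat}{\bverti{i}}{\bvert}$, which is precisely the iterated part shared by $\extrsoluntilof{\aonecharthat}{\bvert}{\avert}$ and $\extrsolof{\aonecharthat}{\bvert}$. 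Because $\bvert$ lies strictly inside the body of the loop at $\avert$, property (L3) of loop subcharts forces $\terminatesconstof{\aonechart}{\bvert} \synteq \stexpzero$, which I then absorb via $\neutralstexpsum$. The induction hypothesis, applied to each body successor $\cverti{j}$, yields $\extrsolof{\aonecharthat}{\cverti{j}} \milnersysmineq \extrsoluntilof{\aonecharthat}{\cverti{j}}{\avert} \cdot \extrsolof{\aonecharthat}{\avert}$; substituting this into the unfolding of $\extrsolof{\aonecharthat}{\bvert}$ and pulling $\extrsolof{\aonecharthat}{\avert}$ out of the body sum by right-distributivity $\rdistr$ and associativity $\assocstexpprod$ turns the right-hand side into $F^{*} \cdot \bigl(\sum_{j=1}^{m} \boneacti{j} \cdot \extrsoluntilof{\aonecharthat}{\cverti{j}}{\avert}\bigr) \cdot \extrsolof{\aonecharthat}{\avert}$, which matches $\extrsoluntilof{\aonecharthat}{\bvert}{\avert} \cdot \extrsolof{\aonecharthat}{\avert}$ on the nose by the definition of $\sextrsoluntilof{\aonecharthat}$.

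The main obstacle I foresee is the combinatorial bookkeeping required to justify the induction hypothesis at each body successor: namely that $\cverti{j} \convdescendsinlooptosc \avert$ still holds (so that $\extrsoluntilof{\aonecharthat}{\cverti{j}}{\avert}$ is even defined) and that $\pair{\cverti{j}}{\avert} \lexspo \pair{\bvert}{\avert}$. Both follow from the LLEE discipline of $\aonecharthat$ --- a body transition from a vertex strictly inside the body of the loop at $\avert$ either stays in that body or returns to $\avert$, and $\sconvredtci{\bodylab}$ is well-founded by Lem.~\ref{lem:descsteps:bodysteps:wf}(ii). A secondary but essential subtlety is the appeal to (L3) to zero out the termination constant at intermediate body vertices; without it, an extra summand would survive the factoring step and the right-hand side would no longer collapse cleanly into a product ending in $\extrsolof{\aonecharthat}{\avert}$.
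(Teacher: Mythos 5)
Your proposal is correct and follows essentially the same route as the paper's proof: the same well-founded induction on the lexicographic order $\slexspo$, the base case $\bvert = \avert$ settled by $(\leftidstexpprod)$, the appeal to (L3) to zero out $\terminatesconstof{\aonechart}{\bvert}$, the application of the induction hypothesis to body successors (with the same dichotomy between $\cverti{j} = \avert$ and $\cverti{j} \convdescendsinloopto \avert$), and the factoring of $\extrsolof{\aonecharthat}{\avert}$ via $(\rdistr)$ and $(\assocstexpprod)$ to match the definition of $\sextrsoluntilof{\aonecharthat}$. The only detail the paper adds is an explicit remark on the degenerate cases $n=0$ and $m=0$, where an application of $(\stexpzerostexpprod)$ is needed.
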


\begin{proof}
  Let $\aonechart = \tuple{\verts,\actions,\sone,\start,\transs,\exts}$ be a \LLEEonechart\
    with \LLEEwitness\ $\aonecharthat$.
  
  We have to show that \mbox{}
  $\extrsol{\bvert}
     \eqin{\milnersysmin}
   \stexpprod{\extrsoluntilof{\aonecharthat}{\bvert}{\avert}}
             {\extrsol{\avert}}$ \mbox{}
  holds for all $\bvert,\avert\in\verts$ with $\bvert \convdescendsinlooptosc \avert$.
  We first notice that this statement holds obviously for $\bvert = \avert$,
  due $ \extrsoluntilof{\aonecharthat}{\bvert}{\avert} \synteq \extrsoluntilof{\aonecharthat}{\avert}{\avert} \synteq \stexpone$,
  and the presence of the axiom $(\leftidstexpprod)$ in $\milnersysmin$. 
  Therefore it suffices to show, by also using this fact, that
  $\extrsol{\bvert}
     \eqin{\milnersysmin}
   \stexpprod{\extrsoluntilof{\aonecharthat}{\bvert}{\avert}}
             {\extrsol{\avert}}$ \mbox{}
  holds for all $\bvert,\avert\in\verts$ with $\bvert \convdescendsinloopto \avert$.
  We will show this by using the same induction as for the definition of the relative extraction function $\sextrsoluntilof{\aonecharthat}$ in Def.~\ref{def:extrsoluntil:extrsol}, that is,
    by complete induction on the (converse) lexicographic partial order $\slexspo$ of $\sconvdescendsinlooptotc$ and $\sconvredtci{\bodylab}$
    on $\verts\times\verts$ defined by:
    $\pair{\bverti{1}}{\averti{1}}
          \lexspo
        \pair{\bverti{2}}{\averti{2}}
          \;\funin\: \Longleftrightarrow\:
            \averti{1} \convdescendsinlooptotc \averti{2}
              \logor
            (\, \averti{1} = \averti{2}
                  \logand 
                \bverti{1} \convredtci{\bodylab} \bverti{2} \,) $,
  which is \wellfounded\ due to Lem.~\ref{lem:descsteps:bodysteps:wf}.               
  For our argument we assume to have given, 
    underlying the definition of the relative extraction function $\sextrsoluntilof{\aonecharthat}$ 
    and the extraction function $\sextrsolof{\aonecharthat}$,
  list\vspace*{-2pt} representations $\transitionsinfrom{\aonecharthat}{\bvert}$ of the transitions from $\bvert$ in $\aonecharthat$ 
  as written in Def.~\ref{def:extrsoluntil:extrsol},
  for all $\bvert\in\verts$.\vspace{1pt}
  
  In order to carry out the induction step, 
  we let $\bvert,\avert\in\verts$ be arbitrary, but such that $\bvert \convdescendsinloopto \avert$. 
  On the basis of the form of $\transitionsinfrom{\aonecharthat}{\bvert}$ 
  as in Def.~\ref{def:extrsoluntil:extrsol}
  we argue as follows,
  starting with a step in which we use the definition of $\sextrsolof{\aonecharthat}$,
  and followed by a second step in which we use that $\terminatesconstof{\aonechart}{\bvert} \synteq \stexpzero$ holds,
    because $\bvert$ cannot have immediate termination as due to $\bvert \convdescendsinloopto \avert$ it is
    in the body of the loop at $\avert$ (cf. condition~(L3) for loop \onecharts\ in Section~\ref{LEE}):
  \begin{align*}
    \extrsolof{\aonecharthat}{\bvert}
      & \;\,\parbox[t]{\widthof{$\eqin{\milnersysmin}$}}{$\synteq$}\:
      \stexpprod{\Bigl(
                   \sum_{i=1}^{n} \stexpprod{\aacti{i}}{\extrsoluntilof{\aonecharthat}{\bverti{i}}{\bvert}}
                 \Bigr)^{\sstar} 
                 }{\Bigr(\stexpsum{\Bigr(\displaystyle
                                     \sum_{i=1}^{m} 
                                       \stexpprod{\boneacti{i}}{\extrsolof{\aonecharthat}{\cverti{i}}}
                                   \Bigl)}
                                  {\terminatesconstof{\aonechart}{\bvert}}
                   \Bigl)} 
    \displaybreak[0]\\
      & \;\,\parbox[t]{\widthof{$\eqin{\milnersysmin}$}}{$\synteq$}\:
      \stexpprod{\Bigl(
                   \sum_{i=1}^{n} \stexpprod{\aacti{i}}{\extrsoluntilof{\aonecharthat}{\bverti{i}}{\bvert}}
                 \Bigr)^{\sstar} 
                 }{\Bigr(\stexpsum{\Bigr(\displaystyle
                                     \sum_{i=1}^{m} 
                                       \stexpprod{\boneacti{i}}{\extrsolof{\aonecharthat}{\cverti{i}}}
                                   \Bigl)}
                                  {\stexpzero}
                   \Bigl)} 
    \displaybreak[0]\\
      & \;\,\parbox[t]{\widthof{$\eqin{\milnersysmin}$}}{$\:\milnersysmineq$}\:
      \stexpprod{\Bigl(
                   \sum_{i=1}^{n} \stexpprod{\aacti{i}}{\extrsoluntilof{\aonecharthat}{\bverti{i}}{\bvert}}
                 \Bigr)^{\sstar} 
                 }{\Bigr(\displaystyle
                     \sum_{i=1}^{m} 
                       \stexpprod{\boneacti{i}}{\extrsolof{\aonecharthat}{\cverti{i}}}
                   \Bigl)} 
        \\
        & \;\,\parbox[t]{\widthof{$\eqin{\milnersysmin}$\hspace*{3ex}}}{\mbox{}}\:
          \text{(by axiom $(\neutralstexpsum)$)}
    \displaybreak[0]\\
      & \;\,\parbox[t]{\widthof{$\eqin{\milnersysmin}$}}{$\eqin{\milnersysmin}$}\:
      \stexpprod{\Bigl(
                   \sum_{i=1}^{n} \stexpprod{\aacti{i}}{\extrsoluntilof{\aonecharthat}{\bverti{i}}{\bvert}}
                 \Bigr)^{\sstar}
                 }{\sum_{i=1}^{m} 
                     \stexpprod{\boneacti{i}}
                               {\stexpprod{(\extrsoluntilof{\aonecharthat}{\cverti{i}}{\avert}}
                                          {\extrsolof{\aonecharthat}{\avert})}}}
        \notag\\
        & \;\,\parbox[t]{\widthof{$\eqin{\milnersysmin}$\hspace*{3ex}}}{\mbox{}}\:
          \parbox[t]{\widthof{\phantom{(if }%
                              imply $\cverti{i} \convdescendsinloopto \avert$ and $\cverti{i} \convredi{\bodylab} \bvert$,
                              and hence $\pair{\cverti{i}}{\avert}  \lexspo  \pair{\bvert}{\avert}$) holds)}}
                         {$\bigl($if $\cverti{i} = \avert$, then 
                             $\extrsolof{\aonecharthat}{\cverti{i}}
                                \milnersysmineq
                              \stexpprod{\extrsoluntilof{\aonecharthat}{\cverti{i}}{\avert}}
                                        {\extrsolof{\aonechart}{\avert})}$
                             due to $\extrsoluntilof{\aonecharthat}{\avert}{\avert} = 1$;   
                          \\
                          \phantom{(}if $\cverti{i} \neq \avert$, we can apply the induction hypothesis
                          to $\extrsolof{\aonecharthat}{\cverti{i}}$,
                          \\
                          \phantom{(if }%
                          as $\bvert \redi{\bodylab} \cverti{i}$ (see $\transitionsinfrom{\aonecharthat}{\bvert}$ 
                                                                      as in Def.~\ref{def:extrsoluntil:extrsol})
                           and $\cverti{i} \neq \avert$
                          \\
                          \phantom{(if }%
                          imply $\cverti{i} \convdescendsinloopto \avert$,
                          and $\cverti{i} \convredi{\bodylab} \bvert$
                            entails $\pair{\cverti{i}}{\avert}  \lexspo  \pair{\bvert}{\avert}$)}
    \displaybreak[0]\\  
      & \;\,\parbox[t]{\widthof{$\eqin{\milnersysmin}$}}{$\eqin{\milnersysmin}$}\:
      \stexpprod{\Bigl(\Bigl(
                   \sum_{i=1}^{n} \stexpprod{\aacti{i}}{\extrsoluntilof{\aonecharthat}{\bverti{i}}{\bvert}}
                 \Bigr)^{\sstar} 
                 }{\stexpprod{\Bigr(
                                \sum_{i=1}^{m} 
                                  \stexpprod{\boneacti{i}}
                                            {\extrsoluntilof{\aonecharthat}{\cverti{i}}{\avert}}
                              \Bigl)\Bigr)}
                             {\extrsolof{\aonecharthat}{\avert}}
                   }
        \\
        & \;\,\parbox[t]{\widthof{$\eqin{\milnersysmin}$\hspace*{3ex}}}{\mbox{}}\:
          \text{(by axioms $(\rdistr)$, and $(\assocstexpprod)$)} 
    \displaybreak[0]\\[1ex]
      & \;\,\parbox[t]{\widthof{$\eqin{\milnersysmin}$}}{$\synteq$}\:
      \stexpprod{ \extrsoluntilof{\aonecharthat}{\bvert}{\avert} }
                  { \extrsolof{\aonecharthat}{\avert} }
        \\
        & \;\,\parbox[t]{\widthof{$\eqin{\milnersysmin}$\hspace*{3ex}}}{\mbox{}}\: 
          \text{(by $\bvert \convdescendsinloopto \avert$, 
                    and the definition of $\extrsoluntilof{\aonecharthat}{\bvert}{\avert}$ in Def.~\ref{def:extrsoluntil:extrsol})}
  \end{align*}
  We note that this reasoning also applies for the special cases $n=0$,
   and with a slight change also for $m=0$, 
    where
    $\sum_{i=1}^{m} 
       \stexpprod{\boneacti{i}}{\extrsolof{\aonecharthat}{\cverti{i}}} 
       \synteq
     \sum_{i=1}^{m} 
       \stexpprod{\boneacti{i}}
                 {\stexpprod{(\extrsoluntilof{\aonecharthat}{\cverti{i}}{\avert}}
                            {\extrsolof{\aonecharthat}{\avert})}}
       \synteq
     \sum_{i=1}^{m} 
     \stexpprod{\boneacti{i}}
               {\extrsoluntilof{\aonecharthat}{\cverti{i}}{\avert}}  
       \synteq
     \stexpzero$,
    and then an axiom ($\stexpzerostexpprod$) has to be used.   
  In this way we have shown, due to $\ACI \subsystem \milnersysmin$, the desired \provablein{\milnersysmin} equality 
  $\asol{\bvert}
     \eqin{\milnersysmin}
   \stexpprod{\extrsoluntilof{\aonecharthat}{\bvert}{\avert}}
             {\asol{\avert}}$ \mbox{}
  for the vertices $\avert$ and $\bvert$ that we picked with the property $\bvert \convdescendsinloopto \avert$.  
  
  Since $\bvert,\avert\in\verts$ with $\bvert \convdescendsinloopto \avert$ were arbitrary for this argument,
  we have successfully carried out the proof by induction that
  \mbox{}
  $\extrsolof{\aonecharthat}{\bvert}
     \eqin{\milnersysmin}
   \stexpprod{\extrsoluntilof{\aonecharthat}{\bvert}{\avert}}
             {\extrsolof{\aonechart}{\avert}}$ \mbox{}
  holds for all $\bvert,\avert\in\verts$ with $\bvert \convdescendsinloopto \avert$.
  As we have argued that the statement also holds for $\bvert = \avert$, we have proved the lemma.
\end{proof}

\begin{repeatedlem}[=~Lem.~\ref{lem:extrsol:is:sol}]
  Let $\aonechart$ 
      be a \LLEEonechart\
  with \LLEEwitness~$\aonecharthat$.
  Then the extraction function $\sextrsolof{\aonecharthat}$ of $\aonecharthat$
  is a \provablein{\milnersysmin} solution~of~$\aonechart$. 
\end{repeatedlem}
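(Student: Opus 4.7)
The plan is to verify the \provablein{\milnersysmin} solution condition at each vertex $\bvert\in\vertsof{\aonechart}$ pointwise, using Lem.~\ref{lem:lem:extrsol:is:sol} (already proved just above) to reduce the computation to a single unfolding of the outermost Kleene star in $\extrsolof{\aonecharthat}{\bvert}$. Fix $\bvert$, and fix the list representation of transitions out of $\bvert$ in $\aonecharthat$ split according to their marking labels as in Def.~\ref{def:extrsoluntil:extrsol},
\begin{equation*}
  \transitionsinfrom{\aonecharthat}{\bvert}
    =
  \descsetexpbig{ \bvert \lti{\aacti{i}}{\looplab{\alabi{i}}} \bverti{i} }{i=1,\ldots,n}
     \cup
  \descsetexpbig{ \bvert \lti{\boneacti{i}}{\looplab{\bodylab}} \cverti{i} }{i=1,\ldots,m} \punc{.}
\end{equation*}
Let $L \defdby \sum_{i=1}^n \aacti{i} \cdot \extrsoluntilof{\aonecharthat}{\bverti{i}}{\bvert}$ collect the loop-entry part and $R \defdby \terminatesconstof{\aonechart}{\bvert} + \sum_{i=1}^m \boneacti{i} \cdot \extrsolof{\aonecharthat}{\cverti{i}}$ the body/termination part, so that $\extrsolof{\aonecharthat}{\bvert} \synteq L^{\sstar} \cdot R$ by Def.~\ref{def:extrsoluntil:extrsol}.

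First I would apply the axiom $(\recdefstexpit)$ to expose one unrolling of $L^{\sstar}$ and then use $(\rdistr)$, $(\assocstexpprod)$, $(\leftidstexpprod)$, and associativity/commutativity of $+$ to rewrite
\begin{equation*}
  \extrsolof{\aonecharthat}{\bvert}
    \,\milnersysmineq\,
  R \,+\, L \cdot (L^{\sstar} \cdot R)
    \,\synteq\,
  R \,+\, \sum_{i=1}^{n} \aacti{i} \cdot \bigl(\extrsoluntilof{\aonecharthat}{\bverti{i}}{\bvert} \cdot \extrsolof{\aonecharthat}{\bvert}\bigr) \punc{.}
\end{equation*}
Next, for each loop-entry successor $\bverti{i}$ of $\bvert$ we have $\bverti{i} \convdescendsinloopto \bvert$, hence in particular $\bverti{i} \convdescendsinlooptosc \bvert$; so Lem.~\ref{lem:lem:extrsol:is:sol} (applied with $\bvert,\avert$ of that lemma set to $\bverti{i},\bvert$ here) gives $\extrsolof{\aonecharthat}{\bverti{i}} \milnersysmineq \extrsoluntilof{\aonecharthat}{\bverti{i}}{\bvert} \cdot \extrsolof{\aonecharthat}{\bvert}$. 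Substituting this and expanding $R$ yields precisely
\begin{equation*}
  \extrsolof{\aonecharthat}{\bvert}
    \,\milnersysmineq\,
  \terminatesconstof{\aonechart}{\bvert}
    + \sum_{i=1}^{n} \aacti{i} \cdot \extrsolof{\aonecharthat}{\bverti{i}}
    + \sum_{i=1}^{m} \boneacti{i} \cdot \extrsolof{\aonecharthat}{\cverti{i}} \punc{,}
\end{equation*}
which, since the chosen list representation is arbitrary up to \ACI-provable rearrangement and since $\ACI \subsystem \milnersysmin$, is the defining condition in Def.~\ref{def:provable:solution} for $\sextrsolof{\aonecharthat}$ to be an \provablein{\milnersysmin} solution at $\bvert$.

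Since $\bvert$ was arbitrary, $\sextrsolof{\aonecharthat}$ is then a \provablein{\milnersysmin} solution of the whole \onechart\ $\aonechart$, as required. The main obstacle is not conceptual but notational: one has to handle the degenerate corner cases $n=0$ (no loop-entry transitions, so $L \synteq \stexpzero$ and the unfolding step reduces via $(\stexpzerostexpprod)$ and $(\neutralstexpsum)$), and $m=0$ together with $\notterminates{\bvert}$ (so $R \synteq \stexpzero$), making sure in each case that the axioms $(\neutralstexpsum)$, $(\commstexpsum)$, $(\stexpzerostexpprod)$, $(\leftidstexpprod)$, $(\rightidstexpprod)$, and $(\recdefstexpit)$ of $\milnersysmin$ suffice to obtain the exact right-hand side demanded by the solution condition. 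No further induction is needed, because all the inductive work was already carried out in the proof of Lem.~\ref{lem:lem:extrsol:is:sol}.
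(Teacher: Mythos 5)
Your proposal is correct and follows essentially the same route as the paper's proof: unfold the outermost star of $\extrsolof{\aonecharthat}{\bvert} \synteq L^{\sstar}\cdot R$ via $(\recdefstexpit)$, distribute with $(\rdistr)$, $(\assocstexpprod)$, $(\leftidstexpprod)$, recognise $L^{\sstar}\cdot R$ as $\extrsolof{\aonecharthat}{\bvert}$ again, and then use Lem.~\ref{lem:lem:extrsol:is:sol} (in its reflexive form $\bverti{i}\convdescendsinlooptosc\bvert$, which also covers a loop-entry self-transition where $\bverti{i}=\bvert$) to replace $\extrsoluntilof{\aonecharthat}{\bverti{i}}{\bvert}\cdot\extrsolof{\aonecharthat}{\bvert}$ by $\extrsolof{\aonecharthat}{\bverti{i}}$, finishing with \ACI\ rearrangement. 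Your handling of the degenerate cases $n=0$ and $m=0$ matches the paper's remarks, and you are right that no fresh induction is needed beyond what Lem.~\ref{lem:lem:extrsol:is:sol} already provides.
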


\begin{proof}
  Let $\aonechart = \tuple{\verts,\actions,\sone,\start,\transs,\exts}$ be a \LLEEonechart, 
  and let $\aonecharthat$ be a \LLEEwitness\ of $\aonechart$. 
  We show that the extraction function $\sextrsolof{\aonecharthat}$ of $\aonecharthat$ 
  is a \provablein{\milnersysmin} solution of $\aonechart$
  by verifying the \provablein{\milnersysmin} correctness conditions for $\sextrsolof{\aonecharthat}$
  at every vertex $\bvert\in\verts$. 
  For the argument we assume to have given, 
    underlying the definition of the relative extraction function $\sextrsoluntilof{\aonecharthat}$ 
    and\vspace*{-2pt} the extraction function $\sextrsolof{\aonecharthat}$,
  list representations $\transitionsinfrom{\aonecharthat}{\bvert}$ of the transitions from $\bvert$ in $\aonecharthat$ 
  as written in Def.~\ref{def:extrsoluntil:extrsol},
  for all vertices $\bvert\in\verts$. 
 
  We let $\bvert\in\verts$ be arbitrary. 
  Starting from the definition of $\sextrsolof{\aonecharthat}$ in Def.~\ref{def:extrsoluntil:extrsol} 
    on the basis of the form of $\transitionsinfrom{\aonecharthat}{\bvert}$,
  we argue by the following steps:  
  \begin{align*}
    \extrsolof{\aonecharthat}{\bvert}
      & \;\,\parbox[t]{\widthof{$\eqin{\milnersysmin}$}}{$\:\synteq\:$}\:
      \stexpprod{\Bigl(
                   \sum_{i=1}^{n} \stexpprod{\aacti{i}}{\extrsoluntilof{\aonecharthat}{\bverti{i}}{\bvert}}
                 \Bigr)^{\sstar}
                         }{ 
                     \:       
                     \Bigr(\displaystyle
                     \stexpsum{\Bigl(
                                 \sum_{i=1}^{m} \stexpprod{\boneacti{i}}{\extrsolof{\aonecharthat}{\cverti{i}}}
                               \Bigr)}
                              {\terminatesconstof{\aonechart}{\bvert}}
                     \Bigl)}
    \displaybreak[0]\\
    & \;\,\parbox[t]{\widthof{$\eqin{\milnersysmin}$}}{$\eqin{\milnersysmin}$}\:
      \stexpprod{\Bigl(
                 \stexpsum{\sone}
                          {\stexpprod{\Bigl(
                                        \sum_{i=1}^{n} \stexpprod{\aacti{i}}{\extrsoluntilof{\aonecharthat}{\bverti{i}}{\bvert}}
                                      \Bigr)\hspace*{-2pt}}
                                     {\hspace*{-2pt}
                                      \Bigl(
                                        \sum_{i=1}^{n} \stexpprod{\aacti{i}}{\extrsoluntilof{\aonecharthat}{\bverti{i}}{\bvert}}
                                      \Bigr)^{\sstar}}}
                 \Bigr)\hspace*{-2pt}}
                {\hspace*{-2pt}
                 \Bigr(\displaystyle
                 \stexpsum{\Bigl(
                             \sum_{i=1}^{m} \stexpprod{\boneacti{i}}{\extrsolof{\aonecharthat}{\cverti{i}}}
                           \Bigr)}
                          {\terminatesconstof{\aonechart}{\bvert}}
                 \Bigl)}
        \\
        & \;\,\parbox[t]{\widthof{$\eqin{\milnersysmin}$\hspace*{3ex}}}{\mbox{}}\:
          \text{(by axiom $(\recdefstexpit)$)}
    \displaybreak[0]\\
    & \;\,\parbox[t]{\widthof{$\eqin{\milnersysmin}$}}{$\eqin{\milnersysmin}$}\:
      \begin{aligned}[t]
        & 
        \Bigr( 
          \stexpsum{\Bigl(
                      \sum_{i=1}^{m} \stexpprod{\boneacti{i}}{\extrsolof{\aonecharthat}{\cverti{i}}}
                    \Bigr)}
                   {\terminatesconstof{\aonechart}{\bvert}}
        \Bigl)                     
        \\[-1ex]
        &
        \stexpsum{}
                 {\stexpprod{\Bigl(
                               \sum_{i=1}^{n} \stexpprod{\aacti{i}}{\extrsoluntilof{\aonecharthat}{\bverti{i}}{\bvert}}
                             \Bigr)}
                            {\stexpprod{\Bigl( 
                                        \Bigl(
                                          \sum_{i=1}^{n} \stexpprod{\aacti{i}}{\extrsoluntilof{\aonecharthat}{\bverti{i}}{\bvert}}
                                        \Bigr)^{\sstar}}
                                       {\Bigl(
                                          \stexpsum{\Bigl(
                                                      \sum_{i=1}^{m} \stexpprod{\boneacti{i}}{\extrsolof{\aonecharthat}{\cverti{i}}}
                                                    \Bigr)}
                                                   {\terminatesconstof{\aonechart}{\bvert}}
                                        \Bigr)}
                                        \Bigr)}}
      \end{aligned}
        \\
        & \;\,\parbox[t]{\widthof{$\eqin{\milnersysmin}$\hspace*{3ex}}}{\mbox{}}\:
          \text{(by axioms $(\rdistr)$, $(\leftidstexpprod)$, and $(\assocstexpprod)$)}
    \displaybreak[0]\\
    & \;\,\parbox[t]{\widthof{$\eqin{\milnersysmin}$}}{$\synteq$}\:
      \stexpsum{\Bigr( 
                  \stexpsum{\Bigl(
                              \sum_{i=1}^{m} \stexpprod{\boneacti{i}}{\extrsolof{\aonecharthat}{\cverti{i}}}
                            \Bigr)}
                           {\terminatesconstof{\aonechart}{\bvert}}
                \Bigl)}
               {\stexpprod{\Bigl(
                             \sum_{i=1}^{n} \stexpprod{\aacti{i}}{\extrsoluntilof{\aonecharthat}{\bverti{i}}{\bvert}}
                           \Bigr)}
                          {\extrsolof{\aonecharthat}{\bvert}}}
        \\
        & \;\,\parbox[t]{\widthof{$\eqin{\milnersysmin}$\hspace*{3ex}}}{\mbox{}}\:
          \text{(by definition of $\extrsolof{\aonecharthat}{\bvert}$ in Def.~\ref{def:extrsoluntil:extrsol})} 
        \displaybreak[0]\\
    & \;\,\parbox[t]{\widthof{$\eqin{\milnersysmin}$}}{$\eqin{\milnersysmin}$}\:
      \stexpsum{\Bigl(
                  \sum_{i=1}^{n} 
                    \stexpprod{\aacti{i}}
                              {\stexpprod{\bigl(\extrsoluntilof{\aonecharthat}{\bverti{i}}{\bvert}}
                                         {\extrsolof{\aonecharthat}{\bvert}\bigr)}}
                \Bigr)}
               {\Bigr( 
                  \stexpsum{\Bigl(
                              \sum_{i=1}^{m} \stexpprod{\boneacti{i}}{\extrsolof{\aonecharthat}{\cverti{i}}}
                            \Bigr)}
                           {\terminatesconstof{\aonechart}{\bvert}}
                \Bigl)}
        \\
        & \;\,\parbox[t]{\widthof{$\eqin{\milnersysmin}$\hspace*{3ex}}}{\mbox{}}\:
          \text{(by axioms $(\commstexpsum)$, $(\rdistr)$, and $(\assocstexpprod)$)} 
        \displaybreak[0]\\
    & \;\,\parbox[t]{\widthof{$\eqin{\milnersysmin}$}}{$\eqin{\milnersysmin}$}\:
      \stexpsum{\Bigl(
                  \sum_{i=1}^{n} 
                    \stexpprod{\aacti{i}}{\extrsolof{\aonecharthat}{\bverti{i}}}
                \Bigr)}
               {\stexpsum{\Bigl(
                            \sum_{i=1}^{m} \stexpprod{\boneacti{i}}{\extrsolof{\aonecharthat}{\cverti{i}}}
                          \Bigr)} 
                         {\terminatesconstof{\aonechart}{\bvert}}}
        \\
        & \;\,\parbox[t]{\widthof{$\eqin{\milnersysmin}$\hspace*{3ex}}}{\mbox{}}\:
          \parbox[t]{\widthof{(by Lemma~\ref{lem:lem:extrsol:is:sol}, due to $\bverti{i} \convdescendsinlooptosc \bvert$, 
                           which follows from $\bvert \loopnstepto{\alabi{i}} \bverti{i}$}}
                    {(by Lemma~\ref{lem:lem:extrsol:is:sol}, due to $\bverti{i} \convdescendsinlooptosc \bvert$, 
                      which follows from $\bvert \loopnstepto{\alabi{i}} \bverti{i}$%
                     \\
                     \phantom{(}(see $\transitionsinfrom{\aonecharthat}{\bvert}$
                                 as in Def.~\ref{def:extrsoluntil:extrsol}),
                                 and by axioms $(\assocstexpsum)\,$).}
    \\
    & \;\,\parbox[t]{\widthof{$\eqin{\milnersysmin}$}}{$\eqin{\ACI}$}\:
      \stexpsum{\terminatesconstof{\aonechart}{\bvert}}
               {\stexpsum{\Bigl(
                            \sum_{i=1}^{n} 
                              \stexpprod{\aacti{i}}{\extrsolof{\aonecharthat}{\bverti{i}}}
                          \Bigr)}
                         {\Bigl(
                            \sum_{i=1}^{m} \stexpprod{\boneacti{i}}{\extrsolof{\aonecharthat}{\cverti{i}}}
                          \Bigr)}} \punc{.}
  \end{align*}
  Since $\ACI \subsystem \milnersysmin$,
  this chain of equalities yields a \provablein{\milnersysmin} equality
  that establishes,
  in view of $\transitionsinfrom{\aonecharthat}{\bvert}$ 
             as in Def.~\ref{def:extrsoluntil:extrsol},
  the correctness condition for $\sextrsolof{\aonecharthat}$ to be a \provablein{\milnersysmin} solution 
  at the vertex $\bvert$ that we picked. 
  
  Since $\bvert\in\verts$ was arbitrary,
  we have established that the extraction function $\sextrsolof{\aonecharthat}$ of $\aonecharthat$
  is a \provablein{\milnersysmin} solution of $\aonechart$.
\end{proof}

\begin{repeatedlem}[=Lem.~\ref{lem:lem:sols:provably:equal:LLEE}]
  Let $\aonechart$ be a \onechart\ with guarded \LLEEwitness~$\aonecharthat$.
  Let $\asys$ be an \eqlogicbased\ proof system over $\StExpover{\actions}$ 
    such that $\ACI \subsystem \asys \isthmsubsumedby \milnersys$.
  
  Let $\sasol \funin \vertsof{\aonechart} \to \StExpover{\actions}$
      be an \provablein{\asys} solution of $\aonechart$.
  Then 
  $\asol{\bvert}
     \eqin{\milnersys}
   \stexpprod{\extrsoluntilof{\aonecharthat}{\bvert}{\avert}}
             {\asol{\avert}}$ \mbox{}
  holds for all vertices $\bvert,\avert\in\vertsof{\aonechart}$ with $\bvert \convdescendsinlooptosc \avert$.
\end{repeatedlem}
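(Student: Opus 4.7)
My plan is to prove this by well-founded induction on the partial order $\slexspo$ from Def.~\ref{def:extrsoluntil:extrsol}, mirroring the structure used in Lem.~\ref{lem:lem:extrsol:is:sol} but now exploiting the fixed-point rule $\RSPstar$ of $\milnersys$ to force $\sasol$ (an arbitrary provable solution) to agree with the canonical shape that the extraction function enjoys automatically.

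First I would dispatch the base case $\bvert \synteq \avert$: here $\extrsoluntilof{\aonecharthat}{\avert}{\avert} \synteq \stexpone$ by definition, so the claim reduces to an application of $(\leftidstexpprod)$, which is available since $\ACI \subsystem \milnersysmin \subsystem \milnersys$. Next, for the inductive step I would fix $\bvert$ with $\bvert \convdescendsinloopto \avert$ (strict), and take the list representation of transitions from $\bvert$ as in Def.~\ref{def:extrsoluntil:extrsol}, writing the loop-entry transitions as $\bvert \lti{\aacti{i}}{\looplab{\alabi{i}}} \bverti{i}$ for $i = 1,\ldots,n$ and the body transitions as $\bvert \lti{\boneacti{j}}{\bodylab} \cverti{j}$ for $j = 1,\ldots,m$. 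Because $\bvert \convdescendsinloopto \avert$, condition (L3) forces $\terminatesconstof{\aonechart}{\bvert} \synteq \stexpzero$.

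Now I would unfold the correctness condition for $\sasol$ at $\bvert$, then rewrite each $\asol{\bverti{i}}$ and each $\asol{\cverti{j}}$ using the induction hypothesis. For $\bverti{i}$: from $\bvert \loopnstepto{\alabi{i}} \bverti{i}$ we get $\bverti{i} \convdescendsinlooptosc \bvert$ and $\pair{\bverti{i}}{\bvert} \lexspo \pair{\bvert}{\avert}$, so IH gives $\asol{\bverti{i}} \eqin{\milnersys} \extrsoluntilof{\aonecharthat}{\bverti{i}}{\bvert} \prod \asol{\bvert}$. For $\cverti{j}$: if $\cverti{j} \synteq \avert$ use $\extrsoluntilof{\aonecharthat}{\avert}{\avert}\synteq 1$ and $(\leftidstexpprod)$ directly; otherwise $\bvert \redi{\bodylab} \cverti{j}$ combined with $\cverti{j} \neq \avert$ yields $\cverti{j} \convdescendsinloopto \avert$ and $\pair{\cverti{j}}{\avert} \lexspo \pair{\bvert}{\avert}$, so IH gives $\asol{\cverti{j}} \eqin{\milnersys} \extrsoluntilof{\aonecharthat}{\cverti{j}}{\avert} \prod \asol{\avert}$. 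Substituting and rearranging by $(\assocstexpprod)$, $(\rdistr)$ and ACI, I obtain
\begin{equation*}
 \asol{\bvert}
   \:\eqin{\milnersys}\:
 \Bigl( \sum_{i=1}^{n} \aacti{i} \prod \extrsoluntilof{\aonecharthat}{\bverti{i}}{\bvert} \Bigr) \prod \asol{\bvert}
   \:+\:
 \Bigl( \sum_{j=1}^{m} \boneacti{j} \prod \extrsoluntilof{\aonecharthat}{\cverti{j}}{\avert} \Bigr) \prod \asol{\avert} \punc{.}
\end{equation*}

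The main step (and the one point where $\RSPstar$ is genuinely needed, distinguishing this proof from that of Lem.~\ref{lem:lem:extrsol:is:sol}) is to apply $\RSPstar$ to this recursion equation $\asol{\bvert} \formeq \bstexp \prod \asol{\bvert} + \cstexp$, where $\bstexp \defdby \sum_{i=1}^n \aacti{i} \prod \extrsoluntilof{\aonecharthat}{\bverti{i}}{\bvert}$ and $\cstexp$ is the second summand. The side-condition $\notterminates{\bstexp}$ is exactly where the guardedness assumption on the \LLEEwitness\ $\aonecharthat$ is used: all loop-entry actions $\aacti{i}$ are proper, so every summand of $\bstexp$ is of form $\aacti{i} \prod ({\ldots})$ with $\aacti{i}\in\actions$, hence $\notterminates{\bstexp}$. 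Applying $\RSPstar$ then yields $\asol{\bvert} \eqin{\milnersys} \stexpit{\bstexp} \prod \cstexp$, which by definition of $\extrsoluntilof{\aonecharthat}{\bvert}{\avert}$ and $(\assocstexpprod)$ equals $\extrsoluntilof{\aonecharthat}{\bvert}{\avert} \prod \asol{\avert}$, completing the induction step. The degenerate cases $n = 0$ or $m = 0$ are handled with $(\stexpzerostexpprod)$ and $(\neutralstexpsum)$, with the $n=0$ case avoiding $\RSPstar$ altogether. The main obstacle is correctly isolating a single fixed-point equation per vertex so that $\RSPstar$ applies directly; the hierarchical structure encoded by $\sdescendsinloopto$ (Lem.~\ref{lem:descsteps:bodysteps:wf}) is precisely what guarantees that solving the loop at $\bvert$ only requires having already solved strictly lower loops in the $\slexspo$ order.
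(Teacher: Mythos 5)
Your proposal is correct and follows essentially the same route as the paper's own proof: well-founded induction on the lexicographic order $\slexspo$ from Def.~\ref{def:extrsoluntil:extrsol}, rewriting the successors $\asol{\bverti{i}}$ relative to $\bvert$ and the body-successors $\asol{\cverti{j}}$ relative to $\avert$ via the induction hypothesis, and then closing the resulting fixed-point equation with $\RSPstar$, whose side condition is discharged exactly as you say by the guardedness of the \LLEEwitness. The handling of the reflexive cases $\bverti{i}=\bvert$ and $\cverti{j}=\avert$ and of the degenerate cases $n=0$, $m=0$ also matches the paper.
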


\begin{proof}
  Let $\aonecharthat$ be a guarded \LLEEwitness\ of a \wg\ \onechart~$\aonechart = \tuple{\verts,\actions,\sone,\start,\transs,\exts}$.
  Let $\sasol \funin \verts \to \StExpsover{\actions}$ be an \provablein{\asys} solution of $\aonechart$.
  
  We have to show that \mbox{}
  $\asol{\bvert}
     \eqin{\milnersys}
   \stexpprod{\extrsoluntilof{\aonecharthat}{\bvert}{\avert}}
             {\asol{\avert}}$ \mbox{}
  holds for all $\bvert,\avert\in\verts$ with $\bvert \convdescendsinlooptosc \avert$.
  We first notice that this statement holds obviously for $\bvert = \avert$,
  due $ \extrsoluntilof{\aonecharthat}{\bvert}{\avert} \synteq \extrsoluntilof{\aonecharthat}{\avert}{\avert} \synteq \stexpone$. 
  Therefore it suffices to show, by also using this fact, that
  $\asol{\bvert}
     \eqin{\milnersys}
   \stexpprod{\extrsoluntilof{\aonecharthat}{\bvert}{\avert}}
             {\asol{\avert}}$ \mbox{}
  holds for all $\bvert,\avert\in\verts$ with $\bvert \convdescendsinloopto \avert$.
  We will show this by using the same induction as for the definition of the relative extraction function in Def.~\ref{def:extrsoluntil:extrsol}, that is,
    by complete induction on the (converse) lexicographic partial order $\slexspo$ of $\sconvdescendsinlooptotc$ and $\sconvredtci{\bodylab}$
    on $\verts\times\verts$ defined by:
    $\pair{\bverti{1}}{\averti{1}}
          \lexspo
        \pair{\bverti{2}}{\averti{2}}
          \;\funin\: \Longleftrightarrow\:
            \averti{1} \convdescendsinlooptotc \averti{2}
              \logor
            (\, \averti{1} = \averti{2}
                  \logand 
                \bverti{1} \convredtci{\bodylab} \bverti{2} \,) $,
  which is \wellfounded\ due to Lem.~\ref{lem:descsteps:bodysteps:wf}.               
  For our argument we suppose to have given, 
    underlying the definition of the relative extraction function $\sextrsoluntilof{\aonecharthat}$ 
    and the extraction function $\sextrsolof{\aonecharthat}$,
  list representations $\transitionsinfrom{\aonecharthat}{\bvert}$ of the transitions from $\bvert$ in $\aonecharthat$ 
  as written in Def.~\ref{def:extrsoluntil:extrsol},
  for all $\bvert\in\verts$. 
   
  In order to carry out the induction step, 
  we let $\bvert,\avert\in\verts$ be arbitrary, but such that $\bvert \convdescendsinloopto \avert$. 
  On the basis of the form of $\transitionsinfrom{\aonecharthat}{\bvert}$ 
                              as in Def.~\ref{def:extrsoluntil:extrsol}
  we argue as follows,
  starting with a step in which we use that $\sasol$ is an \provablein{\asys} solution of $\aonechart$,
  and followed by a second step in which we use that $\terminatesconstof{\aonechart}{\bvert} \synteq \stexpzero$ holds,
    because $\bvert$ cannot have immediate termination as due to $\bvert \convdescendsinloopto \avert$ it is
    in the body of the loop at $\avert$ (cf. condition~(L3) for loop \onecharts\ in Section~\ref{LEE}):
  \begin{align*}
    \asol{\bvert}
        &
        \;\,\parbox[t]{\widthof{$\eqin{\milnersysmin}$}}{$\eqin{\asys}$}\:
      \stexpsum{\terminatesconstof{\aonechart}{\bvert}}
               {\stexpsum{\Bigl(\Bigl(
                            \sum_{i=1}^{n}
                              \stexpprod{\aacti{i}}{ \asol{\bverti{i}} }
                          \Bigr)}
                         {\Bigl(
                            \sum_{i=1}^{m}
                              \stexpprod{\boneacti{i}}{ \asol{\cverti{i}} }
                          \Bigr)\Bigr)}}
    \\
        &
        \;\,\parbox[t]{\widthof{$\eqin{\milnersysmin}$}}{$\eqin{\milnersysmin}$}\:
      \stexpsum{\Bigl(
                  \sum_{i=1}^{n}
                    \stexpprod{\aacti{i}}{ \asol{\bverti{i}} }
                \Bigr)}
               {\Bigl(
                  \sum_{i=1}^{m}
                    \stexpprod{\boneacti{i}}{ \asol{\cverti{i}} }
                \Bigr)}
    \displaybreak[0]\\                      
      &
      \;\,\parbox[t]{\widthof{$\eqin{\milnersysmin}$}}{$\eqin{\milnersys}$}\:  
    \stexpsum{\Bigl(
                \sum_{i=1}^{n}
                  \stexpprod{\aacti{i}}
                            {\bigl(\stexpprod{\extrsoluntilof{\aonecharthat}{\bverti{i}}{\bvert}}
                                             {\asol{\bvert}}\bigr)}
              \Bigr)}
             {\Bigl(
                \sum_{i=1}^{m}
                  \stexpprod{\boneacti{i}}
                            {\asol{\cverti{i}}}
              \Bigr)}
        \\
        & \;\,\parbox[t]{\widthof{$\eqin{\milnersysmin}$\hspace*{3ex}}}{\mbox{}}\:
          \parbox[t]{\widthof{\phantom{(if }%
                              because then $\bvert \loopnstepto{\alabi{i}} \bverti{i}$  
                              (see $\transitionsinfrom{\aonecharthat}{\bvert}$ 
                                   as in Def.~\ref{def:extrsoluntil:extrsol}) implies
                              $\bverti{i} \convdescendsinloopto \bvert$,}}
                         {$\bigl($if $\bverti{i} = \bvert$, then 
                             $\asol{\bverti{i}}
                                \milnersyseq
                              \stexpprod{\extrsoluntilof{\aonecharthat}{\bverti{i}}{\bvert}}
                                        {\asol{\bvert})}$
                           due to $\extrsoluntilof{\aonecharthat}{\bvert}{\bvert} = 1$; 
                          \\
                          \phantom{(}if $\bverti{i} \neq \bvert$, we can apply the induction hypothesis
                          to $\asol{\bverti{i}}$,
                          \\
                          \phantom{(if }%
                          because then $\bvert \loopnstepto{\alabi{i}} \bverti{i}$  
                           (see $\transitionsinfrom{\aonecharthat}{\bvert}$ 
                            as in Def.~\ref{def:extrsoluntil:extrsol}) implies
                          $\bverti{i} \convdescendsinloopto \bvert$,
                          \\
                          \phantom{(if }%
                          and due to $\bvert \convdescendsinloopto \avert$ we get $\pair{\bverti{i}}{\bvert} \lexspo \pair{\bvert}{\avert}$)}
    \displaybreak[0]\\[0.75ex]
      &
      \;\,\parbox[t]{\widthof{$\eqin{\milnersysmin}$}}{$\eqin{\milnersys}$}\:  
    \stexpsum{\Bigl(
                \sum_{i=1}^{n}
                  \stexpprod{\aacti{i}}
                            {\bigl(\stexpprod{\extrsoluntilof{\aonecharthat}{\bverti{i}}{\bvert}}
                                             {\asol{\bvert}}\bigr)}
              \Bigr)}
             {\Bigl(
                \sum_{i=1}^{m}
                  \stexpprod{\boneacti{i}}
                            {\bigl(\stexpprod{\extrsoluntilof{\aonecharthat}{\cverti{i}}{\avert}}
                                             {\asol{\avert}}\bigr)}
              \Bigr)}
        \\
        & \;\,\parbox[t]{\widthof{$\eqin{\milnersysmin}$\hspace*{3ex}}}{\mbox{}}\:
          \parbox[t]{\widthof{\phantom{(if }%
                              imply $\cverti{i} \convdescendsinloopto \avert$ and $\cverti{i} \convredi{\bodylab} \bvert$,
                              and hence $\pair{\cverti{i}}{\avert}  \lexspo  \pair{\bvert}{\avert}$) holds)}}
                         {$\bigl($if $\cverti{i} = \avert$, then 
                             $\asol{\cverti{i}}
                                \milnersyseq
                              \stexpprod{\extrsoluntilof{\aonecharthat}{\cverti{i}}{\avert}}
                                        {\asol{\avert})}$
                             due to $\extrsoluntilof{\aonecharthat}{\avert}{\avert} = 1$;   
                          \\
                          \phantom{(}if $\cverti{i} \neq \avert$, we can apply the induction hypothesis
                          to $\asol{\cverti{i}}$,
                          \\
                          \phantom{(if }%
                          as $\bvert \redi{\bodylab} \cverti{i}$ (see $\transitionsinfrom{\aonecharthat}{\bvert}$ 
                                                                  as in Def.~\ref{def:extrsoluntil:extrsol})
                           and $\cverti{i} \neq \avert$
                          \\
                          \phantom{(if }%
                          imply $\cverti{i} \convdescendsinloopto \avert$,
                          and $\cverti{i} \convredi{\bodylab} \bvert$
                            entails $\pair{\cverti{i}}{\avert}  \lexspo  \pair{\bvert}{\avert}$)}
    \displaybreak[0]\\[0.5ex]
      &
      \;\,\parbox[t]{\widthof{$\eqin{\milnersysmin}$}}{$\eqin{\milnersysmin}$}\:  
    \stexpsum{\stexpprod{\Bigl(
                           \sum_{i=1}^{n}
                             \stexpprod{\aacti{i}}
                                       {\extrsoluntilof{\aonecharthat}{\bverti{i}}{\bvert}}
                         \Bigr)}                 
                        {\asol{\bvert}}}
             {\stexpprod{\Bigl(
                           \sum_{i=1}^{m}
                             \stexpprod{\boneacti{i}}
                                       {\extrsoluntilof{\aonecharthat}{\cverti{i}}{\avert}}
                         \Bigr)}              
                        {\asol{\avert}}}
        \\
        & \;\,\parbox[t]{\widthof{$\eqin{\milnersysmin}$\hspace*{3ex}}}{\mbox{}}\:
          \text{(by axioms $(\assocstexpprod)$, and \chocolate{$(\rdistr)$}).}
  \end{align*}       
  We note that these equalities also hold for the special cases in which $n=0$ or/and $m=0$,
  where in the case $m = 0$ an axiom ($\stexpzerostexpprod$) needs to be used in the last step.
  Since $\ACI \subsystem \asys \isthmsubsumedby \milnersys$, and $\milnersysmin \subsystem \milnersys$, 
  we have obtained the following provable equality:  
  \begin{align*}     
    \asol{\bvert}
        &
        \;\,\parbox[t]{\widthof{$\eqin{\milnersysmin}$}}{$\eqin{\milnersys}$}\: 
    \stexpsum{\stexpprod{\Bigl(
                           \sum_{i=1}^{n}
                             \stexpprod{\aacti{i}}
                                       {\extrsoluntilof{\aonecharthat}{\bverti{i}}{\bvert}}
                         \Bigr)}                 
                        {\asol{\bvert}}}
             {\stexpprod{\Bigl(
                           \sum_{i=1}^{m}
                             \stexpprod{\boneacti{i}}
                                       {\extrsoluntilof{\aonecharthat}{\cverti{i}}{\avert}}
                         \Bigr)}              
                        {\asol{\avert}}} \punc{,}
  \end{align*}
  Since 
  $\notterminates{\bigl(
                     \sum_{i=1}^{n}
                       \stexpprod{\aacti{i}}
                                 {\extrsoluntilof{\aonecharthat}{\bverti{i}}{\bvert}}
                   \bigr)}$
  holds, we can apply $\RSPstar$ in order to obtain, and reason further: 
  \begin{align*}
    \asol{\bvert}
      &
      \;\,\parbox[t]{\widthof{$\eqin{\milnersysmin}$}}{$\eqin{\milnersys}$}\:
    \stexpprod{
                        {\Bigl(
                           \sum_{i=1}^{n}
                             \stexpprod{\aacti{i}}
                                       {\extrsoluntilof{\aonecharthat}{\bverti{i}}{\bvert}}
                         \Bigr)^{\!\sstar}\!}
               }           
              {\!
               \Bigl(
               \stexpprod{\Bigl(
                             \sum_{i=1}^{m}
                               \stexpprod{\boneacti{i}}
                                         {\extrsoluntilof{\aonecharthat}{\cverti{i}}{\avert}}
                          \Bigr)}              
                         {\asol{\avert}}
               \Bigr)}
    \displaybreak[0]\\[0.75ex]
      &
      \;\,\parbox[t]{\widthof{$\eqin{\milnersysmin}$}}{$\eqin{\milnersysmin}$}\:
    \stexpprod{\Bigl(
                 \stexpprod{\Bigl(
                              \sum_{i=1}^{n}
                                \stexpprod{\aacti{i}}
                                          {\extrsoluntilof{\aonecharthat}{\bverti{i}}{\bvert}}
                            \Bigr)^{\!\sstar}\!}
                           {\!
                            \Bigl(
                              \sum_{i=1}^{m}
                                \stexpprod{\boneacti{i}}
                                          {\extrsoluntilof{\aonecharthat}{\cverti{i}}{\avert}}
                            \Bigr)}  
               \Bigr)}                                        
              {\asol{\avert}}  
        \\
        & \;\,\parbox[t]{\widthof{$\eqin{\milnersysmin}$\hspace*{3ex}}}{\mbox{}}\:
          \text{(by axiom $(\assocstexpprod)$)}
    \displaybreak[0]\\[0.75ex]
      &
      \;\,\parbox[t]{\widthof{$\eqin{\milnersysmin}$}}{$\synteq$}\:
    \stexpprod{\extrsoluntilof{\aonecharthat}{\bvert}{\avert}}
              {\asol{\avert}}
        \\
        & \;\,\parbox[t]{\widthof{$\eqin{\milnersysmin}$\hspace*{3ex}}}{\mbox{}}\:
          \text{(by $\bvert \convdescendsinloopto \avert$, and the definition of $\extrsoluntilof{\aonecharthat}{\bvert}{\avert}$
                 in Def.~\ref{def:extrsoluntil:extrsol})}
  \end{align*} 
  In this way we have shown, due to $\milnersysmin \subsystem \milnersys$, the desired \provablein{\milnersys} equality 
  $\asol{\bvert}
     \eqin{\milnersys}
   \stexpprod{\extrsoluntilof{\aonecharthat}{\bvert}{\avert}}
             {\asol{\avert}}$ \mbox{}
  for the vertices $\avert$ and $\bvert$ that we picked with the property $\bvert \convdescendsinloopto \avert$.  
  
  Since $\bvert,\avert\in\verts$ with $\bvert \convdescendsinloopto \avert$ were arbitrary for this argument,
  we have successfully carried out the proof by induction 
  \mbox{}
  $\asol{\bvert}
     \eqin{\milnersys}
   \stexpprod{\extrsoluntilof{\aonecharthat}{\bvert}{\avert}}
             {\asol{\avert}}$ \mbox{}
  holds for all $\bvert,\avert\in\verts$ with $\bvert \convdescendsinloopto \avert$.
  As we have argued that the statement also holds for $\bvert = \avert$, we have proved the lemma.
\end{proof}

\begin{repeatedlem}[=Lem.~\ref{lem:sols:provably:equal:LLEE}]
  Let $\aonechart$ be a guarded \LLEEonechart,
  and let $\asys$ be an \eqlogicbased\ proof system over $\StExpover{\actions}$ 
    such that $\ACI \subsystem \asys \isthmsubsumedby \milnersys$.
  
  Then any two \provablein{\asys} solutions of $\aonechart$ are \provablyin{\milnersys} equal.
\end{repeatedlem}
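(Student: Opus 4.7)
My plan is to use the extraction function $\sextrsolof{\aonecharthat}$ associated with a guarded \LLEEwitness~$\aonecharthat$ of $\aonechart$ as a canonical intermediary; such a witness exists by the hypothesis that $\aonechart$ is a guarded \LLEEonechart. It then suffices to establish the auxiliary claim that, for every \provablein{\asys} solution $\sasol$ of $\aonechart$ and every vertex $\bvert \in \vertsof{\aonechart}$, one has $\asol{\bvert} \eqin{\milnersys} \extrsolof{\aonecharthat}{\bvert}$. Applying this claim to any two \provablein{\asys} solutions $\sasoli{1}$ and $\sasoli{2}$ and combining by symmetry and transitivity in $\milnersys$ immediately yields $\asoli{1}{\bvert} \eqin{\milnersys} \asoli{2}{\bvert}$ at every $\bvert$, which is exactly what the lemma asserts.

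I would prove the auxiliary claim by well-founded induction on $\bvert$ with respect to the converse body-transition transitive closure $\sconvredtci{\bodylab}$, which is well-founded on $\vertsof{\aonechart}$ by Lemma~\ref{lem:descsteps:bodysteps:wf}(ii). For the induction step at $\bvert$, one writes out the solution equation of $\sasol$ at $\bvert$ using the list representation of $\transitionsinfrom{\aonecharthat}{\bvert}$ adopted in Def.~\ref{def:extrsoluntil:extrsol}, which separates loop-entry transitions $\bvert \loopnstepto{\alabi{i}} \bverti{i}$ from body transitions $\bvert \lti{\boneacti{j}}{\bodylab} \cverti{j}$. For every loop-entry successor $\bverti{i}$ one has $\bverti{i} \convdescendsinloopto \bvert$, so Lemma~\ref{lem:lem:sols:provably:equal:LLEE} applied to $\sasol$ rewrites $\asol{\bverti{i}}$ as $\stexpprod{\extrsoluntilof{\aonecharthat}{\bverti{i}}{\bvert}}{\asol{\bvert}}$. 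Substituting this back isolates $\asol{\bvert}$ on both sides of the solution equation and puts it into the shape $\asol{\bvert} \eqin{\milnersys} \bigl(\sum_i \aacti{i} \cdot \extrsoluntilof{\aonecharthat}{\bverti{i}}{\bvert}\bigr) \cdot \asol{\bvert} + \bigl(\terminatesconstof{\aonechart}{\bvert} + \sum_j \boneacti{j} \cdot \asol{\cverti{j}}\bigr)$. Guardedness of $\aonecharthat$ forces every $\aacti{i}$ to be a proper action, hence $\notterminates{\sum_i \aacti{i} \cdot \extrsoluntilof{\aonecharthat}{\bverti{i}}{\bvert}}$, so $\RSPstar$ is legal and yields $\asol{\bvert} \eqin{\milnersys} \bigl(\sum_i \aacti{i} \cdot \extrsoluntilof{\aonecharthat}{\bverti{i}}{\bvert}\bigr)^{\!\sstar} \cdot \bigl(\terminatesconstof{\aonechart}{\bvert} + \sum_j \boneacti{j} \cdot \asol{\cverti{j}}\bigr)$. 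Since each body successor $\cverti{j}$ satisfies $\cverti{j} \convredi{\bodylab} \bvert$, it is strictly smaller in the induction order, and the induction hypothesis delivers $\asol{\cverti{j}} \eqin{\milnersys} \extrsolof{\aonecharthat}{\cverti{j}}$; substituting reproduces precisely the defining expression for $\extrsolof{\aonecharthat}{\bvert}$ from Def.~\ref{def:extrsoluntil:extrsol}, which closes the induction.

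The main obstacle will be the bookkeeping around list representations and the $\RSPstar$ side condition: the list representation of $\transitionsinfrom{\aonecharthat}{\bvert}$ used in the solution correctness condition for $\sasol$ and the one used to define $\sextrsolof{\aonecharthat}$ must be aligned, or else a harmless \ACI-provable rearrangement has to be inserted vertex-wise; and the guardedness of $\aonecharthat$ has to be invoked at every induction step to certify the non-termination premise of $\RSPstar$. Beyond this, the argument is a routine recasting into the \LLEEonechart\ setting of the proof of Proposition~5.8 in \cite{grab:fokk:2020a,grab:fokk:2020b}, with the only additional care stemming from the presence of $\sone$-transitions.
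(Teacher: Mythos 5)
Your proposal is correct and follows essentially the same route as the paper's proof: reduce to showing every \provablein{\asys} solution is \provablyin{\milnersys} equal to the extraction function $\sextrsolof{\aonecharthat}$, induct on $\sconvredtci{\bodylab}$, use Lem.~\ref{lem:lem:sols:provably:equal:LLEE} on the loop-entry successors, apply $\RSPstar$ (legitimised by guardedness of the \LLEEwitness), and invoke the induction hypothesis on the body successors. The only cosmetic difference is that the paper substitutes the induction hypothesis for the $\cverti{j}$ before applying $\RSPstar$ rather than after, and it notes that a loop-entry target may coincide with $\bvert$ itself, which Lem.~\ref{lem:lem:sols:provably:equal:LLEE} covers via its reflexive case $\convdescendsinlooptosc$.
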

  
\begin{proof}
  Let $\aonecharthat$ be a guarded \LLEEwitness\ of a \wg\ \onechart~$\aonechart = \tuple{\verts,\actions,\sone,\start,\transs,\exts}$.
  In order to show that any two \provablein{\asys} solutions of $\aonechart$
  are \provablyin{\milnersys} equal, 
  it suffices to show that every \provablein{\asys} solution of $\aonechart$ is \provablyin{\milnersys} equal
  to the extraction function $\sextrsolof{\aonecharthat}$ of $\aonecharthat$. 

  For demonstrating this,   
  let $\sasol \funin \verts \to \StExpsover{\actions}$ be an \provablein{\asys} solution of $\aonechart$.
  We have to show that
  $\asol{\bvert}
     \milnersyseq
   \extrsolof{\aonecharthat}{\bvert}$ 
  holds for all $\bvert\in\verts$. 
  We proceed by complete induction on $\sconvredtci{\bodylab}$ (which is well-founded due to Lem.~\ref{lem:descsteps:bodysteps:wf})
    that does not require separate treatment of base cases. 
  For our argument we assume to have given, 
    underlying the definition of the relative extraction function $\sextrsoluntilof{\aonecharthat}$ 
    and the extraction function $\sextrsolof{\aonechart}$,
  list representations $\transitionsinfrom{\aonecharthat}{\bvert}$ of the transitions from $\bvert$ in $\aonecharthat$ 
  as written in Def.~\ref{def:extrsoluntil:extrsol},
  for all $\bvert\in\verts$. 
  
  Let $\bvert\in\verts$ be arbitrary.
  On the basis of $\transitionsinfrom{\aonecharthat}{\bvert}$ 
                              as in Def.~\ref{def:extrsoluntil:extrsol}
  we argue as follows,
  starting with a step in which we use that $\sasol$ is an \provablein{\asys} solution of $\aonechart$
    in view of the form of $\transitionsinfrom{\aonecharthat}{\bvert}\,$: 
  \begin{align}
    \asol{\bvert}
      &
      \;\,\parbox[t]{\widthof{$\eqin{\milnersysmin}$}}{$\eqin{\asys}$}\:
    \stexpsum{\terminatesconstof{\aonechart}{\bvert}}
             {\Bigl(
                \stexpsum{\Bigl(
                            \sum_{i=1}^{n}
                              \stexpprod{\aacti{i}}{ \asol{\bverti{i}} }
                          \Bigr)}
                         {\Bigl(
                            \sum_{i=1}^{m}
                              \stexpprod{\boneacti{i}}{ \asol{\cverti{i}} }
                          \Bigr)}
              \Bigr)}
        \displaybreak[0]
    \notag\\
      &
      \;\,\parbox[t]{\widthof{$\eqin{\milnersysmin}$}}{$\eqin{\ACI}$}\:
    \stexpsum{\Bigl(
                \sum_{i=1}^{n}
                  \stexpprod{\aacti{i}}{ \asol{\bverti{i}} }
              \Bigr)}
             {\Bigl(
                \stexpsum{\Bigl(
                            \sum_{i=1}^{m}
                              \stexpprod{\boneacti{i}}{ \asol{\cverti{i}} }
                          \Bigr)}
                         {\terminatesconstof{\aonechart}{\bvert}
                          \Bigr)}
              \Bigr)}
        \displaybreak[0]
    \notag\\
      &
      \;\,\parbox[t]{\widthof{$\eqin{\milnersysmin}$}}{$\eqin{\milnersys}$}\:  
    \stexpsum{\Bigl(
                \sum_{i=1}^{n}
                  \stexpprod{\aacti{i}}
                            {(\stexpprod{\extrsoluntilof{\aonecharthat}{\bverti{i}}{\bvert}}
                                        {\asol{\bvert}})}
              \Bigr)}
             {\Bigl(
                \stexpsum{\Bigl(
                            \sum_{i=1}^{m}
                              \stexpprod{\boneacti{i}}{ \asol{\cverti{i}} }
                          \Bigr)}
                         {\terminatesconstof{\aonechart}{\bvert}}
              \Bigr)}
        \notag\\[-0.5ex]
        & \;\,\parbox[t]{\widthof{$\eqin{\milnersysmin}$\hspace*{3ex}}}{\mbox{}}\:
          \parbox{\widthof{%
                     ($\bvert \loopnstepto{\alabi{i}} \bverti{i}$ 
                     (due to $\transitionsinfrom{\aonecharthat}{\bvert}$ 
                             as in Def.~\ref{def:extrsoluntil:extrsol}) implies $\bverti{i} \convdescendsinlooptosc \bvert$,}}
                 {($\bvert \loopnstepto{\alabi{i}} \bverti{i}$ 
                    (due to $\transitionsinfrom{\aonecharthat}{\bvert}$ 
                            as in Def.~\ref{def:extrsoluntil:extrsol}) implies $\bverti{i} \convdescendsinlooptosc \bvert$,
                  \\
                  \phantom{(}%
                  from which Lem.~\ref{lem:lem:sols:provably:equal:LLEE} yields
                  $\asol{\bverti{i}} \milnersyseq \stexpprod{\extrsoluntilof{\aonecharthat}{\bverti{i}}{\bvert}}{\asol{\bvert}}\,$)}
      \displaybreak[0]              
    \notag\\
      &
      \;\,\parbox[t]{\widthof{$\eqin{\milnersysmin}$}}{$\:\eqin{\milnersysmin}$}\:  
    \stexpsum{\stexpprod{\Bigl(
                           \sum_{i=1}^{n}
                             \stexpprod{\aacti{i}}
                                       {\extrsoluntilof{\aonecharthat}{\bverti{i}}{\bvert}}
                         \Bigr)}
                        {\asol{\bvert}}}
             {\Bigl(
               \stexpsum{\Bigl(
                           \sum_{i=1}^{m}
                             \stexpprod{\boneacti{i}}{ \asol{\cverti{i}} }
                         \Bigr)}
                        {\terminatesconstof{\aonechart}{\bvert}}
              \Bigr)}
        \notag\\
        & \;\,\parbox[t]{\widthof{$\eqin{\milnersysmin}$\hspace*{3ex}}}{\mbox{}}\:
          \text{(by axioms $(\assocstexpprod)$ and \chocolate{$(\rdistr)$})}
      \displaybreak[0]      
    \notag\\
      &
      \;\,\parbox[t]{\widthof{$\eqin{\milnersysmin}$}}{$\:\eqin{\milnersys}$}\:  
    \stexpsum{\stexpprod{\Bigl(
                           \sum_{i=1}^{n}
                             \stexpprod{\aacti{i}}
                                       {\extrsoluntilof{\aonecharthat}{\bverti{i}}{\bvert}}
                         \Bigr)}
                        {\asol{\bvert}}}
             {\Bigl(
               \stexpsum{\Bigl(
                           \sum_{i=1}^{m}
                             \stexpprod{\boneacti{i}}{ \extrsolof{\aonecharthat}{\cverti{i}} }
                         \Bigr)}
                        {\terminatesconstof{\aonechart}{\bvert}}
              \Bigr)}
        \notag\\
        & \;\,\parbox[t]{\widthof{$\eqin{\milnersysmin}$\hspace*{3ex}}}{\mbox{}}\:
          \parbox{\widthof{(due to $\bvert \redi{\bodylab} \cverti{i}$ (see $\atsiof{\aonecharthat}{\bvert}$ as in Def.~\ref{def:extrsoluntil:extrsol}),
                   and hence $\cverti{i} \convredi{\bodylab} \bvert$,}}
                 {(due to $\bvert \redi{\bodylab} \cverti{i}$ (see $\transitionsinfrom{\aonecharthat}{\bvert}$ 
                                                                   as in Def.~\ref{def:extrsoluntil:extrsol}),
                   and hence $\cverti{i} \convredi{\bodylab} \bvert$,
                  \\[0.25ex]
                  \phantom{(}%
                  $\asol{\cverti{i}} \milnersyseq \extrsolof{\aonecharthat}{\cverti{i}}$ follows from the induction hypothesis).}        
    \notag\displaybreak[0]
    \intertext{Since $\ACI \subsystem \asys \isthmsubsumedby \milnersys$, and $\milnersysmin \subsystem \milnersys$, 
               we have obtained the following provable equality:}
    \displaybreak[0]              
    \asol{\bvert}
      &
      \;\,\parbox[t]{\widthof{$\eqin{\milnersysmin}$}}{$\eqin{\milnersys}$}\:  
    \stexpsum{\stexpprod{\Bigl(
                           \sum_{i=1}^{n}
                             \stexpprod{\aacti{i}}
                                       {\extrsoluntilof{\aonecharthat}{\bverti{i}}{\bvert}}
                         \Bigr)}
                        {\asol{\bvert}}}
             {\Bigl(
               \stexpsum{\Bigl(
                           \sum_{i=1}^{m}
                             \stexpprod{\boneacti{i}}{ \extrsolof{\aonecharthat}{\cverti{i}} }
                         \Bigr)}
                        {\terminatesconstof{\aonechart}{\bvert}}
              \Bigr)}
              \notag
  \end{align}
  Now since 
  $\notterminates{\bigl(
                     \sum_{i=1}^{n}
                       \stexpprod{\aacti{i}}
                                 {\extrsoluntilof{\aonecharthat}{\bverti{i}}{\bvert}}
                   \bigr)}$
  holds, we can apply the rule $\RSPstar$ to this in order to obtain:
  \begin{align*}                                       
    \asol{\bvert}
      &
      \;\,\parbox[t]{\widthof{$\eqin{\milnersysmin}$}}{$\:\eqin{\milnersys}$}\:  
    \stexpprod{
                        \Bigl(
                          \sum_{i=1}^{n}
                            \stexpprod{\aacti{i}}
                                      {\extrsoluntilof{\aonecharthat}{\bverti{i}}{\bvert}}
                        \Bigr)^{\!\sstar}\!
               }{\Bigl(
                   \stexpsum{\Bigl(
                               \sum_{i=1}^{m}
                                 \stexpprod{\boneacti{i}}{ \extrsolof{\aonecharthat}{\cverti{i}} }
                             \Bigr)}
                            {\terminatesconstof{\aonechart}{\bvert}}
                \Bigr)}
      \\
      &
      \;\,\parbox[t]{\widthof{$\eqin{\milnersysmin}$}}{$\synteq$}\: 
    \extrsolof{\aonecharthat}{\bvert}
      \;\,
      \parbox[t]{\widthof{$\eqin{\milnersysmin}$\hspace*{3ex}}}{\mbox{}}\:
        \text{(by the definition of $\sextrsolof{\aonecharthat}$ in Def.~\ref{def:extrsoluntil:extrsol})} 
  \end{align*}
  Thus we have verified the proof obligation
  $\asol{\bvert}
     \milnersyseq
   \extrsolof{\aonecharthat}{\bvert}$
  for the induction step, for the vertex $\bvert$ as picked.
 
  By having performed the induction step, we have successfully carried out the proof by induction on $\sconvredi{\bodylab}$
  that 
  $\asol{\bvert}
     \milnersyseq
   \extrsolof{\aonecharthat}{\bvert}$ holds for all $\bvert\in\verts$,
  and for an arbitrary \provablein{\asys} solution $\sasol$ of $\aonechart$. 
  This implies the statement of the lemma, that any two \provablein{\asys} solutions of $\aonechart$ are \provablyin{\milnersys} equal.  
\end{proof}

%


\subsection{Supplements for Section~\ref{milnersys:2:coindmilnersys}}\label{milnersys:2:coindmilnersys::app}            

\begin{nonexa}\label{ex:RSPstar:2:LLEEcoindproof}
  In order to see where the construction of a \LLEEwitnessed\ coinductive proof goes wrong
    for a non-instance of the fixed-point rule \RSPstar, in which the \sidecondition\ is violated, we consider:\vspace{-0.5ex}
  \begin{center}\renewcommand{\fCenter}{\formeq}
    \Axiom$ \overbrace{(a + c)^*}^{\chocolate{e}}   \fCenter   \overbrace{(a + 1)}^{\alert{f}} \cdot \overbrace{(a + c)^*}^{\chocolate{e}} + \overbrace{1}^{\forestgreen{g}} $
    \RightLabel{$\notRSPstar \;\; \text{since $f \synteq \alert{\terminates{\black{(a + 1)}}}$}$}
    \UnaryInf$ (a + c)^*   \fCenter   \underbrace{(a + 1)^*}_{\alert{f^*}} \cdot \underbrace{1}_{\forestgreen{g}} $
    \DisplayProof
  \end{center}\vspace*{-0.5ex}
  Here
  $ (a + c)^*   
      \milnersysmineq
    (a + 1) \cdot (a + c)^* + 1 $
  holds, 
  and hence by Prop.~\ref{prop:milnersys:sound} this equation is semantically valid.
  But the conclusion is obviously not valid semantically, 
  because the left-hand side can iterated \transitionsact{\cact}, while the right-hand side does not permit \transitionsact{\cact}. 
  When we use the same construction as in Ex.~\ref{ex:1:RSPstar:2:LLEEcoindproof} we obtain:
  \begin{center}\vspace*{-1ex}
    \begin{tikzpicture}
  \renewcommand{\prod}{\,{\cdot}\,}
  \renewcommand{\stackprod}{\,\chocolate{\varstar}\,}
 
\matrix[anchor=center,row sep=0.8cm,column sep=3.25cm,
        every node/.style={draw=none}
        ] {
    \node(C--1){};      &[1cm]   \node(nonsol--1){};      &[0.5cm]   \node(solfstarg--1){};      
    \\ 
    \node(C--root){};   &          \node(nonsol--root){};   &        \node(solfstarg--root){};         
    \\
  };

\path (C--1) ++ (-2cm,0cm) node[above]{\Large $\onechartof{\alert{f^*}}$};
\path (C--1) ++ (0cm,0cm) node[above]{$ 1 \stackprod \alert{f^*} $};
\draw[->,thick,densely dotted,out=180,in=180,distance=1cm] ($(C--1) + (-0.55cm,0.25cm)$) to ($(C--root) + (-0.75cm,-0.25cm)$); 

\path (C--root) ++ (0cm,0cm) node[below]{$ \underbrace{(a + 1)^*}_{\alert{f^*}} $};
\draw[->,thick,darkcyan,shorten <= -2pt, shorten >= -2pt] (C--root) to node[left]{\small \black{$\aact$}} node[right]{\small $\loopsteplab{1}$} (C--1);

\draw[thick,chocolate,double] ($(C--root) + (0cm,-0.3cm)$) ellipse (0.675cm and 0.29cm);

\path (nonsol--1) ++ (0cm,0cm) node[above]{$ 1 \prod \overbrace{\chocolate{e}}^{(a + c)^*} $};
\draw[->,thick,densely dotted,out=180,in=180,distance=1.25cm] ($(nonsol--1) + (-0.85cm,0.25cm)$) to ($(nonsol--root) + (-2.075cm,-0.25cm)$);
\path (nonsol--root) ++ (0cm,0cm) node[below]
   {$ \underbrace{
         \underbrace{(a + 1)}_{\alert{f}} \prod \underbrace{(a + c)^*}_{\chocolate{e}} +  \underbrace{1}_{\forestgreen{g}}
                  }_{\:=\: \chocolate{e} \;\; \text{(by rule assumption)}}  $};
\draw[->,thick,darkcyan,shorten <= -2pt, shorten >= -2pt] (nonsol--root) to node[left]{\small \black{$\aact$}} node[right]{\small $\loopsteplab{1}$} (nonsol--1);
\path (nonsol--root) ++ (-0.55cm,0.15cm) node{\LARGE \alert{$\boldsymbol{\times}$}};

\path (solfstarg--1) ++ (0cm,0cm) node[above]{$ (1 \prod \alert{f^*}) \prod \forestgreen{g} $};
\draw[->,thick,densely dotted,out=180,in=180,distance=1cm] ($(solfstarg--1) + (-0.8cm,0.25cm)$) to ($(solfstarg--root) + (-1.3cm,-0.25cm)$); 
\path (solfstarg--root) ++ (0cm,0cm) node[below]{$ \underbrace{(a + 1)^*}_{\alert{f^*}} \prod \underbrace{1}_{\forestgreen{g}} $};
\draw[->,thick,darkcyan,shorten <= -2pt, shorten >= -2pt] (solfstarg--root) to node[left]{\small \black{$\aact$}} node[right]{\small $\loopsteplab{1}$} (solfstarg--1);

\end{tikzpicture}
  \end{center}\vspace*{-1ex}
  We recognize that, while 
    $\alert{f^*}\prod\forestgreen{g}$
    is the principal value of a \provablein{\milnersysmin} solution of $\onechartof{\alert{f^*}\prod\forestgreen{g}}$ (see on the right),
  we have not obtained a \provablein{\thplus{\milnersys}{\setexp{\chocolate{e} = \alert{f} \prod \chocolate{e} + \forestgreen{g}}}} 
                         solution of $\onechartof{\alert{f^*}\prod\forestgreen{g}}$
  with principal value $\alert{f} \prod \chocolate{e} + \forestgreen{g}$ (in the middle).
  This is because the correctness condition is violated at the bottom vertex,
    as $(a + 1) \prod (a + c)^* + 1 = 1 + a \prod (1 \prod (a + c)^*)$
      is not provable in $\milnersysmin$
      (using Prop.~\ref{prop:milnersys:sound} and that this equation is not semantically valid,
        since the right-hand side does not permit an initial \transitionact{\cact}). 
  Therefore the construction does not give rise to \LLEEwitnessed\ coinductive proof of the formal equation,
  $ (a + c)^*   
      \formeq
    (a + 1) \cdot (a + c)^* + 1 $, which is not valid semantically.
\end{nonexa}

\begin{repeatedlem}[= Lem.~\ref{lem:lem:mimic:RSPstar}]
  Let $\astexp,\bstexp,\cstexp\in\StExpover{\actions}$ with $\notterminates{\bstexp}$,
  and let $\aseteqs \defdby \setexp{ \astexp = \stexpsum{\stexpprod{\bstexp}{\astexp}}{\cstexp} }$. 
  Then the star expression $\astexp$ is the principal value of a
  \provablein{(\thplus{\milnersysmin}{\aseteqs})} solution of
    the \onechart\ interpretation $\onechartof{\stexpprod{\stexpit{\bstexp}}{\cstexp}}$ of~$\stexpprod{\stexpit{\bstexp}}{\cstexp}$.
\end{repeatedlem}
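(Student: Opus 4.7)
The plan is to build the required provable solution explicitly, guided by the shape of $\onechartof{f^* \cdot g}$. First I would characterize its vertices: by inspecting the transition rules of Def.~\ref{def:onechartof}, every reachable stacked star expression from $f^* \cdot g$ has one of three forms, namely $f^* \cdot g$ itself, or $(F \stackprod f^*) \cdot g$ for some iterated $\sone$-derivative $F$ of $f$, or $G$ for some iterated $\sone$-derivative $G$ of $g$ (the latter arising after a $\sone$-transition from some $(1 \stackprod f^*) \cdot g$ into $g$). With this classification in hand I would define a candidate star expression function $\sasol$ on $\vertsof{\onechartof{f^* \cdot g}}$ by setting $\asol{f^* \cdot g} \defdby e$, $\asol{(F \stackprod f^*) \cdot g} \defdby \proj{F} \cdot e$, and $\asol{G} \defdby \proj{G}$, so that its principal value at $f^* \cdot g$ is $e$ as required.

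Next I would verify the $\thplus{\milnersysmin}{\aseteqs}$-provable solution condition at each of the three kinds of vertices. For vertices of form $(F \stackprod f^*) \cdot g$ and $G$, only $\milnersysmin$ itself is needed: from Lem.~\ref{lem:FT:onechart-int} applied to the stacked star expressions $F \stackprod f^*$ (or to $G$) one reads off a provable sum-of-derivatives expansion for $\proj{F \stackprod f^*}$ that, after multiplying by $e$ via the distributive and associative axioms and using that $\oneactderivs{(F \stackprod f^*) \cdot g}$ decomposes according to Lem.~\ref{lem:actonederivs}, yields exactly the solution correctness condition on $\sasol$. The case for $G$ is an immediate instance of Lem.~\ref{lem:FT:onechart-int} applied to the star expression $G$.

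The main obstacle, and the only case where the assumption in $\aseteqs$ is actually used, is the vertex $f^* \cdot g$. Here I would chain provable equalities exactly as follows: start from $\asol{f^* \cdot g} \synteq e$, apply the assumption $e \formeq f \cdot e + g$ of $\aseteqs$ to rewrite it as $f \cdot e + g$, and then use Lem.~\ref{lem:FT:onechart-int} twice in parallel to expand both $f$ and $g$ as their sums of action $\sone$-derivatives. At this point the condition $\notterminates{f}$ is crucial, because it forces $\terminatesconstof{\onechartof{f}}{f} \synteq 0$, so the termination constant collapses and one is left with the termination constant of $g$, which agrees with $\terminatesconstof{\onechartof{f^* \cdot g}}{f^* \cdot g}$. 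After redistributing the product with $e$ into the sum of $f$-derivatives via ($\rdistr$) and ($\assocstexpprod$), the $f$-summands become $\aoneact \cdot \asol{(F \stackprod f^*) \cdot g}$ and the $g$-summands already have the form $\aoneact \cdot \asol{G}$. Using the identity $\oneactderivs{f^* \cdot g} = \setexpbig{ \pair{\aoneact}{(F \stackprod f^*) \cdot g} \mid \pair{\aoneact}{F} \in \oneactderivs{f} } \cup \oneactderivs{g}$ (itself justified by Lem.~\ref{lem:actonederivs} and the TSS in Def.~\ref{def:onechartof}), this sum can be identified with the expansion prescribed by the transitions from $f^* \cdot g$, yielding the solution condition up to \ACI, hence up to $\milnersysmin$. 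The delicate bookkeeping will be matching sums of action derivatives to list representations of $\transitionsinfrom{\onechartof{f^* \cdot g}}{f^* \cdot g}$; this is handled transparently thanks to $\ACI \subsystem \milnersysmin \subsystem \thplus{\milnersysmin}{\aseteqs}$.
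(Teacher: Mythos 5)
Your proposal is correct and follows essentially the same route as the paper's proof: the same three-way classification of the vertices of $\onechartof{\stexpprod{\stexpit{\bstexp}}{\cstexp}}$, the same candidate solution ($e$ at the root, $\proj{F}\cdot e$ at $(F\mathrel{\sstackprod}f^*)\cdot g$, $\proj{G}$ at $G$), the same use of the assumption in $\aseteqs$ only at the root, and the same appeal to Lem.~\ref{lem:FT:onechart-int}, $(\rdistr)$, $(\assocstexpprod)$, $\notterminates{\bstexp}$, and the decomposition of $\oneactderivs{f^*\cdot g}$. Only two small points would need tightening when writing it out: Lem.~\ref{lem:FT:onechart-int} is best applied to $F$ itself (not to $F\mathrel{\sstackprod}f^*$) before multiplying by $e$, with a subcase split on $\terminates{F}$ so that the $\sone$-transition back to $f^*\cdot g$ is matched by the summand $1\cdot e$; and the $G$-vertices are reached from $f^*\cdot g$ directly by proper transitions of $g$ (since $\terminates{(f^*)}$), not via a $\sone$-transition into $g$.
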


\begin{proof}
  First, it can be verified that the vertices of $\onechartof{f^* \cdot g}$ are of either of three forms:
  \begin{equation}\label{eq:1:lem:lem:mimic:RSPstar}
    \vertsof{\onechartof{f^* \cdot g}}
      =
    \setexp{ f^* \cdot g }
      \cup
    \descsetexpbig{ (F \stackprod f^*) \cdot g }{ F \in \itpartonederivs{f} }
      \cup
    \descsetexp{ G }{ G \in \itpartonederivs{g} } \punc{,}  
  \end{equation}
  where $\itpartonederivs{f}$ means the set of iterated \onederivatives\ of $f$ according to the TSS in Def.~\ref{def:onechartof}.
  
  This facilitates to define a function $\sasol \funin \vertsof{\onechartof{f^* \cdot g}} \to \StExpsover{\actions}$
    on $\onechartof{f^* \cdot g}$ by:
  \begin{alignat*}{2}
    \asol{f^* \prod g}
      & {} \defdby
             e \punc{,}
    \\
    \asol{(F \stackprod f^*) \prod g}
      & {} \defdby
             \proj{F} \prod e \punc{,}
        & & \qquad \text{(for $F \in \itpartonederivs{f}$),}       
    \\
    \asol{G}
      & {} \defdby
             \proj{G}
        & & \qquad \text{(for $G \in \itpartonederivs{g}$),}   
  \end{alignat*}
  We will show that $\sasol$ is a \provablein{(\thplus{\milnersysmin}{\aseteqs})} solution of $\onechartof{f^* \cdot g}$.
  Instead of verifying the correctness conditions for $\sasol$ for list representations of transitions,
    we will argue more loosely with sums over action \onederivatives\ sets $\oneactderivs{H}$ of stacked star expressions $H$
    where such sums are only \welldefined\ up to \ACI. Due to $\ACI \subsystem \milnersysmin$ such an argumentation 
    is possible. Specifically we will demonstrate, for all $E\in\vertsof{\onechartof{f^* \cdot g}}$, 
    that $\sasol$ is a \provablein{(\thplus{\milnersysmin}{\aseteqs})} solution at $E$, that is, that it holds:
  \begin{equation}\label{eq:2:lem:lem:mimic:RSPstar}
    \asol{E}
      \eqin{\thplus{\milnersysmin}{\aseteqs}}
    \terminatesconstof{\onechartof{E}}{E}
      + \hspace*{-3.75ex}
    \sum_{\pair{\aoneact}{E'}\in\oneactderivs{E}} \hspace*{-3.75ex}
      \aoneact \prod \asol{E'} \punc{,}
  \end{equation}
  where by the sum on the right-hand side we mean an arbitrary representative 
    of the \ACI\ equivalence class of star expressions that can be obtained by the sum expression of this form.  
  
  \smallskip
  
  For showing \eqref{eq:2:lem:lem:mimic:RSPstar}, 
    we distinguish the three cases of vertices $E\in\vertsof{\onechartof{f^* \prod g}}$  
  according to \eqref{eq:1:lem:lem:mimic:RSPstar},
    that is, $E \synteq f^* \prod g$,
             $E \synteq (F \stackprod f^*) \prod g$ for some $F\in\itpartonederivs{f}$,
             and $E \synteq G$ for~some~$G\in\itpartonederivs{g}$. 
  
  In the first case, $E \synteq f^* \prod g$, we find by Lem.~\ref{lem:actonederivs} 
    (or by inspecting the TSS in Def.~\ref{def:onechartof}):
  \begin{equation}\label{eq:2a:lem:lem:mimic:RSPstar}
    \oneactderivs{f^* \prod g}
      =
    \descsetexp{ \pair{\aoneact}{(F \stackprod f^*) \prod g} }
               { \pair{\aoneact}{F} \in \oneactderivs{f} } 
       \cup
    \oneactderivs{g} \punc{.}
  \end{equation}%
  Then we argue as follows:
  \begin{alignat*}{2}
    \asol{E}
      & \;\,\parbox[t]{\widthof{$\eqin{\thplus{\milnersysmin}{\aseteqs}}$}}{$\synteq$}\:
        \asol{f^* \prod g}
        \qquad\quad \text{(in this case)}
      \\  
      & \;\,\parbox[t]{\widthof{$\eqin{\thplus{\milnersysmin}{\aseteqs}}$}}{$\synteq$}\:
        e \phantom{ {} \cdot e + g} 
        \qquad\quad \text{(by the definition of $\sasol$)}
      \\
      & \;\,\parbox[t]{\widthof{$\eqin{\thplus{\milnersysmin}{\aseteqs}}$}}{$\eqin{\thplus{\milnersysmin}{\aseteqs}}$}\:
        f \prod e + g
        \qquad\quad \text{(since $\aseteqs = \setexp{ e = f \prod e + g }$)}  
      \displaybreak[0]\\ 
      & \;\,\parbox[t]{\widthof{$\eqin{\thplus{\milnersysmin}{\aseteqs}}$}}{$\milnersysmineq$}\:
        \Big(
          \terminatesconstof{\onechartof{f}}{f}
            + \hspace*{-3.75ex}
          \sum_{\pair{\aoneact}{F}\in\oneactderivs{f}} \hspace*{-3.5ex}
                 \aoneact \prod \proj{F}
             \Big) \prod e
        +
        \Big(
          \terminatesconstof{\onechartof{g}}{g}
            + \hspace*{-3.75ex}
          \sum_{\pair{\aoneact}{G}\in\oneactderivs{g}} \hspace*{-3.5ex}
                 \aoneact \prod \proj{G}
             \Big)   
        \\[-0.5ex]
        & \;\,\parbox[t]{\widthof{$\eqin{\thplus{\milnersysmin}{\aseteqs}}$\hspace*{3ex}}}{\mbox{}}\: 
          \text{(by using Lem.~\ref{lem:FT:onechart-int})}   
      \displaybreak[0]\\ 
      & \;\,\parbox[t]{\widthof{$\eqin{\thplus{\milnersysmin}{\aseteqs}}$}}{$\milnersysmineq$}\:
        \Big(
          \terminatesconstof{\onechartof{f}}{f}
            \prod e
            + \hspace*{-3.75ex}
          \sum_{\pair{\aoneact}{F}\in\oneactderivs{f}} \hspace*{-3.5ex}
                 \aoneact \prod (\proj{F} \prod e)
             \Big) \prod e
        +
        \Big(
          \terminatesconstof{\onechartof{g}}{g}
            + \hspace*{-3.75ex}
          \sum_{\pair{\aoneact}{G}\in\oneactderivs{g}} \hspace*{-3.5ex}
                 \aoneact \prod \proj{G}
             \Big)   
        \\[-0.5ex]
        & \;\,\parbox[t]{\widthof{$\eqin{\thplus{\milnersysmin}{\aseteqs}}$\hspace*{3ex}}}{\mbox{}}\: 
          \text{(by using ($\rdistr$) and ($\assocstexpprod$))} 
      \displaybreak[0]\\[0.5ex] 
      & \;\,\parbox[t]{\widthof{$\eqin{\thplus{\milnersysmin}{\aseteqs}}$}}{$\milnersysmineq$}\:
        \Big(
            \hspace*{-3.75ex}
          \sum_{\pair{\aoneact}{F}\in\oneactderivs{f}} \hspace*{-3.5ex}
                 \aoneact \prod ( \proj{F} \prod e )
             \Big) 
        + \Bigl(
            \terminatesconstof{\onechartof{f^* \prod g}}{f^* \prod g}
              + \hspace*{-3.75ex}
            \sum_{\pair{\aoneact}{G}\in\oneactderivs{g}} \hspace*{-3.5ex}
                   \aoneact \prod \proj{G} 
          \Bigr)   
        \\[-0.5ex]
        & \;\,\parbox[t]{\widthof{$\eqin{\thplus{\milnersysmin}{\aseteqs}}$\hspace*{3ex}}}{\mbox{}}\: 
          \parbox{\widthof{(since $\terminatesconstof{\onechartof{f}}{f} \synteq 0$ due to $\alert{\notterminates{f}}$,
                            by using axiom ($\neutralstexpsum$)}}
            {(since $\terminatesconstof{\onechartof{f}}{f} \synteq 0$ due to $\alert{\notterminates{f}}$,
                  by using axiom ($\stexpzerostexpprod$),
             \\[-0.25ex]\phantom{(}%
             and $\terminatesconstof{\onechartof{f^* \prod g}}{f^* \prod g} \synteq \terminatesconstof{\onechartof{g}}{g}$)}     
      \displaybreak[0]\\[0.5ex] 
      & \;\,\parbox[t]{\widthof{$\eqin{\thplus{\milnersysmin}{\aseteqs}}$}}{$\ACIeq$}\:
        \terminatesconstof{\onechartof{f^* \prod g}}{f^* \prod g}
           + 
        \Big(
            \hspace*{-3.75ex}
          \sum_{\pair{\aoneact}{F}\in\oneactderivs{f}} \hspace*{-3.5ex}
                 \aoneact \prod ( \asol{(F \stackprod f^*) \cdot g } )
             \Big) 
        +   \hspace*{-3.75ex}
          \sum_{\pair{\aoneact}{G}\in\oneactderivs{g}} \hspace*{-3.5ex}
                 \aoneact \prod \asol{G}    
        \\[-0.5ex]
        & \;\,\parbox[t]{\widthof{$\eqin{\thplus{\milnersysmin}{\aseteqs}}$\hspace*{3ex}}}{\mbox{}}\: 
          \text{(by definition of $\sasol$, axioms ($\commstexpsum$))}     
      \displaybreak[0]\\[0.5ex] 
      & \;\,\parbox[t]{\widthof{$\eqin{\thplus{\milnersysmin}{\aseteqs}}$}}{$\ACIeq$}\:
        \terminatesconstof{\onechartof{E}}{E}
           + 
            \hspace*{-5.75ex}
          \sum_{\pair{\aoneact}{E'}\in\oneactderivs{f^* \cdot g}=\oneactderivs{E}} \hspace*{-5.5ex}
                 \aoneact \prod \asol{E'}  
        \\[-0.5ex]
        & \;\,\parbox[t]{\widthof{$\eqin{\thplus{\milnersysmin}{\aseteqs}}$\hspace*{3ex}}}{\mbox{}}\: 
          \text{(due to \eqref{eq:2a:lem:lem:mimic:RSPstar}).}
  \end{alignat*}  
  Due to $\ACI \subsystem \milnersysmin \subsystem \thplus{\milnersysmin}{\aseteqs}$
    this chain of equalities is provable in $\thplus{\milnersysmin}{\aseteqs}$, 
    which verifies \eqref{eq:2:lem:lem:mimic:RSPstar} for $E$ as considered here,
  or in other words,    
    $\sasol$ is a \provablein{(\thplus{\milnersysmin}{\aseteqs})} solution of at $E$.
  
  \smallskip
  
  In the second case we consider $E \synteq (F \stackprod f^*) \prod g \in\vertsof{\onechartof{f^* \prod g}}$. 
  Then 
  $\terminatesconstof{\onechartof{E}}{E} 
     \synteq 
   \terminatesconstof{\onechartof{(F \stackprod f^*) \prod g}}{(F \stackprod f^*) \prod g}
     \synteq 0$ holds,
  because expressions with stacked product occurring do not have immediate termination by Def.~\ref{def:onechartof}.   
  We distinguish the subcases $\terminates{F}$ and $\notterminates{F}$.
  
  For the first subcase we assume $\notterminates{F}$. 
  Then $\terminatesconstof{\onechartof{F}}{F} \synteq 0$ holds,
    and we find by Lem.~\ref{lem:actonederivs} (or by inspecting the TSS in Def.~\ref{def:onechartof}):
  \begin{equation}\label{eq:2b:lem:lem:mimic:RSPstar}
    \oneactderivs{(F \stackprod f^*) \prod g}
      =
    \descsetexp{ \pair{\aoneact}{(F' \stackprod f^*) \prod g} }
               { \pair{\aoneact}{F'} \in \oneactderivs{F} } \punc{.}
  \end{equation}%
  Now we argue as follows:
  \begin{alignat*}{2}
    \asol{E}
      & \;\,\parbox[t]{\widthof{$\eqin{\milnersysmin}$}}{$\synteq$}\:
        \asol{(F \stackprod f^*) \prod g}
        & & \quad\text{(in this case)}
      \\  
      & \;\,\parbox[t]{\widthof{$\eqin{\milnersysmin}$}}{$\synteq$}\:
        \proj{F} \prod  e
        & & \quad\text{(by the definition of $\sasol$)} 
      \displaybreak[0]\\ 
      & \;\,\parbox[t]{\widthof{$\eqin{\milnersysmin}$}}{$\milnersysmineq$}\:
        \Big(
          \terminatesconstof{\onechartof{F}}{F}
            + \hspace*{-3.75ex}
          \sum_{\pair{\aoneact}{F'}\in\oneactderivs{F}} \hspace*{-3.5ex}
                 \aoneact \prod \proj{F'}
             \Big) \prod e
        & & \quad\text{(by using Lem.~\ref{lem:FT:onechart-int})} 
      \displaybreak[0]\\ 
      & \;\,\parbox[t]{\widthof{$\eqin{\milnersysmin}$}}{$\milnersysmineq$}\:
        0 \prod e 
          +
            \hspace*{-3.75ex}
          \sum_{\pair{\aoneact}{F'}\in\oneactderivs{F}} \hspace*{-3.5ex}
                 \aoneact \prod ( \proj{F'} \prod e )
        & & \quad\parbox{\widthof{(by $\terminatesconstof{\onechartof{F}}{F} \synteq 0$, due to $\notterminates{F}$, and}}%
                        {(by $\terminatesconstof{\onechartof{F}}{F} \synteq 0$, due to $\notterminates{F}$, and
                         \\[-0.5ex]\phantom{(}%
                         axioms ($\rdistr$), ($\assocstexpprod$))}
      \displaybreak[0]\\ 
      & \;\,\parbox[t]{\widthof{$\eqin{\milnersysmin}$}}{$\milnersysmineq$}\:
        0 +
            \hspace*{-3.75ex}
          \sum_{\pair{\aoneact}{F'}\in\oneactderivs{F}} \hspace*{-3.5ex}
                 \aoneact \prod \asol{ (F' \stackprod f^*) \prod g } 
        & & \quad\text{(by ax.\ ($\stexpzerostexpprod$) and def. of $\sasol$)} 
      \\ 
      & \;\,\parbox[t]{\widthof{$\eqin{\milnersysmin}$}}{$\ACIeq$}\:
        \terminatesconstof{\onechartof{(F \stackprod f^*) \prod g}}{ (F \stackprod f^*) \prod g }
          + 
            \hspace*{-5ex}
          \sum_{\pair{\aoneact}{E'}\in\oneactderivs{(F \stackprod f^*) \prod g}} \hspace*{-8ex}
                 \aoneact \prod \asol{ E' } 
        & & \quad\text{(due to \eqref{eq:2b:lem:lem:mimic:RSPstar}, and $\terminatesconstof{\onechartof{E}}{E} \synteq 0)$}    
      \displaybreak[0]\\[0.5ex] 
      & \;\,\parbox[t]{\widthof{$\eqin{\milnersysmin}$}}{$\synteq$}\:
        \terminatesconstof{\onechartof{E}}{E}
           + 
            \hspace*{-3.75ex}
          \sum_{\pair{\aoneact}{E'}\in\oneactderivs{E}} \hspace*{-3.5ex}
                 \aoneact \prod \asol{E'}  
        & & \quad \text{(in this case).}
  \end{alignat*}  
  For the second subcase we assume $\terminates{F}$.
  Then $F\in\StExpover{\actions}$ (that is, $F$ does not contain a stacked product symbol),
    and $\terminatesconstof{\onechartof{F}}{F} \synteq 1$ holds.
  Furthermore, we find, again by inspecting the TSS in Def.~\ref{def:onechartof}:     
  \begin{equation}\label{eq:3:lem:lem:mimic:RSPstar}
    \oneactderivs{(F \stackprod f^*) \prod g}
      =
    \setexp{ \pair{\sone}{f^* \prod g} }
      \cup
    \descsetexp{ \pair{\aoneact}{(F' \stackprod f^*) \prod g} }
               { \pair{\aoneact}{F'} \in \oneactderivs{F} } \punc{.}
  \end{equation}%
  Now we argue as follows:
  \begin{alignat*}{2}
    \asol{E}
      & \;\,\parbox[t]{\widthof{$\eqin{\milnersysmin}$}}{$\synteq$}\:
        \asol{(F \stackprod f^*) \prod e}
        & & \quad\text{(in this case)}
      \\  
      & \;\,\parbox[t]{\widthof{$\eqin{\milnersysmin}$}}{$\synteq$}\:
        \proj{F} \prod  e
        & & \quad\text{(by the definition of $\sasol$)}  
      \displaybreak[0]\\ 
      & \;\,\parbox[t]{\widthof{$\eqin{\milnersysmin}$}}{$\milnersysmineq$}\:
        \Big(
          \terminatesconstof{\onechartof{F}}{F}
            + \hspace*{-3.75ex}
          \sum_{\pair{\aoneact}{F'}\in\oneactderivs{F}} \hspace*{-3.5ex}
                 \aoneact \prod \proj{F'}
             \Big) \prod e
        & & \quad\text{(by using Lem.~\ref{lem:FT:onechart-int})}
      \displaybreak[0]\\ 
      & \;\,\parbox[t]{\widthof{$\eqin{\milnersysmin}$}}{$\milnersysmineq$}\:
        1 \prod e 
          +
            \hspace*{-3.75ex}
          \sum_{\pair{\aoneact}{F'}\in\oneactderivs{f}} \hspace*{-3.5ex}
                 \aoneact \prod ( \proj{F'} \prod e )
        & & \quad\parbox{\widthof{(axioms ($\rdistr$), ($\assocstexpprod$))}}%
                        {(by $\terminatesconstof{\onechartof{F}}{F} \synteq 1$, and
                         \\[0ex]\phantom{(}%
                         axioms ($\rdistr$), ($\assocstexpprod$))}
      \displaybreak[0]\\ 
      & \;\,\parbox[t]{\widthof{$\eqin{\milnersysmin}$}}{$\synteq$}\:
        1 \prod \asol{f^* \prod g} +
            \hspace*{-3.75ex}
          \sum_{\pair{\aoneact}{F'}\in\oneactderivs{f}} \hspace*{-3.5ex}
                 \aoneact \prod \asol{ (F' \stackprod f^*) \prod g } 
        & & \quad\text{(by the definition of $\sasol$)}
      \displaybreak[0]\\ 
      & \;\,\parbox[t]{\widthof{$\eqin{\milnersysmin}$}}{$\ACIeq$}\:
        0 +
            \hspace*{-3.75ex}
          \sum_{\pair{\aoneact}{E'}\in\oneactderivs{(F \stackprod f^*) \prod g}} \hspace*{-7.5ex}
                 \aoneact \prod \asol{ E' } 
        & & \quad \parbox{\widthof{(($\commstexpsum$), and ($\assocstexpsum$))}}
                         {(by \eqref{eq:3:lem:lem:mimic:RSPstar}, using axioms
                          \\\phantom{(}
                          ($\commstexpsum$), and ($\assocstexpsum$))}
      \displaybreak[0]\\ 
      & \;\,\parbox[t]{\widthof{$\eqin{\milnersysmin}$}}{$\synteq$}\:
        \terminatesconstof{\onechartof{E}}{E} 
          + \hspace*{-3.75ex}
        \sum_{\pair{\aoneact}{E'}\in\oneactderivs{E}} \hspace*{-3.5ex}
          \aoneact \prod \asol{E'} 
        & & \quad \text{(in this case, due to $\terminatesconstof{\onechartof{E}}{E} \synteq 0$).}
  \end{alignat*}  
  Due to $\ACI \subsystem \milnersysmin \subsystem \thplus{\milnersysmin}{\aseteqs}$
    the chains of equalities in both subcases are provable in $\thplus{\milnersysmin}{\aseteqs}$,
  and therefore we have now verified \eqref{eq:2:lem:lem:mimic:RSPstar} also in the (entire) second case,
    that is, that $\sasol$ is a \provablein{(\thplus{\milnersysmin}{\aseteqs})} solution of $\onechartof{f^* \prod g}$ at $E$ as in this case. 
  
  \smallskip
  
  In the final case, $E = G$, we argue as follows:
  \begin{alignat*}{2}
    \asol{E}
      & \;\,\parbox[t]{\widthof{$\eqin{\milnersysmin}$}}{$\synteq$}\:
        \asol{G}
        & & \qquad \text{(in this case)}
      \\   
      & \;\,\parbox[t]{\widthof{$\eqin{\milnersysmin}$}}{$\synteq$}\:
        \proj{G} 
        & & \qquad \text{(by the definition of $\sasol$)}
      \displaybreak[0]\\
      & \;\,\parbox[t]{\widthof{$\eqin{\milnersysmin}$}}{$\milnersysmineq$}\:
        \terminatesconstof{\onechartof{G}}{G}
          + \hspace*{-3.75ex}
        \sum_{\pair{\aoneact}{G'}\in\oneactderivs{G}} \hspace*{-3.5ex}
          \aoneact \prod \proj{G'}
        & & \qquad \text{(by using Lem.~\ref{lem:FT:onechart-int})}  
      \displaybreak[0]\\ 
      & \;\,\parbox[t]{\widthof{$\eqin{\milnersysmin}$}}{$\eqin{\ACI}$}\:
        \terminatesconstof{\onechartof{G}}{G} 
          + \hspace*{-3.75ex}
        \sum_{\pair{\aoneact}{G'}\in\oneactderivs{G}} \hspace*{-3.5ex}
          \aoneact \prod \asol{G'} 
        & & \qquad \text{(by the definition\ of $\sasol$)}  
      \displaybreak[0]\\ 
      & \;\,\parbox[t]{\widthof{$\eqin{\milnersysmin}$}}{$\eqin{\ACI}$}\:
        \terminatesconstof{\onechartof{E}}{E} 
          + \hspace*{-3.75ex}
        \sum_{\pair{\aoneact}{E'}\in\oneactderivs{E}} \hspace*{-3.5ex}
          \aoneact \prod \asol{E'} 
        & & \qquad \text{(in this case).}
  \end{alignat*}  
  Due to $\ACI \subsystem \milnersysmin \subsystem \thplus{\milnersysmin}{\aseteqs}$
    this chain of equalities verifies \eqref{eq:2:lem:lem:mimic:RSPstar} also in this case.
    
  By having established \eqref{eq:2:lem:lem:mimic:RSPstar} for the, according to \eqref{eq:1:lem:lem:mimic:RSPstar}, three possible forms of stacked star expressions
    that are vertices of $\onechartof{f^* \prod g}$, we have established
    that $\sasol$ is indeed a \provablein{(\thplus{\milnersysmin}{\aseteqs})} solution of $\onechartof{f^* \prod g}$.
\end{proof}


\end{document}